\documentclass[12pt,oneside,reqno]{amsart}
\usepackage{geometry}
\usepackage[utf8]{inputenc}
\usepackage[T1]{fontenc}
\usepackage{verbatim}
\usepackage{amsmath}
\usepackage{amsthm}
\usepackage{mathabx}
\usepackage{amssymb}
\usepackage{amsfonts}
\usepackage{mathtools}
\usepackage{nccmath}
\usepackage{caption}
\usepackage{tikz}
\usepackage{xcolor}
\usepackage{enumitem}
\usepackage{braket}
\usepackage{mathrsfs}
\usepackage{verbatim}
\usepackage[english]{babel}
\usepackage{hyphenat}
\usepackage{bbm}
\usepackage{calligra}
\usepackage{empheq}
\usepackage{mathtools}
\usepackage[T1]{fontenc}
\usepackage{lmodern}
\usepackage[nopatch=footnote]{microtype}
\usepackage{fmtcount}
\usepackage[dvipsnames]{xcolor}
\usepackage{comment}
\usepackage{soul}
\usepackage[numbers]{natbib}
\usepackage[utf8]{inputenc}
\usepackage{mathtools}
\usepackage{amssymb}
\usepackage{amsfonts}
\usepackage{amsthm}
\usepackage{scalerel,stackengine}
\stackMath
\newcommand\rwh[1]{%
\savestack{\tmpbox}{\stretchto{%
  \scaleto{%
    \scalerel*[\widthof{\ensuremath{#1}}]{\kern-.6pt\bigwedge\kern-.6pt}%
    {\rule[-\textheight/2]{1ex}{\textheight}}%WIDTH-LIMITED BIG WEDGE
  }{\textheight}% 
}{0.5ex}}%
\stackon[1pt]{#1}{\tmpbox}%
}

\newcommand{\RN}[1]{%
  \textup{\uppercase\expandafter{\romannumeral#1}}%
}

\newif\ifshowdetails
\showdetailsfalse           % <-- set \showdetailsfalse before submission
\ifshowdetails
  \newenvironment{details}{%
    \par\medskip\noindent
    \stepcounter{details}%
    {\color{Red}\bfseries Details \thedetails:}\ %
    \begingroup\color{RoyalBlue}\small
  }{%
    \endgroup\par\medskip
  }
\else
  \excludecomment{details}
\fi

\newcounter{details}[section]
\renewcommand{\thedetails}{\thesection.\arabic{details}}

\usepackage[english]{babel}

\allowdisplaybreaks
\usepackage{bm} %% special bold-face characters

\usepackage{graphicx,url}

\usepackage[svgnames]{xcolor}

\usepackage[colorlinks=true, linkcolor=Maroon, urlcolor=Maroon]{hyperref}

\hypersetup{citecolor=blue}

\hypersetup{
  linkcolor=Maroon,
  citecolor=DarkBlue,
  urlcolor=red
}

\usepackage{dsfont}
\usepackage{suetterl} %%frak letters%%
\usepackage{pgfplots}	
\pgfplotsset{compat=1.18} 

\usepackage{caption}
\usepackage{subcaption}		%%subfigures, etc.%%
\usepackage{braket}
\usepackage{stmaryrd} %%double brackets

\usepackage[T1]{fontenc}
\usepackage{currvita}

\usepackage{csquotes} %% quotes

\mathtoolsset{showonlyrefs} %%% Only number referenced equations

\newcommand{\assumref}[1]{\hyperref[#1]{\textcolor{Maroon}{Assumption~\ref*{#1}}}}

\newcommand{\secref}[1]{\hyperref[#1]{\textcolor{Maroon}{Section~\ref*{#1}}}}
\newcommand{\hypref}[1]{\hyperref[#1]{\textcolor{Maroon}{\ref*{#1}}}}
\newcommand{\appref}[1]{\hyperref[#1]{\textcolor{Maroon}{Appendix~\ref*{#1}}}}

\newcommand{\appsubref}[1]{\hyperref[#1]{\textcolor{Maroon}{Appendix~\ref*{#1}}}}

\newcommand{\thmref}[1]{\hyperref[#1]{\textcolor{Maroon}{Theorem~\ref*{#1}}}}
\newcommand{\lemref}[1]{\hyperref[#1]{\textcolor{Maroon}{Lemma~\ref*{#1}}}}
\newcommand{\defref}[1]{\hyperref[#1]{\textcolor{Maroon}{Definition~\ref*{#1}}}}
\newcommand{\propref}[1]{\hyperref[#1]{\textcolor{Maroon}{Proposition~\ref*{#1}}}}
\newcommand{\corref}[1]{\hyperref[#1]{\textcolor{Maroon}{Corollary~\ref*{#1}}}}
\newcommand{\remref}[1]{\hyperref[#1]{\textcolor{Maroon}{Remark~\ref*{#1}}}}
\newcommand{\figref}[1]{\hyperref[#1]{\textcolor{Maroon}{Figure~(\ref*{#1})}}}

\newcommand{\refhyp}[1]{\hyperref[#1]{\ref*{#1}}}

\newcommand{\hyporef}[1]{%
    \hyperref[#1]{%
        \textcolor{Maroon}{Hypothesis~\ref*{#1}}%
    }%
}

\usepackage{simpler-wick} 	%% wick contraction %%
\newlength{\wdth}

\usepackage{tikz}
\usepgfplotslibrary{fillbetween}
\usetikzlibrary{patterns}

\usetikzlibrary{arrows}
\usetikzlibrary{arrows,positioning,arrows.meta} 

\allowdisplaybreaks		%% page break in equation arrays %%

\def\C{{\mathbb C}}
\def\N{{\mathbb N}}	

\def\R{{\mathbb R}}

\def\H{{\mathbb H}}

\DeclareMathAlphabet{\mathdutchcal}{U}{dutchcal}{m}{n}
\DeclareMathAlphabet{\mathbdutchcal}{U}{dutchcal}{b}{n} % Optional: bold version

\def\cA{{\mathcal A}}   
\def\cB{{\mathcal B}}

\def\cD{{\mathcal D}}

\def\cF{{\mathcal F}}
\def\cH{{\mathcal H}}

\def\cM{{\mathcal M}}    
\def\cN{{\mathcal N}}

\def\cQ{{\mathcal Q}}     

\def\cS{{\mathcal S}}
\def\cT{{\mathcal T}}
\def\cU{{\mathcal U}}
    
\def\cW{{\mathcal W}}

\def \cW{{\mathcal W}}

\def\dcE{{\mathdutchcal E}}

\def\dcH{{\mathdutchcal H}}

\def\dcQ{{\mathdutchcal Q}}

\def\dcV{{\mathdutchcal V}}

\def\dcd{{\mathdutchcal d}}

\def\dch{{\mathdutchcal h}}

\def\dcq{{\mathdutchcal q}}

\def\fkH{{\mathfrak H}}

\def\fkS{{\mathfrak S}}

\def\ux{{\underline{x}}}

\def\bf{{\bf f}}

\def\h{{\mathfrak h}}

\def\cVec{\mathbf{V}}

\def\sym{{\operatorname{sym}}}

\def\vacf{{\Omega_f}}

\def\mr #1{\mathrm{#1}}		
\def \ms #1 {\mathscr{#1}}	%%%%%%%%%%%%%%

\newcommand{\op}[1]{\operatorname{#1}}

\newcommand{\pt}{\partial}

\newcommand{\wh}{\widehat}

\def\rchi{\raisebox{0.3ex}{$\chi$}}

\renewcommand{\Im}[1]{ \mathfrak{Im}\prc{ #1 }}
\renewcommand{\Re}[1]{ \mathfrak{Re}\prc{ #1 }}
\newcommand{\re}{\mathfrak{Re}}
\newcommand{\im}{\mathfrak{Im}}

\newcommand{\Tr}{\operatorname{Tr}}   %%%%%%%%%%%%%%

\def\1{\mathord{\text{\usefont{U}{dsrom}{m}{n}1}}}

\def\ad{a^\dagger}

\makeatletter
\def\namedlabel#1#2{\begingroup					
	#2%
	\def\@currentlabel{#2}%
	\phantomsection\label{#1}\endgroup
}
\makeatother

\theoremstyle{plain}
\newtheorem{theorem}{Theorem}[section]		
\newtheorem{definition}[theorem]{Definition}				%%%%%%%%%
\newtheorem{proposition}[theorem]{Proposition}		   %% Numbering%
\newtheorem{hypothesis}{Hypothesis}[section]		  %%%%%%%%%

\newtheorem{lemma}[theorem]{Lemma}
\newtheorem{corollary}[theorem]{Corollary}
\newtheorem{remark}[theorem]{Remark}

\newtheorem*{remark*}{Remark}
		
\numberwithin{equation}{section}

\def\wyl{\cW}

\def\wyld{\cW^\dagger}

\def\UN{\cU_N}
\def\UNd{\cU_N^\dagger}

\def\UN{\cU_N}
\def\UNd{\cU_N^\dagger}

\def\HMN{\cH_N}
\def\HN{\dcH_N}
\def\HMNbos{{\mathcal{H}_N^{bos}}}
\def\Fock{\mathcal{F}}
\def\Hf{\cH_f}

\def\Hkp{\mathfrak{H}_{k+1}}

\def\OA{{A}}
\def\OB{{B}}
\def\OT{{T}}
\def\OJ{{J}}

\def\Hf{\cH_f}

\def\HNb{\mathfrak{H}_N}
\def\Hkb{\mathfrak{H}_k}

\def\vfi{\varphi}

\def\eps{\epsilon}

\def\lam{\lambda}

\def\omg{\omega}

\makeatother

\makeatletter
\def\paragraph{\@startsection{paragraph}{4}%
  \z@\z@{-\fontdimen2\font}%
  {\normalfont\bfseries}}
\makeatother

\makeatletter
\def\@tocline#1#2#3#4#5#6#7{\relax
  \ifnum #1>\c@tocdepth % then omit
  \else
    \par \addpenalty\@secpenalty\addvspace{#2}%
    \begingroup \hyphenpenalty\@M
    \@ifempty{#4}{%
      \@tempdima\csname r@tocindent\number#1\endcsname\relax
    }{%
      \@tempdima#4\relax
    }%
    \parindent\z@ \leftskip#3\relax \advance\leftskip\@tempdima\relax
    \rightskip\@pnumwidth plus4em \parfillskip-\@pnumwidth
    #5\leavevmode\hskip-\@tempdima
      \ifcase #1
       \or\or \hskip 2em \or \hskip 3em \else \hskip 4em \fi%
      #6\nobreak\relax
    \hfill\hbox to\@pnumwidth{\@tocpagenum{#7}}\par% <---- \dotfill -> \hfill
    \nobreak
    \endgroup
  \fi}
\makeatother

\usepackage[affil-it]{authblk}
\usepackage{orcidlink}
\usepackage{amsaddr}
 
\newcommand\abs[1]{\left|#1 \right|}
\newcommand\norm[1]{\left\lVert#1 \right\rVert}
\newcommand{\inp}[1]{ \left \langle #1 \right \rangle}

\newcommand{\prc}[1]{ {\left( #1 \right)}}

\newcommand{\brs}[1]{ \left\{ #1 \right\}}
\def\pr{{\partial}}

\newcommand{\cc}[1]{\overline{#1}}

\newcommand*\mcap{\mathbin{\mathpalette\mcapinn\relax}}
\newcommand*\mcapinn[2]{%
  \vcenter{\hbox{%
    \scalebox{0.7}{$\mathsurround=0pt
      \ifx\displaystyle#1\textstyle\else#1\fi\bigcap$}}}}

\newcommand*\mcup{\mathbin{\mathpalette\mcupinn\relax}}
\newcommand*\mcupinn[2]{%
  \vcenter{\hbox{%
    \scalebox{0.7}{$\mathsurround=0pt
      \ifx\displaystyle#1\textstyle\else#1\fi\bigcup$}}}}

\begin{document} 

\title[Dynamical BEC formation and vanishing entropy for trapped bosons in non-relativistic QED]{Dynamical BEC formation and vanishing entropy for trapped bosons in non-relativistic QED via nonlinear resonance cascades}
\author{Thomas Chen$^\dagger$\orcidlink{0000-0003-2704-1454},\quad Ali Mezher$^\dagger$\orcidlink{0000-0002-9215-1804}}

\address{$^\dagger$Department of Mathematics, University of Texas at Austin, Austin, TX 78712, USA.}

\email{tc@math.utexas.edu} 
\email{alimezher@utexas.edu}

% \subjclass[2020]{Primary 35Q55, 35B40; Secondary 81Q15, 47F05, 82C10}
% \keywords{\hangindent=14pt Bose-Einstein condensation, mean-field limit, BBGKY hierarchy, von Neumann entropy, dynamical purification, quantum de Finetti theorem, Gross-Pitaevskii hierarchy}

\maketitle
\begin{abstract}

The dynamical formation of a complete Bose-Einstein condensate and the asymptotic purification of trapped bosons coupled to a quantized radiation field are rigorously established. By tracking the macroscopic one-particle density matrix $\kappa_T$ through a nonlinear resonance cascade, it is demonstrated that radiative relaxation irreversibly depletes all excited trap modes, even when the initial macroscopic state $\kappa_0$ is genuinely mixed. Here $\kappa_T$ is an average, with respect to a de Finetti measure $\mu_0$, over mean-field particle-field orbits, so that the mixedness of $\kappa_0$ reflects a statistical ensemble of orbits. Under the assumptions of a non-vanishing initial ground state component and strictly positive downward Fermi-Golden-Rule rates, strong trace-norm convergence to a pure condensate is proven,
\begin{equation*}
    \lim_{T\to\infty} \| \kappa_T-\ket{\rchi_0}\bra{\rchi_0} \|_{\mathfrak{S}_1(\h)} = 0 \, .
\end{equation*}
Furthermore, the von Neumann entropy of the macroscopic one-particle density matrix is shown to vanish asymptotically,
\begin{equation*}
    \lim_{T\to\infty}S[\kappa_T]=0 \, ,
\end{equation*}
without any claim of monotonicity in $T$. Thus, any initially mixed, positive-entropy state ($0<S[\kappa_0]<\infty$) is purified to the ground state $\ket{\rchi_0}\bra{\rchi_0}$. For the proof, we derive the Hartree hierarchy of the coupled system of bosons and photons from the finite-$N$ BBGKY hierarchy of the microscopic quantum many-body system, establishing trace-norm convergence in the mean-field limit as $N\to\infty$. The limits
are taken successively; first $N\to\infty$, then the weak-coupling limit, and finally $T\to\infty$. Although the global particle-field evolution remains unitary, the macroscopic one-particle state converges to a pure state as energy and correlations are continuously radiated into the dispersive field.

\end{abstract}

\setcounter{tocdepth}{2}
\tableofcontents
\section{Introduction and main results}
\label{sec:introduction}

Most dynamical theories of Bose gases begin where the formation problem ends. Initially, complete condensation is imposed, and the primary objective is to show that it persists. In the reduced-density-matrix formulation introduced by Penrose and Onsager, Bose--Einstein condensation is
characterized by the macroscopic occupation of a one-particle orbital \cite{penrose1956bose}.
Complete condensation means that the fraction of particles
occupying that orbital tends to one in the many-particle limit. With the trace-one normalization used here, complete condensation into a specified normalized orbital is expressed by convergence of the one-particle density matrix to the corresponding rank-one projector in trace norm. This criterion forms the foundation of the rigorous equilibrium theory of dilute trapped gases \cite{lieb2002proof,liseyn,liseyn2,lisesoyn,narose,deuchert2019bose} and the derivation of effective Hartree or Gross-Pitaevskii evolution from many-body quantum dynamics \cite{chen2025derivation,chen2011quintic,chen2014derivation,esy4,knowles2010mean,dietzelee2023,kirkpatrick2021large,kiscst,petrat2016new,pickl2011simple,rodnianski2009quantum,boers2016mean,fantechi2025quantum, AIF_2026__76_2_701_0,chen2019mean,golse2017schrodinger,rougerie2021scaling,chen2008cauchy,chen2010energy}.

These results elucidate the evolution of an already existing condensate. However, they do not, in isolation, account for the dynamical process by which a genuinely mixed macroscopic state sheds all excited-mode occupations, selects a single orbital, and attains purity. The formation problem is considerably more subtle because closed quantum dynamics are reversible. Known formation mechanisms often operate at a reduced level of detail. For example, a condensate may emerge as a singular mass in the bosonic Nordheim equation \cite{esve2015}, or it may be prepared by a prescribed Markovian reservoir \cite{diehl2008quantum,kraus2008preparation}.
The rigorous framework for entropy production and return to equilibrium in subsystem-reservoir interactions at positive temperature is a cornerstone of quantum statistical mechanics \cite{jakvsic2001entropy,jakvsic2002non,frohlich2004statistical}; the effective dynamics is typically linear and does not lead to complete ground state occupation in the equilibrium state. 

In the present work, there is no external dissipative generator or heat bath; instead, the underlying condensation mechanism is genuinely {\em nonlinear}. The confined particles and the radiation field form a single closed Hamiltonian system; the evolution is unitary, although energy and correlations may transfer from the particle subsystem to the dispersive field. As a result, the reduced bosonic state may asymptotically become pure without violating global unitarity. This process remains consistent with the standard entropy inequalities for quantum subsystems \cite{araki1970entropy,carlen2012bounds}. The central mathematical question is whether this physical picture persists through the transition from a microscopic many-body model to controlled macroscopic, long-time behavior.

This study provides a rigorous affirmative answer to this question. The principal finding is the identification of a radiative formation mechanism for complete ground-state Bose-Einstein condensation. Beginning with a macroscopic one-particle density matrix \(\kappa_0\), which may be genuinely mixed, the radiation-induced nonlinear resonance cascade evolution transfers the entire occupation of the excited trap modes to the ground state. The resulting rescaled density matrix
\(\kappa_T\) converges in trace norm to \(\ket{\rchi_0}\bra{\rchi_0}\) where $\chi_0$ is the ground state of the boson subsystem. The von Neumann entropy $S[\kappa_T]$ of this density matrix also vanishes asymptotically. That is, 
\begin{equation}
\label{eq:intro-headline-result}
\lim_{T\to\infty}
\left\|
\kappa_T-\ket{\rchi_0}\bra{\rchi_0}
\right\|_{\mathfrak S_1(\h)}=0,
\end{equation}
and
\begin{equation}
    \lim_{T\to\infty}S[\kappa_T]=0.
\end{equation}
Therefore, when \(0<S[\kappa_0]<\infty\), this result constitutes both dynamical condensation and purification. In other words, a positive-entropy bosonic state is irreversibly driven to the unique pure ground-state
condensate. It is important to note that this statement concerns asymptotic entropy vanishing and does not assert that the mapping
\(T\mapsto S[\kappa_T]\) is strictly monotone.
We note that the decrease of entropy in flocking phenomena with emergent order formation has been rigorously proven in recent work of E. Tadmor \cite{tadmor}.

\subsection{Ground-state condensation and entropy vanishing}The principal results are presented in precise form by considering the trapping potential \(V\) as specified in \hyporef{hyp:standing-hypotheses},
\begin{equation}
\label{eq:intro-trap-spectrum}
    \dch_0=-\Delta+V,
    \qquad
    \dch_0\rchi_j=E_j\rchi_j,
    \qquad
    E_0<E_1<E_2<\cdots,
    \qquad j\in\N_0,
\end{equation}
where \(\set{\rchi_j}_{j\in\N_0}\) is a complete orthonormal basis and \(\rchi_0\) is the ground state. Then, following \cite{chen2026trapped}, we expand the one-particle wave function in the trap eigenbasis
\begin{equation}
\label{eq:intro-trap-expansion}
    \vfi_t
    =\sum_{j=0}^{\infty}A_{j,t}(\vfi) e^{-itE_j}\rchi_j,
    \qquad
    \sum_{j=0}^{\infty}|A_{j,t}(\vfi)|^2=1,
\end{equation}
and we define the rescaled amplitudes
\begin{equation}
\label{eq:intro-rescaled-amplitudes}
    F_{j,T}^\eta=A_{j,T/\eta^2},
    \qquad
    T=\eta^2t.
\end{equation} 
Thus, the wave function in terms of the rescaled amplitudes reads
\begin{equation}
\label{eq:intro-rescaled-wave-function}
    \vfi_{T/\eta^2}
    =\sum_{j=0}^{\infty}F_{j,T}^\eta(\vfi)e^{-iT E_j/\eta^2}\rchi_j.
\end{equation}

The initial limiting bosonic hierarchy is represented by a de Finetti measure \(\mu_0\) supported on
\(B_R(0)\mcap\mathbb S\h\), where
\begin{equation}
    B_R(0)
    :=
    \left\{
        \vfi\in Y^1:
        \|\OT\vfi\|_{\h}<R
    \right\},
    \qquad R>\sqrt{C_E}.
\end{equation}
Here, \(C_E\) is the constant in \corref{cor:k-particle-energy-bound}, and the support
property follows from \corref{cor:bounded-energy-support-de-finetti}. See \defref{def:Y1-and-density-matrix-space} and \thmref{thm:strong-quantum-de-Finetti} for the relevant definitions and representation theorem. Its initial one-particle state is given by
\begin{equation}
\label{eq:intro-kappa-zero}
    \kappa_0
    :=
    \int_{B_R(0)\mcap\mathbb S\h}
        \ket{\vfi}\bra{\vfi}\,d\mu_0(\vfi).
\end{equation}
Here, \(\mu_0\) is the de Finetti measure associated with the full initial limiting hierarchy. The subsequent evolution is defined by evolving each orbital in this ensemble under the nonlinear cascade and then averaging. In general, the one-particle density matrix \(\kappa_0\) alone does not determine \(\mu_0\) or the subsequent state \(\kappa_T\).

For each \(\vfi\) in the support of \(\mu_0\), let the sequence \(F_T(\vfi)=\bigl(F_{j, T}(\vfi)\bigr)_{j\geq0}\) denote the solution to the effective nonlinear resonance cascade equation derived in \cite{chen2026trapped}, where the initial amplitudes correspond to the coefficients of \(\vfi\) in the basis \(\set{\rchi_j}_{j\in\N_0}\). The rescaled interaction-picture one-particle state is defined by
\begin{equation}
\label{eq:intro-kappa-T}
    \kappa_T
    :=\sum_{j,\ell=0}^{\infty}
    \left(
        \int_{B_R(0)\mcap\mathbb S\h}
        F_{j, T}(\vfi)\cc{F_{\ell, T}(\vfi)}\,d\mu_0(\vfi)
    \right)
    \ket{\rchi_j}\bra{\rchi_\ell},
\end{equation}
and the diagonalized state is given by
\begin{equation}
      \kappa_T^{\mr{diag}}
    :=
    \sum_{j=0}^{\infty}
    \left(
        \int_{B_R(0)\mcap\mathbb S\h}
        |F_{j, T}(\vfi)|^2\,d\mu_0(\vfi)
    \right)
    \ket{\rchi_j}\bra{\rchi_j}.
\end{equation}
This definition preserves the essential aspects of the physical problem: the radiative relaxation of each effective trajectory and the statistical mixing represented by \(\mu_0\). Furthermore, it clarifies the rationale for starting the modal sums at \(j=0\), thereby justifying the selection of the limiting state as the ground state. We assume the regularity and non-resonance hypotheses of \cite{chen2026trapped}. Additionally, assume that for \(\mu_0\)-almost every initial \(\vfi\), the ground state component is non-vanishing
\begin{equation}
\label{eq:intro-assumptions-F0}
    |F_{0,0}(\vfi)|>0.
\end{equation}
Furthermore, assume the strict positivity of the downward Fermi-Golden-Rule rates
\begin{equation}
\label{eq:intro-assumption-FGRjj}
    \Gamma_{0,j'}^{FGR}>0 \quad\text{for every }j'\geq1.
\end{equation}
The path--wise condensation theorem of \cite{chen2026trapped} guarantees
\begin{equation}
\label{eq:intro-pathwise-BEC}
    \lim_{T\to\infty}
    \sum_{j=1}^{\infty}|F_{j,T}|^2=0,
    \qquad
    \lim_{T\to\infty}|F_{0,T}|^2=1.
\end{equation}
The statement is intentionally formulated in terms of occupations. The phase of \(F_{0,T}\) does not require convergence, as it is inherently absent from the associated rank-one density matrix. Because the cascade conserves the \(\ell^2\)-mass of the boson subsystem, \eqref{eq:intro-pathwise-BEC} and the dominated convergence theorem together imply
\begin{equation}
\label{eq:intro-ensemble-ground-occupation}
    1-\langle\rchi_0,\kappa_T\rchi_0\rangle
    =
   \int_{B_R(0)\mcap\mathbb S\h}
    \sum_{j=1}^{\infty}|F_{j,T}|^2\,d\mu_0(\vfi)
    \longrightarrow0.
\end{equation}
For density matrices, the convergence in \eqref{eq:intro-ensemble-ground-occupation} directly implies the trace-norm result in \eqref{eq:intro-headline-result}. Because $\kappa_T$ is a positive trace-class operator with unit trace and $\ket{\rchi_0}\bra{\rchi_0}$ is a pure state, the convergence of the matrix element $\langle\rchi_0, \kappa_T \rchi_0\rangle \to 1$ forces the trace distance to vanish. This finding demonstrates complete ground-state condensation, characterized by the absence of excited occupation and the entire de Finetti ensemble consistently selecting the same rank-one state. 

The von Neumann entropy of $\kappa_T$ is bounded above by the Shannon entropy of its diagonal part,
\begin{equation}
\label{eq:intro-entropy-dephasing}
    0\leq S[\kappa_T]
    \leq S[\kappa_T^{\mr{diag}}].
\end{equation}
This result expresses the pinching inequality in the current basis \cite{lindblad1975completely,zhang2016quantifying}. The contribution of each fixed finite block in \(S[\kappa_T^{\mr{diag}}]\) vanishes according to \eqref{eq:intro-pathwise-BEC}, while the spectral tightness in \cite[Lemma~4.2]{chen2026trapped} provides a uniform tail control. Therefore,
\begin{equation}
    \lim_{T\to\infty}S[\kappa_T]=0.
\end{equation}
The result of \cite[Lemma~4.2]{chen2026trapped} is essential, as it reflects the established requirement for an energy- or entropy-tightness constraint in infinite-dimensional continuity theorems \cite{winter2016tight,shirokov2010continuity,shirokov2018adaptation,shirokov2019uniform,de2021quantum}. 

The physical interpretation of this result depends on the nature of the initial state. When \(\kappa_0\) is rank one, the theorem establishes radiative ground-state selection for a pure condensate. In contrast, if \(0<S[\kappa_0]<\infty\), then \(\kappa_0\) is genuinely mixed and \eqref{eq:intro-headline-result} demonstrates true dynamical formation and purification. This distinction cannot be determined solely by whether a particular representation of \(\mu_0\) is a Dirac measure, since global phases do not affect \(\ket{\vfi}\bra{\vfi}\). The state-level condition fundamentally depends on the mixedness of \(\kappa_0\) itself.

\subsection{The radiative mechanism and its relation to previous work}
The physical mechanism underlying \eqref{eq:intro-headline-result} involves transferring discrete-mode energy into dispersive radiation. Fermi--Golden--Rule damping, excited-state instability, and nonlinear ground-state selection have precedents in Hamiltonian dispersive equations \cite{soffer1999resonances,tsai2002relaxation,soffer2004selection,gang2007relaxation}. Related results on relaxation, decoherence, and return to equilibrium for finite-level matter coupled to massless bosonic fields are presented in \cite{bach2000return,de2013approach,hasler2011ground,merkli2008resonance}. These foundational studies demonstrate the same general principle: a localized degree of freedom can relax by radiating into a continuum. However, they do not address the collapse of a de Finetti mixture, the entropy reduction in a macroscopic bosonic subsystem from condensation, or the dynamics of complete Bose--Einstein condensation in a confined boson system, under conservation of the total boson mass.

In this context, the effective particle and field variables are \(\vfi_t\) and \(u_t\), which solve the coupled system
\begin{equation}
\label{eq:intro-effective-system}
\begin{cases}
    i\pt_t\vfi_t(x)
    =
    \cH[u_t,\vfi_t]\vfi_t(x),
    \\[3mm]
    i\pt_tu_t(y)
    =
    \omg(-i\nabla_y)u_t(y)
    +
    \eta\bigl(w*|\vfi_t|^2\bigr)(y).
\end{cases}
\end{equation}
Applying the Duhamel formula to the field equation separates \(u_t\) into the freely propagating coherent field and the radiation generated by \(|\vfi_t|^2\). Substituting this result into the particle equation produces an effective particle dynamics that is nonlocal in time, as the state of the trapped gas at time \(t\) depends on the radiation generated during its prior evolution. This memory term, rather than an artificially imposed dissipator, drives the irreversible mode transfer observed after weak-coupling averaging. 

To elucidate this transfer, the particle wave function is expanded in the trap eigenbasis,
\begin{equation}
\label{eq:intro-modal-expansion}
    \vfi_t
    =\sum_{j=0}^{\infty}A_j(t)e^{-itE_j}\rchi_j,
\end{equation}
where the \(A_j(t)\) are the time-dependent amplitudes associated with the eigenfunctions \(\rchi_j\). We set \(\lam=\eta^2\), introduce the slow macroscopic time \(T=\eta^2t\), and define the rescaled amplitudes as in \cite{chen2026trapped}, namely, for any fixed \(\vfi_t\) and for each \(j\geq0\), the rescaled amplitude is defined by
\begin{equation}
    F_{j,T}^\eta =A_j(T/\eta^2),
\end{equation}
where the index $j$ corresponds to the $j$-th eigenfunction of the one-particle Hamiltonian $\dch_0=-\Delta+V$. At fixed \(T\in[0,T_\eta]\) where $T_\eta=C_1\log\langle1/\eta\rangle$ for a sufficiently small $C_1>0$, the weak-coupling limit \(\eta\to0\) averages out the non-resonant oscillations and yields
\begin{equation}
    F_{j,T}^\eta \to F_{j,T}
\end{equation}
strongly in \(\ell^2\). The rational independence of the energy eigenvalues in
\hypref{hyp:spectral-genericity} excludes nontrivial four-mode resonances. Under the hypotheses of
\cite[Theorem~1.2]{chen2026trapped}, the limiting
amplitudes satisfy the diagonal nonlinear cascade equation
\begin{equation}
\label{eq:intro-diagonal-cascade}
    \pt_T F_{j,T}
    =
    \sum_{j'\geq0}
        M_{j,j'}|F_{j',T}|^2F_{j,T}.
\end{equation}
The non-negative symmetric Fermi--Golden--Rule rates \(\Gamma_{j,j'}^{FGR}\), combined with the directional ordering of the eigenvalues, dictate the evolution of the occupations
\begin{equation}
\label{eq:intro-occupation-cascade}
    \pt_T|F_{j,T}|^2
    =\sum_{j'\geq0}
    2\Gamma_{j,j'}^{FGR}
    \left(\1_{\{j'>j\}}-\1_{\{j'<j\}}\right)
    |F_{j',T}|^2|F_{j,T}|^2.
\end{equation}
Consequently, the \(\ell^2\)-mass is perfectly conserved, whereas the particle energy decays
\begin{equation}
\label{eq:intro-energy-decay}
    \pt_T\sum_{j=0}^{\infty}E_j|F_{j,T}|^2
    =-\sum_{j<j'}
    2\Gamma_{j,j'}^{FGR}(E_{j'}-E_j)
    |F_{j',T}|^2|F_{j,T}|^2
    \leq0.
\end{equation}
Equation \eqref{eq:intro-assumption-FGRjj} formalizes radiative cooling by expressing mass conservation alongside the irreversible transfer of particle energy to the dispersive field. The strict positivity of the downward rates in \eqref{eq:intro-assumption-FGRjj}, together with the monotonicity of the high-mode mass tails, ensures \eqref{eq:intro-pathwise-BEC}. Averaging these independent trajectories with respect to \(\mu_0\) yields the conclusions in \eqref{eq:intro-headline-result}. Our previous work \cite{chen2026trapped} established the weak-coupling nonlinear resonance cascade equation and demonstrated complete ground state occupation for a single effective particle-field trajectory. The principal advance of the present paper is to extend this dynamical mechanism from pure wave functions to macroscopic states characterized by a de Finetti measure. In particular, every ensemble member converges to the same ground-state projector, ensuring condensation and purification at the ensemble level. The second key contribution, logically supporting but distinct from the long-time purification theorem, is the rigorous derivation of the effective particle-field dynamics directly from the microscopic \(N\)-body evolution.

\subsection{The microscopic model and the mean-field bridge}
Under the conditions of \hyporef{hyp:standing-hypotheses}, the microscopic system acting on $\HN$ is generated by the Hamiltonian 
\begin{equation}
\label{eq:intro-microscopic-Hamiltonian}
    \HMN
    =\HMNbos\otimes\1_{\Fock}
    +{\1}_{\HNb}\otimes\Hf
    +\frac{\eta}{\sqrt N}W_N,
\end{equation}
where
\begin{equation}
\label{eq:intro-components}
\begin{cases}
    \HMNbos
    &=\displaystyle \sum_{j=1}^{N}\dch_0^{(j)}
      +\frac{\lam}{N}\sum_{1\leq i<j\leq N}v(x_i-x_j),\vspace{0.5 em}\\
    \Hf
    &=\displaystyle \int_{\R^3}\omg(\xi)\ad_\xi a_\xi\,d\xi,
    \qquad
    W_N
    =\sum_{j=1}^{N}\phi\bigl(w(x_j-\cdot)\bigr).
\end{cases}
\end{equation}
The scaling factor \(N^{-1}\) balances the \(O(N^2)\) interacting particle pairs. The factor \(N^{-1/2}\) is matched
to an initially coherent field,
\begin{equation}
\label{eq:intro-initial-state}
    \Psi_{N,0}
    =\phi_{N,0}\otimes\wyl[\sqrt N\,u_0]\vacf,
    \qquad
    \phi_{N,0}\in\HNb,
    \qquad
    u_0\in\mathcal D(\omg^{1/2}),
\end{equation}
ensuring that both the particle interaction and the field back-reaction remain dynamically visible in the limit \(N\to\infty\).
Mean-field and classical-field limits for related Nelson- and Pauli--Fierz-type systems have been extensively studied \cite{ammari2014wigner,leopold2017mean,AIF_2026__76_2_701_0}. Furthermore, coupled mean-field limits for other two-component quantum systems provide an important comparison
\cite{cardenas2023effective, cardenas2025emergence}. The present result, however, is distinguished by its subsequent weak-coupling and long-time analysis of asymptotic condensation and entropy drop. For the special factorized particle data \(\phi_{N,0}=\vfi_0^{\otimes N}\), our secondary main
theorem proves that, for every fixed \(k\) and on bounded time intervals,
\begin{equation}
\label{eq:intro-mean-field-convergence}
    \lim_{N\to\infty}
    \left\|
        \gamma_{N,t}^{(k)}
        -\ket{\vfi_t^{\otimes k}}\bra{\vfi_t^{\otimes k}}
    \right\|_{\mathfrak S_1(\Hkb)}=0,
\end{equation}
where $\vfi_t$ solves the effective system \eqref{eq:intro-effective-system}. 

\smallskip
\textit{Note} that it is important to distinguish factorization in the particle variables from localization in the trap spectrum. The assumption $\phi_{N,0}=\vfi_0^{\otimes N}$ dictates that all particles initially occupy the same one-particle orbital; however, it does not require $\vfi_0$ to equal the ground state $\rchi_0$, to be close to it, or to represent a mere microscopic fluctuation around it. Indeed, one may expand the initial state as
\begin{equation}
    \vfi_0=\sum_{j=0}^{\infty}A_{j,0}\rchi_j, \qquad \sum_{j=0}^{\infty}|A_{j,0}|^2=1,
\end{equation}
This analysis does not assume that the excited-state mass $\sum_{j\geq1}|A_{j,0}|^2$ is small. The initial wave function may exhibit an order-one excited-mode occupation and may involve infinitely many trap modes. Therefore, we use factorization exclusively to isolate a single deterministic trajectory of the fully nonlinear mean-field particle-field system, rather than to linearize the dynamics or confine them to perturbative fluctuations. 

For a general initial state, the de Finetti measure $\mu_0$ evolves each admissible orbit within its support under the same effective flow, and the resulting trajectories are subsequently averaged. As a result, the nonlinear resonance cascade equation governs the complete modal occupations $F_{j,T}$ of general admissible excited initial states, not merely the fluctuations around the ground state. The long-time condensation result remains subject to the stated non-resonance and Fermi-Golden-Rule hypotheses, including the requirement that the initial ground-state amplitude is nonzero, $|F_0(0)|>0$. Thus, for general limiting data, the BBGKY hierarchy is represented as
\begin{equation}
\label{eq:intro-de-Finetti-hierarchy}
    \gamma_{\infty,t}^{(k)}
    =\int_{B_R(0)\subset Y^1\mcap\mathbb S\h}
      \ket{\vfi_t^{\otimes k}}\bra{\vfi_t^{\otimes k}}
      \,d\mu_0(\vfi).
\end{equation}
The proof crucially depends on uniform particle and field estimates, the compactness and admissibility of the limiting marginals, the identification of the transported radiation source via multi-time observables, and the uniqueness of the limiting hierarchy. The phase-space and hierarchy techniques supporting this analysis are grounded in quantum de Finetti and Wigner-measure methods \cite{lions1993mesures,ammari2011mean,lewin2014derivation,chen2015unconditional}. The hierarchy-Liouville duality used to establish uniqueness was introduced in \cite{ammari2020well}. Related mean-field Egorov formulations are discussed in \cite{frgrsc,frohlich2019microscopic}.

\bigskip

The three limits considered in this paper are executed successively. First, we take the limit \(N\to\infty\) at fixed \(t\) and fixed \(\eta\), yielding the effective mean-field particle–field system. Second, by setting \(\lam=\eta^2\) and \(T=\eta^2t\), the weak-coupling limit \(\eta\to0\) is taken at a fixed macroscopic time \(T\in[0,T_\eta ]\); the macroscopic time interval is defined with $T_\eta = C_1\log\langle1/\eta\rangle$ for a sufficiently small constant $C_1>0$ determined in \cite{chen2026trapped}, and tends to $\R_+$ as $\eta\rightarrow0$. This yields nonlinear resonance-cascade dynamics for \(F_T\) and the state \(\kappa_T\) in the interaction picture. Finally, the long-time limit \(T\to\infty\) determines the complete ground-state occupation and the asymptotic vanishing of the entropy. We do not claim a simultaneous limit in \(N\), \(\eta\), and \(T\). 
The remainder of the paper follows this logical sequence. First, we establish the estimates, compactness, and identification necessary for the mean-field limit. Next, we use the de Finetti representation and the uniqueness of the limiting hierarchy to identify the effective particle–field flow. The subsequent weak-coupling analysis rigorously isolates the nonlinear resonance cascade. Finally, the Fermi–Golden-Rule transfer, uniform control of high-energy modes, and entropy estimates are combined to demonstrate complete Bose–Einstein condensation and the convergence of the boson subsystem to a pure state as $T\rightarrow\infty$.

\subsection{Main results}
In this section, we give the main results of the paper. The first theorem demonstrates that the von Neumann entropy of the particle subsystem vanishes in the long-time limit, a direct consequence of radiative relaxation and complete Bose-Einstein condensation. The second theorem rigorously derives the effective particle-field dynamics from the microscopic many-body evolution, confirming that the mean-field limit correctly describes the system's behavior.

\begin{theorem}[From microscopic dynamics to condensation]
\label{cor:microscopic-to-cascade}
Fix \(\eta_*>0\) and set \(\lambda=\eta^2\) for
\(0<\eta\leq\eta_*\). Assume that the trapping and interaction potentials
satisfy the hypotheses of the weak-coupling cascade theorem in
\cite[Theorem~1.2]{chen2026trapped}, and that
\hyporef{hyp:standing-hypotheses} holds for each fixed
\(\eta\). Let \(H_N^\eta\) denote the corresponding microscopic Hamiltonian.
Let
\begin{equation}
  \Psi_{N,0}^{\eta}
  =\varphi_{N,0}^{\eta}\otimes
    W[\sqrt{N}\,u_0]\Omega_f,
  \qquad
  \|\varphi_{N,0}^{\eta}\|_{\mathcal H_N}=1,
\end{equation}
where the position-space field profile
\begin{equation}
  u_0\in H^{1/2}(\mathbb R^3)\mcap W^{3,1}(\mathbb R^3)
\end{equation}
is independent of \(N\) and \(\eta\).
The constants and lower thresholds on \(N\) in the initial moment
assumption \hypref{hyp:bounded-energy-per-particle} may depend on \(\eta\). Suppose that there is a phase-invariant Borel probability measure
\(\mu_0\), independent of \(\eta\), supported on
\begin{equation}
  \mathcal E_R
  :=\{\phi\in Y^1:\|\phi\|_h=1,\ \|\phi\|_{Y^1}\leq R\},
  \qquad R<\infty,
\end{equation}
such that, for every fixed \(\eta\) and every integer \(k\geq1\), the weak-* limit holds
\begin{equation}
\label{eq:bridge-initial-hierarchy}
  \gamma_{N,0}^{(k),\eta}
  \mathop{\rightharpoonup}^{*}
  \int_{\mathcal E_R}
       |\phi^{\otimes k}\rangle\langle\phi^{\otimes k}|
       \,d\mu_0(\phi)
  \qquad\text{as }N\to\infty, 
\end{equation}
against compact operators.
Here \(\gamma_{N,t}^{(k),\eta}\) is the $k$-particle marginal of
\(\Psi_{N,t}^{\eta}=e^{-itH_N^\eta}\Psi_{N,0}^{\eta}\). Let \(F_T(\phi)=(F_{j,T}(\phi))_{j\geq0}\) solve the limiting
resonance cascade \eqref{eq:intro-diagonal-cascade} with
\(F_{j,0}(\phi)=\langle\chi_j,\phi\rangle_h\), and define the macroscopic time density matrix
\begin{equation}
  \kappa_T:=
  \int_{\mathcal E_R}
       |\Phi_T(\phi)\rangle\langle\Phi_T(\phi)|
       \,d\mu_0(\phi),
\end{equation}
where 
\begin{equation}
      \Phi_T(\phi):=\sum_{j\geq0}F_{j,T}(\phi)\chi_j.
\end{equation}
Then, for every \(\tau<\infty\),
\begin{equation}
\label{eq:bridge-ordered-convergence}
  \lim_{\eta\downarrow0}\lim_{N\to\infty}
  \sup_{0\leq T\leq\tau}
  \left\|
    e^{iT \dch_0/\eta^2}\gamma_{N,T/\eta^2}^{(1),\eta}
    e^{-iT \dch_0/\eta^2}-\kappa_T
  \right\|_{\mathfrak S_1(h)}
  =0.
\end{equation}
If, furthermore,
\begin{equation}
  |\langle\chi_0,\phi\rangle_h|>0
  \quad\text{for }\mu_0\text{-almost every }\phi,
  \qquad
  \Gamma_{0,j}^{\mathrm{FGR}}>0
  \quad\text{for every }j\geq1,
\end{equation}
then
\begin{equation}
\label{eq:bridge-ordered-condensation}
  \lim_{T\to\infty}\lim_{\eta\downarrow0}\lim_{N\to\infty}
  \left\|
    e^{iT \dch_0/\eta^2}\gamma_{N,T/\eta^2}^{(1),\eta}
    e^{-iT \dch_0/\eta^2}
    -|\chi_0\rangle\langle\chi_0|
  \right\|_{\mathfrak S_1(h)}
  =0.
\end{equation}
In particular, the von Neumann entropy of $\kappa_T$,
\begin{equation}
    S(\kappa)=-\operatorname{Tr}_h(\kappa\log\kappa) \,,
\end{equation} 
satisfies
\begin{equation}
\label{eq:bridge-cascade-entropy}
  \lim_{T\to\infty}S(\kappa_T)=0
  \,.
\end{equation}
The microscopic limits in these statements are taken in the displayed
order; first \(N\to\infty\), then \(\eta\downarrow0\), and finally
\(T\to\infty\) in \eqref{eq:bridge-ordered-condensation}.
\end{theorem}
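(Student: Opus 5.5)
The proof chains the three limits, each handled by a separate result, and then checks that trace-norm convergence survives each passage.

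\emph{Step 1 ($N\to\infty$ at fixed $\eta$).} Apply the mean-field theorem (the secondary main result, in its de Finetti form \eqref{eq:intro-de-Finetti-hierarchy}). For each fixed $\eta$ and $t$ in bounded intervals, the hypotheses give weak-* convergence of the initial marginals to $\int|\phi^{\otimes k}\rangle\langle\phi^{\otimes k}|\,d\mu_0$. The uniform energy bounds of \hypref{hyp:bounded-energy-per-particle} give compactness of the marginals. Uniqueness of the limiting Hartree hierarchy then yields
\[
\gamma^{(1),\eta}_{N,t}\to\int|\vfi^\eta_t(\phi)\rangle\langle\vfi^\eta_t(\phi)|\,d\mu_0(\phi),
\]
where $\vfi^\eta_t(\phi)$ solves \eqref{eq:intro-effective-system} with data $(\phi,u_0)$. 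To upgrade weak-* to trace-norm convergence, I use that the limit has unit trace (the support lies in the unit sphere), so no mass escapes, and invoke Grümm's theorem. Uniformity in $t\in[0,\tau/\eta^2]$ follows from equicontinuity in $t$, which comes from the BBGKY equation and the energy bounds on the compact interval. Conjugation by the unitary $e^{iT\dch_0/\eta^2}$ does not change trace norms. The result is
\[
e^{iT\dch_0/\eta^2}\Big(\int|\vfi^\eta_{T/\eta^2}\rangle\langle\vfi^\eta_{T/\eta^2}|\,d\mu_0\Big)e^{-iT\dch_0/\eta^2}=\int|\Phi^\eta_T\rangle\langle\Phi^\eta_T|\,d\mu_0,
\]
with $\Phi^\eta_T=\sum_j F^\eta_{j,T}\chi_j$.

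\emph{Step 2 ($\eta\downarrow0$).} Estimate
\[
\big\||\Phi^\eta_T\rangle\langle\Phi^\eta_T|-|\Phi_T\rangle\langle\Phi_T|\big\|_{\mathfrak S_1}\le 2\|\Phi^\eta_T-\Phi_T\|_h=2\|F^\eta_T-F_T\|_{\ell^2}.
\]
By \cite[Theorem~1.2]{chen2026trapped}, the right-hand side tends to $0$ uniformly for $T\le\tau\le T_\eta$, for each $\phi\in\mathcal E_R$. Since the integrand is bounded by $2$, dominated convergence under $\mu_0$ gives the convergence to $\kappa_T$, so the supremum over $T$ can be taken pointwise in $\phi$. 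Exchanging the supremum with the integral requires either uniformity of the cascade estimate over $\mathcal E_R$, or taking the supremum inside the integral before applying dominated convergence; the latter works because $\sup_T\|F^\eta_T-F_T\|\to0$ pointwise. This establishes \eqref{eq:bridge-ordered-convergence}.

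\emph{Step 3 ($T\to\infty$).} By \eqref{eq:intro-pathwise-BEC}, which holds $\mu_0$-a.e. under the stated positivity hypotheses, together with mass conservation and dominated convergence, we get $\langle\chi_0,\kappa_T\chi_0\rangle\to1$. For a density matrix $\kappa$ with $p=\langle\chi_0,\kappa\chi_0\rangle$, the bound
\[
\|\kappa-|\chi_0\rangle\langle\chi_0|\|_{\mathfrak S_1}\le 2\sqrt{1-p}+2(1-p)
\]
(from the block decomposition and positivity) gives condensation. Combined with Step 2, this is \eqref{eq:bridge-ordered-condensation}.

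For the entropy, apply the pinching bound \eqref{eq:intro-entropy-dephasing}, so it suffices to show that $S(\kappa^{\mr{diag}}_T)=-\sum_j p_j\log p_j\to0$, where $p_j(T)=\int|F_{j,T}|^2d\mu_0$. The terms $j\le J$ vanish as $T\to\infty$ since $p_0\to1$ and $p_j\to0$. The tail is controlled by the uniform-in-$T$ moment bound $\sum_j E_j p_j(T)\le\sum_jE_jp_j(0)\le C R^2$, which follows from the energy decay \eqref{eq:intro-energy-decay} integrated against $\mu_0$. By the Gibbs variational inequality, for $\beta>0$,
\[
-\sum_{j>J}p_j\log p_j\le \beta\sum_{j>J}E_jp_j+q_J\log\sum_{j>J}e^{-\beta E_j}-q_J\log q_J,
\qquad q_J=\sum_{j>J}p_j .
\]
This requires summability of $e^{-\beta E_j}$, which should be the content of the spectral tightness \cite[Lemma~4.2]{chen2026trapped} for the confining trap. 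Using $q_J\le CR^2/E_J$, letting $T\to\infty$ and then $J\to\infty$ makes the tail small.

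The main obstacle is Step 1: obtaining trace-norm convergence uniformly on the growing time window $[0,\tau/\eta^2]$ at fixed $\eta$. For this I rely on the fact that $\tau/\eta^2$ is finite for fixed $\eta$, so the bounded-interval mean-field theorem applies directly.
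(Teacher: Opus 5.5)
Your overall chain is the paper's. It consists of the fixed-$\eta$ mean-field limit on $[0,\tau/\eta^2]$, then the rank-one estimate $\bigl\||f\rangle\langle f|-|g\rangle\langle g|\bigr\|_{\mathfrak S_1}\le2\|f-g\|$ with the supremum over $T$ taken inside the $\mu_0$-integral, then the fidelity bound from $\langle\chi_0,\kappa_T\chi_0\rangle\to1$, and finally the pinching inequality.

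In Step~1 the paper simply cites Theorem~\ref{thm:uniqueness-of-gp-hierarchy}. That theorem already delivers $\sup_t$ trace-norm convergence, via the spectral cutoff $\Pi_d$ and the uniform energy bound. Your equicontinuity-plus-Gr\"umm route also works, once you add the standard argument along sequences $t_N\to t_*$ using weak-$*$ continuity of the limit. Note also that the uniqueness being used there is for the joint particle--field measure, not for the particle Hartree hierarchy alone.

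The genuinely different ingredient is the entropy tail. The paper takes the uniform-in-$T$ mass tail $P_K(T)\le CK^{-\theta}$ from \cite[Lemma~4.2]{chen2026trapped}, sums by parts, and compares the normalized tail with the reference weights $K/j^2$. This yields the explicit rate $CK^{-\theta}\log K$. You instead integrate the cascade energy identity to get $\sum_jE_jp_j(T)\le R^2$, and compare with thermal weights $e^{-\beta E_j}$.

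Your route is more self-contained. It needs only energy dissipation and $\operatorname{Tr}e^{-\beta(-\Delta+V)}<\infty$ for every $\beta>0$. That finiteness is not what Lemma~4.2 provides; it follows from the confining lower bound on $V$ (e.g.\ Golden--Thompson gives $\operatorname{Tr}e^{-\beta(-\Delta+V)}\le(4\pi\beta)^{-3/2}\int_{\R^3}e^{-\beta V}\,dx$). The price is a less explicit rate.

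One order of limits must be made explicit. For fixed $\beta$, the term $\beta\sum_{j>J}E_jp_j(T)$ is only bounded by $\beta R^2$; it is not small as $J\to\infty$. So let $J\to\infty$ first. Via $q_J\le R^2/E_{J+1}$, this kills $q_J\log\sum_{j>J}e^{-\beta E_j}$ and $-q_J\log q_J$ uniformly in $T$. Only then send $\beta\downarrow0$. Alternatively, observe that the downward cascade makes each tail energy $\sum_{j>J}E_j|F_{j,T}|^2$ nonincreasing in $T$. It is then dominated by the initial tail, which vanishes as $J\to\infty$ even with $\beta$ fixed.
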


\begin{proof}
For each initial orbital \(\phi\), let
\((\phi_t^\eta(\phi),u_t^\eta(\phi))\) be the coupled mean-field
solution of \eqref{eq:coupled-system-intro} with initial data
\((\phi,u_0)\) and coupling
\(\lambda=\eta^2\). Set
\begin{equation}
  \Phi_T^\eta(\phi)
  :=e^{iT \dch_0/\eta^2}\phi_{T/\eta^2}^\eta(\phi),
  \qquad
  \kappa_T^\eta
  :=\int_{\mathcal E_R}
       |\Phi_T^\eta(\phi)\rangle\langle\Phi_T^\eta(\phi)|
       \,d\mu_0(\phi).
\end{equation}

For fixed \(\eta>0\), \thmref{thm:uniqueness-of-gp-hierarchy} applies on the finite physical time interval \([0,\tau/\eta^2]\). Unitary invariance of the trace
norm therefore yields
\begin{equation}
\label{eq:bridge-fixed-coupling}
  \lim_{N\to\infty}
  \sup_{0\leq T\leq\tau}
  \left\|
    e^{iTh_0/\eta^2}\gamma_{N,T/\eta^2}^{(1),\eta}
    e^{-iTh_0/\eta^2}-\kappa_T^\eta
  \right\|_{\mathfrak S_1(h)}
  =0.
\end{equation}

The weak-coupling convergence in \cite[Theorem~1.2]{chen2026trapped},
expressed in the trap eigenbasis, gives
\begin{equation}
  \sup_{0\leq T\leq\tau}
  \|\Phi_T^\eta(\phi)-\Phi_T(\phi)\|_h
  \longrightarrow0
  \qquad\text{as }\eta\downarrow0
\end{equation}
for \(\mu_0\)-almost every \(\phi\).
Both vectors have norm one. For unit vectors \(f,g\in h\),
\begin{equation}
  \bigl\||f\rangle\langle f|-|g\rangle\langle g|\bigr\|_{\mathfrak S_1(h)}
  \leq 2\|f-g\|_h.
\end{equation}
Consequently,
\begin{equation}
  \sup_{0\leq T\leq\tau}
  \|\kappa_T^\eta-\kappa_T\|_{\mathfrak S_1(h)}
  \leq
  2\int_{\mathcal E_R}
       \sup_{0\leq T\leq\tau}
       \|\Phi_T^\eta(\phi)-\Phi_T(\phi)\|_h
       \,d\mu_0(\phi)
  \longrightarrow0.
\end{equation}
The last step uses dominated convergence; the integrand is bounded by
\(2\), and it is measurable because the trajectories are continuous
and the supremum can be taken over a countable dense set of times.
Combining this limit with \eqref{eq:bridge-fixed-coupling} proves
\eqref{eq:bridge-ordered-convergence}. 

Under the additional ground state and transition-rate assumptions,
\thmref{thm:entropy-boson-subsystem} gives
\begin{equation}
  \|\kappa_T-|\chi_0\rangle\langle\chi_0|\|_{\mathfrak S_1(h)}
  \longrightarrow0,
  \qquad
  S(\kappa_T)\longrightarrow0.
\end{equation}
For each fixed \(T\), the reverse triangle inequality and
\eqref{eq:bridge-ordered-convergence} identify the first two limits
in \eqref{eq:bridge-ordered-condensation} with
\(\|\kappa_T-|\chi_0\rangle\langle\chi_0|\|_{\mathfrak S_1(h)}\).
Letting \(T\to\infty\) proves the remaining assertions.
\end{proof}

\begin{theorem}[Entropy and pure state formation in the the boson subsystem]
\label{thm:entropy-boson-subsystem}
Suppose that \hyporef{hyp:standing-hypotheses} holds, along with the regularity and non-resonance assumptions from \cite{chen2026trapped}. Let $\set{\rchi_j}_{j\geq0}$ be the orthonormal complete basis of the one-particle Hamiltonian $\dch_0=-\Delta+V$, where $\rchi_0$ corresponds to the ground state. By \corref{cor:bounded-energy-support-de-finetti}
\begin{equation}
    \mu_0\!\left(
        \left\{
            \varphi\in Y^1\mcap\mathbb S\h:
            \|\OT\varphi\|^2\leq C_E
        \right\}
    \right)=1.
\end{equation}
Thus, the bounded initial energy-support condition below holds with $R>\sqrt{C_E}$.
Let $\mu_0$ be the initial de Finetti measure, supported on a ball of finite radius \(B_R(0)\subset Y^1\mcap\mathbb S\h\), and let $\kappa_T$ be the non-diagonal density matrix of the subsystem defined by
\begin{equation}
\label{eq:def-kappa-T}
\kappa_T := \sum_{j,\ell=0}^{\infty} \left( \int_{B_R(0)\mcap \, \mathbb{S}\h} F_{j,T}(\vfi)\cc{F_{\ell,T}(\vfi)} \,d\mu_0(\vfi) \right) \ket{\rchi_j}\bra{\rchi_\ell},
\end{equation}
where $F_{j,T}(\vfi)$ denotes the $j$-th rescaled amplitude at the macroscopic time $T$ satisfying the nonlinear cascade equation \eqref{eq:intro-diagonal-cascade} introduced by the authors in \cite{chen2026trapped}, with initial value $F_{j,0}(\vfi)=\inp{\rchi_j,\vfi}_{\h}$. 
Assume, in addition, that for $\mu_0$-almost every initial state $\vfi$, the ground state component is non-vanishing, such that $|F_{0,0}(\vfi)| > 0$. Furthermore, assume the positivity of all required downward Fermi-Golden-Rule rates, $\Gamma_{0,j'}^{\mr{FGR}} > 0$ for all $j' \ge 1$.
Then, as $T \to \infty$, the one-particle density matrix at time $T$ converges in trace norm to the pure state corresponding to the ground state,
\begin{equation}
\label{eq:trace-norm-convergence}
\lim_{T \to \infty} \left\| \kappa_T - \ket{\rchi_0}\bra{\rchi_0} \right\|_{\mathfrak S_1(\h)} = 0,
\end{equation}
under conservation of the trace,
\begin{eqnarray}
    \Tr(\kappa_T) = \Tr(\kappa_0) 
    \;\;\forall\,T>0 \,,
\end{eqnarray}
and the von Neumann entropy of the subsystem vanishes asymptotically,
\begin{equation}
    \lim_{T \to \infty} S[\kappa_T] = 0.
\end{equation}
This establishes dynamical Bose-Einstein formation in the boson subsystem with 100\% ground-state occupation, under conservation of the boson mass and entropy decrease to zero.
\end{theorem}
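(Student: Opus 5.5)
We first record the structural facts about $\kappa_T$. For each $\vfi$ in the support of $\mu_0$, the cascade \eqref{eq:intro-diagonal-cascade} conserves $\sum_j|F_{j,T}(\vfi)|^2$. This follows from \eqref{eq:intro-occupation-cascade}: the coefficient matrix $2\Gamma^{FGR}_{j,j'}(\1_{\{j'>j\}}-\1_{\{j'<j\}})$ is antisymmetric while the weights $|F_{j'}|^2|F_j|^2$ are symmetric, so the sum of the time derivatives vanishes. Since $\vfi\in\mathbb S\h$, we get $\|\Phi_T(\vfi)\|_\h=1$. Hence $\kappa_T=\int\ket{\Phi_T(\vfi)}\bra{\Phi_T(\vfi)}\,d\mu_0(\vfi)$ is a Bochner integral of rank-one projectors, bounded by $1$ in $\mathfrak S_1(\h)$. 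We will check measurability of $\vfi\mapsto\Phi_T(\vfi)$ via continuous dependence of the cascade solution on initial data from \cite{chen2026trapped}. Consequently $\kappa_T\geq0$ and $\Tr\kappa_T=1=\Tr\kappa_0$, which gives trace conservation.

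\emph{Trace-norm convergence.} By the pathwise theorem \eqref{eq:intro-pathwise-BEC}, which holds for $\mu_0$-a.e.\ $\vfi$ thanks to $|F_{0,0}(\vfi)|>0$ and $\Gamma^{FGR}_{0,j'}>0$, we have $p_T(\vfi):=\sum_{j\ge1}|F_{j,T}(\vfi)|^2\to0$. Since $0\le p_T\le1$, dominated convergence gives $\varepsilon_T:=1-\langle\rchi_0,\kappa_T\rchi_0\rangle=\int p_T\,d\mu_0\to0$. To pass to trace norm we use the standard estimate for a density matrix $\kappa$ and unit vector $\rchi_0$. Write $P=\ket{\rchi_0}\bra{\rchi_0}$ and $Q=1-P$. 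Then
\begin{equation}
\|\kappa-P\|_{\mathfrak S_1}\le |1-\langle\rchi_0,\kappa\rchi_0\rangle|+\Tr(Q\kappa Q)+2\|P\kappa Q\|_{\mathfrak S_1}.
\end{equation}
By Cauchy--Schwarz for positive operators, $\|P\kappa Q\|_{\mathfrak S_1}\le \|\kappa^{1/2}P\|_{\mathfrak S_2}\|\kappa^{1/2}Q\|_{\mathfrak S_2}\le\sqrt{\varepsilon}$. Altogether this gives $\|\kappa_T-P\|_{\mathfrak S_1}\le 2\varepsilon_T+2\sqrt{\varepsilon_T}\to0$.

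\emph{Entropy.} We bound $0\le S[\kappa_T]\le S[\kappa_T^{\mr{diag}}]$ using the pinching inequality \eqref{eq:intro-entropy-dephasing}. Here $\kappa_T^{\mr{diag}}$ has eigenvalues $q_{j,T}=\int|F_{j,T}|^2d\mu_0$, with $q_{0,T}\to1$ and $\sum_{j\ge1}q_{j,T}=\varepsilon_T\to0$. Fix $J$ and split $-\sum_j q_j\log q_j$ into $j\le J$ and $j>J$. The finite block tends to $0$ because $x\log x$ is continuous at $0$ and $1$.

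The main obstacle is the tail $-\sum_{j>J}q_{j,T}\log q_{j,T}$, which must be small uniformly in large $T$. Mass vanishing alone does not control entropy in infinite dimensions, so we use energy tightness. By \eqref{eq:intro-energy-decay} the particle energy $\sum_jE_j|F_{j,T}|^2$ is nonincreasing along each trajectory. On the support $B_R(0)$ it is bounded by $C(R)$, the standing link between $\|\OT\vfi\|$ and $\dch_0$, which we invoke as the spectral tightness of \cite[Lemma~4.2]{chen2026trapped}. Hence $\sum_jE_jq_{j,T}\le C(R)$ uniformly in $T$. Combined with the growth of $E_j$ from Hypothesis~\ref{hyp:standing-hypotheses}, which is Weyl-type for a confining trap and gives $\sum_j e^{-\beta E_j}<\infty$ for every $\beta>0$, the Gibbs variational principle yields
\begin{equation}
-\sum_{j>J}q_j\log q_j\le \beta\sum_{j>J}E_jq_j+s_J\log\Bigl(\sum_{j>J}e^{-\beta E_j}\Bigr)-s_J\log s_J,\qquad s_J=\sum_{j>J}q_j.
\end{equation}
We use this with $\beta$ chosen small and $J$ then chosen large. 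Since $s_J\le\varepsilon_T\to0$, the tail tends to at most $\beta\,C(R)$ as $T\to\infty$, and $\beta$ is arbitrary. This concludes $S[\kappa_T]\to0$.
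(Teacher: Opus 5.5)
Your proof is correct, and for the entropy tail it takes a genuinely different route from the paper.

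\textbf{The paper's route.} The paper takes the $T$-uniform polynomial mass-tail bound $P_K(T)\le CK^{-\theta}$ from \cite[Lemma~4.2]{chen2026trapped}. It then proves $\sum_{j>K}p_j\log(1/p_j)\le CK^{-\theta}\log K$ uniformly in $T$, by a relative-entropy comparison with the reference weights $r_j=K/j^2$ and a summation by parts for $\sum_{j>K}p_j\log(j/K)$. Finally it fixes $K$ and lets the finite block $j\le K$ vanish as $T\to\infty$.

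\textbf{Your route.} You use the energy constraint directly. Dissipation \eqref{eq:intro-energy-decay}, together with $\langle\vfi,\dch_0\vfi\rangle\le\|\OT\vfi\|^2<R^2$ on the support, gives $\sum_jE_jq_{j,T}\le R^2$ for all $T\ge0$. The Gibbs variational principle against the thermal weights $e^{-\beta E_j}$ then applies; their sum $Z_\beta$ is finite, e.g.\ by comparison with $h_{\mathrm{ref}}$, since $U$ is bounded. For large $T$ it bounds the entire excited-mode entropy by $\beta R^2+\varepsilon_T\log^{+}Z_\beta-\varepsilon_T\log\varepsilon_T$. Since $s_J\le\varepsilon_T$ for every $J$, you do not need $J$ large ($J=0$ already works), so no finite-block/tail split is required. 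Taking $\beta$ comparable to $\varepsilon_T$ even gives a rate of order $\varepsilon_T\log(1/\varepsilon_T)$; this is the standard energy-constrained entropy-continuity argument.

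\textbf{Comparison.} What the paper's route buys is that it needs only a mass-tail bound, which is a weaker input than a uniform energy bound. This is also the one point to tidy in your write-up. A tail bound $P_K\le CK^{-\theta}$ with $E_K\sim K^{\theta}$ does not by itself give $\sum_jE_jq_j<\infty$, because summation by parts leaves a harmonic series. So attribute $C(R)$ to the dissipation identity plus the initial energy bound, not to \cite[Lemma~4.2]{chen2026trapped}. If you want to rely only on the Lemma, run your Gibbs argument with a sub-energy weight such as $j^{\theta/2}$ instead.

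For the remaining steps: your block-decomposition estimate $2\varepsilon_T+2\sqrt{\varepsilon_T}$ for the trace norm is equivalent to the paper's fidelity bound $2\sqrt{1-p_0(T)}$. Your antisymmetry argument for $\ell^2$-mass conservation supplies the trace-conservation claim that the paper leaves implicit.
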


\begin{proof}
    We recall some properties of the von Neumann entropy. For a state $\gamma$ defined as the weak-$\ast$ limit of the sequence $\gamma_{N,t}^{(1)}$ in $\mathcal{L}^1(\mathfrak{H})$, the entropy is given by $S[\gamma] = -\Tr[\gamma \log \gamma]$. If $\gamma$ is diagonal in some orthonormal basis, that is, $\gamma = \sum_{j=0}^{\infty} p_j \ket{\rchi_j} \bra{\rchi_j}$, the entropy simplifies to the Shannon entropy of its eigenvalues
    \begin{equation}
        S[\gamma] = -\sum_{j=0}^{\infty} p_j \log p_j.
    \end{equation}
    In the extreme case where $\gamma$ is a pure state (e.g., $p_0=1$ and $p_j=0$ for all $j \ge 1$), the entropy vanishes completely.
    
    Let $p_j(T)$ denote the ensemble-averaged occupation probability of the $j$-th mode at time $T$, defined as
    \begin{equation}
        p_j(T) := \int_{Y^1 \mcap \mathbb{S}\h} |F_{j,T}(\vfi)|^2 \, d\mu_0(\vfi).
    \end{equation}
    By the pinching inequality \cite{zhang2016quantifying, carlen2010trace, wehrl1978general, nielsen2001quantum, yau1991relative}, the von Neumann entropy is bounded above by the Shannon entropy of its diagonal part
    \begin{equation}
        0\leq S[\kappa_T] \leq S[\kappa_T^{\mr{diag}}] = \sum_{j=0}^\infty (-p_j(T) \log p_j(T)) \,.
    \end{equation}
    
    \noindent
    The nonlinear cascade equation derived in \cite{chen2026trapped} guarantees the pathwise convergence of occupations; for $\mu_0$-almost every initial state $\vfi$, the mass transfers entirely to the ground state such that $\sum_{j\ge 1}|F_{j,T}(\vfi)|^2 \to 0$ and $|F_{0,T}(\vfi)|^2 \to 1$ as $T \to \infty$. Because the amplitude functions are uniformly bounded by $1$ and $\mu_0$ is a probability measure, dominated convergence allows us to pass the limit inside the integral to obtain
    \begin{equation}
        \lim_{T \to \infty} p_0(T) = \int_{B_R(0)\subset Y^1 \mcap \mathbb{S}\h} \lim_{T \to \infty} |F_{0,T}(\vfi)|^2 \, d\mu_0(\vfi) = 1.
    \end{equation}

    \noindent
    To establish trace-norm convergence, we evaluate the state's fidelity with respect to the ground state. Since $p_0(T) = \inp{\rchi_0, \kappa_T \rchi_0}$, we can utilize the standard trace-distance bound for pure states
    \begin{equation}
        \left\| \kappa_T - \ket{\rchi_0}\bra{\rchi_0} \right\|_{\mathfrak S_1(\h)} \le 2\sqrt{1 - p_0(T)}.
    \end{equation}
    Because $p_0(T) \to 1$, the right-hand side vanishes, proving that $\kappa_T$ converges in trace norm to $\ket{\rchi_0}\bra{\rchi_0}$ without requiring the convergence of the $U(1)$ phase of the ground state amplitude.
    
    To prove the asymptotic vanishing of the entropy, we analyze the diagonal entropy sum. 
    %Let $\{E_k\}_{k\geq0}$ denote the spectrum of $-\Delta+V$ with $V(x)\geq -C_0+c_0|x|^{1+\beta}$ with $\beta>0$.
    The quantum de Finetti measure $\mu_0$ is supported in a ball of finite radius $B_R(0)\subset Y^1$. This implies that $\mu_0$-almost surely, $\|\phi\|_{Y^1}\leq R$. Therefore, the spectral tightness bounds proven in \cite[Lemma~4.2]{chen2026trapped} implies that for all $K>K_0\gg1$ sufficiently large, and uniformly in $T>0$,
    \begin{eqnarray}
        P_K(T):=\sum_{j>K}p_j(T) \leq C K^{-\theta} 
    \end{eqnarray}
    for a constant $\theta>0$ and $C=C(R)$ (in fact, $\theta=\frac{2(1+\beta)}{3(3+\beta)}$ in dimension $d=3$). For convenience, we omit the dependence of $p_j$ and $P_K$ on $T$ from the notation in what follows. 

If $P_K=0$, the entropy tail vanishes. We therefore assume
$P_K>0$ in what follows.
    Let $q_j:=\frac{p_j}{P_K}$ so that $\sum_{j>K}q_j=1$. We now consider 
    \begin{eqnarray}
        \sum_{j>K}(p_j \log(1/p_j) = P_K\sum_{j>K}(q_j \log(1/ q_j)) + P_K\log(1/P_K) \,
    \end{eqnarray}
    where $P_K\log(1/P_K)\leq C K^{-\theta}\log K$.

    Next, we observe that since $p_j=P_{j-1}-P_j$,
    \begin{eqnarray}
        \sum_{j>K} p_j\log(j/K) &=& \sum_{j\geq K}P_j \Big(\log\Big(\frac{j+1}K\Big)-\log\Big(\frac jK\Big)\Big)
        \;=\; \sum_{j\geq K}P_j \log\Big(1+\frac1 j\Big)
        \nonumber\\
        &\leq& C\sum_{j\geq K}j^{-1-\theta} 
        \;\leq\; C K^{-\theta} \,.
    \end{eqnarray}
    Thus, dividing by $P_K$ yields
    \begin{eqnarray}\label{eq:qlogr-sum-1-0}
        \sum_{j>K} q_j\log j  
        \;\leq\; \log K + C \frac{K^{-\theta} }{P_K}\,.
    \end{eqnarray}
    
    Moreover, we observe that for any $\{r_j\}_{j>K}$ with $\sum_{j>K}r_j\leq1$,
    \begin{eqnarray}\label{eq:qlogr-sum-1-1}
        \sum_{j>K}q_j \log(1/ q_j)
        &=&\sum_{j>K}q_j \log(1/r_j) + \sum_{j>K}q_j \log(r_j/ q_j)
        \nonumber\\
        &\leq&\sum_{j>K}q_j \log(1/r_j)
    \end{eqnarray}
    because the last term on the first line is non-positive. To see this, consider the random variable $X=\frac{r_j}{q_j}$ on the set $S:=\{j\,|\,q_j>0\}$. Then, by the concavity of the logarithm and Jensen's inequality,
    \begin{eqnarray}
        \sum_{j>K}q_j \log(r_j/ q_j) &=& \mathbb{E}_q[\log X] 
        \leq \log(\mathbb{E}_q[X] ) 
        \nonumber\\
        &=& \log\sum_{j>K}  r_j \leq 0 \,.
    \end{eqnarray}

    We choose $r_j:=\frac{K}{j^2}$ so that $\sum_{j>K}r_j\leq\int_K^\infty dx\;K/x^2\leq 1$. Then, 
    \begin{eqnarray}
        \sum_{j>K}q_j \log(1/ q_j)
        &\leq& \sum_{j>K}q_j (2\log j-\log K)
        \nonumber\\
        &\leq& 2 \log K + C\frac{K^{-\theta}}{P_K}-\log K \,,
    \end{eqnarray}
    using \eqref{eq:qlogr-sum-1-0} in the last step.
    Multiplying by $P_K$ yields
    \begin{eqnarray}\label{eq:tailentropy-bd-1-0-0}
        \sum_{j>K}p_j \log(1/ p_j)
        &\leq&  P_K\log (1/P_K) + P_K \log K + C K^{-\theta}
        \nonumber\\
        &\leq & C K^{-\theta}\log K\,,
    \end{eqnarray}
    which tends to $0$ as $K\rightarrow\infty$. The constant neither depends on $K$ nor on the macroscopic time $T>0$.
    
    Given an arbitrary $\epsilon>0$, we choose $K=K(\epsilon)$ large enough so that $\eqref{eq:tailentropy-bd-1-0-0}<\epsilon$. Then,
    we split the entropy sum into
    \begin{eqnarray}\label{eq:entropy-sum-1-0-0}
        S[\kappa_T^{\mr{diag}}] = \sum_{j=0}^K (-p_j(T) \log p_j(T) )
       +
       \sum_{j > K} (-p_j(T) \log p_j(T))
    \end{eqnarray}
    Because $p_j(T) \to 0$ for all $j \ge 1$ and $p_0(T) \to 1$ as $T \to \infty$, the finite sum for $j \le K$ in \eqref{eq:entropy-sum-1-0-0} vanishes as $T \to \infty$. Therefore, we conclude that 
 \begin{eqnarray}
       0\leq \lim_{T\rightarrow\infty} S[\kappa_T^{\mr{diag}}]  
       &< &
    \epsilon \,,
    \end{eqnarray} 
    due to \eqref{eq:tailentropy-bd-1-0-0}.
    Since $\epsilon>0$ is arbitrary, this proves that 
    \begin{equation}
        0\leq \lim_{T \to \infty} S[\kappa_T] \le \lim_{T \to \infty} S[\kappa_T^{\mr{diag}}] = 0 \,.
    \end{equation}
    Thus, we conclude that dynamical Bose-Einstein condensation with asymptotically complete ground-state occupation occurs with the convergence of $\kappa_T$ to a pure state and the decay of the von Neumann entropy of the macroscopic state to zero.
\end{proof}

For a factorized initial boson state, the phase-invariant initial de Finetti measure is the uniform probability measure on the phase orbit of the initial orbital. The corresponding limiting particle marginals remain factorized, as stated in the following theorem.

\begin{theorem}[Derivation of the mean-field limit]
\label{thm:main-derivation-mean-field}
Suppose that \hyporef{hyp:standing-hypotheses} holds. Let $\Psi_{N,0} \in \HN$ be the sequence of initial microscopic states given by the coherent factorized ansatz
\begin{equation}
    \Psi_{N,0} = \vfi_0^{\otimes N} \otimes \wyl[\sqrt{N} u_0] \vacf,
\end{equation}
with $\vfi_0 \in Y^1 \mcap \mathbb{S}\h$ and $u_0 \in \cD(\omg^{1/2})$. Let $\Psi_{N,t} = \UN(t,0) \Psi_{N,0}$ be the microscopic time evolution under the Hamiltonian $\HMN$, and let $\gamma_{N,t}^{(k)}$ denote the corresponding $k$-particle reduced density matrices. Assume that the effective coupled mean-field system defined in \eqref{eq:coupled-system}, namely
\begin{equation}
    \label{eq:coupled-system-intro}
    \begin{cases}
        i\partial_t \vfi_t(x) = \dch_0(x) \vfi_t(x) + \lambda (v * |\vfi_t|^2)(x) \vfi_t(x) + \eta F_{\vfi_t}(x) \vfi_t(x), \vspace{8pt}\\
        i\partial_t u_t(y) = \omg (-i\nabla_y) u_t(y) + \eta (w * |\vfi_t|^2)(y)
    \end{cases}
\end{equation}
admits a unique global mild solution $(\vfi_t, u_t) \in C([0,T]; Y^1 \times H^{1/2})$. Then, for every $k \in \N$ and every $t \in [0,T]$, the sequence of reduced density matrices converges in the trace-norm topology to the purely factorized state
\begin{equation}
    \lim_{N \to \infty} \norm{ \gamma_{N,t}^{(k)} - \ket{\vfi_t^{\otimes k}}\bra{\vfi_t^{\otimes k}} }_{\mathfrak{S}_1(\Hkb)} = 0.
\end{equation}
Consequently, the mean-field behavior of the $N$-boson subsystem is completely described by the nonlinear flow of the effective Hartree system.
\end{theorem}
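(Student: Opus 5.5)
The proof will run through the BBGKY hierarchy, compactness, and uniqueness of the limiting Hartree hierarchy. The plan is to show that every weak-$*$ limit point of $(\gamma_{N,t}^{(k)})_{k\geq1}$ is a solution of the limiting particle--field hierarchy. By the strong quantum de Finetti theorem, \thmref{thm:strong-quantum-de-Finetti}, each such solution is represented by a measure on $Y^1\mcap\mathbb S\h$. \thmref{thm:uniqueness-of-gp-hierarchy} then forces this solution to equal $\ket{\vfi_t^{\otimes k}}\bra{\vfi_t^{\otimes k}}$. Weak-$*$ convergence will afterwards be upgraded to trace-norm convergence.

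\textbf{Uniform bounds.} Using \hyporef{hyp:standing-hypotheses} and the energy bound of \corref{cor:k-particle-energy-bound}, I first establish bounds uniform in $N$ and in $t\in[0,T]$:
\begin{equation}
\Tr\bigl(\OT^{(1)}\cdots \OT^{(k)}\gamma_{N,t}^{(k)}\bigr)\leq C_E^k,
\end{equation}
together with control of the field number and energy, $\langle \Psi_{N,t},(\cN_f/N+\Hf/N)\Psi_{N,t}\rangle\leq C$. The field bounds come from conjugating with the Weyl operator $\wyl[\sqrt N u_t]$. This recentres the field so that the $\eta N^{-1/2}W_N$ coupling produces fluctuation terms of size $O(N^{-1/2})$. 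These terms are controlled by $\omg^{-1/2}w\in L^2$.

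\textbf{Equicontinuity and limit points.} Next I show that $t\mapsto\gamma_{N,t}^{(k)}$ is equicontinuous in the weak-$*$ topology on a dense set of compact observables, by testing against the BBGKY hierarchy. A diagonal Arzel\`a--Ascoli argument then gives limit points $\gamma_{\infty,t}^{(k)}$, and the uniform bounds show that these limits are admissible.

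\textbf{Identification of the limit.} I pass to the limit in the BBGKY hierarchy. The pair-interaction term converges to the Hartree collision term, since the $1/N$ corrections vanish.

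The field term is the main obstacle. It involves the mixed operator $\phi(w(x_j-\cdot))/\sqrt N$, whose expectation must converge to $2\re\langle w(x_j-\cdot),u_t\rangle$. The difficulty is that $u_t$ is itself generated by $|\vfi_s|^2$ for $s<t$. To handle it, I would write the Heisenberg evolution of $a_\xi/\sqrt N$ via Duhamel,
\begin{equation}
\frac{a_\xi(t)}{\sqrt N}=e^{-it\omg(\xi)}\frac{a_\xi(0)}{\sqrt N}-i\eta\int_0^t e^{-i(t-s)\omg(\xi)}\frac1N\sum_j\hat w(\xi)e^{-i\xi x_j(s)}\,ds .
\end{equation}
The first term gives the free coherent field $e^{-it\omg}u_0$, up to $O(N^{-1/2})$ in coherent-state expectation. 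The second term is a two-time particle observable, whose limit I identify via multi-time marginals. This produces the transported source $\eta\int_0^t e^{-i(t-s)\omg}(w*\gamma_{\infty,s}^{(1)}\text{-density})\,ds$. Once the limit is known to be of de Finetti form, this source coincides with the $u_t$ of \eqref{eq:coupled-system-intro} along each orbit.

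\textbf{Uniqueness and upgrade.} With the limit identified as a solution of the particle--field hierarchy, uniqueness, \thmref{thm:uniqueness-of-gp-hierarchy}, via Liouville duality, gives that the de Finetti measure is the push-forward of the initial phase-orbit measure under the unique mild flow. Hence
\begin{equation}
\gamma_{\infty,t}^{(k)}=\ket{\vfi_t^{\otimes k}}\bra{\vfi_t^{\otimes k}}.
\end{equation}
Since all limit points agree, the whole sequence converges weak-$*$. Finally, the limit has trace one and the $\gamma_{N,t}^{(k)}$ are nonnegative with trace one, so weak-$*$ convergence upgrades to trace-norm convergence by the standard Gr\"umm/Arazy argument, as in \cite{chen2015unconditional}.
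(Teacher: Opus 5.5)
Your skeleton (compactness, admissibility, identification, uniqueness, trace-norm upgrade) matches the paper's, and the compactness, admissibility and upgrade steps are fine. The gap is in the identification of the radiation source, and it also breaks the link to the uniqueness theorem.

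\textbf{The two-time correlation.} After you insert the Duhamel formula for $a_\xi$ (this is \lemref{lem:affecrt-of-a(xi)-on-PsiNt}), the retarded part of $\Tr(\OJ^{(k)}\cS^{(k)}_{N,t})$ is a genuine two-time correlation:
\[
\frac1N\sum_{\ell=1}^N\bigl\langle\Psi_{N,t},\,B\,\UN(t,s)\,e^{-i\xi\cdot x_\ell}\,\UN(s,t)\,\Psi_{N,t}\bigr\rangle ,
\]
with $B$ built from $\OJ^{(k)}$ and $e^{i\xi\cdot x_j}$. Its limit is not determined by the one-time limits $\gamma^{(k)}_{\infty,t}$ and $\gamma^{(k)}_{\infty,s}$. ``Multi-time marginals'' do not close either: differentiating them in $t$ reproduces the field term, hence three-time correlations, and so on. Identifying this limit requires an Egorov-type statement for the Heisenberg-evolved empirical density under the full coupled dynamics. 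That is essentially the propagation of chaos you are trying to prove.

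\textbf{The source formula.} The source you write down, built from the density of $\gamma^{(1)}_{\infty,s}$, is the factorized approximation. It replaces the correlation by the product of the limits of $\langle\Psi_{N,t},B\Psi_{N,t}\rangle$ and $\langle\Psi_{N,s},N^{-1}\sum_\ell e^{-i\xi\cdot x_\ell}\Psi_{N,s}\rangle$. The correct limit couples each orbit to its own history (compare \lemref{lem:annihilation-as-transported-source-operator}), schematically $\int b_B(\vfi_t)\,\widehat{|\vfi_s|^2}(\xi)\,d\mu_0(\vfi)$. The two coincide here only a posteriori, once you already know the limit is a single phase orbit. So your sentence ``once the limit is known to be of de Finetti form, this source coincides with $u_t$ along each orbit'' is circular.

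\textbf{The uniqueness step.} What you end up with is a particle-only hierarchy whose source is a functional of the history of $\gamma^{(1)}$. \thmref{thm:uniqueness-of-gp-hierarchy} does not apply to it. That theorem is a uniqueness statement for joint measures $\nu_t$ on $\mathbb S\h\times\h_f$ satisfying the joint characteristic identities with field characters $\wyl_N(f)$, $f\in\cD(\omg)$, including $k=0$. You never construct $\nu_t$.

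\textbf{What the paper does instead.} The paper avoids this by keeping the field at time $t$ as a phase-space coordinate. The orbitwise retarded field is a function of the joint point $(\vfi,u)$ at time $t$, not of $\vfi_t$ alone, so only a joint Liouville equation can close.
\begin{itemize}
\item Mixed observables $\langle\Psi_{N,t},A\,\wyl_N(f)\Psi_{N,t}\rangle$ and joint Husimi measures give $\nu_t$ with $\nu_0=\mu_0\otimes\delta_{\widehat u_0}$ and limiting source $\int[F_u^{(k)},|\vfi^{\otimes k}\rangle\langle\vfi^{\otimes k}|]\,d\nu_t$ (\propref{prop:joint-radiation-source-limit}).
\item The Heisenberg identity for $A\,\wyl_N(f)$, with $[\phi_N(g_x),\wyl_N(f)]=-\tfrac2N m_f(x)\wyl_N(f)$ producing the $R(\vfi)$ term, passes to a closed joint Liouville equation (\propref{prop:joint-compactness-and-Liouville-equation}).
\item The superposition principle plus uniqueness of the coupled characteristics gives $\nu_t=(\mathcal U_t)_{\#}(\mu_0\otimes\delta_{\widehat u_0})$ (\thmref{thm:joint-measure-transport}).
\end{itemize}
Phase equivariance then yields $|\vfi_t^{\otimes k}\rangle\langle\vfi_t^{\otimes k}|$. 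The retarded formula is a consequence of this, not an input.

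\textbf{A smaller point.} Conjugating by $\wyl[\sqrt N u_t]$ leaves the linear term $\eta\sqrt N\,\phi\bigl(N^{-1}\sum_j g_{x_j}-R(\vfi_t)\bigr)$. This is small only if the particles are already condensed, so it cannot supply a priori bounds. The paper gets $\langle\cN_f\rangle/N\le\cM_T$ from the cutoff argument behind \eqref{eq:joint-source-uniform-number}, independently of the particle state. Your idea would, however, be the natural starting point for a different, quantitative proof for factorized data: a coupled Gr\"onwall estimate on a particle counting functional together with the field fluctuation number.
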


\begin{proof}
    The proof synthesizes the compactness, identification, and uniqueness results established in the subsequent sections of the paper.

    \noindent
    \textbf{Step 1: Compactness.} By the uniform first-energy estimate in
\corref{cor:one-particle-energy-bound}
and the source bound in
\lemref{lem:uniform-field-source-bound}, \thmref{thm:compactness-bbgky-sequence} guarantees the existence of a subsequence $N_\ell \to \infty$ such that the reduced density matrices $\gamma_{N_\ell, t}^{(k)}$ converge in the weak-$\ast$ topology of $\mathfrak{S}_1(\Hkb)$ to a limit point $\gamma_{\infty, t}^{(k)}$ for every $k \geq 1$. Furthermore, by \propref{prop:admissibility-limit}, this sequence of limiting marginals $\Gamma_{\infty,t}:=\set{\gamma_{\infty, t}^{(k)}}_{k \geq 1}$ is admissible.

    \smallskip
    \noindent
    \textbf{Step 2: Identification.} Passing to the limit $N \to \infty$ in the BBGKY hierarchy, \corref{cor:explicit-gp-hierarchy} establishes that the limit points $\Gamma_{\infty,t}$ explicitly satisfy the infinite Hartree hierarchy, where the radiation source is transported classically.

    \smallskip
\noindent
\textbf{Step 3: Joint uniqueness and factorization.}
The factorized initial particle state has marginals
\begin{equation}
    \gamma_{N,0}^{(k)}
    =
    |\vfi_0^{\otimes k}\rangle
    \langle\vfi_0^{\otimes k}|,
    \qquad N\geq k.
\end{equation}
In the phase-invariant de Finetti convention, their
representing measure is
\begin{equation}
    \mu_0
    =
    \frac{1}{2\pi}
    \int_0^{2\pi}
        \delta_{e^{i\theta}\vfi_0}\,d\theta.
\end{equation}
Indeed, for every $k\geq1$,
\begin{equation}
    \int_{\mathbb S\h}
        |\vfi^{\otimes k}\rangle
        \langle\vfi^{\otimes k}|
        \,d\mu_0(\vfi)
    =
    |\vfi_0^{\otimes k}\rangle
    \langle\vfi_0^{\otimes k}|.
\end{equation}
The coherent initial field gives the initial joint law
\begin{equation}
    \nu_0=\mu_0\otimes\delta_{\widehat u_0}.
\end{equation}
By \thmref{thm:joint-measure-transport}, every extracted
joint limit satisfying the established Liouville
equation and moment bounds is
\begin{equation}
    \nu_t
    =
    (\mathcal U_t)_{\#}
    \bigl(\mu_0\otimes\delta_{\widehat u_0}\bigr).
\end{equation}
The coupled flow is equivariant under multiplication
of the particle state by a constant phase.
Consequently, the trajectory starting at
$(e^{i\theta}\vfi_0,\widehat u_0)$ has particle
component $e^{i\theta}\vfi_t$, and its field
component is independent of $\theta$.
Here the flow agrees with the solution assumed
in the theorem by uniqueness. Thus, as also stated in
\corref{cor:coupled-mean-field-system},
\begin{equation}
\begin{aligned}
    \gamma_{\infty,t}^{(k)}
    &=
    \frac{1}{2\pi}
    \int_0^{2\pi}
       |(e^{i\theta}\vfi_t)^{\otimes k}\rangle
       \langle(e^{i\theta}\vfi_t)^{\otimes k}|
       \,d\theta=
    |\vfi_t^{\otimes k}\rangle
    \langle\vfi_t^{\otimes k}|.
\end{aligned}
\end{equation}
The full-sequence initial convergence required by
\thmref{thm:uniqueness-of-gp-hierarchy} holds because
the initial particle marginals are factorized.
That theorem therefore identifies the same limiting
particle hierarchy for the entire microscopic
sequence.

    \smallskip
    \noindent
    \textbf{Step 4: Strong convergence.} Finally, since the limiting density matrices $\gamma_{\infty, t}^{(k)}$ are non-negative and preserve the unit trace, i.e. $\Tr_{\Hkb}(\gamma_{\infty, t}^{(k)}) = 1$, \lemref{lem:weak-str-conv-and-traceclass-imply-trace-norm-conv} dictates that the weak-$\ast$ convergence automatically upgrades to strong trace-norm convergence. This yields
    \begin{equation}
        \lim_{N \to \infty} \norm{ \gamma_{N,t}^{(k)} - \ket{\vfi_t^{\otimes k}}\bra{\vfi_t^{\otimes k}} }_{\mathfrak{S}_1(\Hkb)} = 0,
    \end{equation} which completes the proof.
\end{proof}

\subsection{Strategy of the proof}
\smallskip

The argument first identifies the mean-field limit at fixed
coupling on finite time intervals. The condensation and entropy
conclusions then follow from the resonance cascade under the
additional assumptions stated in the main results.

\begin{enumerate}
    \item \textbf{Compactness and de Finetti representation.}
    The uniform first-energy and photon-source estimates give
    compactness of the particle marginals.
    Confinement prevents loss of trace, and the limiting
    hierarchy is admissible.
    The quantum de Finetti theorem therefore represents its
    marginals by phase-invariant probability measures $\mu_t$.
    The multiplicative particle-energy estimates further imply
    the bounded energy support established in
    \corref{cor:bounded-energy-support-de-finetti}.

    \item \textbf{Joint limits and the Liouville equation.}
    To identify the radiation source, we retain particle--field
    correlations through joint observables consisting of
    particle tests and scaled field Weyl operators.
    Their limits define a joint probability measure $\nu_t$
    whose particle marginal is $\mu_t$.
    The moment bounds allow recovery of the field insertions
    appearing in the source, as proved in
    \propref{prop:joint-radiation-source-limit}.
    Passing to the limit in the microscopic mixed-observable
    identities then yields the joint Liouville equation in
    the interaction picture, by
    \propref{prop:joint-compactness-and-Liouville-equation}.

    \item \textbf{Transport, uniqueness, and convergence.}
    We construct the global coupled particle--field flow
    $\mathcal U_t$ and apply the superposition principle to
    the joint Liouville equation.
    Uniqueness of the characteristic trajectories identifies
    every extracted joint measure as
    \begin{equation}
        \nu_t
        =
        (\mathcal U_t)_{\#}
        \bigl(\mu_0\otimes\delta_{\widehat u_0}\bigr),
    \end{equation}
    as established in
    \thmref{thm:joint-measure-transport}.
    This determines the limiting particle marginals and
    radiation source.
    When the initial particle hierarchy converges along the
    full sequence, uniqueness in the stated joint-measure
    class yields full-sequence convergence, uniformly on
    compact time intervals in particle trace norm, by
    \thmref{thm:uniqueness-of-gp-hierarchy}.
    Pure initial condensates give factorized limiting
    particle marginals.

    \item \textbf{Condensation and vanishing of entropy.}
    Under the additional weak-coupling and cascade assumptions,
    we set $\lambda=\eta^2$ and combine the mean-field limit
    with the resonance-cascade limit.
    The limits are taken in the order $N\to\infty$,
    then $\eta\downarrow0$, and finally $T\to\infty$.
    Pathwise transfer of mass to the ground state gives
    trace-norm convergence of the ensemble-averaged cascade
    state to the ground-state projector.
    A uniform bound on the diagonal entropy tails supplies
    the additional control needed to prove
    $S(\kappa_T)\to0$, as shown in
    \thmref{thm:entropy-boson-subsystem}.
\end{enumerate}

\subsection{Organization of the paper}
The remainder of the paper is organized as follows to reflect this successive logical progression. 
In \secref{sec:finite-N-BBGKY-hierarchy}, we introduce the BBGKY hierarchy and define the rigorous framework, including the relevant Hamiltonians and coherent initial data. We then establish the \textit{a priori} energy estimates and uniform bounds on the coherent photon source necessary for the mean-field limit, as carried out in \secref{sec:existence-and-uniqueness-of-GP-hierarchy}. We prove the weak-$*$ compactness of the BBGKY sequence and utilize quantum de Finetti theorems to represent the admissible limit points in phase space. Finally, we identify the radiation source and establish the unconditional uniqueness of the resulting infinite Gross--Pitaevskii hierarchy.

\section{The BBGKY hierarchy}
\label{sec:finite-N-BBGKY-hierarchy}

\subsection{Hilbert space and Hamiltonians}

Let $\h_x = L^2(\R^3, dx)$ denote the Hilbert space of square-integrable functions on $\R^3$ with respect to the Lebesgue measure, where $x$ refers to the position of the particles. Define $\HNb$, the Hilbert space of the $N$-particle sector of the bosonic subsystem, as the symmetric tensor product of $N$ copies of $\h_x$
\begin{equation}
    \label{eq:definition-of-HNb}
    \HNb = \bigotimes_\sym^N \prc{\h_x} \cong L^2_\sym(\R^{3N}, dx_1\cdots dx_N).
\end{equation}
The subscript $\sym$ indicates that $L^2_\sym(\R^{3N}, dx_1\cdots dx_N)$ is the subspace of $L^2(\R^{3N}, dx_1\cdots dx_N)$ comprising symmetric functions---namely, wave functions that are invariant under the permutation of particle coordinates. Consequently, we define the bosonic Fock space for the particles as
\begin{equation}
    \label{eq:definition-of-bosonic-Fock-space}
    \fkH := \bigoplus_{N=0}^{\infty} \HNb.
\end{equation}
The inner product on $\fkH$ is naturally induced by the sequence of $N$-particle inner products, that is,
\begin{equation}
    \label{eq:inner-product-on-HN}
    \inp{\Phi, \Psi}_{\fkH} := \sum_{N=0}^{\infty} \inp{\Phi_N, \Psi_N}_{\HNb},
\end{equation}
where
\begin{equation}
    \label{eq:inner-product-on-HNb}
    \inp{\Phi_N, \Psi_N}_{\HNb} := \int_{\R^{3N}} \cc{\Phi_N(x_1,\ldots,x_N)} \Psi_N(x_1,\ldots,x_N) \, dx_1\cdots dx_N.
\end{equation}

We model the radiation field with scalar photons quantized in the Fock representation. We denote the one-photon Hilbert space by $\h_f$, where the subscript $f$ distinguishes the field from the particle degrees of freedom. This space consists of square-integrable functions in either position or momentum space
\begin{equation}
    \h_f := L^2(\R^3_y, dy) \cong L^2(\R^3_\xi, d\xi).
\end{equation} 
The Hilbert space of the $n$-photon sector is the symmetric tensor product
\begin{equation}
    \cF_n := \bigotimes_\sym^n \h_f.
\end{equation} 
Then, the Fock space of the quantized radiation field is defined as the direct sum of all $n$-photon sectors
\begin{equation}
    \label{eq:Fock-space-of-scalar-photons}
    \Fock = \bigoplus_{n=0}^{\infty} \cF_n.
\end{equation}
The inner product on $\Fock$ is defined as
\begin{equation}
    \label{eq:inner-product-on-Fock-space}
    \inp{F, G}_{\Fock} := \sum_{n=0}^{\infty} \inp{F_n, G_n}_{\cF_n},
\end{equation}
where
\begin{equation}
    \inp{F_n, G_n}_{\cF_n} := \int_{\R^{3n}} \cc{F_n(y_1,\ldots,y_n)} G_n(y_1,\ldots,y_n) \, dy_1\cdots dy_n.
\end{equation}
We adopt the standard convention that the zero-photon sector is $\C$. While physical photons possess two transverse polarizations, incorporating both would necessitate replacing $\h_f$ with $\h_f \otimes \C^2$. For notational convenience and to reduce complexity that would cloud the main principles, we restrict our attention to scalar photons linearly coupled to the bosons. The Fock vacuum is denoted by $\vacf \in \Fock$. Hereafter, we drop the subscripts $f$ and $x$ from $\h$ when the context unambiguously identifies the relevant Hilbert space.

\smallskip

We now consider the composite system of $N$ confined bosons coupled to the quantized radiation field. The Hilbert space of the full system is the tensor product of the $N$-boson space and the radiation Fock space
\begin{equation}
    \label{eq:full-Hilbert-space-of-boson-field-system}
    \HN = \HNb \otimes \Fock.
\end{equation}
The Hamiltonian governing the dynamics of this interacting system is given by
\begin{equation}
    \label{eq:defination-of-Hamiltonian-of-particle-field-system}
    \HMN := \HMNbos \otimes \1_{\Fock} + \1_{\HNb} \otimes \Hf + \frac{\eta}{\sqrt{N}} W_N.
\end{equation}
Here, $\HMNbos$ is the Hamiltonian of the $N$-boson subsystem acting on $\HNb$, namely
\begin{equation}
    \label{eq:Hamiltonian-of-N-bosons}
    \HMNbos = \sum_{j=1}^{N} \dch_0^{(j)} + \frac{\lam}{N} \sum_{1\leq i<j\leq N} v(x_i-x_j),
\end{equation}
where $\dch_0 := -\Delta + V$ is the one-particle Hamiltonian with an external trapping potential $V:\R^3\to[0,\infty)$. The operator $\dch_0^{(j)}$ represents $\dch_0$ acting on the $j$-th particle, and $v_{ij} = v(x_i-x_j)$ is a real-valued two-body interaction potential.

\smallskip
The field Hamiltonian $\Hf$ acts on $\Fock$ and is defined as
\begin{equation}
    \label{eq:definition-of-Hf}
    \Hf = \int_{\R^3} \omg(\xi) \, \ad_\xi \, a_\xi \, d\xi,
\end{equation}
where the dispersion relation $\omg:\R^3\to[0,\infty)$ will be assumed to be given by $\omega(\xi)=|\xi|$. The creation and annihilation operators, $\ad_\xi$ and $a_\xi$, are operator-valued distributions on $\Fock$
%Formally, $a_\xi := a(\delta_\xi)$ and $\ad_\xi := \ad(\delta_\xi)$ represent the operators evaluated at the momentum $\xi \in \R^3$, where $\delta_\xi$ is the Dirac delta distribution. 
satisfying the distributional canonical commutation relations:
\begin{equation}
    [a_\xi, \ad_{\xi'}] = \delta(\xi-\xi'), \qquad [a_\xi, a_{\xi'}] = [\ad_\xi, \ad_{\xi'}] = 0
\end{equation}
and $a_\xi\vacf=0$ for all $\xi\in\R^3$, where $\vacf\in\Fock$ is the Fock vacuum.
Smeared out with test functions $h \in C_0^\infty(\R^3)$, they define annihilation and creation operators 
\begin{equation}
    \label{eq:definition-of-creation-and-annihilation-operators-of-the-field}
    a(h) = \int_{\R^3} \cc{h(\xi)} a_\xi \, d\xi, \qquad \ad(h) = \int_{\R^3} h(\xi) \ad_\xi \, d\xi
\end{equation}
on $\Fock$.
The photon number operator is given by
\begin{equation}
    \label{eq:number-operator-of-field-bosons}
    \cN_f = \int_{\R^3} \ad_\xi \, a_\xi \, d\xi.
\end{equation}
Throughout this work, we employ the Fourier transform conventions
\begin{equation}
    d\xi := (2\pi)^{-3} d^3\xi, \qquad dy := d^3y,
\end{equation}
with the position-space operators defined as
\begin{equation}
    \label{eq:Fourier-transform-of-creation-and-annihilation-operators}
    a_y = \int_{\R^3} e^{i\xi\cdot y} a_\xi \, d\xi, \qquad a_\xi = \int_{\R^3} e^{-i\xi\cdot y} a_y \, dy.
\end{equation}
The particle-field interaction operator is defined by
\begin{equation}
    \label{eq:W(N) interaction Hamiltonian}
    W_N := \sum_{j=1}^{N} \phi\prc{w(x_j-\cdot)} = \sum_{j=1}^{N} \int_{\R^3} w(x_j-y) \prc{\ad_y + a_y} \, dy,
\end{equation}
where
\begin{equation}
    \phi(h) := \ad(h) + a(h)
\end{equation}
is the field operator smeared with $h \in \h_f$, and $w:\R^3\to\R$ represents the particle-field interaction form factor. The coupling constants $ \lambda>0$ and $\eta>0$ account for the strength of the two-body and particle-field interactions, respectively. The  $1/\sqrt{N}$ scaling in the interaction term in \eqref{eq:full-Hilbert-space-of-boson-field-system} is introduced to achieve a well-defined mean-field limit as $N \to \infty$.

\subsection{Hypothesis}
For our main theorems, we impose the following assumptions, which hold throughout this paper.
\begin{hypothesis}
\label{hyp:standing-hypotheses}
Throughout the work, the following assumptions are imposed

\begin{enumerate}[
label=\textup{$(\mathrm{H}_{\arabic*})$},
ref=\textup{$(\mathrm{H}_{\arabic*})$}
]
\item
\label{hyp:trapping-potential}
\textbf{Trapping potential:}
Let $A$ be a fixed real symmetric positive definite $3\times3$ matrix.
Assume that
\begin{equation}
\label{eq:moment-trap-decomposition}
    V(x)=x\cdot Ax+U(x)\geq0,
    \qquad U:\R^3\longrightarrow\R,
\end{equation}
where, for some $a_*>0$,
\begin{equation}
\label{eq:moment-trap-fourier}
    \int_{\R^3} e^{a_*|\xi|}
       |\widehat U(\xi)|\,d\xi<\infty.
\end{equation}
Here and below, $d\xi=(2\pi)^{-3}d^3\xi$, in accordance with the
Fourier convention of the manuscript. In particular, $U$ is smooth and
bounded. For any fixed $0<\beta\leq1$, the potential $V$ satisfies
\begin{equation}
    V(x)\geq c_0|x|^{1+\beta}-C_0
\end{equation}
for suitable $c_0,C_0>0$. Define $\dch_0=-\Delta+V$ by its
nonnegative closed quadratic form. Its resolvent is compact; write its
eigenvalues in nondecreasing order, counting multiplicity, as
$\{E_j\}_{j\geq0}$, and choose normalized eigenfunctions
$\{\chi_j\}_{j\geq0}$ forming an orthonormal basis.
For the initial-data condition, introduce the reference operators
\begin{equation}
\label{eq:moment-reference-oscillator}
    h_{\mathrm{ref}}:=-\Delta+x\cdot Ax,
    \qquad
    \mathbb K_N:=\sum_{j=1}^N h_{\mathrm{ref}}^{(j)}.
\end{equation}
The reference operator is distinct from $\dch_0$.

\item
\label{hyp:spectral-genericity}
\textbf{Spectral assumptions:}
The eigenvalues and eigenfunctions of the actual operator
$\dch_0=-\Delta+V$ satisfy the following conditions.
\begin{enumerate}
\item
The eigenvalues are rationally independent. More precisely, every
finitely supported integer sequence $(n_j)_{j\geq0}$ satisfies
\begin{equation}
    \sum_{j\geq0}n_jE_j=0
    \quad\Longrightarrow\quad
    n_j=0\quad\hbox{for every }j.
\end{equation}
Moreover, for every integer $M\geq1$, the nonzero four-mode gaps
\begin{equation}
\label{eq:moment-diophantine-gaps}
    \delta_M:=
    \min_{\substack{0\leq i,j,\ell,m\leq M\\
        E_i-E_j-E_\ell+E_m\ne0}}
       |E_i-E_j-E_\ell+E_m|
    \geq c_D(1+M)^{-\alpha}
\end{equation}
obey a Diophantine lower bound with fixed $c_D,\alpha>0$.
\item
For every $j,j'\geq0$ and every $R>0$,
$\widehat{\chi_j\overline{\chi_{j'}}}$ does not vanish identically
on the sphere $\{\xi\in\R^3:|\xi|=R\}$.
\end{enumerate}
These are additional assumptions on $V$, not conclusions of
\eqref{eq:moment-trap-decomposition}--\eqref{eq:moment-trap-fourier}.

\item
\label{hyp:bounded-two-body-interaction}
\textbf{Two-body interaction and coupling constants:}
The interaction kernels $v,w:\R^3\to\R$ are real-valued and even.
The coupling constants $\lam,\eta>0$ satisfy the standing weak-coupling
conditions. The kernels satisfy the following assumptions.
\begin{enumerate}
\item
The classical interaction satisfies $v\in W^{1,\infty}(\R^3)$ and
\begin{equation}
\label{eq:moment-pair-fourier}
    \int_{\R^3}e^{a_*|\xi|}|\widehat v(\xi)|\,d\xi<\infty.
\end{equation}
\item
For a fixed $\frac12<s<1$ with $s<\frac{1+\beta}{2}$,
\begin{equation}
\begin{aligned}
    &w\in L^2_s(\R^3)\mcap H^{1/2}(\R^3)
             \mcap W^{3,1}(\R^3)\mcap L^\infty(\R^3),\\
    &|\nabla|^{-1/2}w\in L^2_s(\R^3),
      \qquad |\widehat w(\xi)|>0\quad(\xi\in\R^3),
\end{aligned}
\end{equation}
where $L^2_s=\{f:\langle x\rangle^sf\in L^2\}$.
In addition,
\begin{equation}
\label{eq:moment-coupling-fourier}
    \int_{\R^3}e^{a_*|\xi|}|\widehat w(\xi)|^2\,d\xi<\infty.
\end{equation}
\end{enumerate}
The same positive $a_*$ may be used in these assumptions by decreasing
it if necessary. The potentials and their bounds are independent of $N$.

\item
\label{hyp:field-dispersion-relation}
\textbf{Dispersion relation of the radiation field:}
The dispersion relation function $\omega:\R^3 \to [0,\infty)$ is non-negative, measurable, and even. In particular, we assume that $\omg(\xi) = |\xi|$ for almost every $\xi\in\R^3$. We do not assume a strictly positive lower bound for $\omg$.

\item
\label{hyp:field-form-factor-and-coherent-profile}
\textbf{Energy regularity of the particle-field form factor:}
The assumptions on $w$ in
\hypref{hyp:bounded-two-body-interaction} imply
\begin{equation}
\label{eq:standing-field-form-factor-energy-condition}
    w\in H^1(\R^3)
       \mcap\dot H^{-1/2}(\R^3)
       \mcap\dot H^{1/2}(\R^3).
\end{equation}
Indeed, $W^{3,1}(\R^3)\subset H^1(\R^3)$, and the weighted
bound on $|\nabla|^{-1/2}w$ implies
$w\in\dot H^{-1/2}(\R^3)$.
For $\omg(\xi)=|\xi|$, we also have
\begin{equation}
    \norm{\widehat{\nabla w}/\sqrt{\omg}}_{L^2}
    =
    \norm{w}_{\dot H^{1/2}}.
\end{equation}

\item
\label{hyp:weak-coupling-condition}
\textbf{Weak-coupling condition:}
Define
\begin{equation}
\label{eq:hyp-def-of-Cw}
\begin{aligned}
    C_w
    &:=
    \norm{w}_{L^2}
    +\norm{\widehat w/\sqrt{\omg}}_{L^2},\\
    C_w'
    &:=
    \norm{\nabla w}_{L^2}
    +\norm{\widehat{\nabla w}/\sqrt{\omg}}_{L^2}.
\end{aligned}
\end{equation}
Both constants are finite under the preceding assumptions.
The coupling constants $\lam,\eta>0$ satisfy
\begin{equation}
\label{eq:weak-coupling-condition}
    2\eta(C_w+1)<1,
    \qquad
    2\lam\norm{v}_{L^\infty}+4\eta C_w<1.
\end{equation}
\begin{remark}
In the scaling $\lam=\eta^2$, the weak-coupling conditions become
\begin{equation}
    2\eta(C_w+1)<1,
    \qquad
    2\eta^2\norm{v}_{L^\infty}+4\eta C_w<1.
\end{equation}
For fixed interaction kernels satisfying the preceding
assumptions, both conditions hold for sufficiently small
$\eta>0$.
\end{remark}

\item
\label{hyp:bounded-energy-per-particle}
\textbf{Bounded energy per particle:} 
For any given $N\gg1$, the family of $N$-particle initial wave functions $\brs{\Psi_{N,0}}_{N\geq 1} \subset \HN$ admits the bounds 
\begin{equation}
    \frac{1}{N^k} \langle\Psi_{N,0}, (\HMN+N)^k \Psi_{N,0}\rangle < C_0^k
    \;\;\;,\;1\leq k\leq K(N) \,,
\end{equation}
for a constant $C_0$ independent of $N$, and a continuous, positive, monotone increasing function $K\in C(\R_+)$ satisfying $1\leq K(N)\leq N$ and $\lim_{N\rightarrow\infty}K(N)=\infty$.

\item
\label{hyp:microscopic-Hamiltonian-and-initial-data}
\textbf{Microscopic Hamiltonian and coherent initial data:}
The microscopic dynamics satisfy
\begin{equation}
\label{eq:Schrodinger-def-1-0}
    \begin{cases}
    i\partial_t\Psi_{N,t}=\HMN\Psi_{N,t},\\[0.6em]
    \Psi_{N,0}=\phi_{N,0}\otimes\wyl[\sqrt N\,u_0]\vacf.
    \end{cases}
\end{equation}
Here $\vacf$ is the photon vacuum, $\wyl$ is the Weyl operator, and
the Hamiltonian, with identity factors suppressed, is
\begin{equation}
\label{eq:moment-microscopic-Hamiltonian}
\begin{aligned}
    \HMN&=\HMNbos+\Hf+\frac{\eta}{\sqrt N}W_N,\\
    \HMNbos&=\sum_{j=1}^N\dch_0^{(j)}
          +\frac{\lam}{N}\sum_{1\leq i<j\leq N}v(x_i-x_j),\\
           \Hf&=\int_{\R^3}\omg(\xi)\ad_\xi a_\xi\,d\xi,\qquad \omg(\xi)=|\xi|,\\
    W_N&=\sum_{j=1}^N\int_{\R^3}
             w(x_j-y)(\ad_y+a_y)\,dy.
\end{aligned}
\end{equation}
The normalized symmetric particle vectors $\phi_{N,0}\in\HNb$
satisfy, for some fixed $E_*>0$ independent of $N$, and $\mathbb K_N$ as in \eqref{eq:moment-reference-oscillator},
\begin{equation}
\label{eq:moment-initial-reference-cutoff}
    \phi_{N,0}
       =\1_{[0,E_*N]}(\mathbb K_N)\phi_{N,0},
    \qquad \norm{\phi_{N,0}}=1.
\end{equation}
The position-space coherent profile is independent of $N$ and satisfies
\begin{equation}
\label{eq:moment-coherent-profile}
    u_0\in L^2(\R^3)\mcap W^{3,1}(\R^3),
    \qquad
    \int_{\R^3}e^{a_*|\xi|}(1+|\xi|^2)
             |\widehat u_0(\xi)|^2\,d\xi<\infty
\end{equation}
for a constant $a_*>0$.
In particular, $u_0\in H^1(\R^3)$ and the Weyl transform of the
field Hamiltonian is well-defined on its operator domain. 
\end{enumerate}
\end{hypothesis}

\begin{lemma}[Initial energy moments for coherent product states]
\label{lem:coherent-oscillator-moments}
Assume \hypref{hyp:trapping-potential},
\hypref{hyp:bounded-two-body-interaction}, and
\hypref{hyp:microscopic-Hamiltonian-and-initial-data} above.
It suffices, for positivity, that the coupling constants satisfy
\begin{equation}
\label{eq:moment-positivity-condition}
    \frac{\lam}{2}\norm{v}_{L^\infty}
      +2\eta^2\norm{\widehat w/\sqrt{\omega}}_{L^2}^2\leq1,
    \qquad \omega(\xi)=|\xi|;
\end{equation}
this condition follows from the given weak-coupling assumptions.
Set
\begin{equation}
\label{eq:moment-Weyl-conjugation}
\begin{aligned}
    A_N&:=\HMN+N,
    &\mathcal W_N&:=\1_{\HNb}\otimes\wyl[\sqrt N\,u_0],\\
    G_N&:=\mathcal W_N^*A_N\mathcal W_N,
    &\Phi_N&:=\phi_{N,0}\otimes\vacf.
\end{aligned}
\end{equation}
Then $A_N,G_N\geq0$ and
$\Psi_{N,0}\in\bigcap_{m\geq1}\cD(A_N^m)$.
There are constants $C,D\geq1$, independent of $N$ and $k$, such that
for all integers $N,k\geq1$,
\begin{equation}
\label{eq:coherent-oscillator-moment-bound}
\begin{aligned}
    \norm{A_N^{k/2}\Psi_{N,0}}^2
      &=\langle\Phi_N,G_N^k\Phi_N\rangle\\
      &\leq C^k(N+k^2)^k
       \leq D^k\bigl(N^k+(k!)^2\bigr).
\end{aligned}
\end{equation}
In particular, for $k\geq2$,
\begin{equation}
\label{eq:coherent-oscillator-N-factorial-bound}
    \norm{A_N^{k/2}\Psi_{N,0}}^2
    \leq D^k\bigl(N^k+N^{k-2}(k!)^2\bigr)
    \leq D^k\sum_{i=0}^2N^{k-i}(k!)^i.
\end{equation}
Moreover,
\begin{equation}
\label{eq:coherent-oscillator-normalized-bound}
    N^{-k}\norm{A_N^{k/2}\Psi_{N,0}}^2
       \leq C^k\left(1+\frac{k^2}{N}\right)^k
       \leq(2C)^k,
    \qquad 1\leq k\leq\sqrt N \,.
\end{equation}
%Consequently, for every fixed $r>0$ and all sufficiently large $N$, the initial energy-moment condition holds for $1\leq k\leq\lfloor\log N/(r\log\log N)\rfloor$.
Thus, in particular, \hyporef{hyp:bounded-energy-per-particle} is satisfied with 
\begin{eqnarray}
    K(N)=\sqrt{N} \,,
\end{eqnarray}
as a consequence of \eqref{eq:coherent-oscillator-normalized-bound}.
\end{lemma}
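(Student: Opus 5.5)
The plan is to conjugate by the Weyl operator so that everything reduces to estimates on the state \(\Phi_N=\phi_{N,0}\otimes\vacf\) under a Weyl-shifted Hamiltonian. First I prove \(A_N\geq0\). I will use \(\dch_0\geq0\) and \(v\geq-\norm{v}_{L^\infty}\), which gives the pair interaction a lower bound \(-\frac{\lam}{2}\norm{v}_{L^\infty}N\). For the field coupling I complete the square: pointwise in the particle coordinates one has \(\Hf+\frac{\eta}{\sqrt N}\phi(f)\geq-\frac{\eta^2}{N}\norm{\hat f/\sqrt{\omg}}^2\) with \(f=\sum_j w(x_j-\cdot)\). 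Since \(\norm{\hat f/\sqrt\omg}\leq N\norm{\hat w/\sqrt\omg}\), this is bounded below by \(-\eta^2N\norm{\hat w/\sqrt\omg}^2\). Hence \(\HMN+N\geq N(1-\frac\lam2\norm{v}_\infty-\eta^2\norm{\hat w/\sqrt\omg}^2)\geq0\) under \eqref{eq:moment-positivity-condition}. I will also record the sharper form
\[
A_N\geq\tfrac12\Bigl(\sum_j\dch_0^{(j)}+\Hf\Bigr)+c\,N-C
\]
obtained by applying the same square completion with half of \(\Hf\). Because \(\mathcal W_N\) is unitary, \(G_N\geq0\) and \(\norm{A_N^{k/2}\Psi_{N,0}}^2=\langle\Phi_N,G_N^k\Phi_N\rangle\) follow immediately once domains are settled.

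Next I compute \(G_N\) explicitly using \(\wyl[\sqrt N u_0]^*a_\xi\wyl[\sqrt N u_0]=a_\xi+\sqrt N\hat u_0(\xi)\). This gives
\[
G_N=A_N+\sqrt N\bigl(a(\omg\hat u_0)+\ad(\omg\hat u_0)\bigr)+N\langle u_0,\omg u_0\rangle+2\eta\sum_j\Re{(w*u_0)(x_j)}.
\]
The new terms are a c-number of size \(O(N)\), a bounded one-body multiplication operator of size \(O(N)\) (since \(w*u_0\in L^\infty\)), and a field term \(\sqrt N\phi(\omg\hat u_0)\). Here \(\omg\hat u_0\in L^2\) by \eqref{eq:moment-coherent-profile}. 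The only dangerous piece is this field term, which is unbounded and does not commute with \(A_N\).

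For the moment bound I compare \(G_N\) with the reference operator \(B_N:=\mathbb K_N+\Hf+\cN_f+N\). On \(\Phi_N\), the spectral cutoff \eqref{eq:moment-initial-reference-cutoff} together with \(\cN_f\vacf=\Hf\vacf=0\) gives \(\langle\Phi_N,B_N^k\Phi_N\rangle\leq((E_*+1)N)^k\). Here \(\mathbb K_N\) controls \(\sum_j\dch_0^{(j)}\) because \(U\) is bounded. The bound \(C^k(N+k^2)^k\) would follow from an inductive estimate of the form
\[
\norm{G_N^{m/2}\chi}\leq C^{m/2}\bigl(N+m^2\bigr)^{m/2}
\]
applied to vectors \(\chi\) of the form (polynomial in creation operators)\(\,\Phi_N\). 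My intended route is to expand \(G_N^k\) as a product of the "diagonal" part \(D_N\) and the linear terms \(\sqrt N\,a^\#\). Each \(\ad(\omg\hat u_0)\) creates a photon. Each \(a\) must be contracted against an earlier creation or against a \(\phi(w)\) insertion, and each contraction costs a commutator, which is bounded since \(\omg\hat u_0\in L^2\) and \([D_N,\ad(g)]=\ad(\omg g)+\)bounded. Counting Wick pairings over \(k\) slots produces at most \(k^{2j}\)-type combinatorial factors paired with \(N^{k-j}\) powers. Summing gives \(C^k(N+k^2)^k\), which is exactly the origin of the \(k^2\). The analytic weight \(e^{a_*|\xi|}\) on \(\hat u_0\) ensures that the iterated commutators \(\ad(\omg^m\hat u_0)\) have norms \(\norm{\omg^m\hat u_0}\lesssim C^m m!\). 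This is consistent with the \((k!)^2\) term and also gives \(\Psi_{N,0}\in\cD(A_N^m)\) for all \(m\). The elementary inequality \((N+k^2)^k\leq2^k(N^k+k^{2k})\leq D^k(N^k+(k!)^2)\) follows from Stirling, and the refined \eqref{eq:coherent-oscillator-N-factorial-bound} follows from the same splitting with the \(k^2\) powers tracked once more.

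The main obstacle is the combinatorial control of the non-commuting expansion of \(G_N^k\). One has to show that the mixed terms between \(\Hf\), the particle–field coupling \(\frac{\eta}{\sqrt N}W_N\), and \(\sqrt N\phi(\omg\hat u_0)\) produce only factorial losses of order \(k^2\) per step rather than worse. I would try first to handle this via a commutator estimate \(\norm{B_N^{-1/2}[G_N,B_N]B_N^{-1/2}}\lesssim1\) and a Gronwall-type induction on \(\langle B_N^{m}\rangle\). Finally, \eqref{eq:coherent-oscillator-normalized-bound} is immediate: \(k\leq\sqrt N\) gives \((1+k^2/N)^k\leq2^k\), which verifies \hypref{hyp:bounded-energy-per-particle} with \(K(N)=\sqrt N\).
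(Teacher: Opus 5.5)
Your positivity argument (completing the square, $\Hf+\frac{\eta}{\sqrt N}\phi(f)\geq-\frac{\eta^2}{N}\|\widehat f/\sqrt{\omega}\|^2$), the Weyl-conjugated formula for $G_N$, and the closing elementary inequalities are fine. The gap is the central estimate $\langle\Phi_N,G_N^k\Phi_N\rangle\leq C^k(N+k^2)^k$: it is not proved, and the mechanism you sketch fails at a specific step.

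You treat everything except the linear field terms as a ``diagonal'' part and claim $[D_N,\ad(g)]=\ad(\omega g)+\text{bounded}$. That holds for the position-independent profile $g=\omega\widehat u_0$. It fails for the coupling $\frac{\eta}{\sqrt N}W_N$, whose coefficients are the particle-dependent functions $\widehat w(\xi)e^{-i\xi\cdot x_j}$. Since $[-\Delta_{x_j},e^{-i\xi\cdot x_j}]=e^{-i\xi\cdot x_j}(2i\xi\cdot\nabla_{x_j}+|\xi|^2)$, the commutator with the particle kinetic energy is unbounded and quadratic in the transferred momentum. Likewise $U(x_j)$, $v(x_i-x_j)$ and $2\eta\,\mathrm{Re}(w*u_0)(x_j)$ do not commute with $\mathbb K_N$. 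So ``$\mathbb K_N$ controls $\sum_j\dch_0^{(j)}$ because $U$ is bounded'' is only a first-order form statement and says nothing about $k$-th powers.

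You never use the exponential Fourier weights on $\widehat U$, $\widehat v$, $\widehat w$. They are exactly what controls these particle-side commutators to all orders. This quadratic momentum dependence is what produces the $k^2$; the Wick pairing count by itself only costs $C^k(N+k)^k$. Your fallback, a single bound on $B_N^{-1/2}[G_N,B_N]B_N^{-1/2}$ plus an induction, cannot close. Comparing $\langle G_N^k\rangle$ with $\langle B_N^k\rangle$ needs iterated commutators of every order with factorial control, and there is no time parameter for a Gronwall argument.

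The paper handles this by direct expansion. It writes $G_N=D_N+R_N+\mathcal L_N$, where:
\begin{itemize}
\item $D_N=\mathbb K_N+\Hf+Nb_0$;
\item $R_N$ collects the bounded particle potentials, including $U$ and the Weyl-induced multiplier;
\item $\mathcal L_N$ collects all linear field terms, with coefficient $F_N(\xi)=\omega\widehat u_0(\xi)+\frac{\eta}{N}\widehat w(\xi)\sum_je^{-i\xi\cdot x_j}$.
\end{itemize}
Every particle multiplier is expanded in plane waves. In each word of $G_N^k$ the field is contracted in the vacuum, so each $\Hf$ contributes a field energy $s\leq\sum_i|p_i|$. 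For fixed momenta, the particle word is applied right to left to $\phi_{N,0}$, using $T_P^*\mathbb K_NT_P=\mathbb K_N+2iP\cdot\nabla_X+|P|^2$ and Hermite-excitation bounds on $\mathcal E_M$ with $M\leq c_*N+k$. This gives a bound $C^d(N+k+S^2)^d$, where $S$ is the total transferred momentum. Integrating $S^{2d}\leq(2d)!\,\varepsilon^{-2d}e^{\varepsilon S}$ against majorant measures with finite exponential moments yields $C^k(N+k^2)^d$. The pairing sum is then bounded by $e^{1/8}C^k(N+k^2)^k$.

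Finally, $\Psi_{N,0}\in\cD(A_N^m)$ does not follow from the analyticity of $\widehat u_0$ alone. The paper first proves the bounds for momentum-cutoff operators $G_{N,\Lambda}$, whose expansions act on finitely many photon sectors. It then passes to $G_N$ by norm-resolvent convergence and uniform moment bounds on the spectral measures. Your argument needs an approximation step of this kind as well.
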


The proof is given in \appref{sec:coherent-oscillator-moments}.

\begin{remark}[Self-adjointness and unitary dynamics]
Under \hyporef{hyp:standing-hypotheses}, the one-particle
operator $\dch_0$ is nonnegative and self-adjoint.
The two-body interaction is bounded for each fixed $N$.
Moreover, $w\in L^2$ and
$\widehat w/\sqrt{\omg}\in L^2$, so the field estimate gives
\begin{equation}
    \left\|
    \frac{\eta}{\sqrt N}\sum_{j=1}^N
    \phi\bigl(w(x_j-\cdot)\bigr)\Psi
    \right\|
    \leq
    2\eta C_w\sqrt N\,\norm{(1+\Hf)^{1/2}\Psi}.
\end{equation}
Consequently, for each fixed $N$, the particle-field interaction
is infinitesimally operator bounded with respect to
$\sum_{j=1}^N\dch_0^{(j)}+\Hf$.
The Kato--Rellich theorem therefore yields self-adjointness
of $\HMN$ on
\begin{equation}
    \cD\left(\sum_{j=1}^N\dch_0^{(j)}\right)
    \mcap\cD(\Hf).
\end{equation}
Its restriction to the invariant bosonic subspace is
self-adjoint. By Stone's theorem,
$\UN(t,s)=e^{-i(t-s)\HMN}$ defines a strongly continuous
unitary evolution and preserves $\norm{\Psi_{N,t}}$.
\end{remark}

\subsection{Energy spaces and coherent initial data}
\label{sec:energy-spaces-and-coherent-initial-data}
 In this section, we introduce the weighted energy spaces that control the regularity of the particle sector, define the relevant Schatten-class operators for reduced density matrices, and formulate the Cauchy problem for the coupled system. Furthermore, we briefly review the essential algebraic properties of Weyl operators, which are central to constructing the coherent initial data for the radiation field.

\begin{definition}[The energy space and the $Y^1$-based hierarchy space]
\label{def:Y1-and-density-matrix-space}
Recall that $\dch_0=-\Delta+V$ is nonnegative and
self-adjoint. Fix $\kappa>0$ and define
\begin{equation}
\label{eq:definition-of-T}
    \OT:=(\dch_0+\kappa)^{1/2}.
\end{equation}
Then
\begin{equation}
    \OT\geq\sqrt{\kappa}\,\1,
    \qquad
    \|\OT^{-1}\|_{\mathrm{op}}
    \leq\kappa^{-1/2}.
\end{equation}
The one-particle energy space is
\begin{equation}
    Y^1:=\mathcal D(\OT),
    \qquad
    \langle f,g\rangle_{Y^1}
    :=
    \langle\OT f,\OT g\rangle_{\h}.
\end{equation}
In particular,
\begin{equation}
    \|f\|_{Y^1}:=\|\OT f\|_{\h},
    \qquad
    \|f\|_{\h}\leq\kappa^{-1/2}\|f\|_{Y^1}.
\end{equation}
The map $\OT: Y^1 \to \h$ is a surjective isometry. Consequently, $Y^1$ is a separable Hilbert space. 

For $k\geq1$, define the coordinate operators
on the full tensor product $\h^{\otimes k}$ by
\begin{equation}
    \OT_j
    :=
    \1_{\h}^{\otimes(j-1)}
    \otimes\OT
    \otimes\1_{\h}^{\otimes(k-j)},
    \qquad 1\leq j\leq k.
\end{equation}
These operators are self-adjoint and strongly
commute. Their product, defined by joint spectral
calculus, is $\OT^{\otimes k}$.
This product commutes with particle permutations,
so the bosonic subspace $\Hkb$ is reducing for it.
Define
\begin{equation}
    \OT_k
    :=
    \left.\OT^{\otimes k}\right|_{\Hkb}.
\end{equation}
Then $\OT_k$ is self-adjoint and
\begin{equation}
    \OT_k\geq\kappa^{k/2}\1_{\Hkb},
    \qquad
    \|\OT_k^{-1}\|_{\mathrm{op}}
    \leq\kappa^{-k/2}.
\end{equation}
The $k$-particle energy space is
\begin{equation}
    Y_k^1:=\mathcal D(\OT_k),
    \qquad
    \langle f,g\rangle_{Y_k^1}
    :=
    \langle\OT_k f,\OT_k g\rangle_{\Hkb}.
\end{equation}
Thus
\begin{equation}
    \|f\|_{Y_k^1}:=\|\OT_k f\|_{\Hkb}.
\end{equation}
As above, $\OT_k: Y_k^1 \to \Hkb$ is a surjective isometry, and $Y_k^1$ is a separable Hilbert space. 

We define the weighted trace-class space associated
with $Y_k^1$ by
\begin{equation}
    \mathcal L_k^1(Y_k^1)
    :=
    \left\{
        \OT_k^{-1}\OB^{(k)}\OT_k^{-1}:
        \OB^{(k)}\in\mathfrak S_1(\Hkb)
    \right\}.
\end{equation}
Because $\OT_k^{-1}$ is bounded, this is a
linear subspace of $\mathfrak S_1(\Hkb)$. For every $\OA^{(k)}\in\mathcal L_k^1(Y_k^1)$,
there is a unique $\OB^{(k)}\in\mathfrak S_1(\Hkb)$
such that
\begin{equation}
\label{eq:definition-of-B-k-in-L-k-1-Y-1}
    \OA^{(k)}
    =
    \OT_k^{-1}\OB^{(k)}\OT_k^{-1}.
\end{equation}
Equivalently,
\begin{equation}
    \langle\OT_k f,\OA^{(k)}\OT_k g\rangle_{\Hkb}
    =
    \langle f,\OB^{(k)}g\rangle_{\Hkb},
    \qquad f,g\in\mathcal D(\OT_k).
\end{equation}
We write
\begin{equation}
    \OT_k\OA^{(k)}\OT_k=\OB^{(k)}
\end{equation}
for this bounded trace-class realization of the
weighted sesquilinear form. Equip $\mathcal L_k^1(Y_k^1)$ with the norm
\begin{equation}
    \|\OA^{(k)}\|_{\mathcal L_k^1(Y_k^1)}
    :=
    \|\OB^{(k)}\|_{\mathfrak S_1(\Hkb)}
    =
    \|\OT_k\OA^{(k)}\OT_k\|_{\mathfrak S_1(\Hkb)}.
\end{equation}
The correspondence
$\OA^{(k)}\mapsto\OB^{(k)}$ is an isometric
bijection onto $\mathfrak S_1(\Hkb)$.
Hence $\mathcal L_k^1(Y_k^1)$ is a Banach space.
Moreover,
\begin{equation}
    \|\OA^{(k)}\|_{\mathfrak S_1(\Hkb)}
    \leq
    \kappa^{-k}
    \|\OA^{(k)}\|_{\mathcal L_k^1(Y_k^1)}.
\end{equation}
Finally, $\OA^{(k)}\geq0$ if and only if
$\OB^{(k)}\geq0$. Therefore, for positive
$\OA^{(k)}\in\mathcal L_k^1(Y_k^1)$,
\begin{equation}
    \|\OA^{(k)}\|_{\mathcal L_k^1(Y_k^1)}
    =
    \Tr_{\Hkb}\!\left(
        \OT_k\OA^{(k)}\OT_k
    \right).
\end{equation}
\end{definition}

\begin{definition}[Cauchy problem and coherent initial data]
\label{def:cauchy_problem_dynamics}
Fix $N\geq 1$, and let $\HN$ denote the $N$-particle-field Hilbert space, whose elements can be viewed fiber-wise (with respect to the spectrum of the number operator) as Fock-space-valued functions $\Psi_{N,t}(\ux_N) \in \Fock$ for almost every spatial configuration $\ux_N:=(x_1,\ldots,x_N) \in\R^{3N}$. The quantum dynamics of the coupled system is governed by the Cauchy problem for the Schr\"odinger equation
\begin{equation}
    \label{eq:main-Schrodinger-Eq-of-Phi}
    \begin{cases}
    i\pr_t\Psi_{N,t} = \HMN\Psi_{N,t}, \vspace{0.5 em}\\
    \Psi_{N,t}\vert_{t=0} = \Psi_{N,0},
    \end{cases}
\end{equation}
where the initial state $\Psi_{N,0} \in \HN$ assumes a tensor product form
\begin{equation}
    \label{eq:initial-state-of-the-system-Psi-0}
    \Psi_{N,0} = \phi_{N,0}\otimes \wyl[\sqrt N u_0]\vacf.
\end{equation}
Here, $\phi_{N,0} \in \HNb$ represents the normalized $N$-particle wave function, $\vacf$ denotes the vacuum vector in the Fock space $\Fock$, and $\wyl[h]$ is the Weyl operator generated by the field profile $h \in \h$, defined as
\begin{equation}
    \label{eq:defination-of-Weyl-Operator}
    \wyl[h] = e^{\ad(h)-a(h)}.
\end{equation}
The action of $\wyl[h]$ on the vacuum vector $\vacf$ results in the coherent state
\begin{equation}
    \wyl[h]\vacf = e^{-\frac{1}{2}\norm{h}_{\h}^2} \sum_{n=0}^{\infty} \frac{\prc{\ad(h)}^n}{n!}\vacf,
\end{equation}
which, under the mean-field scaling $h = \sqrt{N} u_0$, evaluates to
\begin{equation}
    \wyl[\sqrt N u_0]\vacf = e^{-\frac{N}{2}\norm{u_0}_{\h}^2} \sum_{n=0}^{\infty} \frac{N^{n/2}\prc{\ad(u_0)}^n}{n!}\vacf.
\end{equation}
\end{definition}

\begin{lemma}[Weyl operators and coherent states]
\label{lem:basic-properties-of-Weyl-operators}
Let $\h$ be the one-particle Hilbert space, with inner
product linear in its second argument, and let
$\vacf$ denote the normalized Fock vacuum.
For $g\in\h$, define
\begin{equation}
    \wyl[g]:=e^{\ad(g)-a(g)},
\end{equation}
where the exponential is understood through the
skew-adjoint closure of $\ad(g)-a(g)$, initially
defined on finite-particle vectors.
Then, for every $g,h\in\h$, the following properties hold.

\begin{enumerate}[label=(\roman*)]
    \item The Weyl operators satisfy
    \begin{equation}
        \wyl[g]\wyl[h]
        =
        e^{-i\Im{\braket{g,h}}}\wyl[g+h].
    \end{equation}
    Equivalently,
    \begin{equation}
        \wyl[g]\wyl[h]
        =
        e^{-2i\Im{\braket{g,h}}}\wyl[h]\wyl[g].
    \end{equation}

    \item The Weyl operators are unitary, and
    \begin{equation}
        \wyld[g]
        =
        \wyl[g]^{-1}
        =
        \wyl[-g].
    \end{equation}

    \item The coherent state $\wyl[h]\vacf$ belongs
    to $\mathcal D(a(g))$ and satisfies
    \begin{equation}
        a(g)\wyl[h]\vacf
        =
        \braket{g,h}\wyl[h]\vacf.
    \end{equation}

    \item The coherent state $\wyl[h]\vacf$ belongs
    to $\mathcal D(\cN_f^m)$ for every integer $m\geq1$.
    Its photon-number expectation and variance are
    \begin{equation}
        \braket{\wyl[h]\vacf,\cN_f\wyl[h]\vacf}
        =
        \norm{h}_{\h}^2
    \end{equation}
    and
    \begin{equation}
    \begin{aligned}
        &\braket{\wyl[h]\vacf,\cN_f^2\wyl[h]\vacf}\\
        &\qquad-
        \braket{\wyl[h]\vacf,\cN_f\wyl[h]\vacf}^{\,2}
        =
        \norm{h}_{\h}^2.
    \end{aligned}
    \end{equation}
    More precisely, its photon-number distribution
    is Poissonian with parameter $\norm{h}_{\h}^2$.

    \item Coherent states are normalized, and
    \begin{equation}
    \begin{aligned}
        \braket{\wyl[g]\vacf,\wyl[h]\vacf}
        =
        \exp\left(
            -\frac12\norm{g}_{\h}^2
            -\frac12\norm{h}_{\h}^2
            +\braket{g,h}
        \right).
    \end{aligned}
    \end{equation}
    In particular,
    \begin{equation}
        \abs{\braket{\wyl[g]\vacf,\wyl[h]\vacf}}
        =
        \exp\left(-\frac12\norm{g-h}_{\h}^2\right).
    \end{equation}
\end{enumerate}
\end{lemma}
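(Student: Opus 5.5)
The natural approach is to reduce everything to the vacuum via the displacement structure of $\wyl$, working first on the dense subspace $\mathcal F_{\mathrm{fin}}$ of finite-particle vectors, where $\ad(g)-a(g)$ is essentially skew-adjoint (Nelson's analytic vector theorem: finite-particle vectors are analytic for field operators by the bound $\|a^\sharp(g)\psi_n\|\le\sqrt{n+1}\|g\|\|\psi_n\|$). Stone's theorem then makes $\wyl[g]$ a unitary group in the parameter $s\mapsto \wyl[sg]$, giving (ii) immediately: $\wyl[g]^*=e^{-(\ad(g)-a(g))}=\wyl[-g]=\wyl[g]^{-1}$.

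For (i), compute the commutator $[\ad(g)-a(g),\ad(h)-a(h)] = -[\ad(g),a(h)]-[a(g),\ad(h)] = \langle h,g\rangle-\langle g,h\rangle = -2i\,\Im{\braket{g,h}}$ on $\mathcal F_{\mathrm{fin}}$. Since this is a scalar, I will justify the Baker--Campbell--Hausdorff identity rigorously by differentiating $s\mapsto \wyl[sg]\wyl[sh]\wyl[-s(g+h)]$ on $\mathcal F_{\mathrm{fin}}$ (using the conjugation identity $\wyl[g]^*a(f)\wyl[g]=a(f)+\langle f,g\rangle$, itself obtained by differentiating $s\mapsto\wyl[sg]^*a(f)\wyl[sg]$), obtaining an ODE with constant scalar generator whose solution is $e^{-is^2\Im{\braket{g,h}}}$. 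Setting $s=1$ yields the first identity; swapping $g,h$ and dividing gives the second.

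For (iii), the conjugation identity gives $a(g)\wyl[h]=\wyl[h](a(g)+\langle g,h\rangle)$ on a core, and applying to $\vacf$ with $a(g)\vacf=0$ gives the eigenvalue relation; domain membership follows from closedness of $a(g)$ and the explicit series $\wyl[h]\vacf=e^{-\|h\|^2/2}\sum_n (\ad(h))^n\vacf/n!$, which I will establish by showing the series converges (norm of the $n$-th term is $\|h\|^n/\sqrt{n!}$) and that it agrees with $\wyl[h]\vacf$, e.g.\ by checking both satisfy the same ODE in $s$ for $h\to sh$. For (iv), the $n$-photon component has squared norm $e^{-\|h\|^2}\|h\|^{2n}/n!$, a Poisson law with parameter $\|h\|^2$; all moments are finite, so $\wyl[h]\vacf\in\cD(\cN_f^m)$, and mean and variance both equal $\|h\|^2$. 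For (v), use (i)-(ii): $\braket{\wyl[g]\vacf,\wyl[h]\vacf}=\braket{\vacf,\wyl[-g]\wyl[h]\vacf}=e^{i\Im{\braket{g,h}}}\braket{\vacf,\wyl[h-g]\vacf}=e^{i\Im{\braket{g,h}}}e^{-\|h-g\|^2/2}$, the last from the vacuum component of the series. Expanding $-\tfrac12\|h-g\|^2=-\tfrac12\|g\|^2-\tfrac12\|h\|^2+\Re{\braket{g,h}}$ gives the stated formula and modulus.

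The only genuinely delicate step is the domain/rigor issue in (i) and (iii): justifying differentiation of products of unbounded-generator groups. I would handle it by working on $\mathcal F_{\mathrm{fin}}$, which is invariant under $a^\sharp$ and consists of analytic vectors, so the exponentials can be expanded as norm-convergent power series there, making all identities algebraic, and then extending by unitarity and density.
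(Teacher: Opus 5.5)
Your proposal is correct and follows essentially the same route as the paper: finite-particle vectors are analytic, so the generator is essentially skew-adjoint; the commutator is the scalar $-2i\operatorname{Im}\langle g,h\rangle$, which gives the Weyl relation via BCH on those vectors; and the explicit coherent-state series yields (iii)--(v). The differences are minor and both valid: you justify BCH and the series identification through ODEs in the scaling parameter, and you obtain (v) from the Weyl relation plus the vacuum component, whereas the paper sums the sectorwise overlaps $\langle g,h\rangle^n/n!$ directly.
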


\begin{proof}
Initially define $A_h:=\ad(h)-a(h)$ on the dense
subspace of finite-particle vectors.
If $\Psi$ has no components above the $m$-particle
sector, the creation and annihilation estimates give
\begin{equation}
    \norm{A_h^n\Psi}
    \leq
    (2\norm{h}_{\h})^n
    \sqrt{\frac{(m+n)!}{m!}}\,
    \norm{\Psi},
    \qquad n\geq0.
\end{equation}
Consequently, every finite-particle vector is an
entire analytic vector for $A_h$.
Since $A_h$ is skew-symmetric on this domain, the
analytic-vector theorem implies that it is
essentially skew-adjoint.
We use the same notation for its skew-adjoint closure.

The canonical commutation relations yield, on
finite-particle vectors,
\begin{equation}
\begin{aligned}
    [A_g,A_h]
    &=
    [\ad(g)-a(g),\ad(h)-a(h)]\\
    &=
    \bigl(\braket{h,g}-\braket{g,h}\bigr)\1\\
    &=
    -2i\Im{\braket{g,h}}\,\1.
\end{aligned}
\end{equation}
This commutator is a scalar multiple of the identity.
The corresponding estimates for mixed products of
$A_g$ and $A_h$ justify the
Baker--Campbell--Hausdorff calculation on
finite-particle analytic vectors. Since
$A_g+A_h=A_{g+h}$ on that domain, we obtain
\begin{equation}
    \wyl[g]\wyl[h]
    =
    e^{-i\Im{\braket{g,h}}}\wyl[g+h].
\end{equation}
Both sides are bounded operators, so the identity
extends to the full Fock space by density.
Interchanging $g$ and $h$ gives
\begin{equation}
    \wyl[h]\wyl[g]
    =
    e^{i\Im{\braket{g,h}}}\wyl[g+h],
\end{equation}
which proves the equivalent commutation relation
in (i).

Skew-adjointness gives
\begin{equation}
    \wyld[g]
    =
    e^{-A_g}
    =
    \wyl[-g]
    =
    \wyl[g]^{-1},
\end{equation}
proving (ii).

We next establish the coherent-state expansion.
The identity
\begin{equation}
    [\ad(h),-a(h)]
    =
    \norm{h}_{\h}^2\1
\end{equation}
and $a(h)\vacf=0$ give, by normal ordering applied
to the vacuum,
\begin{equation}
\begin{aligned}
    \wyl[h]\vacf
    &=
    e^{-\norm{h}_{\h}^2/2}
    \sum_{n=0}^{\infty}
        \frac{(\ad(h))^n}{n!}\vacf\\
    &=
    e^{-\norm{h}_{\h}^2/2}
    \bigoplus_{n=0}^{\infty}
        \frac{h^{\otimes n}}{\sqrt{n!}}.
\end{aligned}
\end{equation}
Here the normal-ordering calculation is justified
by the absolutely norm-convergent power series
on the vacuum; no product of unbounded exponentials
on arbitrary Fock vectors is required.

For every integer $m\geq1$,
\begin{equation}
    e^{-\norm{h}_{\h}^2}
    \sum_{n=0}^{\infty}
        n^{2m}\frac{\norm{h}_{\h}^{2n}}{n!}
    <\infty.
\end{equation}
Thus $\wyl[h]\vacf\in\mathcal D(\cN_f^m)$.
In particular, its finite-sector truncations
converge in the graph norm of $\cN_f^{1/2}$.
The estimate
\begin{equation}
    \norm{a(g)\Psi}
    \leq
    \norm{g}_{\h}\norm{\cN_f^{1/2}\Psi}
\end{equation}
therefore justifies applying $a(g)$ term by term
to the coherent-state expansion.

For $n\geq1$, the canonical commutation relations give
\begin{equation}
    a(g)(\ad(h))^n\vacf
    =
    n\braket{g,h}(\ad(h))^{n-1}\vacf.
\end{equation}
Consequently,
\begin{equation}
\begin{aligned}
    a(g)\wyl[h]\vacf
    &=
    e^{-\norm{h}_{\h}^2/2}
    \sum_{n=1}^{\infty}
        \frac{n\braket{g,h}}{n!}
        (\ad(h))^{n-1}\vacf\\
    &=
    \braket{g,h}\,
    e^{-\norm{h}_{\h}^2/2}
    \sum_{n=0}^{\infty}
        \frac{(\ad(h))^n}{n!}\vacf\\
    &=
    \braket{g,h}\wyl[h]\vacf.
\end{aligned}
\end{equation}
This proves (iii).

By orthogonality of the particle-number sectors,
the probability of observing $n$ photons is
\begin{equation}
    P(n)
    =
    e^{-\norm{h}_{\h}^2}
    \frac{\norm{h}_{\h}^{2n}}{n!},
    \qquad n\geq0.
\end{equation}
Hence the photon-number distribution is Poissonian.
Direct summation gives
\begin{equation}
    \sum_{n=0}^{\infty}nP(n)
    =
    \norm{h}_{\h}^2,
    \qquad
    \sum_{n=0}^{\infty}n(n-1)P(n)
    =
    \norm{h}_{\h}^4.
\end{equation}
Since $n^2=n(n-1)+n$, it follows that
\begin{equation}
    \braket{\wyl[h]\vacf,\cN_f^2\wyl[h]\vacf}
    =
    \norm{h}_{\h}^4+\norm{h}_{\h}^2.
\end{equation}
Subtracting the square of the mean proves the
variance formula in (iv).

Finally, the coherent-state expansion and
orthogonality of the number sectors imply
\begin{equation}
\begin{aligned}
    \braket{\wyl[g]\vacf,\wyl[h]\vacf}
    &=
    e^{-\norm{g}_{\h}^2/2-\norm{h}_{\h}^2/2}
    \sum_{n=0}^{\infty}
        \frac{\braket{g,h}^{\,n}}{n!}\\
    &=
    \exp\left(
        -\frac12\norm{g}_{\h}^2
        -\frac12\norm{h}_{\h}^2
        +\braket{g,h}
    \right).
\end{aligned}
\end{equation}
Taking $g=h$ confirms normalization.
Taking absolute values and using
\begin{equation}
    \norm{g-h}_{\h}^2
    =
    \norm{g}_{\h}^2+\norm{h}_{\h}^2
    -2\Re{\braket{g,h}}
\end{equation}
proves the final assertion.
\end{proof}

\subsection{Reduced density matrices and BBGKY hierarchy}
In this section, we derive the Bogoliubov--Born--Green--Kirkwood--Yvon (BBGKY) hierarchy, a coupled sequence of integro-differential equations that governs the time evolution of these reduced density matrices. Crucially, the interaction between the confined bosons and the radiation field manifests as a novel source term in this hierarchy, coupling the particle marginals to the coherent radiation field.

\begin{definition}[Schatten class operators]
\label{def:schatten-classes}
For a given separable Hilbert space $\mathcal{H}$, we denote the algebra of bounded linear operators on $\mathcal{H}$ by $\mathcal{B}(\mathcal{H})$. For $p \geq 1$, we denote by $\mathfrak{S}_p(\mathcal{H})$ the two-sided ideal of $p$-th Schatten-class operators, equipped with the norm
\begin{equation}
    \norm{A}_{\mathfrak{S}_p} := \Tr(\abs{A}^p)^{1/p}.
\end{equation}
In particular, $\mathfrak{S}_1(\mathcal{H})$ and $\mathfrak{S}_2(\mathcal{H})$ denote the spaces of trace-class and Hilbert--Schmidt operators, respectively.
\end{definition}

\begin{definition}[Reduced density matrices and their integral kernels]
\label{def:reduced-density-matrix}
Suppose the assumptions in \hyporef{hyp:standing-hypotheses} hold. Fix $N \geq 1$, and let $\Psi_{N,t} \in \HN$ be a solution to the Cauchy problem \eqref{def:cauchy_problem_dynamics}. Define the trace-class, rank-one projection $\Gamma_{N,t} \in \mathfrak{S}_1(\HN)$ on $\HN$ as
\begin{equation}
    \Gamma_{N,t} := \ket{\Psi_{N,t}}\bra{\Psi_{N,t}}.
\end{equation}
Then, we define the $k$-particle reduced density matrix as the partial trace over the Fock space and the $N-k$ particles
\begin{equation}
    \label{eq:definition-gamma-k-BBGKY-operator-form}
    \gamma_{N,t}^{(k)} := \Tr_{k+1,\ldots,N,\Fock} \Gamma_{N,t} \in \mathfrak{S}_1(\Hkb).
\end{equation}
Equivalently, for every $1\leq k\leq N$ and
$\Phi_{k,t}\in\Hkb$, the action of
$\gamma_{N,t}^{(k)}$ is given by
\begin{equation}
    \prc{\gamma_{N,t}^{(k)}\Phi_{k,t}}(\ux_k)
    =
    \int_{\R^{3k}}
        \gamma_{N,t}^{(k)}(\ux_k;\ux_k')
        \Phi_{k,t}(\ux_k')
        \,d\ux_k',
\end{equation}
for almost every $\ux_k\in\R^{3k}$, where $\gamma_{N,t}^{(k)}(\ux_k;\ux_k')$ is the integral kernel of $\gamma_{N,t}^{(k)}$, namely, the $k$-particle reduced density matrix in kernel form is defined as
\begin{equation}
\label{eq:definition-gamma-k-BBGKY}
    \gamma_{N,t}^{(k)}(\ux_k;\ux_k')
    :=
    \int_{\R^{3(N-k)}}
    \left\langle
    \Psi_{N,t}(\ux_k',\ux_{N-k}),
    \Psi_{N,t}(\ux_k,\ux_{N-k})
    \right\rangle_{\Fock}
    d\ux_{N-k}.
\end{equation}
\end{definition}

\begin{lemma}[The   BBGKY hierarchy]
\label{lem:finite-N-BBGKY-hierarchy}
Fix $N \geq 1$ and suppose that \hyporef{hyp:standing-hypotheses} holds. For every $1\leq k\leq N$, the   BBGKY hierarchy governing the time evolution of the $k$-particle reduced density matrices with respect to the Hamiltonian $\HMN$ is given, in operator form, by
\begin{equation}
\label{eq:finite-N-BBGKY-operator-form-clean}
\begin{aligned}
    i\pr_t\gamma_{N,t}^{(k)}
    &=
    \left[ \sum_{j=1}^{k}\dch_0^{(j)}, \gamma_{N,t}^{(k)} \right]
    +
    \frac{\lam}{N} \sum_{1\leq i<j\leq k} \left[ v(x_i-x_j), \gamma_{N,t}^{(k)} \right]
    \\
    &\quad +
    \lam\frac{N-k}{N} \sum_{j=1}^{k} \operatorname{Tr}_{k+1} \left[ v(x_j-x_{k+1}), \gamma_{N,t}^{(k+1)} \right]
    +
    \eta \, \mathcal S_{N,t}^{(k)}.
\end{aligned}
\end{equation}
For $k=N$, the term involving $\gamma_{N,t}^{(N+1)}$ is absent. Equivalently, in kernel form, the hierarchy reads
\begin{equation}
\label{eq:finite-N-BBGKY-kernel-form-clean}
\begin{aligned}
    i\pr_t \gamma_{N,t}^{(k)}(\ux_k;\ux_k')
    &=
    \sum_{j=1}^{k} \left( \dch_0(x_j) - \dch_0(x_j') \right) \gamma_{N,t}^{(k)}(\ux_k;\ux_k')
    \\
    & +
    \frac{\lam}{N} \sum_{1\leq i<j\leq k} \left[ v(x_i-x_j) - v(x_i'-x_j') \right] \gamma_{N,t}^{(k)}(\ux_k;\ux_k')
    \\
    & +
    \lam\frac{N-k}{N} \sum_{j=1}^{k} \int_{\R^3} \left[ v(x_j-z) - v(x_j'-z) \right]\\
    &\qquad \qquad \times  \gamma_{N,t}^{(k+1)}(\ux_k,z;\ux_k',z) \, dz+ 
    \eta \, \mathcal S_{N,t}^{(k)}(\ux_k;\ux_k'),
\end{aligned}
\end{equation}
where $\cS_{N,t}^{(k)}(\ux_k;\ux_k')$ is the kernel of the photon source term. Explicitly, the source term is defined as
\begin{equation}
\label{eq:def-source-term-Sk-finite-N-BBGKY}
    \mathcal S_{N,t}^{(k)}(\ux_k;\ux_k')
    :=
    \sum_{j=1}^{k} \mathcal Q_{N,t}^{(k)} \left(\ux_k;\ux_k'; f_{j} \right).
\end{equation}
Here, $\mathcal Q_{N,t}^{(k)}$ denotes the field operator smeared with $f_j$, where
\begin{equation}
\label{eq:first-definition-Q-k-smeared-field-in-BBGKY}
    \mathcal Q_{N,t}^{(k)}(\ux_k;\ux_k';f)
    := \frac{1}{\sqrt N}
    \int_{\R^{3(N-k)}}
    \left\langle
    \Psi_{N,t}(\ux_k',\ux_{N-k}),
    \phi(f)\Psi_{N,t}(\ux_k,\ux_{N-k})
    \right\rangle_{\Fock}
    d\ux_{N-k},
\end{equation}
and $f_j$ defined for $j=1,\ldots,k$ as the difference of form factors
\begin{equation}
\label{eq:f-j-k-BBGKY}
    f_{j}(y) := w(x_j-y) - w(x_j'-y).
\end{equation} 
The hierarchy is understood in the time-integrated
weak sense. More precisely, it is tested against
finite-rank operators $\OJ^{(k)}$ satisfying
\begin{equation}
    \operatorname{Ran}(\OJ^{(k)})
    \cup\operatorname{Ran}((\OJ^{(k)})^*)
    \subset
    \cD\left(\sum_{j=1}^k\dch_0^{(j)}\right).
\end{equation}
For each such test, the scalar function
$t\mapsto\Tr(\OJ^{(k)}\gamma_{N,t}^{(k)})$
is locally absolutely continuous, and the tested
equation holds after integration over every
compact time interval. The kernel equation is interpreted distributionally.
All kernel expressions involving partial traces
refer to the representatives induced by the
underlying wavefunction and the operator partial trace.
\end{lemma}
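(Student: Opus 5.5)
We derive the hierarchy by testing against the finite-rank observables described in the statement, computing the time derivative of $\Tr(\OJ^{(k)}\gamma_{N,t}^{(k)})$, and then reading off the kernel form.

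\textbf{Duality formula.} By the definition of the partial trace in \defref{def:reduced-density-matrix},
\begin{equation}
    \Tr\bigl(\OJ^{(k)}\gamma_{N,t}^{(k)}\bigr)
    =
    \bigl\langle\Psi_{N,t},(\OJ^{(k)}\otimes\1_{N-k}\otimes\1_{\Fock})\Psi_{N,t}\bigr\rangle_{\HN}.
\end{equation}

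\textbf{Regularity in time.} The Kato--Rellich remark identifies the domain of $\HMN$ as $\cD(\sum_j\dch_0^{(j)})\mcap\cD(\Hf)$. For $\Psi_{N,0}$ in this domain, $t\mapsto\Psi_{N,t}$ is $C^1$ in $\HN$ and stays in $\cD(\HMN)$. Since $\OJ^{(k)}$ is bounded, the scalar map above is then $C^1$, and
\begin{equation}
    i\pr_t\langle\Psi_{N,t},\OJ\Psi_{N,t}\rangle
    =
    \langle\Psi_{N,t},[\OJ,\HMN]\Psi_{N,t}\rangle .
\end{equation}
This commutator is meaningful in the form sense because $\OJ$ and $\OJ^*$ map into $\cD(\sum_{j\le k}\dch_0^{(j)})$. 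For general data $\Psi_{N,0}\in\HN$, approximate by $\Psi^{(n)}_{N,0}\in\cD(\HMN)$. Then write the time-integrated identity as
\begin{equation}
    \Tr(\OJ\gamma_t)-\Tr(\OJ\gamma_0)=-i\int_0^t\cdots\,ds.
\end{equation}
Every term on the right is bounded by $\|\Psi\|\,\|(\text{op})\Psi\|$, with the operator built from $\OJ$, $\dch_0\OJ$, or $\OJ\dch_0$. The one exception is the field term, which needs $\|(1+\Hf)^{1/2}\Psi_{N,s}\|$. I would control it by a form bound relative to $\HMN+N$ (energy conservation) or handle it through the density argument. Passing to the limit gives the integrated identity and absolute continuity.

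\textbf{Splitting the commutator.}
\begin{enumerate}
\item $\Hf$ commutes with $\OJ\otimes\1$, so the field energy contributes nothing.
\item $\dch_0^{(j)}$ with $j>k$ commutes with $\OJ$.
\item $\dch_0^{(j)}$ with $j\le k$ gives $\Tr([\dch_0^{(j)},\gamma^{(k)}]\OJ)$ after cycling.
\item The pairs $v_{ij}$ split into three cases.
 \begin{enumerate}
 \item Both $i,j\le k$: this produces the $\frac{\lam}{N}$ term.
 \item One index $\le k$ and one $>k$: by permutation symmetry of $\Psi_{N,t}$, the $N-k$ outer indices give identical contributions. Each equals $\Tr_{k+1}[v(x_j-x_{k+1}),\gamma^{(k+1)}]$, which produces the factor $\lam\frac{N-k}{N}$.
 \item Both $i,j>k$: these commute with $\OJ$ and drop out.
 \end{enumerate}
\item The interaction $\frac{\eta}{\sqrt N}W_N$ splits the same way. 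Summands with $j>k$ act on the untraced variables only through the field and the outer particles. The field operator $\phi(w(x_j-\cdot))$ is self-adjoint, so these contributions cancel under the expectation: they combine into $\langle\Psi,[\OJ,\phi(w(x_j-\cdot))]\Psi\rangle=0$ because $\OJ$ does not act on $x_j$ or on $\Fock$.
\item Summands with $j\le k$: writing the expectation fiberwise in $\ux_N$, multiplication by the field operator appears on the left with $w(x_j-\cdot)$ and on the right with $w(x_j'-\cdot)$. Their difference is exactly $\phi(f_j)$ inside the fiber inner product, which gives $\eta\,\cS^{(k)}_{N,t}$ with the kernel $\cQ^{(k)}_{N,t}$ in \eqref{eq:first-definition-Q-k-smeared-field-in-BBGKY}.
\end{enumerate}
To see that $\cQ^{(k)}$ defines a trace-class operator, use $\|\phi(f)\Psi\|\le 2\|f\|_{C_w}\|(1+\Hf)^{1/2}\Psi\|$ together with Cauchy--Schwarz.

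\textbf{Kernel form.} The operator identity, tested against all finite-rank $\OJ$, gives the kernel identity distributionally. For $k=N$ there are no outer variables, so the $\gamma^{(N+1)}$ term is absent.

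\textbf{Main obstacle.} The hardest step is the field source term. I need to show that $\cQ^{(k)}_{N,t}$ is a well-defined trace-class object for a.e.\ $t$, locally integrable in $t$, so that the limit over approximating data is legitimate. I would prove this via the bound
\begin{equation}
    \|(1+\Hf)^{1/2}\Psi_{N,t}\|^2\lesssim N\langle\Psi_{N,0},(\HMN+N)\Psi_{N,0}\rangle,
\end{equation}
which follows from the weak-coupling relative form bound in \hypref{hyp:weak-coupling-condition} and conservation of $\HMN$.
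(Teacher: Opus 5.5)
Your commutator bookkeeping (free terms, the three classes of pair interactions, cancellation of the field summands with $j>k$, and the fiberwise identification of $\phi(f_j)$) is the same as the paper's. The difference is in how you control the source term analytically.

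The paper does not use energy conservation here. It regularizes the data as $\Psi_{N,0}^{(L)}=\1_{[0,L]}(\sum_j\dch_0^{(j)}+\Hf)\1_{[0,L]}(\cN_f)\Psi_{N,0}\in\cD(\HMN)$ and does the algebra for these vectors. It then removes the cutoff using the photon-number growth bound $\|(\cN_f+1)^{1/2}\UN(t,0)\zeta\|\leq\|(\cN_f+1)^{1/2}\zeta\|+\eta\sqrt N\|w\|_{L^2}\,t\,\|\zeta\|$. Combined with $\|\phi(w(x_j-\cdot))\zeta\|\leq 2\|w\|_{L^2}\|(\cN_f+1)^{1/2}\zeta\|$, this makes the source trace-norm continuous in the state for the $(\cN_f+1)^{1/2}$ graph norm. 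The paper's route buys two things:
\begin{itemize}
\item It needs only $\Psi_{N,0}\in\cD(\cN_f^{1/2})$, which the coherent field supplies independently of the particle energy.
\item The same number estimate is reused later for $N$-uniform source bounds, where the Weyl tests $f\in\h_f$ are not controlled by $\Hf$ in the infrared.
\end{itemize}

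Your route controls $\phi(w(x_j-\cdot))$ through $\|(1+\Hf)^{1/2}\Psi_{N,t}\|$. You use the coercivity $\langle\Psi,\Hf\Psi\rangle\leq2\langle\Psi,(\HMN+N)\Psi\rangle$ and conservation of $\HMN$. This is simpler, uniform in $t$, and avoids any cutoff differentiation, but it needs finite initial energy. That is available under the standing hypotheses: \lemref{lem:coherent-oscillator-moments} gives $\Psi_{N,0}\in\cD(\HMN)$. So your first case already covers the actual data and yields the identity in differentiated form, with no approximation at all.

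If you keep the approximation step, make two corrections. First, ``general data $\Psi_{N,0}\in\HN$'' is too broad: without field regularity the source term is not even defined. Second, a uniform bound on $\|(1+\Hf)^{1/2}\Psi_{N,t}\|$ does not by itself justify passing to the limit. You need the bound for differences, i.e.\ approximants converging in the form norm of $\HMN$. A natural choice is $\1_{[0,L]}(\HMN+N)\Psi_{N,0}$, whose form-norm error is independent of $t$ by functional calculus.
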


\begin{proof}
For labeled operator calculations, we work on
$L^2(\R^{3N})\otimes\Fock$ and use the self-adjoint
Hamiltonian with the same expression as $\HMN$.
Its restriction to the invariant bosonic subspace
gives the prescribed evolution.

The coherent initial state belongs to
$\cD(\cN_f^{1/2})$.
The independent bounded-number-cutoff argument
used to prove \eqref{eq:joint-source-uniform-number}
also gives, for every
$\zeta\in\cD(\cN_f^{1/2})$ and $t\geq0$,
\begin{equation}
\begin{aligned}
    \|(\cN_f+1)^{1/2}\UN(t,0)\zeta\|
    &\leq
    \|(\cN_f+1)^{1/2}\zeta\|
    \\
    &\quad+
    \eta\sqrt N\,\|w\|_{L^2}\,t\,\|\zeta\|.
\end{aligned}
\end{equation}
Indeed, the same cutoff differentiation applies
to an arbitrary initial vector, with its conserved
norm retained in the Cauchy--Schwarz estimate.
This argument is independent of the BBGKY hierarchy.

For integers $L\geq1$, define
\begin{equation}
\begin{aligned}
    \Psi_{N,0}^{(L)}
    &:=
    \1_{[0,L]}
       \left(\sum_{j=1}^N\dch_0^{(j)}+\Hf\right)
    \1_{[0,L]}(\cN_f)\Psi_{N,0},
    \\
    \Psi_{N,t}^{(L)}
    &:=
    \UN(t,0)\Psi_{N,0}^{(L)}.
\end{aligned}
\end{equation}
The two spectral projections commute and preserve
particle symmetry.
The operator-domain description of $\HMN$ shows that
$\Psi_{N,0}^{(L)}\in\cD(\HMN)$.
Moreover,
\begin{equation}
    \|(\cN_f+1)^{1/2}
       (\Psi_{N,0}^{(L)}-\Psi_{N,0})\|
    \longrightarrow0.
\end{equation}
Applying the preceding number estimate to the
difference of the initial vectors yields
\begin{equation}
\label{eq:BBGKY-number-graph-approximation}
    \sup_{0\leq t\leq T}
    \|(\cN_f+1)^{1/2}
       (\Psi_{N,t}^{(L)}-\Psi_{N,t})\|
    \longrightarrow0
\end{equation}
for every fixed $N$ and $T<\infty$.

For each $L$, the Schr\"odinger equation holds
strongly for $\Psi_{N,t}^{(L)}$.
Its rank-one density matrix is therefore
differentiable in trace norm, with derivative
given by the Hamiltonian commutator.
Partial trace preserves this identity.

We first perform the following algebraic calculation
for these regularized states and temporarily
suppress the superscript $(L)$.
The calculation applies to unnormalized states
because every expression is quadratic in the state.
The kernel identities below are interpreted
distributionally.
The strong Schr\"odinger equation then yields
\begin{equation}
\label{eq:BBGKY-starting-differentiation-clean}
\begin{aligned}
    i\pr_t\gamma_{N,t}^{(k)}(\ux_k;\ux_k')
    &=
    \int_{\R^{3(N-k)}}
    \left\langle
    \Psi_{N,t}(\ux_k',\ux_{N-k}),
    \HMN\Psi_{N,t}(\ux_k,\ux_{N-k})
    \right\rangle_{\Fock}
    d\ux_{N-k}
    \\
    & -
    \int_{\R^{3(N-k)}}
    \left\langle
    \HMN\Psi_{N,t}(\ux_k',\ux_{N-k}),
    \Psi_{N,t}(\ux_k,\ux_{N-k})
    \right\rangle_{\Fock}
    d\ux_{N-k}.
\end{aligned}
\end{equation}
We systematically analyze the contribution of each component of $\HMN$. For the one-particle Hamiltonian, if $1\leq j\leq k$, the operator $\dch_0^{(j)}$ acts on the unprimed untraced variable $x_j$ in the first integral and on the primed untraced variable $x_j'$ in the second integral, yielding the commutator term
\begin{equation}
    \left( \dch_0(x_j) - \dch_0(x_j') \right) \gamma_{N,t}^{(k)}(\ux_k;\ux_k').
\end{equation}
Conversely, if $j > k$, the self-adjoint operator $\dch_0^{(j)}$ acts exclusively on the integrated (traced) variables, causing the two integral contributions to cancel. Thus, the one-particle operator contributes only for $1 \leq j \leq k$. Similarly, the contribution from the field Hamiltonian $\Hf$ vanishes identically because it is self-adjoint and acts on the traced Fock-space sector. 

Next, we consider the following two-body particle interaction
\begin{equation}
    \frac{\lam}{N} \sum_{1\leq i<j\leq N} v(x_i-x_j).
\end{equation}
To compute the contributions from each term in the sum, we decompose the sum as follows:
\begin{equation}
    \sum_{1\leq i<j\leq N}
    =
    \sum_{1\leq i<j\leq k}
    +
    \sum_{\substack{1\leq i\leq k\\ k+1\leq j\leq N}}
    +
    \sum_{k+1\leq i<j\leq N}.
\end{equation}
For $1\leq i<j\leq k$, both variables are untraced, immediately yielding the internal interaction term
\begin{equation}
    \frac{\lam}{N} \sum_{1\leq i<j\leq k} \left[ v(x_i-x_j) - v(x_i'-x_j') \right] \gamma_{N,t}^{(k)}(\ux_k;\ux_k').
\end{equation}
When only one variable belongs to the traced sector (e.g., $1\leq i\leq k$ and $k+1\leq j\leq N$), bosonic symmetry ensures that all $N-k$ traced particles yield identical contributions, producing the collision term
\begin{equation}
    \lam\frac{N-k}{N} \sum_{j=1}^{k} \int_{\R^3} \left[ v(x_j-z) - v(x_j'-z) \right] \gamma_{N,t}^{(k+1)}(\ux_k,z;\ux_k',z) \, dz.
\end{equation}
When both variables are traced ($k+1\leq i<j\leq N$), the real multiplication operator $v(x_i-x_j)$ acts identically on the integrated variables on both sides of the commutator; thus, the two contributions cancel each other out.  

Finally, we process the particle-photon interaction,
\begin{equation}
    \frac{\eta}{\sqrt N} \sum_{j=1}^{N} \phi(w(x_j-\cdot)).
\end{equation}
For an untraced particle $1\leq j\leq k$, we define the field operators
\begin{equation}
    \Phi_j := \phi(w(x_j-\cdot)), \qquad \Phi_j' := \phi(w(x_j'-\cdot)).
\end{equation}
The $j$-th untraced particle contributes
\begin{equation}
\begin{aligned}
     &\frac{\eta}{\sqrt N}
    \int_{\R^{3(N-k)}}
    \left\langle
    \Psi_{N,t}(\ux_k',\ux_{N-k}),
    \Phi_j\Psi_{N,t}(\ux_k,\ux_{N-k})
    \right\rangle_{\Fock}
    d\ux_{N-k}
    \\
    &\quad -
    \frac{\eta}{\sqrt N}
    \int_{\R^{3(N-k)}}
    \left\langle
    \Phi_j'\Psi_{N,t}(\ux_k',\ux_{N-k}),
    \Psi_{N,t}(\ux_k,\ux_{N-k})
    \right\rangle_{\Fock}
    d\ux_{N-k}.
\end{aligned}
\end{equation}
Because the form factor $w$ is real-valued, $\Phi_j'$ is self-adjoint. Moving $\Phi_j'$ to the right side of the inner product allows us to combine the terms using the linearity of the field operator
\begin{equation}
    \Phi_j - \Phi_j' = \phi\big(w(x_j-\cdot) - w(x_j'-\cdot)\big) = \phi(f_j).
\end{equation}
This yields the $j$-th contribution
\begin{equation}
    \cQ_{N,t}^{(k)}(\ux_k;\ux_k';f_j) = \frac{1}{\sqrt N}
    \int
    \left\langle
    \Psi_{N,t}(\ux_k',\ux_{N-k}),
    \phi(f_j)
    \Psi_{N,t}(\ux_k,\ux_{N-k})
    \right\rangle_{\Fock}
    d\ux_{N-k}.
\end{equation}
Decomposing the source term into its annihilation and creation parts, we obtain
\begin{equation}
    \label{eq:Qa-and-Qad-jth-particle}
    \mathcal Q_{N,t}^{(k)}(\ux_k;\ux_k'; f_j) =\cQ^a + \cQ^{\ad},
\end{equation}
where, for notational convenience, we denote
\begin{equation}
    \cQ^a := \cQ_{N,t}^{a,(k)}(\ux_k;\ux_k';f_j) \quad \text{and} \quad  \cQ^{\ad} := \cQ_{N,t}^{\ad,(k)}(\ux_k;\ux_k';f_j)\,.
\end{equation}
Explicitly, the annihilation part is defined as
\begin{equation}
    \label{eq:definition-of-smeared-field-Qa}
    \cQ^a := \frac{1}{\sqrt N} \iint \cc{f_j(y)} \inp{ \Psi_{N,t}(\ux_k',\ux_{N-k}), a_y \Psi_{N,t}(\ux_k,\ux_{N-k}) }_{\Fock} dy \, d\ux_{N-k},
\end{equation}
and similarly for the creation part,
\begin{equation}
    \label{eq:definition-of-smeared-field-Qad}
    \cQ^{\ad}:= \frac{1}{\sqrt N} \iint f_j(y) \inp{ \Psi_{N,t}(\ux_k',\ux_{N-k}), \ad_y \Psi_{N,t}(\ux_k,\ux_{N-k}) }_{\Fock} dy \, d\ux_{N-k}.
\end{equation}
Summing over the untraced particles
$j=1,\ldots,k$ gives
\eqref{eq:def-source-term-Sk-finite-N-BBGKY}.
The field terms with $j>k$ act entirely on the
traced variables and cancel.
This establishes the BBGKY identities for the
regularized states.

We now restore the superscript $(L)$ and pass
to the original initial data.
Test the regularized hierarchy against a finite-rank
$\OJ^{(k)}$ satisfying the range conditions in the
statement, and integrate in time.
The kinetic test commutator
\begin{equation}
    \left[
       \OJ^{(k)},\sum_{j=1}^k\dch_0^{(j)}
    \right]
\end{equation}
extends to a bounded operator.
All particle-interaction test commutators are
bounded because $v\in L^\infty$.
Here the tests and density matrices are extended
by zero when full labeled tensor-product spaces
are required.

To justify convergence of the field contribution,
the source has the trace-class realization
\begin{equation}
\begin{aligned}
    \mathcal S_{N,t}^{(k)}
    &=
    \frac1{\sqrt N}\sum_{j=1}^k
    \operatorname{Tr}_{k+1,\ldots,N;\Fock}
    \Big(
       |\phi(w(x_j-\cdot))\Psi_{N,t}\rangle
          \langle\Psi_{N,t}|
    \\
    &\hspace{42mm}
       -
       |\Psi_{N,t}\rangle
          \langle\phi(w(x_j-\cdot))\Psi_{N,t}|
    \Big).
\end{aligned}
\end{equation}
This agrees with the kernel definition above.
The creation and annihilation estimates give
\begin{equation}
    \|\phi(w(x_j-\cdot))\zeta\|
    \leq
    2\|w\|_{L^2}
      \|(\cN_f+1)^{1/2}\zeta\|,
    \qquad
    \zeta\in\cD(\cN_f^{1/2}).
\end{equation}
Together with
$\||u\rangle\langle v|\|_{\mathfrak S_1}
=\|u\|\|v\|$
and the contractivity of partial trace, this shows
that the source depends continuously in trace
norm on the state in the photon-number graph norm.

Consequently,
\eqref{eq:BBGKY-number-graph-approximation}
implies
\begin{equation}
    \sup_{0\leq t\leq T}
    \|\mathcal S_{N,t}^{(k),L}
       -\mathcal S_{N,t}^{(k)}\|_{\mathfrak S_1}
    \longrightarrow0,
\end{equation}
where $\mathcal S_{N,t}^{(k),L}$ denotes the source
constructed from $\Psi_{N,t}^{(L)}$.
The same rank-one estimate gives
\begin{equation}
    \sup_{0\leq t\leq T}
    \|\gamma_{N,t}^{(r),L}
       -\gamma_{N,t}^{(r)}\|_{\mathfrak S_1}
    \longrightarrow0,
    \qquad 1\leq r\leq N.
\end{equation}

Every term of the time-integrated tested hierarchy
therefore passes to the limit as $L\to\infty$.
The resulting right-hand side is locally integrable
in time, proving the asserted absolute continuity
and the weak BBGKY identity.
The kernel formula follows in the distributional
sense described in the statement.
\end{proof}

\subsection{A priori bounds and energy estimates}
\label{sec:field-energy-estimate}
This section establishes the a priori bounds and energy estimates required to confirm that the Hartree hierarchy possesses the same regularity as the BBGKY hierarchy. These estimates are crucial for controlling the field operators and ensuring that the limiting dynamic is well-defined.

\begin{lemma}[Energy bounds for the field operators]
\label{lem:field-energy-operator-bound}
Let $\omg:\R^3\to[0,\infty)$ be measurable, and let
$\Hf=d\Gamma(\omg)$ be the corresponding field energy
operator on the bosonic Fock space $\Fock$, expressed
in the momentum representation.

Let $g\in\h$ satisfy
\begin{equation}
    \frac{\widehat g}{\sqrt{\omg}}\in L^2.
\end{equation}
Here this condition includes
$\widehat g=0$ almost everywhere on $\{\omg=0\}$,
and the quotient is defined to be zero on that set.
All momentum integrals and norms use the Fourier
measure fixed in the manuscript. Then
\begin{equation}
    \cD(\Hf^{1/2})
    \subset
    \cD(a(g))\mcap\cD(\ad(g)).
\end{equation}
For every $\psi\in\cD(\Hf^{1/2})$,
\begin{equation}
\begin{aligned}
    \|a(g)\psi\|_{\Fock}
    &\leq
    \left\|
        \frac{\widehat g}{\sqrt{\omg}}
    \right\|_{L^2}
    \|\Hf^{1/2}\psi\|_{\Fock},
    \\
    \|\ad(g)\psi\|_{\Fock}
    &\leq
    \left\|
        \frac{\widehat g}{\sqrt{\omg}}
    \right\|_{L^2}
    \|\Hf^{1/2}\psi\|_{\Fock}
    +
    \|g\|_{L^2}\|\psi\|_{\Fock}.
\end{aligned}
\end{equation}
Consequently, the field operator
$\phi(g)=\ad(g)+a(g)$ satisfies
\begin{equation}
    \|\phi(g)\psi\|_{\Fock}
    \leq
    2\left\|
        \frac{\widehat g}{\sqrt{\omg}}
    \right\|_{L^2}
    \|\Hf^{1/2}\psi\|_{\Fock}
    +
    \|g\|_{L^2}\|\psi\|_{\Fock}.
\end{equation}
Moreover, for every $\delta>0$,
\begin{equation}
    \left|
        \inp{\psi,\phi(g)\psi}_{\Fock}
    \right|
    \leq
    \delta\|\Hf^{1/2}\psi\|_{\Fock}^2
    +
    \frac{1}{\delta}
    \left\|
        \frac{\widehat g}{\sqrt{\omg}}
    \right\|_{L^2}^2
    \|\psi\|_{\Fock}^2.
\end{equation}
\end{lemma}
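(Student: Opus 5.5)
The plan is to work in the momentum representation and use the standard Fock-space pointwise bound for annihilation operators. For $\psi$ in the finite-particle subspace with components in $\cD(\Hf^{1/2})$ (a core for $\Hf^{1/2}$), write $a(g)\psi=\int\cc{\widehat g(\xi)}a_\xi\psi\,d\xi$. On $\{\omg>0\}$ insert $1=\omg^{-1/2}\omg^{1/2}$, and use the convention $\widehat g=0$ a.e.\ on $\{\omg=0\}$ to drop the null set. Minkowski and Cauchy--Schwarz in $\xi$ then give
\begin{equation}
\|a(g)\psi\|\leq\int|\widehat g(\xi)|\,\|a_\xi\psi\|\,d\xi
\leq\left\|\frac{\widehat g}{\sqrt\omg}\right\|_{L^2}
\left(\int\omg(\xi)\|a_\xi\psi\|^2d\xi\right)^{1/2}.
\end{equation}
The last factor equals $\|\Hf^{1/2}\psi\|$, since $\int\omg(\xi)\|a_\xi\psi\|^2d\xi=\langle\psi,d\Gamma(\omg)\psi\rangle$. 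This identity I would verify sector-wise: on the $n$-photon sector, $(a_\xi\psi)_{n-1}(\xi_1,\dots)=\sqrt n\,\psi_n(\xi,\xi_1,\dots)$, and symmetry turns $n\,\omg(\xi)$ into $\sum_i\omg(\xi_i)$. To be careful, I would avoid the distributional $a_\xi$ altogether and write $(a(g)\psi)_{n-1}=\sqrt n\int\cc{\widehat g(\xi)}\psi_n(\xi,\cdot)\,d\xi$ directly.

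For the creation operator I would use the CCR $a(g)\ad(g)=\ad(g)a(g)+\|g\|^2$. This gives $\|\ad(g)\psi\|^2=\|a(g)\psi\|^2+\|g\|^2\|\psi\|^2$, so $\|\ad(g)\psi\|\leq\|a(g)\psi\|+\|g\|\|\psi\|$, and the stated bound follows from the previous step. Adding the two bounds gives the estimate for $\phi(g)$.

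The domain inclusion comes from a closure argument. The bounds hold on a core of $\Hf^{1/2}$, and $a(g)$, $\ad(g)$ are closed. Therefore, for $\psi_m\to\psi$ in the graph norm of $\Hf^{1/2}$, the sequences $a(g)\psi_m$ and $\ad(g)\psi_m$ are Cauchy. Closedness then puts $\psi$ in both domains, and the inequalities pass to the limit. I expect this core and closedness bookkeeping, together with the convention on $\{\omg=0\}$, to be the only delicate point, since $\omg=|\xi|$ has no gap.

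For the quadratic-form bound I would not use the $\phi$ estimate, which would leave an unwanted $\|g\|$ term. Instead I would use
\begin{equation}
\langle\psi,\phi(g)\psi\rangle=2\,\mathrm{Re}\,\langle\psi,a(g)\psi\rangle .
\end{equation}
Then $|\langle\psi,\phi(g)\psi\rangle|\leq2\|\psi\|\,\|\widehat g/\sqrt\omg\|\,\|\Hf^{1/2}\psi\|$, and Young's inequality $2ab\leq\delta b^2+\delta^{-1}a^2$ with $b=\|\Hf^{1/2}\psi\|$ and $a=\|\widehat g/\sqrt\omg\|\|\psi\|$ gives the claim.
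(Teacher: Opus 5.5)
Your proposal is correct and follows the paper's proof essentially step for step. It uses weighted Cauchy--Schwarz for $a(g)$ on finite-photon vectors, the CCR identity $\|\ad(g)\psi\|^2=\|a(g)\psi\|^2+\|g\|^2\|\psi\|^2$ for the creation bound, extension to $\cD(\Hf^{1/2})$ by closedness, and $\langle\psi,\phi(g)\psi\rangle=2\,\mathrm{Re}\langle\psi,a(g)\psi\rangle$ with Young's inequality for the form bound; the only cosmetic difference is that the paper makes your core explicit via the photon-number truncations $\psi_L=\mathbf 1_{[0,L]}(\cN_f)\psi$, which converge in the graph norm of $\Hf^{1/2}$ because $\cN_f$ commutes with $\Hf$.
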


\begin{proof}
First suppose that
$\psi\in\cD(\Hf^{1/2})$ has only finitely many
nonzero photon sectors.
In the momentum representation,
\begin{equation}
    a(g)\psi
    =
    \int_{\R^3}
        \overline{\widehat g(\xi)}
        a_\xi\psi\,d\xi.
\end{equation}
The weighted Cauchy--Schwarz inequality and the
sectorwise expression for the field energy give
\begin{equation}
\begin{aligned}
    \|a(g)\psi\|_{\Fock}^2
    &\leq
    \left(
        \int_{\{\omg>0\}}
        \frac{|\widehat g(\xi)|^2}{\omg(\xi)}
        \,d\xi
    \right)
    \left(
        \int_{\R^3}
        \omg(\xi)\|a_\xi\psi\|_{\Fock}^2
        \,d\xi
    \right)
    \\
    &=
    \left\|
        \frac{\widehat g}{\sqrt{\omg}}
    \right\|_{L^2}^2
    \|\Hf^{1/2}\psi\|_{\Fock}^2.
\end{aligned}
\end{equation}
The canonical commutation relations imply
\begin{equation}
    \|\ad(g)\psi\|_{\Fock}^2
    =
    \|a(g)\psi\|_{\Fock}^2
    +
    \|g\|_{L^2}^2\|\psi\|_{\Fock}^2.
\end{equation}
Combining these two identities and using
$\sqrt{x+y}\leq\sqrt{x}+\sqrt{y}$ for $x,y\geq0$
proves the creation estimate for such vectors.

Now let $\psi\in\cD(\Hf^{1/2})$ be arbitrary and set
\begin{equation}
    \psi_L
    :=
    \mathbf{1}_{[0,L]}(\cN_f)\psi,
    \qquad L\in\mathbb N.
\end{equation}
The photon number projections commute with $\Hf$,
and therefore
\begin{equation}
    \psi_L\longrightarrow\psi,
    \qquad
    \Hf^{1/2}\psi_L
    \longrightarrow
    \Hf^{1/2}\psi
    \quad\text{in }\Fock.
\end{equation}
Applying the preceding estimates to
$\psi_L-\psi_M$ shows that
$\{a(g)\psi_L\}_L$ and $\{\ad(g)\psi_L\}_L$
are Cauchy sequences in $\Fock$.
Since the annihilation and creation operators are
closed, it follows that
\begin{equation}
    \psi\in\cD(a(g))\mcap\cD(\ad(g)),
\end{equation}
and
\begin{equation}
    a(g)\psi_L\longrightarrow a(g)\psi,
    \qquad
    \ad(g)\psi_L\longrightarrow\ad(g)\psi.
\end{equation}
Passing to the limit proves the annihilation and
creation estimates on all of $\cD(\Hf^{1/2})$.

The closed field operator agrees with
$\ad(g)+a(g)$ on this domain, as follows by the same
approximation. Hence the triangle inequality gives
\begin{equation}
\begin{aligned}
    \|\phi(g)\psi\|_{\Fock}
    &\leq
    \|\ad(g)\psi\|_{\Fock}
    +
    \|a(g)\psi\|_{\Fock}
    \\
    &\leq
    2\left\|
        \frac{\widehat g}{\sqrt{\omg}}
    \right\|_{L^2}
    \|\Hf^{1/2}\psi\|_{\Fock}
    +
    \|g\|_{L^2}\|\psi\|_{\Fock}.
\end{aligned}
\end{equation}

Finally, adjointness yields
\begin{equation}
    \inp{\psi,\phi(g)\psi}_{\Fock}
    =
    2\operatorname{Re}
    \inp{\psi,a(g)\psi}_{\Fock}.
\end{equation}
Thus the annihilation estimate and Young's
inequality imply, for every $\delta>0$,
\begin{equation}
\begin{aligned}
    \left|
        \inp{\psi,\phi(g)\psi}_{\Fock}
    \right|
    &\leq
    2\|\psi\|_{\Fock}\|a(g)\psi\|_{\Fock}
    \\
    &\leq
    2\left\|
        \frac{\widehat g}{\sqrt{\omg}}
    \right\|_{L^2}
    \|\psi\|_{\Fock}
    \|\Hf^{1/2}\psi\|_{\Fock}
    \\
    &\leq
    \delta\|\Hf^{1/2}\psi\|_{\Fock}^2
    +
    \frac{1}{\delta}
    \left\|
        \frac{\widehat g}{\sqrt{\omg}}
    \right\|_{L^2}^2
    \|\psi\|_{\Fock}^2.
\end{aligned}
\end{equation}
This completes the proof.
\end{proof}

\begin{lemma}[Symmetry and consistency of reduced marginals]
\label{lem:reduced-marginals-are-density-matrices}
Suppose that \hyporef{hyp:standing-hypotheses} holds. Let $\Psi_{N,t} \in \HN$ be a normalized wave function, and let $\gamma_{N,t}^{(k)}$ denote its $k$-particle reduced density matrix defined for $1 \leq k \leq N$ by
\begin{equation}
    \gamma_{N,t}^{(k)} := \operatorname{Tr}_{k+1,\ldots,N,\Fock} \left( |\Psi_{N,t}\rangle\langle\Psi_{N,t}| \right).
\end{equation}
Then, for every $1 \leq k \leq N$, $\gamma_{N,t}^{(k)}$ is a density operator on the $k$-particle Hilbert space satisfying
\begin{equation}
    \gamma_{N,t}^{(k)} \geq 0 \quad \text{and} \quad \operatorname{Tr}_{1,\ldots,k}\left( \gamma_{N,t}^{(k)} \right) = 1,
\end{equation}
which implies $\gamma_{N,t}^{(k)} \in \mathcal{D}_k$. Moreover, the marginals are permutation-symmetric and satisfy the compatibility relation
\begin{equation}
\label{eq:finite-N-marginal-consistency}
    \operatorname{Tr}_{k+1} \left( \gamma_{N,t}^{(k+1)} \right) = \gamma_{N,t}^{(k)}, \qquad 1 \leq k < N.
\end{equation}
\end{lemma}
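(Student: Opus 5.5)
The plan is to realize the partial trace concretely through the kernel formula \eqref{eq:definition-gamma-k-BBGKY}, or equivalently through duality with bounded operators. For $\OA\in\mathcal B(\Hkb)$ we have
\begin{equation}
    \Tr_{\Hkb}\bigl(\OA\gamma_{N,t}^{(k)}\bigr)
    =
    \bigl\langle\Psi_{N,t},(\OA\otimes\1_{N-k}\otimes\1_{\Fock})\Psi_{N,t}\bigr\rangle_{\HN}.
\end{equation}
Here $\OA$ acts on the first $k$ particle variables, and the identity acts on the remaining $N-k$ particles and on the Fock space. This identity characterizes $\gamma_{N,t}^{(k)}$ uniquely as a trace-class operator, because trace class is the predual of $\mathcal B(\Hkb)$. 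The map $\OA\mapsto\langle\Psi,(\OA\otimes\1)\Psi\rangle$ is a normal, positive functional with norm $\|\Psi\|^2=1$, so a trace-class representative $\gamma_{N,t}^{(k)}$ exists. All assertions will then be read off from this duality.

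\textbf{Positivity and normalization.} For $f\in\Hkb$, take $\OA=|f\rangle\langle f|$. The right-hand side is
\begin{equation}
    \bigl\|(\langle f|\otimes\1)\Psi_{N,t}\bigr\|^2\ge 0,
\end{equation}
so $\langle f,\gamma_{N,t}^{(k)}f\rangle\ge0$ and hence $\gamma_{N,t}^{(k)}\ge0$. Taking $\OA=\1$ gives $\Tr\gamma_{N,t}^{(k)}=\|\Psi_{N,t}\|^2=1$, and the latter holds by unitarity of $\UN$. This places $\gamma_{N,t}^{(k)}$ in $\mathcal D_k$.

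\textbf{Permutation symmetry.} Let $U_\sigma$ permute the first $k$ variables, with $\sigma\in S_k\subset S_N$. Since $\Psi_{N,t}\in\HNb\otimes\Fock$ is symmetric, $(U_\sigma\otimes\1)\Psi_{N,t}=\Psi_{N,t}$. Therefore
\begin{equation}
    \Tr\bigl(U_\sigma^*\OA U_\sigma\gamma_{N,t}^{(k)}\bigr)=\Tr\bigl(\OA\gamma_{N,t}^{(k)}\bigr)
\end{equation}
for all bounded $\OA$, which gives $U_\sigma\gamma_{N,t}^{(k)}U_\sigma^*=\gamma_{N,t}^{(k)}$. By the same argument, $\gamma_{N,t}^{(k)}$ commutes with the symmetrization projector, so it indeed acts on $\Hkb$. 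In kernel language, $\gamma^{(k)}_{N,t}(\ux_{\sigma(k)};\ux'_{\sigma(k)})=\gamma^{(k)}_{N,t}(\ux_k;\ux_k')$, which follows by relabeling integration variables in \eqref{eq:definition-gamma-k-BBGKY}.

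\textbf{Consistency.} For bounded $\OA$ on the $k$-particle space,
\begin{equation}
    \Tr\bigl(\OA\,\Tr_{k+1}\gamma_{N,t}^{(k+1)}\bigr)
    =
    \Tr\bigl((\OA\otimes\1)\gamma_{N,t}^{(k+1)}\bigr).
\end{equation}
This is the defining property of the partial trace. Applying the duality at level $k+1$ with the operator $\OA\otimes\1$, this equals $\langle\Psi_{N,t},(\OA\otimes\1_{N-k}\otimes\1_{\Fock})\Psi_{N,t}\rangle$, which is $\Tr(\OA\gamma_{N,t}^{(k)})$. Since $\OA$ is arbitrary, \eqref{eq:finite-N-marginal-consistency} follows.

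\textbf{Main obstacle.} The only delicate point is analytic rather than algebraic. Each partial trace must be taken against an arbitrary bounded, not necessarily compact, operator, and tensor products with $\1$ must be shown to yield well-defined normal functionals. This holds because $\Psi\mapsto\langle\Psi,(\OA\otimes\1)\Psi\rangle$ is weak-$*$ continuous. I would handle it by computing the traces in an orthonormal product basis of $\h^{\otimes(N-k)}\otimes\Fock$ and using monotone convergence, which is justified by positivity. This also shows that the kernel formula agrees with the operator partial trace almost everywhere.
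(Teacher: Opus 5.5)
Your proposal is correct and is essentially the paper's proof: the paper cites positivity and trace preservation of the partial trace, the permutation invariance of $|\Psi_{N,t}\rangle\langle\Psi_{N,t}|$, and the composition rule $\Tr_{k+1}\circ\Tr_{k+2,\ldots,N,\Fock}=\Tr_{k+1,\ldots,N,\Fock}$, and your duality identity $\Tr(\OA\gamma_{N,t}^{(k)})=\langle\Psi_{N,t},(\OA\otimes\1)\Psi_{N,t}\rangle$ checks each of these test by test. One small slip: commuting with the symmetrization projector does not by itself put $\gamma_{N,t}^{(k)}$ on $\Hkb$; what you need is $P_{\mathrm{sym}}^{(k)}\gamma_{N,t}^{(k)}=\gamma_{N,t}^{(k)}=\gamma_{N,t}^{(k)}P_{\mathrm{sym}}^{(k)}$, which the same duality gives from $(P_{\mathrm{sym}}^{(k)}\otimes\1)\Psi_{N,t}=\Psi_{N,t}$, and which is automatic anyway once you define $\gamma_{N,t}^{(k)}$ by duality against $\mathcal B(\Hkb)$, as you do.
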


\begin{proof}
By construction, the pure-state density operator $\Gamma_{N,t}:=|\Psi_{N,t}\rangle\langle\Psi_{N,t}|$ is a positive, rank-one projection on $\HN$ with unit trace. Because the partial trace is a positive, trace-preserving map, it preserves both operator positivity and the trace constraint. This directly implies $\gamma_{N,t}^{(k)} \geq 0$ and $\operatorname{Tr}_{1,\ldots,k}(\gamma_{N,t}^{(k)}) = 1$, confirming that $\gamma_{N,t}^{(k)} \in \mathcal{D}_k$. The permutation symmetry of the marginals is inherited from the bosonic nature of the state vector. Since $\Psi_{N,t}$ resides in the symmetric tensor product space, the density matrix $|\Psi_{N,t}\rangle\langle\Psi_{N,t}|$ commutes with the canonical action of the symmetric group $S_N$. Consequently, tracing out the field and the remaining $N-k$ particles yields a $k$-particle operator $\gamma_{N,t}^{(k)}$ that is invariant under any permutation within $S_k$. Finally, the consistency relation is a direct consequence of the composition property of successive partial traces on tensor product spaces. Expanding the partial trace yields
\begin{align*}
    \operatorname{Tr}_{k+1} \left( \gamma_{N,t}^{(k+1)} \right) 
    &= \operatorname{Tr}_{k+1} \left( \operatorname{Tr}_{k+2,\ldots,N, \Fock} \left( |\Psi_{N,t}\rangle\langle\Psi_{N,t}| \right) \right) \\
    &= \operatorname{Tr}_{k+1,\ldots,N, \Fock} \left( |\Psi_{N,t}\rangle\langle\Psi_{N,t}| \right) \\
    &= \gamma_{N,t}^{(k)}.
\end{align*}
This completes the proof.
\end{proof}

\begin{proposition}[Uniform a priori energy estimates]
\label{prop:uniform-prior-bounds-on-BBGKY-hierarchy}
Assume the Hamiltonian, trapping, and symmetry assumptions of
\hyporef{hyp:standing-hypotheses}. In particular, the one-particle
Hamiltonian $-\Delta+V$ is the self-adjoint operator associated with
its nonnegative closed quadratic form, $V\geq0$, and
$\omg(\xi)=|\xi|$. Assume that $v$ is real-valued and even,
$v\in W^{1,\infty}(\R^3)$, and that the real-valued, even form factor
$w$ satisfies
\begin{equation}
\label{eq:energy-estimate-form-factor-constants}
\begin{aligned}
    C_w
    &:= \norm{w}_{L^2}
       + \norm{\widehat w/\sqrt{\omg}}_{L^2}
       <\infty,\\
    C_w'
    &:= \norm{\nabla w}_{L^2}
       + \norm{\widehat{\nabla w}/\sqrt{\omg}}_{L^2}
       <\infty.
\end{aligned}
\end{equation}
Let $\lam,\eta\geq0$ satisfy
\begin{equation}
\label{eq:energy-estimate-smallness}
    2\lam\norm{v}_{L^\infty}+4\eta C_w<1.
\end{equation}
Let $\HMN$ be the Hamiltonian defined in
\eqref{eq:defination-of-Hamiltonian-of-particle-field-system}.
Throughout this proposition, set
\begin{equation}
    \OT_j:=(1-\Delta_{x_j}+V(x_j))^{1/2},
    \qquad j=1,\ldots,N.
\end{equation}
There exists a constant $C>0$, independent of both $N$ and $k$,
such that for every integer $k\geq1$ there is an integer
$N_0(k)\geq k$ with the following property:
for every $N\geq N_0(k)$ and every bosonic vector
$\Psi\in\cD((\HMN+N)^{k/2})$, the product
$\OT_1\cdots\OT_k\Psi$ is well-defined and
\begin{equation}
\label{eq:uniform-prior-bound}
    \norm{(\HMN+N)^{k/2}\Psi}^{2}
    \geq
    C^kN^k\norm{\OT_1\cdots\OT_k\Psi}^{2}.
\end{equation}
Equivalently, this is the quadratic-form inequality
\begin{equation}
    (\HMN+N)^k
    \geq C^kN^k\prod_{j=1}^k\OT_j^2
\end{equation}
on the bosonic subspace.

One may take $C=1/4$. The threshold may depend on $k$ and on
the fixed model parameters; in particular, it may depend on
$\norm{v}_{W^{1,\infty}}$ and $C_w'$.
\end{proposition}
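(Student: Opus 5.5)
We prove the operator inequality $(\HMN+N)^k\geq C^kN^k\prod_{j=1}^k\OT_j^2$ on the bosonic subspace by induction on $k$, following the scheme of Erdős--Schlein--Yau. The first step is the case $k=1$, a one-particle lower bound. Writing $\HMN+N=\sum_{j=1}^N(\dch_0^{(j)}+1)+\Hf+\frac{\lam}{N}\sum_{i<j}v_{ij}+\frac{\eta}{\sqrt N}W_N$, we bound the pair term below by $-\frac{\lam}{2}\norm{v}_{L^\infty}N$. For the field term we use \lemref{lem:field-energy-operator-bound} with $\delta=\sqrt N$ (after Weyl-free completion of the square), applied fiberwise in $\ux_N$. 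With $g=w(x_j-\cdot)$ and $\norm{\widehat g/\sqrt\omg}_{L^2}=\norm{\widehat w/\sqrt\omg}_{L^2}$, this gives
\begin{equation}
    \frac{\eta}{\sqrt N}\,|\inp{\Psi,W_N\Psi}|\leq \eta\inp{\Psi,\Hf\Psi}+\eta C_w^2 N\norm{\Psi}^2 .
\end{equation}
Since $\dch_0+1=\OT^2\geq1$, the smallness condition \eqref{eq:energy-estimate-smallness} absorbs these constants. Using bosonic symmetry, $\sum_j\OT_j^2\geq$ the symmetrized sum, we obtain $\HMN+N\geq \frac12\sum_{j=1}^N\OT_j^2+(1-\eta)\Hf\geq\frac N2\,\OT_1^2$ on symmetric vectors.

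For the induction step we pass from $k$ to $k+2$, treating $k=2$ separately in the same way as the step. We write $(\HMN+N)^{k+2}=(\HMN+N)(\HMN+N)^k(\HMN+N)$ and apply the induction hypothesis in the middle, which gives
\begin{equation}
    (\HMN+N)^{k+2}\geq C^kN^k\,(\HMN+N)\,\OT_1^2\cdots\OT_k^2\,(\HMN+N).
\end{equation}
Next we split $\HMN+N=A+B$, with $A=\sum_{j=1}^N\OT_j^2+\Hf$ and $B$ the two interaction terms. Expanding $(A+B)\Pi(A+B)$ with $\Pi:=\prod_{j\leq k}\OT_j^2$, the main term $A\Pi A$ contains, by symmetry, $\frac{N^2}{4}\prod_{j\leq k+2}\OT_j^2$ up to $O(N/N)$ relative corrections coming from the diagonal indices $j\leq k$. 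The cross terms we control by Cauchy--Schwarz in the form $A\Pi B+B\Pi A\geq-\epsilon A\Pi A-\epsilon^{-1}B\Pi B$, together with commutator estimates.

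The main obstacle is controlling the commutators $[\OT_j^2,v_{ij}]$ and $[\OT_j^2,\phi(w(x_j-\cdot))]$ that appear when the interaction terms are moved through $\Pi$. These commutators involve $\nabla v$ and $\nabla w$, which is exactly where $v\in W^{1,\infty}$ and $C_w'$ enter. They produce first-order terms of the type $\OT_j\,\nabla v\,\OT_j$, bounded by $\norm{v}_{W^{1,\infty}}\OT_j^2$. On the field side they produce $\phi(\nabla w)$, which is bounded relative to $\Hf^{1/2}$ by $C_w'$ via \lemref{lem:field-energy-operator-bound}. The prefactors $\lam/N$ and $\eta/\sqrt N$ mean that each such term gains a factor $N^{-1}$ or $N^{-1/2}$ relative to the main term. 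Summed over the $O(kN)$ index pairs, these terms are therefore bounded by $C(k)N^{-1/2}$ times $N^2\prod\OT_j^2$, plus $\Hf$-terms dominated by $A\Pi A$. The field insertions must be paired with $\Hf$, which commutes with all $\OT_j$; this is what keeps the estimate closed. Choosing $N_0(k)$ so that $C(k)N_0^{-1/2}$ is small then gives the step with constant $C=1/4$ uniformly in $k$, at the cost of the $k$-dependent threshold allowed in the statement.

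The well-definedness of $\OT_1\cdots\OT_k\Psi$ for $\Psi\in\cD((\HMN+N)^{k/2})$ follows from the form inequality itself. We prove it first on the dense core $\1_{[0,L]}(\HMN)\Psi$ and then pass to the limit using closedness of $\OT_1\cdots\OT_k$. For odd $k$ we interpolate: we use the $k-1$ case inside the factor $(\HMN+N)^{1/2}(\cdot)(\HMN+N)^{1/2}$ together with operator monotonicity of the square root, applied to the commuting family $\OT_j$.
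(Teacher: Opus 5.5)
Your architecture is the paper's: the two-step Erd\H{o}s--Schlein--Yau induction through $(\mathcal H_N+N)\,\Pi\,(\mathcal H_N+N)$ with $\Pi=T_1^2\cdots T_k^2$, one-derivative multiplier bounds costing $\|v\|_{W^{1,\infty}}$ and $C_w'$, and field insertions absorbed by $\mathcal H_f$-terms. Two of your steps, however, fail as stated.

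\textbf{Cross terms.} The inequality $A\Pi B+B\Pi A\ge-\epsilon A\Pi A-\epsilon^{-1}B\Pi B$ leaves $-\epsilon^{-1}B\Pi B$, so you need an \emph{upper} bound on $\|T_1\cdots T_kB\Psi\|^2$. For the internal pairs $v(x_i-x_j)$ with $i<j\le k$, this puts $T_iT_j$ on $v(x_i-x_j)$, i.e.\ it requires the Hessian of $v$. Under $v\in W^{1,\infty}$ alone, $\|T_1\cdots T_k\,v(x_1-x_2)\Psi\|$ is not controlled by $\|T_1\cdots T_k\Psi\|$; it can be infinite for smooth $\Psi$ if $v$ has a kink, however small the prefactor $\lambda^2/N^2$. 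The paper never needs an upper bound on a term quadratic in the interaction. It splits $\mathcal H_N+N$ into $\sum_{j>n}T_j^2+\mathcal H_f$ and a remainder, drops the nonnegative remainder--remainder term $D$, and estimates each cross term separately with the $T$'s placed asymmetrically, e.g.
\[
\langle\Psi,T_1^2T_2^2\,v(x_1-x_2)\,T_3^2\cdots T_{n+1}^2\Psi\rangle
=\langle T_1^2T_2\cdots T_{n+1}\Psi,\;T_2\,v(x_1-x_2)\,T_3\cdots T_{n+1}\Psi\rangle .
\]
Thus only $\|T_jv(x_i-x_j)\Phi\|\le\|v\|_{W^{1,\infty}}\|T_j\Phi\|$ is ever used. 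The resulting $T_1^4$-, $\mathcal H_f^2$- and $T_1^4\mathcal H_f$-forms are absorbed by the matching positive pieces ($A_{11}+2B_1$, $A_4$, $2B_4$ in the paper's notation).

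\textbf{Odd $k$.} The $k\to k+2$ step does not use parity, so the paper runs it from both base cases $k=0$ and $k=1$; you have already proved $k=1$. Your interpolation does not work. Sandwiching the $(k-1)$-bound between factors $(\mathcal H_N+N)^{1/2}$ leaves you to show $(\mathcal H_N+N)^{1/2}\Pi_{k-1}(\mathcal H_N+N)^{1/2}\gtrsim N\,\Pi_k$. That is the original difficulty, now without the possibility of expanding $(\mathcal H_N+N)^{1/2}$ into free part plus interaction. Taking square roots of the even-order bounds only gives lower bounds with first powers, such as $T_1\cdots T_{2m}$. On symmetric vectors these are dominated by $T_1^2\cdots T_m^2$, not the reverse: $XY\le\frac12(X^2+Y^2)$ for the commuting $X=T_1\cdots T_m$, $Y=T_{m+1}\cdots T_{2m}$, and a symmetrized state with one highly excited particle shows no reverse inequality holds.

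\textbf{Smaller points.} In the base case, your weight $\delta=\sqrt N$ leaves the constant $\eta C_w^2N$. The hypothesis $2\lambda\|v\|_{L^\infty}+4\eta C_w<1$ controls $\eta C_w$, not $\eta C_w^2$, so this is not small when $C_w$ is large. Giving $\mathcal H_f$ the weight $\frac12$ instead produces $2\eta^2C_w^2<\frac18$. Also, the cutoff $\mathbf 1_{[0,L]}(\mathcal H_N)$ does not put vectors into the domains needed to expand $(A+B)\Pi(A+B)$ term by term. The paper instead compresses $\mathcal H_N$ by spectral projections of the $T_j$ and of $\mathcal H_f$, proves the estimate uniformly in the compression, and removes it via Galerkin resolvent convergence and closedness of $T_1\cdots T_k$.
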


\noindent
The proof is given in
\appref{sec:prf-prop:uniform-prior-bounds-on-BBGKY-hierarchy}.

\begin{corollary}[Uniform microscopic first-energy bound]
\label{cor:one-particle-energy-bound}
Assume \hyporef{hyp:standing-hypotheses}. Set
\begin{equation}
    \OT=(\dch_0+\kappa)^{1/2},
    \qquad
    \mathcal Q_N
    :=
    \sum_{j=1}^N\OT_j^2+\Hf+N.
\end{equation}
Define
\begin{equation}
\begin{aligned}
    c_*
    &:=
    \kappa+1+\frac{\lambda}{2}\|v\|_\infty
    +2\eta^2\|\widehat w/\sqrt{\omega}\|_2^2,\\
    c_+
    &:=
    \max\left\{
       \frac32,\,
       1+
       \frac{
          \lambda\|v\|_\infty
          +4\eta^2\|\widehat w/\sqrt{\omega}\|_2^2
       }{\kappa+1}
    \right\}.
\end{aligned}
\end{equation}
Then, for every sufficiently large $N$,
\begin{equation}
\label{eq:uniform-total-first-energy}
This    \sup_{t\in\mathbb R}
    \frac1N
    \|\mathcal Q_N^{1/2}\Psi_{N,t}\|^2
    \leq 2(C_0+c_*-1),
\end{equation}
where $C_0$ is the constant in
\hypref{hyp:bounded-energy-per-particle}.
In particular,
\begin{equation}
\label{eq:one-particle-energy-bound}
    \sup_{t\in\mathbb R}
    \operatorname{Tr}\!\left(
       \OT\gamma_{N,t}^{(1)}\OT
    \right)
    \leq 2(C_0+c_*-1).
\end{equation}
All unbounded expectations are understood as quadratic
forms, and the weighted density matrix is understood
through its positive trace-class realization.
\end{corollary}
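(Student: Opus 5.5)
The plan is to combine energy conservation with the lower bound of \propref{prop:uniform-prior-bounds-on-BBGKY-hierarchy} for $k=1$, with the initial moment bound \hyporef{hyp:bounded-energy-per-particle}, and with a direct form comparison between $\mathcal Q_N$ and $\HMN+N$. The constants $c_*$ and $c_+$ in the statement suggest two-sided form bounds of the type
\begin{equation}
    \HMN+N+(c_*-1)N \;\geq\; \tfrac12\,\mathcal Q_N,
    \qquad
    \mathcal Q_N\leq c_+\bigl(\HMN+N\bigr)+\text{const}\cdot N .
\end{equation}
Only the first inequality is needed for the supremum bound.

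\textbf{Step 1: lower form bound.} On the form domain, estimate the pair interaction from below by $\frac{\lam}{N}\sum_{i<j}v(x_i-x_j)\geq-\frac{\lam}{2}\|v\|_\infty N$, using $|v|\leq\|v\|_\infty$ and $\binom N2\leq N^2/2$. For the field coupling, apply \lemref{lem:field-energy-operator-bound} fiberwise in $\ux_N$ with $g=\frac{\eta}{\sqrt N}\sum_j w(x_j-\cdot)$. Since $\|\widehat g/\sqrt\omg\|_2\leq\eta\sqrt N\|\widehat w/\sqrt\omg\|_2$, the choice $\delta=\frac12$ gives
\begin{equation}
    \frac{\eta}{\sqrt N}W_N\geq-\frac12\Hf-2\eta^2N\|\widehat w/\sqrt\omg\|_2^2 .
\end{equation}
Combining these with $\sum_j\dch_0^{(j)}=\sum_j\OT_j^2-\kappa N$ yields
\begin{equation}
    \HMN+N\geq\sum_j\OT_j^2+\tfrac12\Hf+N-c_*N+N\geq\tfrac12\mathcal Q_N-(c_*-1)N .
\end{equation}
The constants here may need to be adjusted to match $c_*$ exactly; the factor $2$ appearing in \eqref{eq:uniform-total-first-energy} comes from this $\tfrac12$.

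\textbf{Step 2: conservation and the initial bound.} Since $\UN(t,0)=e^{-it\HMN}$, the quantity $\langle\Psi_{N,t},(\HMN+N)\Psi_{N,t}\rangle$ is independent of $t$. By \hyporef{hyp:bounded-energy-per-particle} with $k=1$, verified via \lemref{lem:coherent-oscillator-moments}, it is at most $C_0N$. Step 1 then gives
\begin{equation}
    \frac1N\|\mathcal Q_N^{1/2}\Psi_{N,t}\|^2\leq 2\bigl(C_0+c_*-1\bigr).
\end{equation}
Domain issues are handled as in \lemref{lem:finite-N-BBGKY-hierarchy}: the form domain of $\HMN$ is preserved by the unitary group, and the form inequality from Step 1 holds on it, using the Kato--Rellich form domain.

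\textbf{Step 3: the marginal bound.} By bosonic symmetry, $\frac1N\langle\Psi,\sum_j\OT_j^2\Psi\rangle=\langle\Psi,\OT_1^2\Psi\rangle=\Tr(\OT\gamma^{(1)}_{N,t}\OT)$, where the last quantity is the positive trace-class realization from \defref{def:Y1-and-density-matrix-space}. Dropping the nonnegative terms $\Hf+N$ in $\mathcal Q_N$ then gives \eqref{eq:one-particle-energy-bound}.

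\textbf{Main obstacle.} I expect the main difficulty to be the rigorous justification of the fiberwise field estimate as a quadratic-form inequality on the full form domain, including the possible zero-mode set $\{\omg=0\}$. I would establish it first on finite-photon, finite-energy vectors and then extend it by the cutoff approximation already used in the proof of \lemref{lem:field-energy-operator-bound}. The requirement that $N$ be sufficiently large enters only through the range of validity of \hyporef{hyp:bounded-energy-per-particle}.
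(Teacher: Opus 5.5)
Your proposal is correct and follows essentially the same route as the paper: the lower form bound $\HMN+c_*N\geq\frac12\mathcal Q_N$ from the bounded pair interaction and the field estimate with $\delta=\frac12$, then conservation of the energy form, the $k=1$ case of \hypref{hyp:bounded-energy-per-particle}, and bosonic symmetry to pass to the marginal. The only difference is in the domain step. The paper uses $c_+$ for the upper comparison $\HMN+c_*N\leq c_+\mathcal Q_N$, which together with the lower bound gives $\cD((\HMN+c_*N)^{1/2})=\cD(\mathcal Q_N^{1/2})$. You instead get this identification from the Kato--Rellich/KLMN form-domain identity, so your guessed role for $c_+$ is off, but, as you note, that does not affect the bound.
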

\begin{proof}
For a vector $\Psi$ in the free energy form domain,
the bounded pair interaction satisfies
\begin{equation}
    \left|
        \frac{\lambda}{N}
        \sum_{i<j}\langle\Psi,v_{ij}\Psi\rangle
    \right|
    \leq
    \frac{\lambda}{2}\|v\|_\infty N\|\Psi\|^2.
\end{equation}
Adjunction and the annihilation estimate give for every $\delta>0$,
\begin{equation}
\begin{aligned}
 &\left|
    \frac{\eta}{\sqrt N}
    \sum_{j=1}^N
        \langle\Psi,\phi(w(x_j-\cdot))\Psi\rangle
 \right|\\
 &\qquad\leq
 2\eta\sqrt N\|\widehat w/\sqrt{\omega}\|_2
       \|\Psi\|\|\Hf^{1/2}\Psi\|\\
 &\qquad\leq
 \delta\|\Hf^{1/2}\Psi\|^2
 +\frac{\eta^2}{\delta}
     \|\widehat w/\sqrt{\omega}\|_2^2
     N\|\Psi\|^2.
\end{aligned}
\end{equation}
The field expectation in this estimate is understood
as a quadratic form. The estimate extends from the
finite-photon core to the free energy form domain.

Taking $\delta=1/2$ and using $\dch_0\geq0$, we obtain
the following inequalities of quadratic forms:
\begin{equation}
\begin{aligned}
 \HMN+c_*N
 &\geq
 \sum_{j=1}^N\dch_0^{(j)}
 +\frac12\Hf+(\kappa+1)N
 \geq\frac12\mathcal Q_N,\\
 \HMN+c_*N
 &\leq
 \sum_{j=1}^N\dch_0^{(j)}
 +\frac32\Hf\\
 &\quad+
 \left(
    \kappa+1+\lambda\|v\|_\infty
    +4\eta^2\|\widehat w/\sqrt{\omega}\|_2^2
 \right)N
 \leq c_+\mathcal Q_N.
\end{aligned}
\end{equation}
Thus
\begin{equation}
\label{eq:first-energy-form-comparison}
    \frac12\mathcal Q_N
    \leq\HMN+c_*N
    \leq c_+\mathcal Q_N.
\end{equation}
The pair interaction is bounded, and the field
interaction is infinitesimally form bounded with
respect to the free Hamiltonian. The resulting
closed form has the same form domain as
$\mathcal Q_N$ and represents the self-adjoint
Hamiltonian constructed in
\propref{prop:appendix-self-adjointness}.
Consequently,
\begin{equation}
    \cD\bigl((\HMN+c_*N)^{1/2}\bigr)
    =
    \cD(\mathcal Q_N^{1/2}).
\end{equation}

The positivity of $\HMN+N$ follows from the
$k=1$ estimate in
\propref{prop:uniform-prior-bounds-on-BBGKY-hierarchy}.
Since $c_*>1$, spectral calculus also gives
\begin{equation}
    \cD\bigl((\HMN+c_*N)^{1/2}\bigr)
    =
    \cD\bigl((\HMN+N)^{1/2}\bigr).
\end{equation}
For every sufficiently large $N$, the case $k=1$
of \hypref{hyp:bounded-energy-per-particle} therefore
implies that $\Psi_{N,0}$ belongs to this common
form domain and satisfies
\begin{equation}
    \|(\HMN+N)^{1/2}\Psi_{N,0}\|^2<C_0N.
\end{equation}
Using $\|\Psi_{N,0}\|=1$, we obtain
\begin{equation}
\begin{aligned}
 \|(\HMN+c_*N)^{1/2}\Psi_{N,0}\|^2
 &=
 \|(\HMN+N)^{1/2}\Psi_{N,0}\|^2
 +(c_*-1)N\\
 &\leq (C_0+c_*-1)N.
\end{aligned}
\end{equation}

The spectral theorem implies that $e^{-it\HMN}$
preserves this form domain and that
\begin{equation}
    \|(\HMN+c_*N)^{1/2}\Psi_{N,t}\|^2
    =
    \|(\HMN+c_*N)^{1/2}\Psi_{N,0}\|^2.
\end{equation}
Combining this identity with
\eqref{eq:first-energy-form-comparison} gives
\begin{equation}
\begin{aligned}
 \frac1N\|\mathcal Q_N^{1/2}\Psi_{N,t}\|^2
 &\leq
 \frac2N\|(\HMN+c_*N)^{1/2}\Psi_{N,t}\|^2\\
 &=
 \frac2N\|(\HMN+c_*N)^{1/2}\Psi_{N,0}\|^2\\
 &\leq 2(C_0+c_*-1).
\end{aligned}
\end{equation}
This proves \eqref{eq:uniform-total-first-energy}
uniformly for $t\in\mathbb R$.

Finally, bosonic symmetry and the definition of
the reduced density matrix give
\begin{equation}
\begin{aligned}
    \operatorname{Tr}
        (\OT\gamma_{N,t}^{(1)}\OT)
    &=
    \frac1N
    \sum_{j=1}^N\|\OT_j\Psi_{N,t}\|^2\\
    &\leq
    \frac1N\|\mathcal Q_N^{1/2}\Psi_{N,t}\|^2.
\end{aligned}
\end{equation}
The trace identity follows first for bounded spectral
truncations of $\OT^2$ and then by monotone convergence.
Its finiteness provides the positive trace-class
realization of $\OT\gamma_{N,t}^{(1)}\OT$.
This proves \eqref{eq:one-particle-energy-bound}.
\end{proof}

\begin{corollary}[Uniform $k$-particle multiplicative energy bound]
\label{cor:k-particle-energy-bound}
Assume the hypotheses of
\propref{prop:uniform-prior-bounds-on-BBGKY-hierarchy}
and the initial moment condition
\hypref{hyp:bounded-energy-per-particle}.
Let $C_0>0$ and $K(N)$ be as in that initial moment
condition, and let
\begin{equation}
    \OT=(\dch_0+\kappa)^{1/2},
    \qquad
    \OT_k=\prod_{j=1}^k\OT^{(j)},
    \qquad \kappa>0.
\end{equation}
Then there exists a constant $C_E>0$, independent of
$N$, $k$, and $t$, such that, for every sufficiently
large $N$ and every integer $k$ satisfying
\begin{equation}
    1\leq k\leq K(N),
    \qquad N\geq N_0(k),
\end{equation}
the operator
$\OT_k\gamma_{N,t}^{(k)}\OT_k$
has a positive trace-class realization and
\begin{equation}
\label{eq:k-particle-energy-bound}
    \sup_{t\in\R}
    \Tr_{\Hkb}\!\left(
        \OT_k\gamma_{N,t}^{(k)}\OT_k
    \right)
    \leq C_E^k.
\end{equation}
One may take
\begin{equation}
    C_E=4\max\{1,\kappa\}C_0.
\end{equation}
In particular, for every fixed $k\geq1$,
\eqref{eq:k-particle-energy-bound} holds for all
sufficiently large $N$.
\end{corollary}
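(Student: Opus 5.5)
The plan is to combine the operator inequality of \propref{prop:uniform-prior-bounds-on-BBGKY-hierarchy} with conservation of the energy moments under the unitary flow and with the initial moment hypothesis \hypref{hyp:bounded-energy-per-particle}. First I would reconcile the two weights. The proposition uses $\OT_j^2=1-\Delta_{x_j}+V(x_j)=\dch_0^{(j)}+1$, while the corollary uses $(\dch_0+\kappa)^{(j)}$. Since $\dch_0\geq0$, the operator inequality $\dch_0+\kappa\leq\max\{1,\kappa\}(\dch_0+1)$ holds. These one-particle operators act on different variables and strongly commute, so by joint spectral calculus
\begin{equation}
    \prod_{j=1}^k(\dch_0^{(j)}+\kappa)\leq\max\{1,\kappa\}^k\prod_{j=1}^k(\dch_0^{(j)}+1).
\end{equation}

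Next, fix $k$ with $1\leq k\leq K(N)$ and $N\geq N_0(k)$. By \hypref{hyp:bounded-energy-per-particle}, $\Psi_{N,0}\in\cD((\HMN+N)^{k/2})$. Because $e^{-it\HMN}$ commutes with $(\HMN+N)^{k/2}$, the same holds for $\Psi_{N,t}$, and
\begin{equation}
    \|(\HMN+N)^{k/2}\Psi_{N,t}\|^2=\langle\Psi_{N,0},(\HMN+N)^k\Psi_{N,0}\rangle<C_0^kN^k.
\end{equation}
Positivity of $\HMN+N$, needed for the fractional powers, comes from the case $k=1$ of the proposition, as in \corref{cor:one-particle-energy-bound}. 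Applying \propref{prop:uniform-prior-bounds-on-BBGKY-hierarchy} with $C=1/4$ then gives
\begin{equation}
    \|\OT_1\cdots\OT_k\Psi_{N,t}\|^2\leq 4^kC_0^k.
\end{equation}
This bound is uniform in $t\in\R$.

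Finally, I would identify the trace with this norm. By bosonic symmetry and the definition of the partial trace,
\begin{equation}
    \Tr_{\Hkb}\bigl(\OT_k\gamma_{N,t}^{(k)}\OT_k\bigr)=\Bigl\|\prod_{j=1}^k(\dch_0^{(j)}+\kappa)^{1/2}\Psi_{N,t}\Bigr\|^2,
\end{equation}
where $\OT_k$ is the $\kappa$-weight of the corollary. Combining this with the operator comparison of the first step yields the bound $(4\max\{1,\kappa\}C_0)^k=C_E^k$.

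The main technical point, and the expected obstacle, is justifying this trace identity and the positive trace-class realization of the weighted marginal, since $\OT_k$ is unbounded. I would first replace $\OT^2$ by the bounded spectral truncation $\OT^2\1_{[0,L]}(\OT^2)$ in each variable. For these truncations the identity is the usual partial-trace formula. Letting $L\to\infty$ and using monotone convergence then gives the identity, and the finite limit supplies the positive trace-class realization exactly as in the last step of \corref{cor:one-particle-energy-bound}. The operator comparison also has to be transported to quadratic forms on $\Psi_{N,t}$; this uses commutativity of the spectral measures of the distinct $\dch_0^{(j)}$. The qualifier "for every sufficiently large $N$" in the statement is exactly the thresholds $N\geq N_0(k)$ and $k\leq K(N)$.
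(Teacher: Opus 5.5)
Your proposal is correct and follows essentially the same route as the paper: propagate the initial moment bound by unitarity, apply \propref{prop:uniform-prior-bounds-on-BBGKY-hierarchy} with $C=1/4$, compare the shifts via $\dch_0+\kappa\leq\max\{1,\kappa\}(1+\dch_0)$ and joint spectral calculus, and identify the weighted trace with $\|\prod_{j}\OT^{(j)}\Psi_{N,t}\|^2$ by bounded spectral truncation and monotone convergence, which also yields the positive trace-class realization. The only differences are cosmetic: you do the shift comparison first, and you truncate each coordinate separately, whereas the paper truncates $\OT_k$ itself as $\min\{\OT_k,L\}$.
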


\begin{proof}
Fix a sufficiently large $N$ for which
\hypref{hyp:bounded-energy-per-particle} holds,
and fix an integer $k$ satisfying
\begin{equation}
    1\leq k\leq K(N),
    \qquad N\geq N_0(k).
\end{equation}
All operators acting on the first $k$ particle
coordinates below are understood to be tensored
with the identity on the remaining coordinates
and on the field space.

By \hypref{hyp:bounded-energy-per-particle},
with Hamiltonian moments understood in the
quadratic-form sense,
\begin{equation}
    \Psi_{N,0}\in
    \cD\bigl((\HMN+N)^{k/2}\bigr),
    \qquad
    \norm{(\HMN+N)^{k/2}\Psi_{N,0}}^2
    \leq C_0^kN^k.
\end{equation}
The evolution
$\Psi_{N,t}=e^{-it\HMN}\Psi_{N,0}$
preserves this domain. Indeed, by the spectral
calculus for the self-adjoint operator $\HMN$,
\begin{equation}
    (\HMN+N)^{k/2}\Psi_{N,t}
    =
    e^{-it\HMN}
    (\HMN+N)^{k/2}\Psi_{N,0}.
\end{equation}
Consequently, unitarity gives
\begin{equation}
\begin{aligned}
    \norm{(\HMN+N)^{k/2}\Psi_{N,t}}^2
    &=
    \norm{(\HMN+N)^{k/2}\Psi_{N,0}}^2\\
    &\leq C_0^kN^k,
    \qquad t\in\R.
\end{aligned}
\end{equation}

Apply
\propref{prop:uniform-prior-bounds-on-BBGKY-hierarchy}
with its permitted constant $C=1/4$.
Since $N\geq N_0(k)$, we obtain
\begin{equation}
\begin{aligned}
    \left\|
        \prod_{j=1}^k
        (1+\dch_0^{(j)})^{1/2}\Psi_{N,t}
    \right\|^2
    &\leq
    \frac{4^k}{N^k}
    \norm{(\HMN+N)^{k/2}\Psi_{N,t}}^2\\
    &\leq (4C_0)^k.
\end{aligned}
\end{equation}

We next compare the shift used in that proposition
with the shift in $\OT$.
Since $\dch_0\geq0$,
\begin{equation}
    \dch_0+\kappa
    \leq
    \max\{1,\kappa\}(1+\dch_0).
\end{equation}
The operators $\dch_0^{(j)}$ acting on different
particle coordinates strongly commute.
Their joint spectral calculus therefore gives
\begin{equation}
\begin{aligned}
    \left\|
        \prod_{j=1}^k\OT^{(j)}\Psi_{N,t}
    \right\|^2
    &\leq
    \max\{1,\kappa\}^k
    \left\|
        \prod_{j=1}^k
        (1+\dch_0^{(j)})^{1/2}\Psi_{N,t}
    \right\|^2\\
    &\leq
    \bigl[4\max\{1,\kappa\}C_0\bigr]^k.
\end{aligned}
\end{equation}
In particular, the product on the left is
well-defined on $\Psi_{N,t}$.

It remains to express this bound in terms of
the reduced density matrix.
For $L>0$, let $\min\{\OT_k,L\}$ denote the bounded
operator obtained from $\OT_k$ by spectral calculus.
The defining property of the partial trace yields
\begin{equation}
    \Tr_{\Hkb}\!\left(
        \gamma_{N,t}^{(k)}
        \min\{\OT_k,L\}^{\,2}
    \right)
    =
    \left\|
        \bigl(\min\{\OT_k,L\}\otimes\1\bigr)
        \Psi_{N,t}
    \right\|^2.
\end{equation}
As $L\to\infty$, monotone convergence for the
corresponding scalar spectral integrals gives
\begin{equation}
    \lim_{L\to\infty}
    \Tr_{\Hkb}\!\left(
        \gamma_{N,t}^{(k)}
        \min\{\OT_k,L\}^{\,2}
    \right)
    =
    \left\|
        \prod_{j=1}^k\OT^{(j)}\Psi_{N,t}
    \right\|^2
    <\infty.
\end{equation}
This finiteness implies that
$\OT_k(\gamma_{N,t}^{(k)})^{1/2}$
is Hilbert--Schmidt. Hence
$\OT_k\gamma_{N,t}^{(k)}\OT_k$
has the canonical positive trace-class realization
\begin{equation}
    \bigl(\OT_k(\gamma_{N,t}^{(k)})^{1/2}\bigr)
    \bigl(\OT_k(\gamma_{N,t}^{(k)})^{1/2}\bigr)^*,
\end{equation}
and its trace satisfies
\begin{equation}
\begin{aligned}
    \Tr_{\Hkb}\!\left(
        \OT_k\gamma_{N,t}^{(k)}\OT_k
    \right)
    &=
    \left\|
        \prod_{j=1}^k\OT^{(j)}\Psi_{N,t}
    \right\|^2\\
    &\leq
    \bigl[4\max\{1,\kappa\}C_0\bigr]^k.
\end{aligned}
\end{equation}
The right-hand side is independent of $t$.
Taking the supremum over $t\in\R$ proves
\eqref{eq:k-particle-energy-bound}.

Finally, fix $k\geq1$.
Because $K(N)\to\infty$, the condition
$k\leq K(N)$ holds for all sufficiently large $N$.
Increasing this lower threshold on $N$ also ensures
$N\geq N_0(k)$ and applicability of the initial
moment hypothesis.
This proves the final assertion.
\end{proof}

In the following two lemmas, expressions involving
individual particle labels are interpreted on the
full Hilbert space
\begin{equation}
    L^2(\R^{3N})\otimes\Fock.
\end{equation}
On this space, $\HMN$ denotes the self-adjoint
Hamiltonian defined by the same expression as above,
and $\UN(t,s)=e^{-i(t-s)\HMN}$ denotes its unitary
propagator. The Hamiltonian and propagator preserve
particle permutation symmetry, and their restrictions
to $\HN$ agree with the bosonic dynamics. For each $\ell$, multiplication by
$e^{-i\xi\cdot x_\ell}$ is unitary on the full
space, but need not preserve its bosonic subspace.
Accordingly, individual terms such as
\begin{equation}
    \UN(t,s)e^{-i\xi\cdot x_\ell}\Psi_{N,s}
\end{equation}
are evaluated in the full space. Their sum over
$\ell=1,\ldots,N$ is permutation symmetric.

A bounded particle test $\OJ^{(k)}$ on $\Hkb$
is extended by zero to $\h^{\otimes k}$ and then
tensored with the identity on the remaining
particle coordinates and the field.
Intermediate labeled source components and their
partial traces are interpreted on these full
tensor products. After summing over the first
$k$ particle labels, the reduced source acts
on $\Hkb$.

\begin{lemma}[Dynamics of the annihilation field]
\label{lem:affecrt-of-a(xi)-on-PsiNt}
Assume \hyporef{hyp:standing-hypotheses}, and let
\begin{equation}
    \Psi_{N,t}=\UN(t,0)\Psi_{N,0},
    \qquad
    \Psi_{N,0}
    =\phi_{N,0}\otimes\wyl[\sqrt N\,u_0]\vacf,
\end{equation}
where $\|\phi_{N,0}\|=1$ and
$\UN(t,s)=e^{-i(t-s)\HMN}$.

For every $t\geq0$,
$\Psi_{N,t}\in\cD(\cN_f^{1/2})$, and
\begin{equation}
\label{eq:PsiNt-axi-PsiNt}
\begin{aligned}
    a_\xi\Psi_{N,t}
    &=
    \sqrt N\,e^{-it\omg(\xi)}
       \widehat u_0(\xi)\Psi_{N,t}
    \\
    &\quad
    -i\frac{\eta}{\sqrt N}
      \int_0^t
      \widehat w(\xi)e^{-i(t-s)\omg(\xi)}
      \UN(t,s)
      \left(\sum_{\ell=1}^N e^{-i\xi\cdot x_\ell}\right)
      \Psi_{N,s}\,ds.
\end{aligned}
\end{equation}
Here $a_\xi\Psi_{N,t}$ denotes the sectorwise annihilation
field, and the identity holds in $L^2(\R^3_\xi;\HN)$. The time integral is a Bochner integral in this space. Consequently, for each fixed $t$, the identity holds for
almost every $\xi$ and may be smeared against any
$f\in L^2(\R^3)$.
\end{lemma}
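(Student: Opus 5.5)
The plan is to derive the identity from the Heisenberg equation of motion for $a_\xi$, written in the interaction picture with respect to $\Hf$, and to justify it first on a regularized dense class before passing to the actual coherent initial data.

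\textbf{Step 1: domain.} Membership $\Psi_{N,t}\in\cD(\cN_f^{1/2})$ follows from the number estimate used in the proof of \lemref{lem:finite-N-BBGKY-hierarchy}:
\begin{equation}
    \|(\cN_f+1)^{1/2}\UN(t,0)\zeta\|\leq\|(\cN_f+1)^{1/2}\zeta\|+\eta\sqrt N\|w\|_{L^2}t\|\zeta\|.
\end{equation}
We apply it with $\zeta=\Psi_{N,0}$, which lies in $\cD(\cN_f^{1/2})$ by \lemref{lem:basic-properties-of-Weyl-operators}(iv). Hence $\xi\mapsto a_\xi\Psi_{N,t}$ is a well-defined element of $L^2(\R^3_\xi;\HN)$, with norm $\|\cN_f^{1/2}\Psi_{N,t}\|$.

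\textbf{Step 2: formal computation for regular data.} Take the regularized vectors $\Psi_{N,0}^{(L)}$ from the proof of \lemref{lem:finite-N-BBGKY-hierarchy}. These lie in $\cD(\HMN)$, and we may additionally assume finitely many photons. Smearing with $f\in C_0^\infty$, consider
\begin{equation}
    G(s):=\UN(t,s)\,e^{-i(t-s)\omega}\!\star a(f)\,\UN(s,0)\Psi^{(L)}_{N,0},
\end{equation}
that is, $a(e^{i(t-s)\omg}f)$ inserted between the propagators. Differentiating in $s$ and using $[\Hf,a(g)]=-a(\omg g)$, the free-field part cancels against the derivative of the phase. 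The bosonic particle Hamiltonian commutes with $a(g)$. The coupling term gives
\begin{equation}
    [\,a(g),\tfrac{\eta}{\sqrt N}W_N\,]=\tfrac{\eta}{\sqrt N}\sum_\ell\int\cc{\widehat g(\xi)}\,\widehat w(\xi)e^{-i\xi\cdot x_\ell}\,d\xi,
\end{equation}
since $[a(g),\phi(w(x_\ell-\cdot))]=\langle g,w(x_\ell-\cdot)\rangle$, whose Fourier form is $\int\cc{\widehat g}\,\widehat w\, e^{-i\xi x_\ell}\,d\xi$ for even real $w$.

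\textbf{Step 3: integrate and insert the initial condition.} Integrating $\partial_sG$ from $0$ to $t$ gives $G(t)-G(0)$. At $s=0$, the term $a(e^{it\omg}f)$ acts on the coherent initial vector and, by \lemref{lem:basic-properties-of-Weyl-operators}(iii), yields
\begin{equation}
    \sqrt N\,\langle e^{it\omg}f,\widehat u_0\rangle\,\Psi_{N,t}
\end{equation}
after propagating with $\UN(t,0)$; the particle factor is untouched. This is the smeared form of the first term in the claimed identity. The integrated commutator gives the Duhamel term with $-i\eta/\sqrt N$, $\widehat w(\xi)e^{-i(t-s)\omg(\xi)}$ and $\UN(t,s)\sum_\ell e^{-i\xi\cdot x_\ell}\Psi_{N,s}$, evaluated in the full labeled space as described before the lemma. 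A technical point is that the coherent state has infinitely many photons. I would handle it by regularizing only the particle and energy part and then applying (iii) to the exact coherent vector, or alternatively by approximating within the $\cN_f^{1/2}$ graph norm.

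\textbf{Step 4: removing the regularization (main obstacle).} This step requires convergence of both sides in $L^2(\R^3_\xi;\HN)$ as $L\to\infty$.
\begin{itemize}
    \item For the left side, it follows from \eqref{eq:BBGKY-number-graph-approximation}, since $\|a_\cdot\Psi\|_{L^2_\xi}=\|\cN_f^{1/2}\Psi\|$.
    \item For the Duhamel integrand, the norm is bounded by $|\widehat w(\xi)|\sqrt N$ times a norm-continuous $s$-dependence. It is therefore Bochner integrable in $L^2_\xi\HN$ with bound $\eta\sqrt N\|w\|_{L^2}t$, and it converges because $\UN$ is unitary and $\Psi^{(L)}_{N,s}\to\Psi_{N,s}$ uniformly in $s$.
\end{itemize}
Measurability of $(s,\xi)\mapsto$ integrand follows from strong continuity of $\UN$ and of multiplication by $e^{-i\xi\cdot x_\ell}$ in $\xi$. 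Testing against $f$ for a countable dense set then identifies the kernels for almost every $\xi$, and extends the identity to all $f\in L^2$ by density.
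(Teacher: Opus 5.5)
Your formal computation is correct. With $g_s=e^{i(t-s)\omg}f$, the $\Hf$-commutator cancels the derivative of $g_s$, the coupling contributes $-i\frac{\eta}{\sqrt N}\UN(t,s)\sum_\ell\langle g_s,w(x_\ell-\cdot)\rangle\Psi_{N,s}$, and the endpoint $s=0$ produces the coherent term. So you do arrive at the smeared form of the claimed identity.

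\textbf{The gap is in Step~2: ``differentiating in $s$'' is not justified.} To differentiate $G(s)=\UN(t,s)\,a(g_s)\,\Psi^{(L)}_{N,s}$ you need two things. First, $a(g_s)\Psi^{(L)}_{N,s}\in\cD(\HMN)$, so that $\UN(t,s)$ can be differentiated on it. Second, differentiability of $s\mapsto a(g_s)\Psi^{(L)}_{N,s}$, i.e.\ of $\Psi^{(L)}_{N,s}$ in the $\cN_f^{1/2}$-graph norm. Your regularization provides neither:
\begin{itemize}
\item Finite photon number holds only at $s=0$ and is destroyed immediately by $W_N$.
\item $a(g)$ does not map $\cD(\HMN)=\cD(\sum_j\dch_0^{(j)})\cap\cD(\Hf)$ into itself. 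Already $\Hf a(g)\psi=a(g)\Hf\psi-a(\omg g)\psi$ requires $\Hf\psi\in\cD(a(g))$, and the particle part requires $\sum_j\dch_0^{(j)}\psi\in\cD(a(g))$.
\end{itemize}
Thus $\Psi^{(L)}_{N,0}\in\cD(\HMN)$ controls $\Psi^{(L)}_{N,s}$ but not $a(g_s)\Psi^{(L)}_{N,s}$, and the commutator calculation remains formal. Conversely, the step you flag as the main obstacle, removing $L$, is routine. The source map $\Psi\mapsto\frac{\eta}{\sqrt N}\widehat w\sum_\ell e^{-i\xi\cdot x_\ell}\Psi$ is bounded, and the approximants converge in the $\cN_f^{1/2}$-graph norm.

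\textbf{How the paper avoids this.} It needs no regularization of the state. It pairs $\mathcal A\Psi_{N,t}=(a_\xi\Psi_{N,t})_\xi$ with a \emph{fixed} $\Phi$ in a core $\mathscr D$ of the direct-integral operator $K=\HMN+\omg(\xi)$ on $L^2(\R^3_\xi;\HN)$. For such $\Phi$ one has $\mathcal A^\dagger\Phi\in\cD(\HMN)$, and $\HMN\mathcal A^\dagger\Phi=\mathcal A^\dagger K\Phi+B^\dagger\Phi$ holds as a vector identity. The weak Schr\"odinger equation then gives $i\partial_t\langle\Phi,\mathcal A\Psi_{N,t}\rangle=\langle K\Phi,\mathcal A\Psi_{N,t}\rangle+\langle\Phi,B\Psi_{N,t}\rangle$, using only $\Psi_{N,t}\in\cD(\cN_f^{1/2})$. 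The resolvent $R_\varepsilon=(\1+i\varepsilon K)^{-1}$ turns this weak equation into a strong one for $R_\varepsilon\mathcal A\Psi_{N,t}$; Duhamel is applied there, and then $\varepsilon\to0$.

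\textbf{How your route would have to be repaired.} Your $G(s)$ is exactly the $f$-smearing of $e^{-i(t-s)K}\mathcal A\Psi_{N,s}$, so you are running the strong-solution version of the same Duhamel formula, and it requires $\mathcal A\Psi^{(L)}_{N,s}\in\cD(K)$. To obtain this:
\begin{enumerate}
\item Note that $\HMN\Psi^{(L)}_{N,0}$ has finitely many photons. Hence $\HMN\Psi^{(L)}_{N,s}=\UN(s,0)\HMN\Psi^{(L)}_{N,0}\in\cD(\cN_f^{1/2})$ by the number estimate.
\item Deduce $\mathcal A\Psi^{(L)}_{N,s}\in\cD(K)$, with $K\mathcal A\Psi=\mathcal A\HMN\Psi-B\Psi$, by duality from the core identity and self-adjointness of $K$.
\item Prove continuity of $s\mapsto\mathcal A\,\UN(s,0)\HMN\Psi^{(L)}_{N,0}$ in $L^2(\R^3_\xi;\HN)$, which is what makes $s\mapsto\mathcal A\Psi^{(L)}_{N,s}$ differentiable.
\end{enumerate}
By that point the paper's weak-to-mild argument is the shorter route.
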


\begin{proof}
We work in the momentum representation of the photon Fock
space. Define the closed annihilation map
\begin{equation}
    \mathcal A:\cD(\cN_f^{1/2})
       \subset\HN\longrightarrow L^2(\R^3_\xi;\HN)
\end{equation}
sectorwise by
\begin{equation}
    (\mathcal A\Psi)^{(n)}
       (\xi;\xi_1,\ldots,\xi_n)
    =
    \sqrt{n+1}\,
    \Psi^{(n+1)}(\xi,\xi_1,\ldots,\xi_n),
\end{equation}
with the particle variables suppressed.

Thus, $(\mathcal A\Psi)(\xi)=a_\xi\Psi$ in the
$L^2$ sense, and
\begin{equation}
    \|\mathcal A\Psi\|_{L^2(\R^3_\xi;\HN)}^2
    =
    \|\cN_f^{1/2}\Psi\|_{\HN}^2.
\end{equation}
The independent photon-number estimate
\eqref{eq:joint-source-uniform-number}, proved using
bounded photon-number cutoffs, implies that, for every
$T<\infty$,
\begin{equation}
    \sup_{0\leq t\leq T}
       \|\mathcal A\Psi_{N,t}\|_{L^2(\R^3_\xi;\HN)}
    <\infty.
\end{equation}
That cutoff argument establishes the required
membership in $\cD(\cN_f^{1/2})$ and does not use
the identity being proved here.

On $L^2(\R^3_\xi;\HN)$, let $K$ be the self-adjoint
direct-integral operator
\begin{equation}
    (KZ)(\xi)
    =
    \bigl(\HMN+\omg(\xi)\bigr)Z(\xi).
\end{equation}
Also define
$B:\HN\to L^2(\R^3_\xi;\HN)$ by
\begin{equation}
    (B\Psi)(\xi)
    =
    \frac{\eta}{\sqrt N}\widehat w(\xi)
    \left(\sum_{\ell=1}^N e^{-i\xi\cdot x_\ell}\right)
    \Psi.
\end{equation}
The sum over particle labels preserves bosonic symmetry,
and
\begin{equation}
    \|B\Psi\|_{L^2(\R^3_\xi;\HN)}
    \leq
    \eta\sqrt N\,\|\widehat w\|_{L^2}\|\Psi\|_{\HN}.
\end{equation}
In particular, $B$ is bounded.
We justify the commutator calculation using smeared
operators on an operator core. Let $\mathscr C_N$ be
the linear span of vectors
\begin{equation}
    \psi\otimes
    \ad(f_1)\cdots\ad(f_m)\vacf, 
\end{equation}
for $   \psi\in\cD(\HMNbos),\quad f_j\in\cD(\omg),\quad m\geq0.$
This is an operator core for $\HMNbos+\Hf$.
Since the particle--field interaction is infinitesimally
operator bounded relative to this free Hamiltonian,
$\mathscr C_N$ is also an operator core for $\HMN$.
Consequently,
\begin{equation}
    \mathscr D
    :=
    \operatorname{span}
    \{f\otimes\chi:
      f\in\cD(\omg),\ \chi\in\mathscr C_N\}
\end{equation}
is an operator core for $K$.

For a simple tensor in this core,
\begin{equation}
    \mathcal A^\dagger (f\otimes\chi)=\ad(f)\chi.
\end{equation}
For every $\Phi\in\mathscr D$, we have
$\mathcal A^\dagger \Phi\in\cD(\HMN)$ and
$K\Phi\in\cD(\mathcal A^\dagger )$.
The latter assertion follows because $K\Phi$ has only
finitely many photon sectors.
The smeared canonical commutation relations give the
vector identity
\begin{equation}
    \HMN\mathcal A^\dagger \Phi
    =
    \mathcal A^\dagger K\Phi+B^\dagger \Phi,
    \qquad \Phi\in\mathscr D.
\end{equation}
Using the convention that inner products are linear
in the second argument, the weak Schr\"odinger equation
therefore yields
\begin{equation}
\begin{aligned}
    i\frac{d}{dt}
      \langle\Phi,\mathcal A\Psi_{N,t}\rangle
    &=
    \langle\HMN\mathcal A^\dagger \Phi,\Psi_{N,t}\rangle
    \\
    &=
    \langle K\Phi,\mathcal A\Psi_{N,t}\rangle
    +
    \langle\Phi,B\Psi_{N,t}\rangle,
    \qquad \Phi\in\mathscr D.
\end{aligned}
\end{equation}
The local bound on $\mathcal A\Psi_{N,t}$ and the identity
\begin{equation}
    \langle\Phi,\mathcal A\Psi_{N,t}\rangle
    =
    \langle\mathcal A^\dagger \Phi,\Psi_{N,t}\rangle
\end{equation}
imply weak continuity of $t\mapsto\mathcal A\Psi_{N,t}$. By approximation in the graph norm of $K$, the
time-integrated weak equation extends from
$\mathscr D$ to every $\Phi\in\cD(K)$.  To recover the corresponding mild equation, set
\begin{equation}
    R_\varepsilon=(\1+i\varepsilon K)^{-1},
    \qquad \varepsilon>0.
\end{equation}
The weak equation implies that
$t\mapsto R_\varepsilon\mathcal A\Psi_{N,t}$
is locally absolutely continuous and satisfies
\begin{equation}
    i\partial_t
       \bigl(R_\varepsilon\mathcal A\Psi_{N,t}\bigr)
    =
    K R_\varepsilon\mathcal A\Psi_{N,t}
    +
    R_\varepsilon B\Psi_{N,t}.
\end{equation}
Here $K R_\varepsilon$ is bounded.
Duhamel formula gives
\begin{equation}
\begin{aligned}
    R_\varepsilon\mathcal A\Psi_{N,t}
    &=
    e^{-itK}R_\varepsilon\mathcal A\Psi_{N,0}
    -i\int_0^t
       e^{-i(t-s)K}R_\varepsilon B\Psi_{N,s}\,ds.
\end{aligned}
\end{equation}
Since $\|R_\varepsilon\|\leq1$ and
$R_\varepsilon\to \1$ strongly, boundedness of $B$
allows us to pass to the limit and obtain
\begin{equation}
    \mathcal A\Psi_{N,t}
    =
    e^{-itK}\mathcal A\Psi_{N,0}
    -
    i\int_0^t
      e^{-i(t-s)K}B\Psi_{N,s}\,ds.
\end{equation}
The coherent-state expansion from
\lemref{lem:basic-properties-of-Weyl-operators}
gives, sectorwise,
\begin{equation}
    (\mathcal A\Psi_{N,0})(\xi)
    =
    \sqrt N\,\widehat u_0(\xi)\Psi_{N,0}.
\end{equation}
Moreover, the direct-integral functional calculus gives
\begin{equation}
    (e^{-irK}Z)(\xi)
    =
    e^{-ir\omg(\xi)}\UN(r,0)Z(\xi).
\end{equation}
Substitution into the mild equation proves
\eqref{eq:PsiNt-axi-PsiNt} in
$L^2(\R^3_\xi;\HN)$. Finally, pairing this $L^2$ identity with
$\overline{f(\xi)}$ is justified by Cauchy--Schwarz for
every $f\in L^2(\R^3)$, proving the smearing assertion.
\end{proof}

\begin{lemma}[Uniform bound of the coherent photon source]
\label{lem:uniform-field-source-bound}
Suppose that \hyporef{hyp:standing-hypotheses} holds. For each $\xi\in\R^3$, $0\leq s\leq t\leq T$, and $1\leq\ell\leq N$, define the unitary operator
on $L^2(\R^{3N})\otimes\Fock$ by
\begin{equation}
\label{eq:transported-observable-for-source-bound}
    \cT_{x_\ell}(t,s;\xi)
    :=
    \UN(t,s)e^{-i\xi\cdot x_\ell}\UNd(t,s).
\end{equation}
Here the propagator is understood on the full space according to the preceding convention. Then, for every fixed $k\geq1$ and every $  \OJ^{(k)} \in\cB(\Hkb)$, there exists a constant $C_{k,J,T}<\infty$, independent of $N\geq k$, such that
\begin{equation}
\label{eq:uniform-field-source-bound}
    \sup_{N\geq k} \sup_{t\in[0,T]} \left| \Tr \prc{   \OJ^{(k)} \cS_{N,t}^{(k)} } \right| \leq C_{k,J,T},
\end{equation}
where one may explicitly choose
\begin{equation}
\label{eq:explicit-uniform-field-source-constant}
    C_{k,J,T} := 4k \norm{   \OJ^{(k)} }_{\op{op}} \norm{ \cc{\wh{w}}\,\wh{u}_0 }_{L^1} + 4k \eta T \norm{   \OJ^{(k)} }_{\op{op}} \norm{ |\wh{w}|^2 }_{L^1}.
\end{equation}
Consequently, the family of scalar functions $\brs{t \mapsto \Tr(   \OJ^{(k)} \cS_{N,t}^{(k)})}_{N\geq k}$ is uniformly integrable on $[0,T]$.
\end{lemma}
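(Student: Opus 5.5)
The plan is to write $\Tr(\OJ^{(k)}\cS_{N,t}^{(k)})$ through the trace-class realization of the source obtained in the proof of \lemref{lem:finite-N-BBGKY-hierarchy}. This gives
\begin{equation}
    \Tr\prc{\OJ^{(k)}\cS_{N,t}^{(k)}}
    =\frac{1}{\sqrt N}\sum_{j=1}^k
    \Bigl(\inp{\Psi_{N,t},\OJ^{(k)}\phi(w(x_j-\cdot))\Psi_{N,t}}
    -\inp{\phi(w(x_j-\cdot))\Psi_{N,t},\OJ^{(k)}\Psi_{N,t}}\Bigr).
\end{equation}
Writing $\phi=a+\ad$ and moving each $\ad$ onto the other side of the inner product, every contribution becomes a pairing of the form $\inp{\OJ^{(k)\ast}\Psi_{N,t},\,a(w(x_j-\cdot))\Psi_{N,t}}$ or its mirror image. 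In momentum space we have $a(w(x_j-\cdot))=\int \cc{\widehat w(\xi)}e^{i\xi\cdot x_j}a_\xi\,d\xi$, up to the paper's conventions; the evenness and reality of $w$ take care of the conjugations. This reduces the whole bound to controlling $\frac{1}{\sqrt N}\int \cc{\widehat w(\xi)}\inp{e^{-i\xi\cdot x_j}\OJ^{(k)\ast}\Psi_{N,t},a_\xi\Psi_{N,t}}\,d\xi$. There are at most $4k$ such terms: $k$ particles, two sides of the commutator, and creation/annihilation parts.

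Next I would insert the Duhamel identity \eqref{eq:PsiNt-axi-PsiNt} of \lemref{lem:affecrt-of-a(xi)-on-PsiNt} for $a_\xi\Psi_{N,t}$.

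\textbf{Coherent term.} The first piece is $\sqrt N e^{-it\omg(\xi)}\widehat u_0(\xi)\Psi_{N,t}$. Its $\sqrt N$ cancels the prefactor $1/\sqrt N$. The phases $e^{-i\xi\cdot x_j}$ and $e^{-it\omg}$ are unimodular, and $\|\Psi_{N,t}\|=1$. Cauchy--Schwarz in the Hilbert space, followed by integration in $\xi$, therefore gives the contribution $\norm{\OJ^{(k)}}_{\op{op}}\norm{\cc{\wh w}\wh u_0}_{L^1}$ per term.

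\textbf{Radiated term.} The second piece carries the prefactor $\eta/\sqrt N$, so together with the outer $1/\sqrt N$ we get $\eta N^{-1}\sum_{\ell=1}^N$. Each summand $\UN(t,s)e^{-i\xi\cdot x_\ell}\Psi_{N,s}$ has norm one, because it is a product of unitaries on the full labeled space, following the stated convention. Each summand is therefore bounded by $\norm{\OJ^{(k)}}_{\op{op}}|\wh w(\xi)|^2$. The average over $\ell$ is then uniformly bounded, and the time integral over $[0,t]\subset[0,T]$ contributes a factor $T$. This yields $\eta T\norm{\OJ^{(k)}}_{\op{op}}\norm{|\wh w|^2}_{L^1}$ per term.

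Summing over the at most $4k$ terms produces exactly the constant $C_{k,J,T}$ in \eqref{eq:explicit-uniform-field-source-constant}, with no dependence on $N$ or $t$. The bound also shows the operator $\cT_{x_\ell}$ is unnecessary except as a notational device: $\UN(t,s)e^{-i\xi\cdot x_\ell}\Psi_{N,s}=\cT_{x_\ell}(t,s;\xi)\Psi_{N,t}$. Uniform integrability then follows immediately, since the family of functions is uniformly bounded on a finite interval.

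\textbf{Main obstacle.} The main difficulty is justifying the interchange of the $\xi$-integral, the $s$-integral and the inner product, given that the identity for $a_\xi\Psi_{N,t}$ holds only in $L^2(\R^3_\xi;\HN)$ as a Bochner integral. I would handle it by smearing. Pair the $L^2(\R^3_\xi;\HN)$ identity with the vector $\xi\mapsto \wh w(\xi)e^{-i\xi\cdot x_j}\OJ^{(k)\ast}\Psi_{N,t}$, which lies in $L^2(\R^3_\xi;\HN)$ because $\wh w\in L^2$. The smearing assertion of \lemref{lem:affecrt-of-a(xi)-on-PsiNt} covers this pairing, and Fubini applies because the integrands are dominated by $|\wh w||\wh u_0|$ and $|\wh w|^2$, both in $L^1$. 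A further subtlety is that $\OJ^{(k)}$ is extended by zero to the labeled space. Since it is bounded, this causes no loss, and the bosonic symmetry of the summed source ensures the final quantity is the trace on $\Hkb$.
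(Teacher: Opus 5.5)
Your proposal is correct and follows essentially the same route as the paper. Both arguments insert the Duhamel identity for $a_\xi\Psi_{N,t}$, bound the coherent part by $\|\overline{\widehat w}\,\widehat u_0\|_{L^1}$, bound the radiated part by $\eta t\,\||\widehat w|^2\|_{L^1}$ using that each $\UN(t,s)e^{-i\xi\cdot x_\ell}\Psi_{N,s}$ has norm one, and collect $4k$ contributions. The only difference is cosmetic: the paper first bounds $\|\mathcal S^{(k)}_{N,t}\|_{\mathfrak S_1}$, handling the creation part through $\mathcal Q^{(k),\ad,j}_{N,t}=-(\mathcal Q^{(k),a,j}_{N,t})^*$, and then applies trace duality, whereas you bound the pairing with $\OJ^{(k)}$ directly by moving $\ad$ across the inner product. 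For the uniform-integrability conclusion, the paper also checks that $t\mapsto\Tr(\OJ^{(k)}\mathcal S^{(k)}_{N,t})$ is measurable; it gets trace-norm continuity of the source from the continuity of $t\mapsto a_\cdot\Psi_{N,t}$ in $L^2(\R^3_\xi;\HN)$, which the mild equation provides. Your last step leaves this point implicit.
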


\begin{proof}
Fix $N\geq k$ and $t\in[0,T]$.
Individual labeled operators and partial traces are
interpreted on the full tensor-product spaces, as
specified above.

By \lemref{lem:affecrt-of-a(xi)-on-PsiNt},
unitarity of $\UN(t,s)$, and normalization of
$\Psi_{N,s}$, we have, for almost every $\xi$,
\begin{equation}
\begin{aligned}
    \frac1{\sqrt N}\|a_\xi\Psi_{N,t}\|
    &\leq
    |\widehat u_0(\xi)|
    +
    \frac{\eta}{N}|\widehat w(\xi)|
    \int_0^t
    \left\|
       \left(\sum_{\ell=1}^N
                   e^{-i\xi\cdot x_\ell}\right)
       \Psi_{N,s}
    \right\|\,ds
    \\
    &\leq
    |\widehat u_0(\xi)|
    +\eta t|\widehat w(\xi)|.
\end{aligned}
\end{equation}
The exceptional null set may depend on $t$.

For $1\leq j\leq k$, the Fourier representation of
the annihilation component gives
\begin{equation}
\begin{aligned}
    \mathcal Q_{N,t}^{(k),a,j}
    &=
    \frac1{\sqrt N}
    \int_{\R^3}\overline{\widehat w(\xi)}
    \operatorname{Tr}_{k+1,\ldots,N;\Fock}
    \left[
       e^{i\xi\cdot x_j},
       |a_\xi\Psi_{N,t}\rangle
       \langle\Psi_{N,t}|
    \right]\,d\xi.
\end{aligned}
\end{equation}
Indeed, the commutator produces the kernel multiplier
$e^{i\xi\cdot x_j}-e^{i\xi\cdot x_j'}$, which is precisely
the factor in
\begin{equation}
    \overline{\widehat f_j(\xi)}
    =
    \left(e^{i\xi\cdot x_j}
          -e^{i\xi\cdot x_j'}\right)
      \overline{\widehat w(\xi)}.
\end{equation}
This representation first holds for finite-photon vectors and extends to
$\cD(\cN_f^{1/2})$ by approximation in its graph norm.

We now justify the integral in trace norm.
The partial trace is contractive on trace-class
operators, including non-self-adjoint ones.
Moreover, multiplication by $e^{i\xi\cdot x_j}$ is
unitary on the full labeled particle space, and
\begin{equation}
    \bigl\|
       |a_\xi\Psi_{N,t}\rangle
       \langle\Psi_{N,t}|
    \bigr\|_{\mathfrak S_1}
    =
    \|a_\xi\Psi_{N,t}\|,
\end{equation}
because $\|\Psi_{N,t}\|=1$.
Consequently,
\begin{equation}
\begin{aligned}
    \|\mathcal Q_{N,t}^{(k),a,j}\|_{\mathfrak S_1}
    &\leq
    \frac2{\sqrt N}
    \int_{\R^3}
       |\widehat w(\xi)|
       \|a_\xi\Psi_{N,t}\|\,d\xi
    \\
    &\leq
    2\|\overline{\widehat w}\,\widehat u_0\|_{L^1}
    +2\eta t\||\widehat w|^2\|_{L^1}.
\end{aligned}
\end{equation}
Both terms are finite by Cauchy--Schwarz and
$w,u_0\in L^2$.
The integrand is strongly measurable in trace class,
so the displayed representation is a
trace-class-valued Bochner integral.

Extend $\OJ^{(k)}\in\cB(\Hkb)$ by zero on the
orthogonal complement of $\Hkb$ in $\h^{\otimes k}$.
This extension preserves its operator norm.
Trace duality therefore gives
\begin{equation}
\label{eq:annihilation-source-bound-one-j}
\begin{aligned}
    \left|
       \operatorname{Tr}
       \bigl(\OJ^{(k)}
             \mathcal Q_{N,t}^{(k),a,j}\bigr)
    \right|
    &\leq
    2\|\OJ^{(k)}\|_{\mathrm{op}}
      \|\overline{\widehat w}\,\widehat u_0\|_{L^1}
    \\
    &\quad+
    2\eta T\|\OJ^{(k)}\|_{\mathrm{op}}
      \||\widehat w|^2\|_{L^1}.
\end{aligned}
\end{equation}

Since $f_j$ changes sign when the primed and
unprimed particle variables are exchanged,
adjunction gives
\begin{equation}
    \mathcal Q_{N,t}^{(k),\ad,j}
    =
    -\bigl(\mathcal Q_{N,t}^{(k),a,j}\bigr)^*.
\end{equation}
Thus the creation component has the same
trace-norm bound as the annihilation component.

Although individual labeled components need not
preserve the bosonic subspace, their sums have
bosonic support.
Indeed, both $\Psi_{N,t}$ and $a_\xi\Psi_{N,t}$
are symmetric in the particle variables, while
$\sum_{j=1}^k e^{i\xi\cdot x_j}$ commutes with
particle permutations.
The summed commutator therefore acts on $\Hkb$.
It follows that
\begin{equation}
\begin{aligned}
    \|\mathcal S_{N,t}^{(k)}\|_{\mathfrak S_1(\Hkb)}
    &\leq
    4k\|\overline{\widehat w}\,\widehat u_0\|_{L^1}
    +4k\eta t\||\widehat w|^2\|_{L^1}.
\end{aligned}
\end{equation}
Applying trace duality on $\Hkb$ proves
\eqref{eq:uniform-field-source-bound}, with exactly
the constant in
\eqref{eq:explicit-uniform-field-source-constant}.

For completeness, the mild equation established in
\lemref{lem:affecrt-of-a(xi)-on-PsiNt} shows that
\begin{equation}
    t\longmapsto
    \bigl(\xi\longmapsto a_\xi\Psi_{N,t}\bigr)
\end{equation}
is continuous in $L^2(\R^3_\xi;\HN)$.
Together with strong continuity of $\Psi_{N,t}$,
the rank-one representation above and
Cauchy--Schwarz imply trace-norm continuity of
$t\mapsto\mathcal S_{N,t}^{(k)}$.
In particular, all tested scalar functions are
measurable.

Finally, for every measurable $E\subset[0,T]$,
\begin{equation}
    \sup_{N\geq k}
    \int_E
       \left|
          \operatorname{Tr}
          \bigl(\OJ^{(k)}\mathcal S_{N,t}^{(k)}\bigr)
       \right|\,dt
    \leq C_{k,J,T}|E|.
\end{equation}
This proves uniform integrability.
\end{proof}

\section{Existence and uniqueness of the Hartree hierarchy}
\label{sec:existence-and-uniqueness-of-GP-hierarchy}

\subsection{Weak star topology and compactness}
Throughout the following discussion, operators on
$\Hkb$ are extended by zero to $\h^{\otimes k}$
whenever labeled tensor-product operators are used.
We retain the same notation for these extensions.
The same convention applies to density matrices. Individual interaction multipliers, tensor products, and their commutators are first formed on the full
tensor-product spaces. Their pairings with bosonic
density matrices are evaluated there using the zero
extensions. Equivalently, the complete test operator
is orthogonally compressed to the bosonic subspace
before taking the trace.

\begin{definition}[State spaces and operators on the $k$-particle Hilbert space]
 Let $\Hkb$ denote the bosonic $k$-particle Hilbert space defined in \eqref{eq:definition-of-HNb}. Suppose that \hyporef{hyp:standing-hypotheses} holds. We define the following objects
\begin{enumerate}
    \item \textbf{Spectral cutoffs:} For any spectral cutoff parameter $M > 0$, we define the spectral projection operators onto the low-energy modes on $\h$ as
    \begin{equation}
        P_M := \1_{\{ \OT \leq M\}}, \qquad Q_M := \1 - P_M,
    \end{equation}
    where $\OT$ is the one-particle operator defined in \eqref{eq:definition-of-T}. These projections are extended to the $k$-particle Hilbert space $\Hkb$ via tensor products
    \begin{equation}
        P_M^{(k)} := P_M^{\otimes k} \quad \text{ and } \quad Q_M^{(k)} := Q_M^{\otimes k}.
    \end{equation}

    \item \textbf{Trace spaces and duality:} We denote the spaces of trace-class and compact operators on $\Hkb$ by $\mathfrak{S}_1^{(k)}$ and $\mathcal{K}^{(k)}$, respectively. The space $\mathfrak{S}_1^{(k)}$ is equipped with the trace norm $\|A\|_{\mathfrak{S}_1^{(k)}} := \Tr_{1,\ldots,k} |A|$. We utilize the isometric duality $\mathfrak{S}_1^{(k)} = \big(\mathcal{K}^{(k)}\big)^*$, where the duality pairing is given by
    \begin{equation}
        \langle \OJ, \gamma \rangle := \Tr_{1,\ldots,k} [\OJ \gamma], \qquad \text{for } \OJ \in \mathcal{K}^{(k)}, \, \gamma \in \mathfrak{S}_1^{(k)}.
    \end{equation}
    Consequently, the weak-$*$ topology on $\mathfrak{S}_1^{(k)}$ corresponds to the topology of convergence against compact observables.

    \item \textbf{Partial trace:} For a trace-class operator $\op{A}^{(k+1)} \in \mathfrak{S}_1^{(k+1)}$, its partial trace over the $(k+1)$-th particle, denoted $\Tr_{k+1}[ \op{A}^{(k+1)} ] \in \mathfrak{S}_1^{(k)}$, is defined as the unique operator satisfying
    \begin{equation}
        \Tr_{1,\ldots,k} \Big[ \OB^{(k)} \Tr_{k+1} [ \op{A}^{(k+1)} ] \Big] = \Tr_{1,\ldots,k+1} \Big[ (\OB^{(k)} \otimes \1) \op{A}^{(k+1)} \Big]
    \end{equation}
    for all bounded operators $\OB^{(k)}$ on $\Hkb$. When $\op{A}^{(k+1)}$ admits a sufficiently regular integral kernel, this partial trace acts by integrating out the final spatial coordinate
    \begin{equation}
        \big( \Tr_{k+1} \op{A}^{(k+1)} \big)(\ux_k; \ux_k') = \int_{\R^3} \op{A}^{(k+1)} (\ux_k, z; \ux_k', z) \, dz.
    \end{equation}

    \item \textbf{Density matrices:} The space of physical $k$-particle states is formalized as the convex set of normalized density matrices
    \begin{equation}
        \label{eq:definition-of-Dk}
        \mathcal{D}_k := \Big\{ \gamma \in \mathfrak{S}_1(\Hkb) \;\Big|\; \gamma \geq 0, \quad \Tr_{1,\ldots,k} [\gamma] = 1 \Big\}.
    \end{equation}
\end{enumerate}
\end{definition}

\begin{lemma}[Regular compact test family]
\label{lem:regular-compact-test-family}
Suppose that \hyporef{hyp:standing-hypotheses} holds. For every $k\geq1$, set the free $k$-particle Hamiltonian as
\begin{equation}
    \dch_{k} := \sum_{j=1}^{k}\dch_0^{(j)}
\end{equation}
on $\Hkb$. There exists a countable family $\left\{ \OJ_m^{(k)} \right\}_{m=1}^{\infty} \subset \mathcal{K}^{(k)}$ which is dense in the closed unit ball of $\mathcal{K}^{(k)}$, satisfies $\norm{\OJ_m^{(k)}}_{\mathrm{op}} \leq 1$, and possesses the following properties
\begin{enumerate}[label=(\roman*),itemsep=0.5em]
\item $\OJ_m^{(k)}$ is finite rank for every $m\geq1$.
\item $\operatorname{Ran}(\OJ_m^{(k)}) \mcup \operatorname{Ran} \prc{\OJ_m^{(k)}}^* \subset \mathcal{D}\left(\dch_k\right)$. 
\item
The commutator
$[\OJ_m^{(k)},\dch_k]$, initially defined on
$\cD(\dch_k)$, extends to a bounded finite-rank
operator on $\Hkb$.

\item
With the zero-extension convention above,
\begin{equation}
    [\OJ_m^{(k)},v(x_i-x_j)]
    \in\mathcal K(\h^{\otimes k}),
    \qquad 1\leq i<j\leq k.
\end{equation}

\item
For every $M>0$,
\begin{equation}
    [v(x_i-x_{k+1}),
       \OJ_m^{(k)}\otimes P_M]
    \in\mathcal K(\h^{\otimes(k+1)}),
    \qquad 1\leq i\leq k.
\end{equation}
\end{enumerate}
The commutators in \textup{(iv)} and \textup{(v)}
are finite rank. Their orthogonal compressions
to the corresponding bosonic subspaces are
therefore also compact.
\end{lemma}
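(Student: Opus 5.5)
We construct the family explicitly from the spectral decomposition of $\dch_0$. Let $\{\rchi_j\}_{j\geq0}$ be the orthonormal eigenbasis of $\dch_0$ from \hypref{hyp:trapping-potential}. The symmetrized products
\begin{equation}
    \rchi_{\underline j}^{\sym}:=\Pi_{\sym}\bigl(\rchi_{j_1}\otimes\cdots\otimes\rchi_{j_k}\bigr)
\end{equation}
span a dense subspace of $\Hkb$ and consist of eigenvectors of $\dch_k$. We take as candidates the finite-rank operators
\begin{equation}
    \OJ=\sum_{a,b\leq n} c_{ab}\,\ket{\rchi^{\sym}_{\underline j_a}}\bra{\rchi^{\sym}_{\underline j_b}},
    \qquad c_{ab}\in\mathbb Q+i\mathbb Q,
\end{equation}
with finitely many multi-indices. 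This set is countable.

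\textbf{Density and normalization.} Finite-rank operators are dense in $\mathcal K^{(k)}$ in operator norm. Any finite-rank operator can be approximated by one whose range and co-range lie in the span of finitely many eigenvectors: compose with the spectral projections $P^{(k)}_M$ of $\dch_k$, which converge strongly, and use compactness to upgrade $P_M A P_M\to A$ to norm convergence. One then rationalizes the coefficients. To obtain a dense subset of the closed unit ball with $\|\OJ_m^{(k)}\|_{\mathrm{op}}\leq1$, rescale each candidate by $\max\{1,\|\OJ\|_{\mathrm{op}}\}^{-1}$; rescaling preserves all the listed properties. This gives (i).

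\textbf{Range and commutator with $\dch_k$.} Properties (ii) and (iii) follow because the ranges of $\OJ$ and $\OJ^*$ are finite spans of eigenvectors of $\dch_k$. Hence $\OJ\dch_k$, defined on $\cD(\dch_k)$, equals $\OJ(\dch_k P)$, where $P$ is the finite-dimensional projection onto the co-range. Since $\dch_k P$ is bounded and finite rank, the commutator $[\OJ,\dch_k]=\OJ\dch_k-\dch_k\OJ$ extends to a bounded finite-rank operator.

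\textbf{Commutators with the interaction.} For (iv), $v\in L^\infty$ makes $v(x_i-x_j)$ bounded, so $[\OJ,v(x_i-x_j)]$ is a sum of bounded operators each composed with a finite-rank one, hence finite rank and compact. For (v), $P_M=\1_{\{\OT\leq M\}}$ is finite rank because $\OT$ has compact resolvent (\hypref{hyp:trapping-potential}). Therefore $\OJ\otimes P_M$ is finite rank on $\h^{\otimes(k+1)}$, and its commutator with the bounded multiplier $v(x_i-x_{k+1})$ is finite rank. The orthogonal compression to the bosonic subspace preserves finite rank, which gives the final remark of the statement.

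\textbf{Main obstacle.} The delicate point is the density step: we must show that the countable family, after normalization, is dense in the whole closed unit ball and not merely in a smaller set. I would argue as follows. Given $A$ in the ball and $\varepsilon>0$, first choose a finite-rank $F$ with $\|A-F\|<\varepsilon$. Then take $F'=P_MFP_M$ with rational coefficients in the eigenbasis, so that $\|F-F'\|<\varepsilon$. This gives $\|F'\|\leq1+2\varepsilon$, and the rescaled operator is within $4\varepsilon$ of $A$.
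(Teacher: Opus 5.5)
Your proposal is correct and follows essentially the same construction as the paper: you take rational complex combinations of rank-one operators between symmetrized trap eigenvectors, rescale by $\max\{1,\|\OJ\|_{\mathrm{op}}\}^{-1}$, and get (iv)--(v) from boundedness of $v$ and finite rank of $P_M$. You spell out the density argument that the paper only asserts, and you derive (iii) from the finite eigenvector range rather than from the explicit identity $[R_{\alpha,\beta}^{(k)},\dch_k]=(\lambda_\beta^{(k)}-\lambda_\alpha^{(k)})R_{\alpha,\beta}^{(k)}$. One small fix in (iii): the identity $\OJ\dch_k=\OJ(\dch_k P)$ is false if $P$ projects onto $\operatorname{Ran}(\OJ^*)$ itself, since that subspace need not be $\dch_k$-invariant. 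Take $P$ to be the projection onto the span of the eigenvectors $\rchi^{\sym}_{\underline j_b}$ appearing in $\OJ$, which commutes with $\dch_k$, or simply note that $\OJ\dch_k\subset(\dch_k\OJ^*)^*$, which is bounded and of finite rank.
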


\begin{proof}
Since $\dch_0$ has compact resolvent, there exists an orthonormal basis $\left\{ e_n \right\}_{n\geq1} \subset \h$ consisting of eigenvectors of $\dch_0$. The symmetrized tensor products of these $e_n$ form an orthonormal eigenbasis $\left\{ E_\alpha^{(k)} \right\}_{\alpha\geq1} \subset \Hkb$ for $\dch_k$. That is, for every $\alpha\geq1$, there exists $\lambda_\alpha^{(k)}\in\R$ such that $\dch_k E_\alpha^{(k)} = \lambda_\alpha^{(k)}E_\alpha^{(k)}$. For $\alpha,\beta\geq1$, define the rank-one operators $R_{\alpha,\beta}^{(k)} := \ket{E_\alpha^{(k)}} \bra{E_\beta^{(k)}}$. Let $\mathcal{R}_k$ be the set of finite linear combinations of the $R_{\alpha,\beta}^{(k)}$ with coefficients in $\mathbb{Q}+i\mathbb{Q}$. Because $\mathcal{R}_k$ is countable, the normalized set
\begin{equation}
    \left\{ \frac{R}{\max\{1,\norm{R}_{\mathrm{op}}\}} : R\in\mathcal{R}_k \right\}
\end{equation}
is countably dense in the closed unit ball of $\mathcal{K}^{(k)}$. Enumerate this family as $\{\OJ_m^{(k)}\}_{m\geq1}$. By definition, every $\OJ_m^{(k)}$ is finite rank, which proves (i). Moreover, both its range and the range of its adjoint lie entirely within a finite-dimensional span of eigenvectors of $\dch_k$, which proves (ii).  Evaluating the free energy commutator gives
\begin{equation}
    \left[ R_{\alpha,\beta}^{(k)}, \dch_k \right] = \left( \lambda_\beta^{(k)} - \lambda_\alpha^{(k)} \right) R_{\alpha,\beta}^{(k)},
\end{equation}
which means the commutator is simply a scalar multiple of $R_{\alpha,\beta}^{(k)}$. Consequently, $[\OJ_m^{(k)},\dch_k]$, initially
defined on $\cD(\dch_k)$, extends to a bounded
finite-rank operator. This proves (iii).

The zero extension of $\OJ_m^{(k)}$ to
$\h^{\otimes k}$ remains finite rank and has
the same operator norm.
Multiplication by $v(x_i-x_j)$ is bounded on
this full tensor-product space, with norm at
most $\|v\|_{L^\infty}$.
Both products in the commutator in (iv) are
therefore finite rank, proving (iv).

Since $\dch_0$ has compact resolvent, $P_M$
has finite rank for every $M>0$.
Thus $\OJ_m^{(k)}\otimes P_M$ has finite rank
on $\h^{\otimes(k+1)}$.
Multiplication by $v(x_i-x_{k+1})$ is bounded
on that space, so the commutator in (v) is
finite rank. Finally, orthogonal compression preserves finite
rank and hence compactness.
\end{proof}

\begin{definition}[Metric on reduced density matrices]
\label{def:metric-on-reduced-density-matrices}
For each $k \geq 1$, let $\{\OJ_m^{(k)}\}_{m \geq 1}$ be the family of regular compact test operators fixed via \lemref{lem:regular-compact-test-family}. For any pair of density matrices $\gamma, \widetilde{\gamma} \in \mathcal{D}_k$, we define the metric $d_k(\gamma, \widetilde{\gamma})$ by the absolutely convergent series
\begin{equation}
    d_k(\gamma, \widetilde{\gamma}) := \sum_{m=1}^{\infty} 2^{-m} \left| \operatorname{Tr}_{1,\ldots,k} \left( \OJ_m^{(k)} \left( \gamma - \widetilde{\gamma} \right) \right) \right|. 
\end{equation}
Absolute convergence is guaranteed for all $\gamma, \widetilde{\gamma} \in \mathcal{D}_k$ by the uniform operator bound
\begin{equation}
    \left| \operatorname{Tr} \left( \OJ_m^{(k)} \left( \gamma - \widetilde{\gamma} \right) \right) \right| \leq 2.
\end{equation}
\end{definition}

\begin{lemma}[Metric for the weak-$*$ topology]
\label{lem:metric-weak-star-topology}
Let $\{\gamma_n\}_{n\geq1} \subset \mathcal{D}_k$ and $\gamma \in \mathcal{D}_k$. Then 
\begin{equation}
    d_k(\gamma_n, \gamma) \longrightarrow 0 \quad \text{as } n \to \infty
\end{equation}
if and only if 
\begin{equation}
    \operatorname{Tr}_{1,\ldots,k}\left( \OJ\gamma_n \right) \longrightarrow \operatorname{Tr}_{1,\ldots,k}\left( \OJ\gamma \right) \quad \text{as } n \to \infty
\end{equation}
for every $\OJ \in \mathcal{K}^{(k)}$. Hence, the metric $d_k$ metrizes the weak-$*$ topology of the trace-class space $\mathfrak{S}_1^{(k)} = (\mathcal{K}^{(k)})^*$ when restricted to the subset of density matrices $\mathcal{D}_k$.
\end{lemma}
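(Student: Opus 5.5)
The statement to prove is \lemref{lem:metric-weak-star-topology}: on $\mathcal D_k$, convergence in $d_k$ is the same as convergence tested against every compact operator. I prove the two implications separately; both rest only on density of $\{\OJ_m^{(k)}\}$ in the unit ball of $\mathcal K^{(k)}$ (\lemref{lem:regular-compact-test-family}) and on the uniform bound $\|\gamma_n-\gamma\|_{\mathfrak S_1}\le 2$ on $\mathcal D_k$.

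\emph{Weak-$*$ convergence implies $d_k$-convergence.} Assume $\Tr(\OJ\gamma_n)\to\Tr(\OJ\gamma)$ for every $\OJ\in\mathcal K^{(k)}$. Then each summand $2^{-m}|\Tr(\OJ_m^{(k)}(\gamma_n-\gamma))|$ tends to $0$. Each summand is also bounded by $2^{1-m}$, uniformly in $n$, and $\sum_m 2^{1-m}<\infty$. Dominated convergence for series therefore gives $d_k(\gamma_n,\gamma)\to0$. Concretely, fix $\epsilon>0$ and choose $M$ with $\sum_{m>M}2^{1-m}<\epsilon/2$; the finitely many terms with $m\le M$ are then small for $n$ large.

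\emph{$d_k$-convergence implies weak-$*$ convergence.} Assume $d_k(\gamma_n,\gamma)\to0$. Since $d_k$ dominates each of its summands,
\begin{equation}
    |\Tr(\OJ_m^{(k)}(\gamma_n-\gamma))|\le 2^m d_k(\gamma_n,\gamma)\longrightarrow0
\end{equation}
for every fixed $m$. Now let $\OJ\in\mathcal K^{(k)}$ with $\OJ\neq0$, and rescale it to $\tilde\OJ:=\OJ/\|\OJ\|_{\mathrm{op}}$ in the closed unit ball. Given $\epsilon>0$, density provides some $m$ with $\|\tilde\OJ-\OJ_m^{(k)}\|_{\mathrm{op}}<\epsilon$. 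Trace duality then gives
\begin{equation}
    |\Tr(\tilde\OJ(\gamma_n-\gamma))|
    \le
    \|\tilde\OJ-\OJ_m^{(k)}\|_{\mathrm{op}}\,\|\gamma_n-\gamma\|_{\mathfrak S_1}
    +|\Tr(\OJ_m^{(k)}(\gamma_n-\gamma))|
    \le 2\epsilon+o(1).
\end{equation}
Hence $\limsup_n|\Tr(\OJ(\gamma_n-\gamma))|\le 2\epsilon\|\OJ\|_{\mathrm{op}}$, and since $\epsilon>0$ is arbitrary the limit is zero.

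The only point that needs care is density in the \emph{operator norm} topology of the unit ball of $\mathcal K^{(k)}$; this is what makes the $\epsilon$-approximation in the second step work. I rely on \lemref{lem:regular-compact-test-family} for it. It holds because finite-rank operators built from the eigenbasis with rational coefficients are norm-dense in $\mathcal K^{(k)}$, and the normalization $R\mapsto R/\max\{1,\|R\|_{\mathrm{op}}\}$ keeps approximants of unit-ball elements inside the unit ball.

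Finally, $d_k$ is a genuine metric on $\mathcal D_k$. Symmetry and the triangle inequality are immediate from the definition. Definiteness follows from the second step: if $d_k(\gamma,\tilde\gamma)=0$, the same density argument gives $\Tr(\OJ(\gamma-\tilde\gamma))=0$ for all compact $\OJ$, so $\gamma=\tilde\gamma$ by the duality $\mathfrak S_1^{(k)}=(\mathcal K^{(k)})^*$. This argument shows only that $d_k$ and the weak-$*$ topology have the same convergent sequences. To conclude that the topologies agree, I would argue that the weak-$*$ topology on the bounded set $\mathcal D_k$ is first countable: bounded subsets of the dual of the separable space $\mathcal K^{(k)}$ are metrizable in the weak-$*$ topology. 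Sequential equivalence then yields equality of the two topologies, which is the metrization claim.
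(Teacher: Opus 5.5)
Your proof is correct and follows the paper's argument. Dominated convergence for the geometric series gives one direction. For the other, you bound each coordinate by $2^m d_k(\gamma_n,\gamma)$ and then use operator-norm density of $\{J_m^{(k)}\}$ in the unit ball of $\mathcal K^{(k)}$, trace duality and $\|\gamma_n-\gamma\|_{\mathfrak S_1}\le 2$. Your extra remarks add something the paper leaves implicit: you check that $d_k$ is definite, and you pass from equal convergent sequences to equal topologies via weak-$*$ metrizability of bounded sets in the dual of the separable space $\mathcal K^{(k)}$.
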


\begin{proof}
$(\Rightarrow)$ Suppose that $d_k(\gamma_n, \gamma) \to 0$ as $n \to \infty$. By the definition of the metric, this implies coordinate-wise convergence against every element of the dense family of test operators; that is, \begin{equation}
    \operatorname{Tr}_{1,\ldots,k}\left(\OJ_m^{(k)}(\gamma_n - \gamma)\right) \to 0
\end{equation} 
for each fixed $m \geq 1$. Let $\OJ \in \mathcal{K}^{(k)}$ be an arbitrary compact operator, and set $R := \max\{1, \|\OJ\|_{\mathrm{op}}\}$. Since the family $\{\OJ_m^{(k)}\}_{m \geq 1}$ is dense in the unit ball of $\mathcal{K}^{(k)}$, for any given $\varepsilon > 0$, we can find an index $m$ such that 
\begin{equation}
    \| R^{-1}\OJ - \OJ_m^{(k)} \|_{\mathrm{op}} < \varepsilon.
\end{equation}
Using the linearity of the trace and H\"{o}lder's inequality for operators ($|\operatorname{Tr}(AB)| \leq \|A\|_{\mathrm{op}}\|B\|_{\mathfrak{S}_1}$), alongside the triangle inequality, we find
\begin{align*}
    \left| \operatorname{Tr}_{1,\ldots,k} \left( \OJ(\gamma_n - \gamma) \right) \right| 
    &= R \left| \operatorname{Tr}_{1,\ldots,k} \left( R^{-1}\OJ(\gamma_n - \gamma) \right) \right| \\*
    &\leq R \left| \operatorname{Tr}_{1,\ldots,k} \left( \OJ_m^{(k)}(\gamma_n - \gamma) \right) \right|\\*
    &\quad + R \left| \operatorname{Tr}_{1,\ldots,k} \left( (R^{-1}\OJ - \OJ_m^{(k)})(\gamma_n - \gamma) \right) \right| \\*
    &\leq R \left| \operatorname{Tr}_{1,\ldots,k} \left( \OJ_m^{(k)}(\gamma_n - \gamma) \right) \right|\\*
    &\quad + R \| R^{-1}\OJ - \OJ_m^{(k)} \|_{\mathrm{op}} \|\gamma_n - \gamma\|_{\mathfrak{S}_1}.
\end{align*}
Since $\gamma_n, \gamma \in \mathcal{D}_k$ are positive density operators, their trace norms are bounded by $1$, yielding $\|\gamma_n - \gamma\|_{\mathfrak{S}_1} \leq 2$. Substituting this bound gives
\begin{equation}
    \left| \operatorname{Tr}_{1,\ldots,k} \left( \OJ(\gamma_n - \gamma) \right) \right| \leq R \left| \operatorname{Tr}_{1,\ldots,k} \left( \OJ_m^{(k)}(\gamma_n - \gamma) \right) \right| + 2R\varepsilon.
\end{equation}
Taking the limit superior as $n \to \infty$, the first term on the right-hand side vanishes, leaving
\begin{equation}
     \limsup_{n\to\infty} \left| \operatorname{Tr} \left( \OJ(\gamma_n - \gamma) \right) \right| \leq 2R\varepsilon
\end{equation}
Since $\varepsilon > 0$ can be chosen arbitrarily small, the limit superior must be zero, proving the weak-$*$ convergence against $\OJ$.

$(\Leftarrow)$ Conversely, assume $\operatorname{Tr}_{1,\ldots,k}(\OJ\gamma_n) \to \operatorname{Tr}_{1,\ldots,k}(\OJ\gamma)$ for every $\OJ \in \mathcal{K}^{(k)}$. It follows directly that each term in the definition of the metric converges to zero
\begin{equation}
    \lim_{n\to\infty} \left| \operatorname{Tr}_{1,\ldots,k} \left( \OJ_m^{(k)}(\gamma_n - \gamma) \right) \right| = 0
\end{equation}
for each $m \geq 1$. Furthermore, each summand in the metric series is uniformly bounded by $2^{-m} \cdot 2 = 2^{1-m}$, which forms a convergent geometric series $\sum_{m=1}^{\infty} 2^{1-m} = 2$. Applying the Dominated Convergence Theorem for series allows us to exchange the limit and the infinite summation, yielding
\begin{equation}
     \lim_{n\to\infty} d_k(\gamma_n, \gamma) = 0.
\end{equation}
\end{proof}

\begin{definition}[Hartree hierarchy path space and product metric]
\label{def:infinite-hierarchy-path-space}
Let $T > 0$, and let $C([0,T], \mathcal{D}_k; d_k)$ denote the space of $d_k$-continuous curves mapping the time interval $[0,T]$ into the metric space of $k$-particle density matrices $(\mathcal{D}_k, d_k)$. This space is equipped with the uniform metric
\begin{equation}
\label{eq:path-metric-dk-infty}
    d_{k,T}\left( \gamma^{(k)}(\cdot), \widetilde\gamma^{(k)}(\cdot) \right) := \sup_{t\in[0,T]} d_k\left( \gamma_t^{(k)}, \widetilde\gamma_t^{(k)} \right).
\end{equation}
The \emph{infinite-hierarchy path space} $\mathfrak{H}_T$ is the metric space defined by the Cartesian product
\begin{equation}
    \mathfrak{H}_T := \prod_{k=1}^{\infty} C([0,T], \mathcal{D}_k; d_k),
\end{equation}
and equipped with the product metric
\begin{equation}
\label{eq:hierarchy-product-metric}
    d_T(\Gamma, \widetilde\Gamma) := \sum_{k=1}^{\infty} 2^{-k} \frac{d_{k,T}\left( \gamma^{(k)}(\cdot), \widetilde\gamma^{(k)}(\cdot) \right)}{1 + d_{k,T}\left( \gamma^{(k)}(\cdot), \widetilde\gamma^{(k)}(\cdot) \right)},
\end{equation}
where $\Gamma = \left\{\gamma^{(k)}(\cdot)\right\}_{k=1}^\infty$ and $\widetilde\Gamma = \left\{\widetilde\gamma^{(k)}(\cdot)\right\}_{k=1}^\infty$ are elements of $\mathfrak{H}_T$. The metric $d_T$ naturally induces the product topology on $\mathfrak{H}_T$.
\end{definition} 

\begin{lemma}[Equicontinuity against compact observables]
\label{lem:equicontinuity-compact-observables}
Fix $k\geq1$ and $m\geq1$, and define
\begin{equation}
    f_{N,m}^{(k)}(t) := \Tr_{1,\ldots,k} \prc{ \OJ_m^{(k)} \gamma_{N,t}^{(k)} }, \qquad N\geq k.
\end{equation}
Then the family $\{f_{N,m}^{(k)}\}_{N\geq k}$ is uniformly bounded and uniformly Lipschitz on the closed interval $[0,T]$.
\end{lemma}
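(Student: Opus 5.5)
The plan is to bound $f_{N,m}^{(k)}$ directly and to read its Lipschitz constant off the time-integrated weak BBGKY hierarchy of \lemref{lem:finite-N-BBGKY-hierarchy}.

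\emph{Uniform boundedness.} This is immediate. By \lemref{lem:reduced-marginals-are-density-matrices}, $\gamma_{N,t}^{(k)}\in\mathcal D_k$, so $\|\gamma_{N,t}^{(k)}\|_{\mathfrak S_1}=1$. Since $\|\OJ_m^{(k)}\|_{\mathrm{op}}\leq1$, trace duality gives $|f_{N,m}^{(k)}(t)|\leq1$ for all $t$ and all $N\geq k$.

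\emph{Lipschitz bound.} The tests $\OJ_m^{(k)}$ are finite rank, and by \lemref{lem:regular-compact-test-family}(ii) their ranges and the ranges of their adjoints lie in $\cD(\dch_k)$. They are therefore admissible in \lemref{lem:finite-N-BBGKY-hierarchy}. Hence $f_{N,m}^{(k)}$ is locally absolutely continuous, and for $0\leq s\leq t\leq T$
\begin{equation}
    f_{N,m}^{(k)}(t)-f_{N,m}^{(k)}(s)=-i\int_s^t\Big(\Tr\big([\OJ_m^{(k)},\dch_k]\gamma_{N,r}^{(k)}\big)+R_{N,r}\Big)\,dr .
\end{equation}
Here $R_{N,r}$ collects the two-body, collision and source contributions. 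We bound each term uniformly in $r\in[0,T]$ and $N\geq k$.

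\begin{enumerate}
\item \emph{Kinetic term.} By \lemref{lem:regular-compact-test-family}(iii), $[\OJ_m^{(k)},\dch_k]$ extends to a bounded operator whose norm $c_m$ depends only on $m$. The contribution is therefore at most $c_m$.
\item \emph{Internal two-body term.} Each commutator satisfies $\|[\OJ_m^{(k)},v_{ij}]\|_{\mathrm{op}}\leq2\|v\|_{L^\infty}$. This gives at most $\frac{\lam}{N}\binom k2 2\|v\|_\infty\leq\lam k^2\|v\|_\infty$.
\item \emph{Collision term.} Move the commutator onto the test: the $j$-th summand equals $\Tr\big([\OJ_m^{(k)}\otimes\1,v(x_j-x_{k+1})]\gamma_{N,r}^{(k+1)}\big)$. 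This is bounded by $2\|v\|_\infty$ because $\gamma_{N,r}^{(k+1)}$ has unit trace, so the total is at most $2\lam k\|v\|_\infty$. The one point needing care is that $\OJ_m^{(k)}\otimes\1$ is bounded but not compact. This is harmless here, since trace duality only requires a bounded operator against a trace-class one. The rearrangement is justified by the zero-extension convention and the partial-trace identity for bounded $\OB^{(k)}$.
\item \emph{Source term.} \lemref{lem:uniform-field-source-bound} gives $|\eta\Tr(\OJ_m^{(k)}\cS_{N,r}^{(k)})|\leq\eta C_{k,\OJ_m,T}$, with the explicit $N$-independent constant \eqref{eq:explicit-uniform-field-source-constant} and $\|\OJ_m^{(k)}\|_{\mathrm{op}}\leq1$.
\end{enumerate}

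Summing these bounds,
\begin{equation}
    |f_{N,m}^{(k)}(t)-f_{N,m}^{(k)}(s)|\leq L_{k,m,T}|t-s|,
\end{equation}
with
\begin{equation}
    L_{k,m,T}:=c_m+3\lam k^2\|v\|_{L^\infty}+4k\eta\|\widehat w\,\widehat u_0\|_{L^1}+4k\eta^2T\||\widehat w|^2\|_{L^1}.
\end{equation}
This constant is independent of $N\geq k$, which proves the claim.

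\emph{Main obstacle.} The source term is the only piece that is not a bounded-operator-against-density-matrix estimate. Its uniformity in $N$ is exactly what \lemref{lem:uniform-field-source-bound} supplies, through the Duhamel representation of $a_\xi\Psi_{N,t}$. The remaining work is bookkeeping: justify pairing the hierarchy with the fixed test pointwise in time, using the absolute continuity in \lemref{lem:finite-N-BBGKY-hierarchy}, and keep the zero-extension and bosonic-compression conventions consistent in the collision and source pairings.
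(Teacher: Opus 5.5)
Your proposal is correct and follows essentially the same route as the paper. Both arguments bound $|f_{N,m}^{(k)}|\leq 1$ by trace duality, then test the integrated BBGKY hierarchy against $\OJ_m^{(k)}$. The kinetic commutator is controlled by \lemref{lem:regular-compact-test-family}, the internal and collision terms by $\|v\|_{L^\infty}$ after moving the commutator onto $\OJ_m^{(k)}\otimes\1$, and the source term by the $N$-independent constant of \lemref{lem:uniform-field-source-bound}. Your explicit Lipschitz constant differs from the paper's $L_{k,m,T}$ only in how the two-body and source constants are packaged.
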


\begin{proof}
By \lemref{lem:reduced-marginals-are-density-matrices}, we have $\norm{\gamma_{N,t}^{(k)}}_{\mathfrak{S}_1^{(k)}}=1$. Since $\norm{\OJ_m^{(k)}}_{\mathrm{op}}\leq1$, it immediately follows that $\left| f_{N,m}^{(k)}(t) \right| \leq 1$. Let $0\leq s<t\leq T$. Testing the BBGKY hierarchy against the compact operator $\OJ_m^{(k)}$ gives
\begin{equation}
\begin{aligned}
    &f_{N,m}^{(k)}(t) - f_{N,m}^{(k)}(s) 
    = -i \int_s^t \Tr_{1,\ldots,k} \prc{ \left[ \OJ_m^{(k)}, \dch_k \right] \gamma_{N,\tau}^{(k)} } d\tau \\
    &\quad - i\frac{\lam}{N} \sum_{1\leq i<j\leq k} \int_s^t \Tr_{1,\ldots,k} \prc{ \left[ \OJ_m^{(k)}, v(x_i-x_j) \right] \gamma_{N,\tau}^{(k)} } d\tau \\
    &\quad - i\lam\frac{N-k}{N} \sum_{i=1}^{k} \int_s^t \Tr_{1,\ldots,k+1} \prc{ \left[ \OJ_m^{(k)}\otimes\1, v(x_i-x_{k+1}) \right] \gamma_{N,\tau}^{(k+1)} } d\tau \\
    &\quad - i \eta \int_s^t \Tr_{1,\ldots,k} \prc{ \OJ_m^{(k)} \mathcal{S}_{N,\tau}^{(k)} } d\tau.
\end{aligned}
\end{equation}
When $N=k$, the third term is absent. The identity for the third term follows directly from the definition of the partial trace and the cyclicity of the trace
\begin{equation}
\begin{aligned}
    &\Tr_{\h^{\otimes k}}
    \left(
       \OJ_m^{(k)}
       \Tr_{k+1}
       [v(x_i-x_{k+1}),\gamma_{N,\tau}^{(k+1)}]
    \right)
    \\
    &\qquad=
    \Tr_{\h^{\otimes(k+1)}}
    \left(
       [\OJ_m^{(k)}\otimes\1,
          v(x_i-x_{k+1})]
       \gamma_{N,\tau}^{(k+1)}
    \right).
\end{aligned}
\end{equation}
Define the finite operator-norm constants
\begin{equation}
    A_{k,m}^{(0)} := \norm{ [\OJ_m^{(k)},\dch_k] }_{\mathrm{op}}, \quad \text{and} \quad A_{k,m}^{(v)} := \sum_{1\leq i<j\leq k} \norm{ [\OJ_m^{(k)},v(x_i-x_j)] }_{\mathrm{op}}.
\end{equation}
By \lemref{lem:regular-compact-test-family}, these constants are bounded. Moreover, we have the bound $\norm{ [\OJ_m^{(k)}\otimes\1,v(x_i-x_{k+1})] }_{\mathrm{op}} \leq 2\norm{v}_{L^\infty}$. Utilizing the trace-one normalization of the marginals and the uniform source bounds of \lemref{lem:uniform-field-source-bound}, we obtain the Lipschitz bound
\begin{equation}
    \left| f_{N,m}^{(k)}(t) - f_{N,m}^{(k)}(s) \right| \leq L_{k,m,T}|t-s|,
\end{equation}
where the Lipschitz constant is independent of $N$,
\begin{equation}
     L_{k,m,T} := A_{k,m}^{(0)} + \lam A_{k,m}^{(v)} + 2 \lam k\norm{v}_{L^\infty} + \eta\, C_{k,\OJ_m^{(k)},T}. 
\end{equation}
\end{proof}

\begin{theorem}[Compactness of the BBGKY hierarchy]
\label{thm:compactness-bbgky-sequence}
Assume \hyporef{hyp:standing-hypotheses}.
Use the source bound of
\lemref{lem:uniform-field-source-bound}
and the independent first-energy estimate of
\corref{cor:one-particle-energy-bound}. Let the marginals be defined from normalized bosonic solutions as in \lemref{lem:reduced-marginals-are-density-matrices}. Then, there exist a subsequence $N_\ell\to\infty$ such that for every $k\geq1$ the sequence $\gamma_{N_\ell,\cdot}^{(k)}$ converges weakly-$*$ to
\begin{equation}
    \gamma_{\infty,\cdot}^{(k)} \in C([0,T],\mathcal{D}_k;d_k)
\end{equation}
that is
\begin{equation}
\label{eq:uniform-weak-star-compactness-conclusion}
    \lim_{\ell\to\infty} \sup_{t\in[0,T]} \left| \Tr_{\Hkb} \prc{ \OJ^{(k)} \prc{ \gamma_{N_\ell,t}^{(k)} - \gamma_{\infty,t}^{(k)} } } \right| = 0
\end{equation}
for every $\OJ^{(k)} \in\mathcal{K}(\Hkb)$. For fixed $k$, the sequence $\gamma_{N_\ell,t}^{(k)}$ is evaluated once $N_\ell\geq k$.
\end{theorem}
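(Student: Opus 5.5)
The plan is an Arzelà--Ascoli argument in the metric spaces $(\mathcal D_k,d_k)$, followed by a Cantor diagonal extraction over $k$. A final density step upgrades convergence against the countable family $\{\OJ_m^{(k)}\}$ to convergence against arbitrary compact $\OJ^{(k)}$, uniformly in $t$.

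Fix $k$. By \lemref{lem:equicontinuity-compact-observables}, each scalar family $\{f_{N,m}^{(k)}\}_{N\ge k}$ is uniformly bounded by $1$ and uniformly Lipschitz on $[0,T]$, with constant $L_{k,m,T}$. The source term there is controlled by \lemref{lem:uniform-field-source-bound}. Arzelà--Ascoli then gives, for each $m$, a uniformly convergent subsequence. A diagonal extraction over the countable index set $(k,m)\in\N^2$ produces one subsequence $N_\ell$ along which $f_{N_\ell,m}^{(k)}\to f_m^{(k)}$ uniformly on $[0,T]$ for every $k,m$. For $t$ fixed, the maps $\OJ_m^{(k)}\mapsto f_m^{(k)}(t)$ extend by density and linearity to a bounded linear functional on $\mathcal K^{(k)}$ of norm at most $1$. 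By the duality $\mathfrak S_1^{(k)}=(\mathcal K^{(k)})^*$, this functional is represented by an operator $\gamma_{\infty,t}^{(k)}$ with $\gamma_{\infty,t}^{(k)}\ge0$ (positivity is preserved under weak-$*$ limits against positive finite-rank tests). The $\varepsilon/3$ argument of \lemref{lem:metric-weak-star-topology} then yields convergence against every $\OJ\in\mathcal K^{(k)}$, uniformly in $t$. Uniformity holds because $\|\gamma_{N,t}^{(k)}-\gamma_{\infty,t}^{(k)}\|_{\mathfrak S_1}\le2$ for all $t$ and $N$.

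The main obstacle is showing $\Tr\gamma_{\infty,t}^{(k)}=1$, i.e.\ no loss of mass at infinity, so that the limit lies in $\mathcal D_k$. Only then do \lemref{lem:metric-weak-star-topology} and the continuity claim apply. Here I would use the first-energy bound \corref{cor:one-particle-energy-bound} with the spectral cutoffs $P_M,Q_M$. By bosonic symmetry and a union bound on $1-P_M^{\otimes k}\le\sum_{j}Q_M^{(j)}$, we get $\Tr\bigl((1-P_M^{\otimes k})\gamma_{N,t}^{(k)}\bigr)\le k\,\Tr(Q_M\gamma_{N,t}^{(1)})\le kM^{-2}\Tr(\OT\gamma^{(1)}_{N,t}\OT)\le 2k(C_0+c_*-1)M^{-2}$, uniformly in $N$ and $t$. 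Since $P_M^{\otimes k}$ is finite rank, hence compact, passing to the limit gives $\Tr(P_M^{\otimes k}\gamma_{\infty,t}^{(k)})\ge1-CkM^{-2}$. Letting $M\to\infty$ gives trace $1$, because the trace is also at most $1$ by lower semicontinuity.

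Finally, continuity $t\mapsto\gamma_{\infty,t}^{(k)}$ in $d_k$ follows because each $f_m^{(k)}$ is a uniform limit of $L_{k,m,T}$-Lipschitz functions, and the series defining $d_k$ converges uniformly. This places $\gamma_{\infty,\cdot}^{(k)}\in C([0,T],\mathcal D_k;d_k)$ for every $k$ along the single subsequence $N_\ell$, which is the claim.
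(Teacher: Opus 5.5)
Your proposal is correct and follows essentially the paper's route. The paper extracts weak-$*$ limits at rational times via Banach--Alaoglu and then reproduces Arzel\`a--Ascoli by hand using the equicontinuity lemma, while you apply Arzel\`a--Ascoli directly to the scalar functions $f_{N,m}^{(k)}$; the duality construction, the no-loss-of-trace argument with $P_M^{\otimes k}$ and the first-energy bound, the continuity in $d_k$, and the density upgrade to all compact tests uniformly in $t$ are the same as in the paper.
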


\begin{proof}
Let $\mathbb{D} := \mathbb{Q} \mcap [0,T]$. For every fixed $k\geq1$ and every $t\in\mathbb{D}$, the family of operators $\{ \gamma_{N,t}^{(k)} : N\geq k \}$ is contained entirely within the closed unit ball of $\mathfrak{S}_1(\Hkb) = \mathcal{K}(\Hkb)^*$, owing to the normalization $\norm{ \gamma_{N,t}^{(k)} }_{\mathfrak{S}_1^{(k)}} = \Tr_{\Hkb}( \gamma_{N,t}^{(k)} ) = 1$. By the Banach--Alaoglu theorem, this closed unit ball is compact in the weak-$*$ topology $\sigma(\mathfrak{S}_1(\Hkb), \mathcal{K}(\Hkb))$. Furthermore, since $\Hkb$ is separable, its compact operator predual $\mathcal{K}(\Hkb)$ is also separable, since the weak-$*$ topology is metrizable on the closed unit ball. Thus, this ball is weak-$*$ sequentially compact. Because the set $\{ (k,t): k\geq1, \, t\in\mathbb{D} \}$ is countable, a standard diagonal extraction argument yields a subsequence $N_\ell\to\infty$ and, for every $k\geq1$ and $t\in\mathbb{D}$, an operator $\widetilde{\gamma}_{\infty,t}^{(k)} \in \mathfrak{S}_1(\Hkb)$ such that
\begin{equation}
\label{eq:weak-star-convergence-rational-times}
    \lim_{\ell\to\infty} \Tr_{\Hkb} \prc{ \OJ^{(k)} \gamma_{N_\ell,t}^{(k)} } = \Tr_{\Hkb} \prc{ \OJ^{(k)} \widetilde{\gamma}_{\infty,t}^{(k)} }
\end{equation}
for every test observable $\OJ^{(k)} \in\mathcal{K}(\Hkb)$. For every fixed pair $(k,m)$, we define the sequence evaluated against the dense family as
\begin{equation}
     h_m^{(k)}(t) := \Tr_{\Hkb} \prc{ \OJ_m^{(k)} \widetilde{\gamma}_{\infty,t}^{(k)} }\quad \text{ for} \quad  t\in\mathbb{D}.
\end{equation}
By \eqref{eq:weak-star-convergence-rational-times}, we have the pointwise convergence
\begin{equation}
\label{eq:pointwise-convergence-rational-times}
    f_{N_\ell,m}^{(k)}(t) \longrightarrow h_m^{(k)}(t) \quad \text{for every } t\in\mathbb{D}.
\end{equation}
By \lemref{lem:equicontinuity-compact-observables}, the sequence of functions $\{ f_{N,m}^{(k)} : N\geq k \}$ is uniformly equicontinuous on $[0,T]$. Thus, for every $\varepsilon>0$, there exists a $\delta>0$, independent of $N$, such that $| f_{N,m}^{(k)}(t) - f_{N,m}^{(k)}(s) | < \varepsilon$ whenever $|t-s|<\delta$. Restricting to $s,t\in\mathbb{D}$ and passing to the limit $\ell\to\infty$ yields $| h_m^{(k)}(t) - h_m^{(k)}(s) | \leq \varepsilon$ whenever $|t-s|<\delta$. This guarantees that $h_m^{(k)}$ is uniformly continuous on the dense subset $\mathbb{D} \subset [0,T]$. It therefore admits a unique continuous extension $g_m^{(k)} \in C([0,T])$. We now prove that the pointwise convergence $f_{N_\ell,m}^{(k)} \to g_m^{(k)}$ is, in fact, uniform on $[0,T]$. Fix $\varepsilon>0$. By the uniform equicontinuity of $\{f_{N,m}^{(k)}\}_{N\geq k}$ and the uniform continuity of $g_m^{(k)}$, we can select $\delta>0$ such that $| f_{N,m}^{(k)}(t) - f_{N,m}^{(k)}(s) | < \frac{\varepsilon}{3}$ and $| g_m^{(k)}(t) - g_m^{(k)}(s) | < \frac{\varepsilon}{3}$ whenever $|t-s|<\delta$. Because $[0,T]$ is compact and $\mathbb{D}$ is dense, we can select a finite net $t_1,\ldots,t_R \in \mathbb{D}$ such that for every $t\in[0,T]$, there exists an index $r$ satisfying $|t-t_r|<\delta$. Due to the pointwise convergence on $\mathbb{D}$, there exists an $\ell_0\geq1$ such that $\max_{1\leq r\leq R} | f_{N_\ell,m}^{(k)}(t_r) - g_m^{(k)}(t_r) | < \frac{\varepsilon}{3}$ for all $\ell\geq\ell_0$. Consequently, for any $t\in[0,T]$, we select its nearest net point $t_r$ to find
\begin{align*}
    \left| f_{N_\ell,m}^{(k)}(t) - g_m^{(k)}(t) \right| 
    &\leq \left| f_{N_\ell,m}^{(k)}(t) - f_{N_\ell,m}^{(k)}(t_r) \right| + \left| f_{N_\ell,m}^{(k)}(t_r) - g_m^{(k)}(t_r) \right|\\
    &\quad + \left| g_m^{(k)}(t_r) - g_m^{(k)}(t) \right| < \varepsilon.
\end{align*}
Therefore, $\lim_{\ell\to\infty} \sup_{t\in[0,T]} \left| f_{N_\ell,m}^{(k)}(t) - g_m^{(k)}(t) \right| = 0$.

Let
\begin{equation}
    \mathscr{F}_k := \operatorname{span}_{\C} \{ \OJ_m^{(k)} : m\geq1 \},
\end{equation}
which forms a norm-dense complex linear subspace of $\mathcal{K}(\Hkb)$. For fixed $t\in[0,T]$, define the functional
\begin{equation}
    \ell_{k,t}(\OJ) := \lim_{\ell\to\infty} \Tr_{\Hkb} \prc{ \OJ\gamma_{N_\ell,t}^{(k)} }, \qquad \OJ \in\mathscr{F}_k.
\end{equation}
This limit is well-defined because $\OJ$ is a finite linear combination of the $\OJ_m^{(k)}$. Moreover, $|\ell_{k,t}(\OJ)| \leq \norm{\OJ}_{\mathrm{op}}$. Thus, $\ell_{k,t}$ extends uniquely to a bounded linear functional on $\mathcal{K}(\Hkb)$ with norm at most one. By trace-class duality, there exists a unique operator $\gamma_{\infty,t}^{(k)} \in \mathfrak{S}_1(\Hkb)$ satisfying
\begin{equation}
\label{eq:definition-of-compactness-limit}
    \ell_{k,t}(\OJ) = \Tr_{\Hkb} \prc{ \OJ\gamma_{\infty,t}^{(k)} }
\end{equation}
for every compact $\OJ$, with $\norm{\gamma_{\infty,t}^{(k)}}_{\mathfrak{S}_1^{(k)}} \leq 1$. For every non-negative finite-rank test operator $K^{(k)}$, we have $\Tr_{\Hkb} ( K^{(k)}\gamma_{\infty,t}^{(k)} ) = \lim_{\ell\to\infty} \Tr_{\Hkb} ( K^{(k)}\gamma_{N_\ell,t}^{(k)} ) \geq 0$, which confirms that $\gamma_{\infty,t}^{(k)} \geq 0$. 

Next, we establish that no trace is lost in the limit.
Fix $k\geq1$ and $t\in[0,T]$. For $M>0$, set
\begin{equation}
    P_M:=\mathbf1_{\{\OT\leq M\}},
    \qquad
    Q_M:=I-P_M.
\end{equation}
Compact resolvent of $\dch_0$ implies that $P_M$
has finite rank. Moreover, $P_M^{\otimes k}$
preserves the bosonic subspace and is a finite-rank
projection on $\Hkb$.

On the full tensor product, the commuting coordinate
projections satisfy
\begin{equation}
    I-P_M^{\otimes k}
    \leq
    \sum_{j=1}^k
       I^{\otimes(j-1)}\otimes Q_M
       \otimes I^{\otimes(k-j)}.
\end{equation}
Both sides preserve the bosonic subspace, so the
inequality also holds there. Furthermore,
\begin{equation}
    Q_M\leq M^{-2}\OT^2
\end{equation}
as nonnegative quadratic forms.

Consequently, positivity, symmetry, and consistency
of the finite-$N$ marginals give
\begin{equation}
\begin{aligned}
    1-\Tr_{\Hkb}\!\left(
       P_M^{\otimes k}\gamma_{N,t}^{(k)}
    \right)
    &\leq
    k\operatorname{Tr}\!\left(
       Q_M\gamma_{N,t}^{(1)}
    \right)\\
    &\leq
    \frac{k}{M^2}
    \operatorname{Tr}\!\left(
       \OT\gamma_{N,t}^{(1)}\OT
    \right)\\
    &\leq
    \frac{2k(C_0+c_*-1)}{M^2},
\end{aligned}
\end{equation}
for every sufficiently large $N\geq k$, where the
last inequality follows from
\eqref{eq:one-particle-energy-bound}.
The inequality involving the unbounded operator
$\OT^2$ is justified by bounded spectral truncation
and monotone convergence.

Since $N_\ell\to\infty$, this estimate applies
along the extracted subsequence for all sufficiently
large $\ell$. Testing the weak-$*$ convergence
against the compact operator $P_M^{\otimes k}$
therefore yields
\begin{equation}
    \Tr_{\Hkb}\!\left(
       P_M^{\otimes k}\gamma_{\infty,t}^{(k)}
    \right)
    \geq
    1-\frac{2k(C_0+c_*-1)}{M^2}.
\end{equation}
As $M\to\infty$, the projections
$P_M^{\otimes k}$ increase strongly to the identity.
Because $\gamma_{\infty,t}^{(k)}$ is positive
and trace class, monotone convergence gives
\begin{equation}
    \Tr_{\Hkb}\!\left(
       \gamma_{\infty,t}^{(k)}
    \right)\geq1.
\end{equation}
On the other hand, positivity and the previously
established trace-norm bound imply
\begin{equation}
    \Tr_{\Hkb}\!\left(
       \gamma_{\infty,t}^{(k)}
    \right)
    =
    \bigl\|\gamma_{\infty,t}^{(k)}\bigr\|_{\mathfrak S_1}
    \leq1.
\end{equation}
Hence
\begin{equation}
    \Tr_{\Hkb}\!\left(
       \gamma_{\infty,t}^{(k)}
    \right)=1.
\end{equation}
Since $k$ and $t$ were arbitrary,
$\gamma_{\infty,t}^{(k)}\in\mathcal D_k$
for every $k\geq1$ and $t\in[0,T]$.

For any $\OJ\in\mathscr{F}_k$, the scalar function $t \mapsto \Tr_{\Hkb} ( \OJ\gamma_{\infty,t}^{(k)} )$ is continuous. For an arbitrary compact $\OJ\in\mathcal{K}(\Hkb)$, we can approximate it in operator norm by elements of $\mathscr{F}_k$. Since
\begin{equation}
    \sup_{t\in[0,T]} \norm{ \gamma_{\infty,t}^{(k)} }_{\mathfrak{S}_1^{(k)}} = 1,
\end{equation} the corresponding sequence of scalar functions converges uniformly. Therefore, the map $t\mapsto\gamma_{\infty,t}^{(k)}$ is weak-$*$ continuous and subsequently $d_k$-continuous by \lemref{lem:metric-weak-star-topology}.

Finally, let $\OJ\in\mathcal{K}(\Hkb)$ and choose $\OJ_r\in\mathscr{F}_k$ such that 
\begin{equation}
    \norm{\OJ-\OJ_r}_{\mathrm{op}}\to0.
\end{equation}
Then, by adding and subtracting the approximation,
\begin{equation}
\begin{aligned}
    \sup_{t\in[0,T]} \left| \Tr_{\Hkb} \prc{ \OJ \prc{ \gamma_{N_\ell,t}^{(k)} - \gamma_{\infty,t}^{(k)} } } \right| 
    &\leq 2\norm{\OJ-\OJ_r}_{\mathrm{op}}\\
    &\quad + \sup_{t\in[0,T]} \left| \Tr_{\Hkb} \prc{ \OJ_r \prc{ \gamma_{N_\ell,t}^{(k)} - \gamma_{\infty,t}^{(k)} } } \right|.
\end{aligned}
\end{equation}
For any fixed $r$, the final supremum term tends to zero as $\ell\to\infty$. Sending $r\to\infty$ drives the approximation error to zero, which proves \eqref{eq:uniform-weak-star-compactness-conclusion}.
\end{proof}

\subsection{Normalization and admissibility}

\begin{lemma}[Upgrade from weak-$*$ to strong $\fkS_1$-convergence]
\label{lem:weak-str-conv-and-traceclass-imply-trace-norm-conv}
Let $k\geq 1$, and let $\gamma_n^{(k)},\gamma^{(k)} \in \fkS_1(\Hkb)$ be a sequence of positive trace-class operators satisfying the normalization condition
\begin{equation}
\label{eq:trace-one-assumption-for-trace-norm-upgrade}
    \Tr \left( \gamma_n^{(k)} \right) = \Tr \left( \gamma^{(k)} \right) = 1
\end{equation}
for all $n\geq 1$. Assume that $\gamma_n^{(k)}$ converges weak-$*$ to $\gamma^{(k)}$ in $\fkS_1(\Hkb)$, that is,
\begin{equation}
\label{eq:weak-star-tested-against-compacts}
    \Tr \left[ \OJ^{(k)} \gamma_n^{(k)} \right] \longrightarrow \Tr \left[ \OJ^{(k)} \gamma^{(k)} \right]
\end{equation}
for every compact operator $\OJ^{(k)}$. Then, $\gamma_n^{(k)}$ converges to $\gamma^{(k)}$ strongly in the trace-norm topology
\begin{equation}
\label{eq:trace-norm-convergence-from-weak-star-and-trace}
    \norm{ \gamma_n^{(k)} - \gamma^{(k)} }_{\fkS_1(\Hkb)} \longrightarrow 0 \quad \text{as } n\to\infty.
\end{equation}
Consequently, convergence holds against all bounded operators; that is, for every $B^{(k)}\in\cB(\Hkb)$,
\begin{equation}
\label{eq:convergence-against-all-bounded-operators}
    \Tr \left[ \OB^{(k)}\gamma_n^{(k)} \right] \longrightarrow \Tr \left[ \OB^{(k)}\gamma^{(k)} \right] \quad \text{as } n\to\infty.
\end{equation}
\end{lemma}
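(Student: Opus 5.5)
The plan is the standard Gr\"umm/Simon-type argument: first show that no mass escapes to infinity, using a finite-rank spectral cutoff, and then use the fact that on a finite-dimensional range weak-$*$ convergence is the same as norm convergence. Fix $\varepsilon>0$. Since $\gamma^{(k)}$ is positive with unit trace, I choose a finite-rank orthogonal projection $P$ on $\Hkb$ with
\begin{equation}
    \Tr\bigl(P\gamma^{(k)}\bigr)\geq 1-\varepsilon .
\end{equation}
For instance, $P$ can be the spectral projection onto finitely many eigenvectors of $\gamma^{(k)}$, or $P_M^{\otimes k}$ restricted to $\Hkb$ with $M$ large. Set $Q:=\1-P$. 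Because $P$ is compact, the weak-$*$ hypothesis gives $\Tr(P\gamma_n^{(k)})\to\Tr(P\gamma^{(k)})$. Together with the trace normalization this yields $\Tr(Q\gamma_n^{(k)})=1-\Tr(P\gamma_n^{(k)})\leq 2\varepsilon$ for all large $n$. This step is the only one where the trace assumption is used, and it is what prevents loss of mass.

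Next I split the difference into four blocks,
\begin{equation}
    \gamma_n^{(k)}-\gamma^{(k)}
    =P(\gamma_n^{(k)}-\gamma^{(k)})P+P\gamma_n^{(k)}Q+Q\gamma_n^{(k)}P+Q\gamma_n^{(k)}Q-\bigl(P\gamma^{(k)}Q+Q\gamma^{(k)}P+Q\gamma^{(k)}Q\bigr),
\end{equation}
and estimate each term in trace norm.
\begin{itemize}
    \item The operator $P(\gamma_n^{(k)}-\gamma^{(k)})P$ acts on the finite-dimensional space $\operatorname{Ran}P$. Its matrix entries $\langle e_i,\gamma_n^{(k)}e_j\rangle$ are tested against the rank-one compact operators $\ket{e_j}\bra{e_i}$, so they converge. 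Since all norms on a finite-dimensional space are equivalent, its trace norm tends to $0$.
    \item For the positive diagonal blocks, $\|Q\gamma_n^{(k)}Q\|_{\fkS_1}=\Tr(Q\gamma_n^{(k)})\leq2\varepsilon$, and similarly $\|Q\gamma^{(k)}Q\|_{\fkS_1}\leq\varepsilon$.
    \item For the off-diagonal blocks, Hölder's inequality for Schatten norms applied to $P(\gamma_n^{(k)})^{1/2}\cdot(\gamma_n^{(k)})^{1/2}Q$ gives
    \begin{equation}
        \|P\gamma_n^{(k)}Q\|_{\fkS_1}\leq\|(\gamma_n^{(k)})^{1/2}P\|_{\fkS_2}\,\|(\gamma_n^{(k)})^{1/2}Q\|_{\fkS_2}\leq\Tr(Q\gamma_n^{(k)})^{1/2}\leq\sqrt{2\varepsilon},
    \end{equation}
    and the same bound holds for the block $Q\gamma_n^{(k)}P$ and for the corresponding blocks of $\gamma^{(k)}$.
\end{itemize}

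Adding these estimates gives $\limsup_n\|\gamma_n^{(k)}-\gamma^{(k)}\|_{\fkS_1}\leq C\sqrt{\varepsilon}$ with a universal constant $C$. Since $\varepsilon$ is arbitrary, this proves \eqref{eq:trace-norm-convergence-from-weak-star-and-trace}. The final claim \eqref{eq:convergence-against-all-bounded-operators} then follows directly from $|\Tr(B(\gamma_n^{(k)}-\gamma^{(k)}))|\leq\|B\|_{\mathrm{op}}\|\gamma_n^{(k)}-\gamma^{(k)}\|_{\fkS_1}$.

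The main obstacle is the off-diagonal blocks. A naive estimate bounds them only by $1$, so one needs the square-root Hölder bound above, which requires positivity of the $\gamma_n^{(k)}$. Everything else is routine.
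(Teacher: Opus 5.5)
Your proposal is correct and follows essentially the same route as the paper. Both choose a finite-rank spectral cutoff $P$ for $\gamma^{(k)}$, test the weak-$*$ convergence against $P$ and use the trace normalization to make $\Tr(Q\gamma_n^{(k)}Q)$ small, and then split into four blocks: the finite-dimensional $PP$ block, the off-diagonal blocks controlled by the Schatten--H\"older square-root bound, and the positive $QQ$ block controlled by its trace. The convergence against bounded operators then follows by trace duality.
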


\begin{proof}
Fix $\varepsilon>0$. Since $\gamma^{(k)}$ is a positive trace-class operator, the spectral theorem yields the trace-norm convergent representation
\begin{equation}
\label{eq:spectral-decomposition-of-limit-density}
    \gamma^{(k)} = \sum_{j\geq 1} \lambda_j |e_j\rangle\langle e_j|,
\end{equation}
where $\{e_j\}_{j\geq 1}\subset\Hkb$ is an orthonormal family and $\lambda_j \geq 0$ satisfy $\sum_{j\geq 1}\lambda_j = 1$. Choose $M\geq 1$ large enough so that the spectral tail satisfies $\sum_{j>M}\lambda_j < \varepsilon$. Define the finite rank orthogonal projection $P := \sum_{j=1}^{M} |e_j\rangle\langle e_j|$ onto the span of the first $M$ eigenvectors, and set $Q := \1-P$. By construction, we have
\begin{equation}
\label{eq:small-tail-of-limit-density-rigorous}
    \Tr \left( Q\gamma^{(k)}Q \right) = \sum_{j>M}\lambda_j < \varepsilon.
\end{equation}
Since $\dim(P\Hkb) = M < \infty$, the projection $P$ is compact. The weak-$*$ convergence assumption \eqref{eq:weak-star-tested-against-compacts} evaluated at $\OJ^{(k)} = P$ yields $\Tr ( P\gamma_n^{(k)} ) \to \Tr ( P\gamma^{(k)} )$. By the cyclicity of the trace and the orthogonality of the decomposition $\1 = P + Q$, we have
\begin{equation}
    \Tr(A) = \Tr(PAP) + \Tr(QAQ)
\end{equation}
for any trace-class operator $A$. Together with the normalization \eqref{eq:trace-one-assumption-for-trace-norm-upgrade}, this identity implies
\begin{equation}
\label{eq:convergence-of-Q-masses}
    \Tr \left( Q\gamma_n^{(k)}Q \right) = 1 - \Tr \left( P\gamma_n^{(k)}P \right) \longrightarrow 1 - \Tr \left( P\gamma^{(k)}P \right) = \Tr \left( Q\gamma^{(k)}Q \right).
\end{equation}
Hence, there exists an index $n_\varepsilon\geq 1$ such that 
\begin{equation}
\label{eq:small-tail-of-approximating-density-rigorous}
    \Tr \left( Q\gamma_n^{(k)}Q \right) < 2\varepsilon
\end{equation}
for all $n\geq n_\varepsilon$. We now utilize the algebraic block decomposition
\begin{equation}
\label{eq:four-block-decomposition}
\begin{aligned}
    \gamma_n^{(k)}-\gamma^{(k)} &= P \left( \gamma_n^{(k)}-\gamma^{(k)} \right) P + P \left( \gamma_n^{(k)}-\gamma^{(k)} \right) Q \\
    &\quad + Q \left( \gamma_n^{(k)}-\gamma^{(k)} \right) P + Q \left( \gamma_n^{(k)}-\gamma^{(k)} \right) Q.
\end{aligned}
\end{equation}
We estimate each block individually in the trace norm,
\begin{enumerate}
    \item \textbf{The $P$-$P$ block:} Let $E_{ji} := |e_j\rangle\langle e_i|$ for $1\leq i,j\leq M$. Since each $E_{ji}$ is compact, weak-$*$ convergence ensures that the matrix elements satisfy
    \begin{equation}
        \langle e_i, (\gamma_n^{(k)}-\gamma^{(k)})e_j \rangle \to 0 \quad \text{ as } \quad n\to\infty.
    \end{equation}
    Because $P\Hkb$ is finite-dimensional, component-wise convergence implies trace-norm convergence
    \begin{equation}
    \label{eq:PP-block-converges-in-trace-norm}
        \norm{ P \left( \gamma_n^{(k)}-\gamma^{(k)} \right) P }_{\fkS_1(\Hkb)} \longrightarrow 0 \quad \text{ as } \quad n\to\infty.
    \end{equation}
    
    \item \textbf{The off-diagonal blocks:} For any positive operator $A \geq 0$, factoring $PAQ = (PA^{1/2})(A^{1/2}Q)$ and applying the Schatten--Hölder inequality yields the standard estimate
    \begin{equation}
    \label{eq:off-diagonal-positive-estimate}
        \norm{PAQ}_{\fkS_1(\Hkb)} \leq \left[ \Tr(PAP) \right]^{1/2} \left[ \Tr(QAQ) \right]^{1/2}.
    \end{equation}
    Applying \eqref{eq:off-diagonal-positive-estimate} to $\gamma_n^{(k)}$ and $\gamma^{(k)}$ via bounds \eqref{eq:small-tail-of-limit-density-rigorous} and \eqref{eq:small-tail-of-approximating-density-rigorous} shows that for all $n \geq n_\varepsilon$,
    \begin{equation}
        \norm{P\gamma_n^{(k)}Q}_{\fkS_1} < \sqrt{2\varepsilon} \quad \text{and} \quad \norm{P\gamma^{(k)}Q}_{\fkS_1} < \sqrt{\varepsilon}.
    \end{equation}
    Triangle inequality and isometric invariance under adjoints yield
    \begin{equation}
    \label{eq:off-diagonal-combined}
        \norm{ P \left( \gamma_n^{(k)}-\gamma^{(k)} \right) Q }_{\fkS_1} = \norm{ Q \left( \gamma_n^{(k)}-\gamma^{(k)} \right) P }_{\fkS_1} < \sqrt{2\varepsilon} + \sqrt{\varepsilon}.
    \end{equation}
    
    \item \textbf{The $Q$-$Q$ block:} Since $\gamma_n^{(k)}$ and $\gamma^{(k)}$ are positive, the operators $Q\gamma_n^{(k)}Q$ and $Q\gamma^{(k)}Q$ are positive semidefinite. For positive operators, the trace norm coincides with the trace. The triangle inequality along with estimates \eqref{eq:small-tail-of-limit-density-rigorous} and \eqref{eq:small-tail-of-approximating-density-rigorous} yields
    \begin{equation}
    \label{eq:QQ-block-estimate}
        \norm{ Q \left( \gamma_n^{(k)}-\gamma^{(k)} \right) Q }_{\fkS_1} \leq \Tr\left( Q\gamma_n^{(k)}Q \right) + \Tr\left( Q\gamma^{(k)}Q \right) < 3\varepsilon.
    \end{equation}
\end{enumerate}
Applying the triangle inequality to the block decomposition \eqref{eq:four-block-decomposition} and taking the upper limit as $n\to\infty$ gives
\begin{equation}
    \limsup_{n\to\infty} \norm{ \gamma_n^{(k)} - \gamma^{(k)} }_{\fkS_1(\Hkb)} \leq 2\sqrt{2\varepsilon} + 2\sqrt{\varepsilon} + 3\varepsilon.
\end{equation}
Since $\varepsilon > 0$ was arbitrary, sending $\varepsilon \downarrow 0$ establishes \eqref{eq:trace-norm-convergence-from-weak-star-and-trace}. Finally, assertion \eqref{eq:convergence-against-all-bounded-operators} follows directly from the canonical duality pairing between $\fkS_1(\Hkb)$ and $\cB(\Hkb)$, satisfying $|\Tr(BA)| \leq \norm{B}_{\op{op}} \norm{A}_{\fkS_1}$. Setting $A = \gamma_n^{(k)} - \gamma^{(k)}$ delivers the required limit.
\end{proof}

\begin{proposition}[Admissibility of the Hartree hierarchy]
\label{prop:admissibility-limit}
Let $\set{N_\ell}_{\ell\geq1}$ be a sequence of positive integers such that $N_\ell\to\infty$ and let $\{\gamma_{\infty,t}^{(k)}\}_{k\geq1}$ be a limit hierarchy obtained from \thmref{thm:compactness-bbgky-sequence}. Then, for every $k\geq1$ and every $t\in[0,T]$,
\begin{equation}
\label{eq:limit-hierarchy-admissibility}
    \Tr_{k+1} \prc{ \gamma_{\infty,t}^{(k+1)} } = \gamma_{\infty,t}^{(k)}.
\end{equation}
\end{proposition}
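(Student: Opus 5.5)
The plan is to pass to the limit in the finite-$N$ consistency relation \eqref{eq:finite-N-marginal-consistency} of \lemref{lem:reduced-marginals-are-density-matrices}. The only delicate point is the test operator. Testing $\Tr_{k+1}(\gamma^{(k+1)}_{N_\ell,t})$ against a compact $\OJ^{(k)}$ produces $\OJ^{(k)}\otimes\1$, which is not compact, so weak-$*$ convergence of $\gamma^{(k+1)}_{N_\ell,t}$ cannot be applied directly. I will get around this either by a spectral cutoff argument or by first upgrading to trace-norm convergence.

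\textbf{Route via the trace-norm upgrade.} \thmref{thm:compactness-bbgky-sequence} already gives $\gamma^{(k)}_{\infty,t}\in\mathcal D_k$ for every $k$, since no trace is lost. Hence the hypotheses of \lemref{lem:weak-str-conv-and-traceclass-imply-trace-norm-conv} hold for each fixed $k$ and $t$, and I obtain
\begin{equation}
    \bigl\|\gamma^{(k+1)}_{N_\ell,t}-\gamma^{(k+1)}_{\infty,t}\bigr\|_{\fkS_1}\longrightarrow0 .
\end{equation}
The partial trace is contractive in trace norm, so this implies
\begin{equation}
    \Tr_{k+1}\gamma^{(k+1)}_{N_\ell,t}\longrightarrow\Tr_{k+1}\gamma^{(k+1)}_{\infty,t}
    \quad\text{in }\fkS_1(\Hkb).
\end{equation}
By \eqref{eq:finite-N-marginal-consistency}, once $N_\ell\geq k+1$ the left side equals $\gamma^{(k)}_{N_\ell,t}$. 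This sequence converges weak-$*$ to $\gamma^{(k)}_{\infty,t}$. Weak-$*$ limits are unique, so the two limits coincide and \eqref{eq:limit-hierarchy-admissibility} follows.

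\textbf{Direct cutoff argument.} Alternatively, I would test against $\OJ^{(k)}\otimes P_M$, which is compact because $P_M$ has finite rank. The error term is
\begin{equation}
    \Tr\bigl((\OJ^{(k)}\otimes Q_M)\gamma^{(k+1)}\bigr).
\end{equation}
I would bound it by Cauchy--Schwarz by $\|\OJ\|_{\mathrm{op}}\,\Tr(Q_M\gamma^{(1)})^{1/2}$. The energy bound \eqref{eq:one-particle-energy-bound} then makes this at most $C/M$, uniformly in $N$. The same bound holds for the limit, since the limit has trace one and $\Tr(P_M\gamma^{(1)}_{\infty,t})\geq 1-C/M^2$. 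Letting first $\ell\to\infty$ and then $M\to\infty$ gives the claim.

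The main obstacle is exactly this non-compactness of $\OJ\otimes\1$. Both routes resolve it using the tightness already proved in \thmref{thm:compactness-bbgky-sequence}, and I would present the first route because it is shortest.
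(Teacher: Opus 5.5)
Your first route is the paper's proof: the trace-one limits from \thmref{thm:compactness-bbgky-sequence} let \lemref{lem:weak-str-conv-and-traceclass-imply-trace-norm-conv} upgrade weak-$*$ to trace-norm convergence, and then contractivity of $\Tr_{k+1}$ together with finite-$N_\ell$ consistency finishes the argument. The only difference is that the paper closes with a triangle inequality using trace-norm convergence at level $k$ as well, where you invoke uniqueness of weak-$*$ limits; this is immaterial. In your alternative cutoff route, do not bound the limiting tail through $\gamma^{(1)}_{\infty,t}$, because identifying the last-variable marginal of $\gamma^{(k+1)}_{\infty,t}$ with $\gamma^{(1)}_{\infty,t}$ is itself a consistency statement; instead combine $\1^{\otimes k}\otimes Q_M\leq\1-P_M^{\otimes(k+1)}$, trace one, and the bound $\Tr(P_M^{\otimes(k+1)}\gamma^{(k+1)}_{\infty,t})\geq1-2(k+1)(C_0+c_*-1)/M^2$ from the compactness proof.
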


\begin{proof}
Fix $k\geq1$ and $t\in[0,T]$. By \thmref{thm:compactness-bbgky-sequence},
\begin{equation}
    \gamma_{N_\ell,t}^{(k)} \overset{*}{\rightharpoonup} \gamma_{\infty,t}^{(k)} \qquad \text{in} \qquad \mathfrak S_1(\Hkb) = \mathcal K(\Hkb)^*,
\end{equation}
and
\begin{equation}
    \gamma_{N_\ell,t}^{(k+1)} \overset{*}{\rightharpoonup} \gamma_{\infty,t}^{(k+1)} \qquad \text{in } \qquad  \mathfrak S_1(\h_{k+1}) = \mathcal K(\h_{k+1})^*.
\end{equation}
Moreover, for every $k\geq1$ and $t\in[0,T]$,
\begin{equation}
    \gamma_{N_\ell,t}^{(k)} \geq 0, \qquad \gamma_{\infty,t}^{(k)} \geq 0,
\end{equation}
with
\begin{equation}
    \Tr_{\Hkb} \prc{ \gamma_{N_\ell,t}^{(k)} } = \Tr_{\Hkb} \prc{ \gamma_{\infty,t}^{(k)} } = 1.
\end{equation}
Therefore, \lemref{lem:weak-str-conv-and-traceclass-imply-trace-norm-conv} gives that for every $k\geq1$ and $t\in[0,T]$,
\begin{equation}
\label{eq:trace-norm-convergence-k-for-admissibility}
    \norm{ \gamma_{N_\ell,t}^{(k)} - \gamma_{\infty,t}^{(k)} }_{\mathfrak S_1(\Hkb)} \longrightarrow 0
\end{equation}
as $\ell\to\infty$. We recall that the partial trace is contractive in the trace norm. More precisely, for every $\op{A}^{(k+1)} \in \mathfrak S_1(\h_{k+1})$, one has
\begin{equation}
\label{eq:partial-trace-contractive-trace-norm}
    \norm{ \Tr_{k+1} \prc{ \op{A}^{(k+1)} } }_{\mathfrak S_1(\Hkb)} \leq \norm{ \op{A}^{(k+1)} }_{\mathfrak S_1(\h_{k+1})}.
\end{equation}
Indeed, by trace-class duality and the defining property of the partial trace,
\begin{equation}
\begin{aligned}
    \norm{ \Tr_{k+1} \prc{ \op{A}^{(k+1)} } }_{\mathfrak S_1(\Hkb)} 
    &= \sup_{\substack{ B^{(k)}\in\mathcal B(\Hkb) \\ \norm{B^{(k)}}_{\mathrm{op}}\leq1 }} \left| \Tr_{\Hkb} \prc{ B^{(k)} \Tr_{k+1} \prc{ \op{A}^{(k+1)} } } \right| \\
    &= \sup_{\substack{ B^{(k)}\in\mathcal B(\Hkb) \\ \norm{B^{(k)}}_{\mathrm{op}}\leq1 }} \left| \Tr_{\fkH_{k+1}} \prc{ \left( B^{(k)}\otimes\1 \right) \op{A}^{(k+1)} } \right| \\
    &\leq \sup_{\substack{ B^{(k)}\in\mathcal B(\Hkb) \\ \norm{B^{(k)}}_{\mathrm{op}}\leq1 }} \norm{ B^{(k)}\otimes\1 }_{\mathrm{op}} \norm{ \op{A}^{(k+1)} }_{\mathfrak S_1(\h_{k+1})} \\
    &\leq \norm{ \op{A}^{(k+1)} }_{\mathfrak S_1(\h_{k+1})}.
\end{aligned}
\end{equation}
Thus,
\begin{equation}
    \norm{ \Tr_{k+1} \prc{ \op{A}^{(k+1)} } }_{\mathfrak S_1(\Hkb)}   \leq \norm{ \op{A}^{(k+1)} }_{\mathfrak S_1(\h_{k+1})}.
\end{equation}
Since $N_\ell\to\infty$, then $N_\ell\geq k+1$ for all sufficiently large $\ell$. Hence, finite-$N_\ell$ consistency gives
\begin{equation}
\label{eq:finite-N-consistency-used-for-limit-admissibility}
    \Tr_{k+1} \prc{ \gamma_{N_\ell,t}^{(k+1)} } = \gamma_{N_\ell,t}^{(k)}.
\end{equation}
Using \eqref{eq:finite-N-consistency-used-for-limit-admissibility}, the triangle inequality, and \eqref{eq:partial-trace-contractive-trace-norm}, we obtain
\begin{equation}
\begin{aligned}
    \norm{ \Tr_{k+1} \prc{ \gamma_{\infty,t}^{(k+1)} } - \gamma_{\infty,t}^{(k)} }_{\mathfrak S_1} 
    &\leq \norm{ \Tr_{k+1} \prc{ \gamma_{\infty,t}^{(k+1)} - \gamma_{N_\ell,t}^{(k+1)} } }_{\mathfrak S_1}\\
    &\quad + \norm{ \Tr_{k+1} \prc{ \gamma_{N_\ell,t}^{(k+1)} } - \gamma_{N_\ell,t}^{(k)} }_{\mathfrak S_1} \\
    &\quad + \norm{ \gamma_{N_\ell,t}^{(k)} - \gamma_{\infty,t}^{(k)} }_{\mathfrak S_1} \\
    &= \norm{ \Tr_{k+1} \prc{ \gamma_{\infty,t}^{(k+1)} - \gamma_{N_\ell,t}^{(k+1)} } }_{\mathfrak S_1(\Hkb)}\\
    &\quad + \norm{ \gamma_{N_\ell,t}^{(k)} - \gamma_{\infty,t}^{(k)} }_{\mathfrak S_1(\Hkb)} \\
    &\leq \norm{ \gamma_{\infty,t}^{(k+1)} - \gamma_{N_\ell,t}^{(k+1)} }_{\mathfrak S_1(\h_{k+1})}\\
    &\quad + \norm{ \gamma_{N_\ell,t}^{(k)} - \gamma_{\infty,t}^{(k)} }_{\mathfrak S_1(\Hkb)}.
\end{aligned}
\end{equation}
By \eqref{eq:trace-norm-convergence-k-for-admissibility}, the right-hand side converges to zero as $\ell\to\infty$. Consequently,
\begin{equation}
    \norm{ \Tr_{k+1} \prc{ \gamma_{\infty,t}^{(k+1)} } - \gamma_{\infty,t}^{(k)} }_{\mathfrak S_1(\Hkb)} = 0.
\end{equation}
Therefore,
\begin{equation}
    \Tr_{k+1} \prc{ \gamma_{\infty,t}^{(k+1)} } = \gamma_{\infty,t}^{(k)}.
\end{equation}
Since $k\geq1$ and $t\in[0,T]$ are arbitrary, the limiting Hartree hierarchy is admissible.
\end{proof}

\begin{corollary}[Limit to Hartree term]
\label{cor:bounded-BBGKY-interaction-limit}
Suppose that the assumptions of \hyporef{hyp:standing-hypotheses} hold. Let $N_\ell\to\infty$ be the subsequence extracted in \thmref{thm:compactness-bbgky-sequence}, and let $\Gamma_{\infty,t}=\set{\gamma_{\infty,t}^{(m)}}_{m\geq1}$ be the corresponding normalized limiting hierarchy. Fix $k\geq1$ and let $\OJ^{(k)} \in \mathcal{K}(\Hkb)$ be a compact observable, and let 
\begin{equation}
    \iota_{k+1}: \Hkp \longrightarrow \Hkb\otimes\h
\end{equation} be the canonical isometric embedding. Define the sum of external two-body interactions as
\begin{equation}
\label{eq:summed-k-k-plus-one-interaction-direct-limit}
    \mathcal{V}_{k,k+1} := \sum_{j=1}^{k} v(x_j-x_{k+1})
\end{equation}
on $\Hkb\otimes\h$. Define the bounded operator $B_J^{(k+1)} \in \mathcal{B}(\Hkp)$ by
\begin{equation}
\label{eq:compressed-BBGKY-interaction-test}
    B_J^{(k+1)} := \iota_{k+1}^\dagger \left[ \OJ^{(k)} \otimes\1_{\h}, \mathcal{V}_{k,k+1} \right] \iota_{k+1}.
\end{equation}
Then, $B_J^{(k+1)} \in\mathcal{B}(\Hkp)$ satisfies the uniform operator bound
\begin{equation}
\label{eq:compressed-BBGKY-interaction-test-bound}
    \norm{ B_J^{(k+1)} }_{\mathrm{op}} \leq 2k \norm{v}_{L^\infty} \norm{\OJ^{(k)} }_{\mathrm{op}}.
\end{equation}
Moreover, for every positive trace-class operator $\gamma^{(k+1)}\in\mathfrak{S}_1(\Hkp)$, one has the trace identity
\begin{equation}
\label{eq:interaction-test-trace-identity}
\begin{aligned}
    \Tr_{\Hkb} \left[ \OJ^{(k)} \sum_{j=1}^{k} \Tr_{k+1} \left[ v(x_j-x_{k+1}), \gamma^{(k+1)} \right] \right] = \Tr_{\Hkp} \left[ B_J^{(k+1)} \gamma^{(k+1)} \right].
\end{aligned}
\end{equation}
Consequently, for every $t\in[0,T]$,
\begin{equation}
\label{eq:direct-limit-surviving-BBGKY-interaction}
\begin{aligned}
    \lim_{\ell\to\infty} \lambda \frac{N_\ell-k}{N_\ell} \Tr_{\Hkp} \left[ B_J^{(k+1)} \gamma_{N_\ell,t}^{(k+1)} \right] = \lambda \Tr_{\Hkp} \left[ B_J^{(k+1)} \gamma_{\infty,t}^{(k+1)} \right].
\end{aligned}
\end{equation}
This convergence also holds after time integration, that is, for every $\tau\in[0,T]$,
\begin{equation}
\label{eq:time-integrated-direct-limit-surviving-BBGKY-interaction}
\begin{aligned}
    \lim_{\ell\to\infty} \lambda \frac{N_\ell-k}{N_\ell} \int_0^\tau \Tr_{\Hkp} \left[ B_J^{(k+1)} \gamma_{N_\ell,s}^{(k+1)} \right] ds = \lambda \int_0^\tau \Tr_{\Hkp} \left[ B_J^{(k+1)} \gamma_{\infty,s}^{(k+1)} \right] ds.
\end{aligned}
\end{equation}
Finally, the internal interaction among the first $k$ particles vanishes uniformly in time
\begin{equation}
\label{eq:vanishing-internal-finite-k-interaction}
\begin{aligned}
    \lim_{\ell\to\infty} \sup_{t\in[0,T]} \left| \frac{\lambda}{N_\ell} \sum_{1\leq i<j\leq k} \Tr_{\Hkb} \left[ [\OJ^{(k)}, v(x_i-x_j)] \gamma_{N_\ell,t}^{(k)} \right] \right| = 0.
\end{aligned}
\end{equation}
\end{corollary}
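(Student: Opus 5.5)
The plan has three parts: the operator bound and trace identity for $B_J^{(k+1)}$, the convergence of the surviving interaction term, and the vanishing of the internal term.

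\textbf{Operator bound.} Each multiplication operator $v(x_j-x_{k+1})$ on $\Hkb\otimes\h$ has norm at most $\|v\|_{L^\infty}$, so $\|\mathcal V_{k,k+1}\|_{\mathrm{op}}\leq k\|v\|_{L^\infty}$. Hence
\begin{equation}
\bigl\|[\OJ^{(k)}\otimes\1_{\h},\mathcal V_{k,k+1}]\bigr\|_{\mathrm{op}}\leq 2k\|v\|_{L^\infty}\|\OJ^{(k)}\|_{\mathrm{op}}.
\end{equation}
Since $\iota_{k+1}$ is an isometry, so that $\|\iota_{k+1}\|=\|\iota_{k+1}^\dagger\|=1$, compression preserves this bound and gives \eqref{eq:compressed-BBGKY-interaction-test-bound}.

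\textbf{Trace identity.} I would regard $\gamma^{(k+1)}$ as the operator $\iota_{k+1}\gamma^{(k+1)}\iota_{k+1}^\dagger$ on $\Hkb\otimes\h$; this is the zero-extension convention. Then I apply the defining property of the partial trace, tested against the bounded operator $\OJ^{(k)}$:
\begin{equation}
\Tr_{\Hkb}\Bigl[\OJ^{(k)}\Tr_{k+1}[\mathcal V_{k,k+1},\gamma^{(k+1)}]\Bigr]=\Tr\Bigl[(\OJ^{(k)}\otimes\1)[\mathcal V_{k,k+1},\gamma^{(k+1)}]\Bigr].
\end{equation}
All operators involved are bounded and $\gamma^{(k+1)}$ is trace class, so cyclicity is legitimate. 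It converts the right-hand side into $\Tr\bigl([\OJ^{(k)}\otimes\1,\mathcal V_{k,k+1}]\,\gamma^{(k+1)}\bigr)$, which equals $\Tr_{\Hkp}[B_J^{(k+1)}\gamma^{(k+1)}]$ because $\gamma^{(k+1)}$ is supported on the range of $\iota_{k+1}$. This proves \eqref{eq:interaction-test-trace-identity}.

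\textbf{Convergence of the surviving term.} I would not try to prove that $B_J^{(k+1)}$ is compact: $\OJ^{(k)}\otimes\1$ is not compact. Instead I use the trace-norm upgrade. By \thmref{thm:compactness-bbgky-sequence}, $\gamma^{(k+1)}_{N_\ell,t}\rightharpoonup^*\gamma^{(k+1)}_{\infty,t}$, and both are positive with unit trace. \lemref{lem:weak-str-conv-and-traceclass-imply-trace-norm-conv} then gives trace-norm convergence, exactly as in the proof of \propref{prop:admissibility-limit}. Therefore
\begin{equation}
\Tr_{\Hkp}[B_J^{(k+1)}\gamma_{N_\ell,t}^{(k+1)}]\to\Tr_{\Hkp}[B_J^{(k+1)}\gamma_{\infty,t}^{(k+1)}],
\end{equation}
since the difference is at most $\|B_J^{(k+1)}\|_{\mathrm{op}}\|\gamma^{(k+1)}_{N_\ell,t}-\gamma^{(k+1)}_{\infty,t}\|_{\mathfrak S_1}$. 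Together with $\tfrac{N_\ell-k}{N_\ell}\to1$, this yields \eqref{eq:direct-limit-surviving-BBGKY-interaction}.

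For the time-integrated statement \eqref{eq:time-integrated-direct-limit-surviving-BBGKY-interaction}, the integrands converge pointwise in $s$ and are bounded by $2k\lambda\|v\|_{L^\infty}\|\OJ^{(k)}\|_{\mathrm{op}}$. Dominated convergence on $[0,\tau]$ then applies, provided the integrands are measurable in $s$. This measurability is the only delicate point. I would obtain it from continuity: $\gamma^{(k+1)}_{N,s}$ is trace-norm continuous in $s$ by strong continuity of $\Psi_{N,s}$. For the limit, weak-$*$ continuity together with the unit trace upgrades, via the same lemma applied along sequences of times, to trace-norm continuity of $s\mapsto\gamma^{(k+1)}_{\infty,s}$.

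\textbf{Internal interaction.} For the final claim \eqref{eq:vanishing-internal-finite-k-interaction}, I bound each term using $\|[\OJ^{(k)},v(x_i-x_j)]\|_{\mathrm{op}}\leq 2\|v\|_{L^\infty}\|\OJ^{(k)}\|_{\mathrm{op}}$ and the trace-one normalization of $\gamma^{(k)}_{N_\ell,t}$. This gives the uniform-in-$t$ estimate
\begin{equation}
\frac{\lambda}{N_\ell}\binom{k}{2}\,2\|v\|_{L^\infty}\|\OJ^{(k)}\|_{\mathrm{op}},
\end{equation}
which tends to $0$ as $\ell\to\infty$.
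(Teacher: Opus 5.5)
Your proposal is correct and follows the paper's proof essentially step by step. It uses the same norm bound via the isometry $\iota_{k+1}$, the same lift-and-cyclicity argument for the trace identity, the same trace-norm upgrade via \lemref{lem:weak-str-conv-and-traceclass-imply-trace-norm-conv} (rather than any compactness of $B_J^{(k+1)}$), dominated convergence for the time integral, and the same $O(k^2/N_\ell)$ bound for the internal term. Your added justification that the integrands are measurable in time is a point the paper passes over silently: $\gamma_{N,s}^{(k+1)}$ is trace-norm continuous, and so is $\gamma_{\infty,s}^{(k+1)}$, by applying the same lemma along sequences of times.
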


\begin{proof}
Since $v\in L^\infty(\R^3,\R)$, the multiplication operator $\mathcal{V}_{k,k+1}$ is bounded and satisfies
\begin{equation}
\label{eq:summed-interaction-operator-bound}
    \norm{ \mathcal{V}_{k,k+1} }_{\mathrm{op}} \leq k\norm{v}_{L^\infty}.
\end{equation}
Since $\iota_{k+1}$ is an isometry, we have $\norm{\iota_{k+1}}_{\mathrm{op}} = \norm{\iota_{k+1}^\dagger}_{\mathrm{op}} = 1$. Therefore,
\begin{align*}
    \norm{ B_J^{(k+1)} }_{\mathrm{op}} 
    &\leq \norm{ \left[ \OJ^{(k)} \otimes\1_{\h}, \mathcal{V}_{k,k+1} \right] }_{\mathrm{op}} \\
    &\leq 2 \norm{\OJ^{(k)} \otimes\1_{\h}}_{\mathrm{op}} \norm{\mathcal{V}_{k,k+1}}_{\mathrm{op}} \\
    &\leq 2k \norm{v}_{L^\infty} \norm{\OJ^{(k)} }_{\mathrm{op}}.
\end{align*}
This proves \eqref{eq:compressed-BBGKY-interaction-test-bound}. By the definition of the partial trace, we can lift the trace from the $k$-particle to the $(k+1)$-particle space
\begin{align*}
    &\Tr_{\Hkb} \left[ \OJ^{(k)} \sum_{j=1}^{k} \Tr_{k+1} \left[ v(x_j-x_{k+1}), \gamma^{(k+1)} \right] \right] \\
    &\qquad \qquad = \Tr_{\Hkb\otimes\h} \left[ (\OJ^{(k)} \otimes\1_{\h}) \left[ \mathcal{V}_{k,k+1}, \iota_{k+1} \gamma^{(k+1)} \iota_{k+1}^\dagger \right] \right].
\end{align*}
Since the product of a bounded operator and a trace-class operator is trace-class, the cyclicity of the trace gives
\begin{align*}
    &\Tr_{\Hkb\otimes\h} \left[ (\OJ^{(k)} \otimes\1_{\h}) \left[ \mathcal{V}_{k,k+1}, \iota_{k+1} \gamma^{(k+1)} \iota_{k+1}^\dagger \right] \right] \\*
    &\qquad \qquad= \Tr_{\Hkb\otimes\h} \left[ \left[ \OJ^{(k)} \otimes\1_{\h}, \mathcal{V}_{k,k+1} \right] \iota_{k+1} \gamma^{(k+1)} \iota_{k+1}^\dagger \right] \\*
    &\qquad \qquad = \Tr_{\Hkp} \left[ \iota_{k+1}^\dagger \left[ \OJ^{(k)} \otimes\1_{\h}, \mathcal{V}_{k,k+1} \right] \iota_{k+1} \gamma^{(k+1)} \right],
\end{align*}
which establishes \eqref{eq:interaction-test-trace-identity}.  Fix $s\in[0,T]$. By \thmref{thm:compactness-bbgky-sequence}, we have the weak-$*$ convergence $\gamma_{N_\ell,s}^{(k+1)} \overset{*}{\rightharpoonup} \gamma_{\infty,s}^{(k+1)}$ in $\mathfrak{S}_1(\Hkp)$. Since both operators are non-negative and possess unit trace, \lemref{lem:weak-str-conv-and-traceclass-imply-trace-norm-conv} implies convergence in trace norm
\begin{equation}
\label{eq:trace-norm-convergence-used-for-BBGKY-interaction}
    \norm{ \gamma_{N_\ell,s}^{(k+1)} - \gamma_{\infty,s}^{(k+1)} }_{\mathfrak{S}_1(\Hkp)} \longrightarrow 0 \quad \text{ as } \quad  \ell\to\infty.
\end{equation}
Since $B_J^{(k+1)}$ is bounded, we can apply the triangle inequality to evaluate the convergence of the test interaction
\begin{align*}
    &\left| \frac{N_\ell-k}{N_\ell} \Tr_{\Hkp} \left[ B_J^{(k+1)} \gamma_{N_\ell,s}^{(k+1)} \right] - \Tr_{\Hkp} \left[ B_J^{(k+1)} \gamma_{\infty,s}^{(k+1)} \right] \right| \\
    &\qquad \qquad \qquad \qquad \qquad \leq \norm{ B_J^{(k+1)} }_{\mathrm{op}} \left( \norm{ \gamma_{N_\ell,s}^{(k+1)} - \gamma_{\infty,s}^{(k+1)} }_{\mathfrak{S}_1} + \frac{k}{N_\ell} \right).
\end{align*}
Because the sequence of trace norms converges to zero and $\frac{k}{N_\ell} \to 0$, the entire right-hand side tends to zero, which proves \eqref{eq:direct-limit-surviving-BBGKY-interaction}. For every $\ell$ and $s\in[0,T]$, we have the uniform bound
\begin{equation}
    \left| \lambda \frac{N_\ell-k}{N_\ell} \Tr_{\Hkp} \left[ B_J^{(k+1)} \gamma_{N_\ell,s}^{(k+1)} \right] \right| \leq |\lambda| \norm{ B_J^{(k+1)} }_{\mathrm{op}},
\end{equation}
where we used the fact that $\norm{\gamma_{N_\ell,s}^{(k+1)}}_{\mathfrak{S}_1}=1$. Therefore, by coupling the established pointwise convergence with the Dominated Convergence Theorem, we arrive at \eqref{eq:time-integrated-direct-limit-surviving-BBGKY-interaction}. Finally, for the internal interaction terms, we compute the supremum over time
\begin{align*}
    \sup_{t\in[0,T]} \Bigg| \frac{\lambda}{N_\ell} \sum_{1\leq i<j\leq k} \Tr_{\Hkb} &\left[ [\OJ^{(k)}, v(x_i-x_j)] \gamma_{N_\ell,t}^{(k)} \right] \Bigg| \\
    &\leq \frac{|\lambda|}{N_\ell} \sum_{1\leq i<j\leq k} \norm{ [\OJ^{(k)}, v(x_i-x_j)] }_{\mathrm{op}} \\
    &\leq \frac{|\lambda|k(k-1)}{N_\ell} \norm{v}_{L^\infty} \norm{\OJ^{(k)} }_{\mathrm{op}}.
\end{align*}
As $N_\ell \to \infty$, this supremum vanishes, which proves \eqref{eq:vanishing-internal-finite-k-interaction}.
\end{proof}

\subsection{Quantum de Finetti theorem}

\begin{theorem}[Strong quantum de Finetti \cite{ammari2008mean, chen2015unconditional, hudson1976locally, lewin2014derivation, stormer1969symmetric}]
\label{thm:strong-quantum-de-Finetti}
Let $\Hkb$ be any bosonic $k$-particle Hilbert space over a separable Hilbert space $\h$. For any fixed $0<t<\infty$, let $\Gamma_t$ be a family of admissible reduced density matrices
\begin{equation}
\label{eq:admissible-hierarchy}
    \Gamma_t = (\gamma^{(1)}_t, \gamma^{(2)}_t, \ldots),
\end{equation}
where each $\gamma^{(k)}_t$ is a non-negative trace-class operator on $\Hkb$ satisfying the partial trace consistency relation $\gamma^{(k)}_t = \Tr_{k+1}\gamma^{(k+1)}_t$ for all $k\geq 1$. Then, there exists a unique $S^1$-invariant Borel probability measure $\mu_t$ on the unit sphere $\mathbb{S}\h := \{\varphi\in\h : \norm{\varphi}_{\h}=1\}$ such that
\begin{equation}
\label{eq:strong-quantum-de-Finetti-representation}
    \gamma^{(k)}_t = \int_{\mathbb{S}\h} \left|\varphi^{\otimes k}\right\rangle \left\langle\varphi^{\otimes k}\right| \,d\mu_t(\varphi), \qquad k\geq1.
\end{equation}
Moreover, if $A \geq 0$ is a self-adjoint operator on $\h$ such that $\Tr_{\h}(A\gamma^{(1)}_t) < \infty$, then the measure is supported on its quadratic-form domain, $\mu_t(\cQ(A))=1$, where $\cQ(A) := \mathcal{D}(A^{1/2})$. Furthermore, the energy expectation is given by
\begin{equation}
\label{eq:strong-de-Finetti-energy-identity}
    \int_{\mathbb{S}\h} \norm{A^{1/2}\varphi}_{\h}^{2} \,d\mu_t(\varphi) = \Tr_{\h}(A\gamma^{(1)}_t),
\end{equation}
where the integrand is understood to be $+\infty$ outside $\cQ(A)$.
\end{theorem}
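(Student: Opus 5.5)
The plan is to deduce the representation from the finite-dimensional (or weak) de Finetti theorem by an approximation argument. Then I would use the consistency relation together with the unit trace to show that no mass is lost and the measure lives on the sphere. Uniqueness would come from a moment argument, and the energy statement from monotone convergence. The hypothesis that the $\gamma^{(k)}_t$ are unit-trace density matrices (as in the application via \propref{prop:admissibility-limit}) is used throughout.

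\textbf{Existence.} First I will regard $\Gamma_t$ as a symmetric, consistent family, i.e.\ a symmetric normal state on the infinite symmetric tensor product. I would invoke the weak quantum de Finetti theorem of Ammari--Nier and Lewin--Nam--Rougerie. It yields a unique $S^1$-invariant Borel probability measure $\mu_t$ on the closed unit ball $B\h$, equipped with the weak topology, in which the ball is compact metrizable. The weak representation reads
\begin{equation}
    \Tr\bigl(\OJ\gamma^{(k)}_t\bigr)=\int_{B\h}\bigl\langle\varphi^{\otimes k},\OJ\,\varphi^{\otimes k}\bigr\rangle\,d\mu_t(\varphi)
    \qquad\text{for all compact }\OJ\text{ and all }k.
\end{equation}
If one prefers a self-contained route, this can be produced by compressing to $P_M\h$ with $P_M$ of finite rank. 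There one applies the finite-dimensional de Finetti theorem, using Christandl--K\"onig--Mitchison--Renner type bounds for consistent sequences. One then passes $M\to\infty$ using the weak compactness of probability measures on $B\h$.

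\textbf{Upgrade to the sphere.} Next I test against finite-rank projections $P^{\otimes k}$ increasing to the identity and apply monotone convergence. This gives $1=\Tr\gamma^{(k)}_t=\int_{B\h}\|\varphi\|^{2k}\,d\mu_t$ for every $k$. Since $\|\varphi\|\le1$ on the ball, letting $k\to\infty$ forces $\mu_t(\{\|\varphi\|=1\})=1$. On the sphere the weak and norm Borel structures coincide, so $\mu_t$ is a Borel measure on $\mathbb S\h$. The identity then holds against compact operators, and since both sides are trace class with equal trace, the representation holds as an operator identity in $\mathfrak S_1$.

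\textbf{Uniqueness.} Suppose $\mu,\mu'$ are two $S^1$-invariant measures on $\mathbb S\h$ with the same marginals. Then they agree on all functions $\varphi\mapsto\langle\varphi^{\otimes k},\OJ\varphi^{\otimes k}\rangle$, taken over all $k$ and all finite-rank $\OJ$. Together with constants, these span a self-adjoint algebra of weakly continuous functions on $B\h$ that separates $S^1$-orbits. Stone--Weierstrass on the compact quotient $B\h/S^1$ then gives $\mu=\mu'$.

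\textbf{Energy support.} For $A\ge0$ I use the truncations $A_L:=A(1+A/L)^{-1}$, which are bounded and increase to $A$. The representation against bounded operators gives $\Tr(A_L\gamma^{(1)}_t)=\int\|A_L^{1/2}\varphi\|^2\,d\mu_t$. Letting $L\to\infty$ with monotone convergence on both sides yields the identity, and the pointwise limit is $+\infty$ outside $\cQ(A)$. Finiteness of $\Tr(A\gamma^{(1)}_t)$ then forces $\mu_t(\cQ(A))=1$.

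\textbf{Main obstacle.} I expect the hardest step to be the existence of the measure in infinite dimensions, i.e.\ the tightness and identification of the limit of the finite-dimensional de Finetti measures. I would handle it by first working on the weakly compact ball, where there is no loss of compactness, and letting the trace-one consistency recover the mass afterwards, as in the sphere step above.
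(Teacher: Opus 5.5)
The paper does not prove this theorem; it imports it from the cited literature, so there is no internal argument to match. Your route is a correct reduction of the strong statement to the weak one, \thmref{thm:weak-quantum-de-Finetti}, which the paper also only cites. The one step you leave implicit is how a theorem about sequences of $N$-body states applies to a single hierarchy. Set $\Gamma_N:=\gamma^{(N)}_t$. This is a bosonic $N$-body density matrix, and by iterating the consistency relation its $k$-particle marginals are exactly $\gamma^{(k)}_t$ for every $N\geq k$. The weak-$*$ convergence hypothesis therefore holds trivially. You are right that unit trace must be added: as stated, a non-normalized hierarchy cannot be represented by a probability measure. The remaining steps are sound.
\begin{itemize}
\item Already $k=1$ gives $\int(1-\|\varphi\|^2)\,d\mu_t=0$, hence support on the sphere.
\item Compact operators separate $\mathfrak S_1$, so the equal-trace remark is unnecessary.
\item The symbols $b_{\OJ}$ with $\OJ$ finite rank are weakly continuous and closed under products ($b_{\OJ}b_K=b_{\OJ\otimes K}$) and under conjugation. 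They separate $S^1$-orbits in $B\h$: test with $|\varphi\rangle\langle\varphi|$ and $|\psi\rangle\langle\psi|$ and use strict Cauchy--Schwarz.
\item Averaging a continuous test over $S^1$ shows that two $S^1$-invariant measures agreeing on invariant continuous functions coincide. This supplies the last step of your Stone--Weierstrass argument.
\item Your $A_L$-truncation is the same monotone-convergence device the paper itself uses with $(P_d\OT^2P_d)^{\otimes k}$ in \corref{cor:bounded-energy-support-de-finetti}.
\end{itemize}
Compared with a bare citation, your version makes explicit where unit trace enters (support on the sphere) and where phase invariance enters (uniqueness). It also shows that the strong theorem costs nothing beyond the weak one. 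The genuine infinite-dimensional difficulty stays inside the cited weak theorem.

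One caution concerns your optional self-contained route. Compressing to $P_M\h$ does not produce a consistent hierarchy. Moreover, $\Tr(P_M^{\otimes N}\gamma^{(N)}_t)$ can tend to $0$ as $N\to\infty$, for example when $\gamma^{(N)}_t=|\varphi^{\otimes N}\rangle\langle\varphi^{\otimes N}|$ with $\|P_M\varphi\|<1$. A finite-dimensional de Finetti bound therefore cannot be applied to the compressions directly. A constructive proof first needs Fock-space localization, whose $k$-particle densities reproduce $P_M^{\otimes k}\gamma^{(k)}_tP_M^{\otimes k}$ up to combinatorial factors. Your main line avoids this by relying on the cited weak theorem.
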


\begin{theorem}[Weak quantum de Finetti \cite{ammari2008mean, ammari2011mean, chen2015unconditional, lewin2014derivation}]
\label{thm:weak-quantum-de-Finetti}
Let $\Hkb$ be any bosonic $k$-particle Hilbert space over a separable Hilbert space $\h$. For a fixed $0<t<\infty$, assume that $\Gamma_{t,N}^{(N)}$ is a sequence of mixed states on $\h^{\otimes_s N}$, for $N \in \N$, satisfying $\Gamma_{t,N}^{(N)} \geq 0$ and $\Tr_{\h^{\otimes_s N}} \prc{\Gamma_{t,N}^{(N)}} = 1$. Suppose that for every $k\geq1$, its $k$-particle marginals have weak-$*$ limits in $\fkS_1(\Hkb)$,
\begin{equation}
    \gamma_{t,N}^{(k)} := \Tr_{k+1,\ldots,N} \prc{\Gamma_{t,N}^{(N)}} \rightharpoonup^* \gamma^{(k)} \quad \text{ as }  \quad N\to\infty,
\end{equation} 
where $\Tr_{k+1,\ldots,N}$ denotes the partial trace over the last $N-k$ variables. Then, there exists a unique $S^1$-invariant Borel probability measure $\mu_t$ on the unit ball $B\h := \{\varphi\in\h : \norm{\varphi}_{\h}\leq1\}$ such that
\begin{equation}
\label{eq:weak-quantum-de-Finetti-representation}
    \gamma^{(k)}_t = \int_{B\h} \left|\varphi^{\otimes k}\right\rangle \left\langle\varphi^{\otimes k}\right| \,d\mu_t(\varphi), \qquad k\geq1.
\end{equation}
\end{theorem}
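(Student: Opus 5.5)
The plan is to reduce to finite dimensions by spectral localization, apply a quantitative finite-dimensional de Finetti theorem there, and then reassemble the finite-dimensional measures into a single measure on the weakly compact ball $B\h$. \textbf{Uniqueness} is handled separately: the functions $\varphi\mapsto\langle\varphi^{\otimes k},\OJ\varphi^{\otimes k}\rangle$, with $k\geq0$ and $\OJ$ compact on $\Hkb$, are weakly continuous on $B\h$ and invariant under $\varphi\mapsto e^{i\theta}\varphi$. Together with their complex conjugates they form an algebra separating phase orbits on $B\h$, since products of such functions are again of this form after symmetrization. Stone--Weierstrass on the compact metrizable quotient $B\h/S^1$ then shows that the numbers $\Tr(\OJ\gamma^{(k)})$ determine an $S^1$-invariant measure. $S^1$-invariance for existence is obtained by averaging any representing measure over the phase group.

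For \textbf{existence}, I fix a finite-rank orthogonal projection $P$ on $\h$ with $d=\dim P\h$, later taken along an increasing sequence $P_M\uparrow\1$. Following Lewin--Nam--Rougerie, I form the geometric localization $G_N^P=\bigoplus_{n=0}^N G_{N,n}^P$ of $\Gamma_{t,N}^{(N)}$ on the Fock space over $P\h$. Its $n$-particle sectors are positive, have total trace one, and satisfy
\begin{equation}
    P^{\otimes k}\gamma_{t,N}^{(k)}P^{\otimes k}
    =\sum_{n\geq k}\binom{n}{k}\binom{N}{k}^{-1}\Tr_{k+1,\ldots,n}G_{N,n}^P .
\end{equation}
To each normalized sector $G_{N,n}^P/\Tr G_{N,n}^P$ I apply the finite-dimensional quantitative de Finetti theorem (Christandl--K\"onig--Mitchison--Renner, or Chiribella). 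This gives a measure $\nu_{N,n}$ on $\mathbb S(P\h)$ whose $k$-marginals match in trace norm up to an error $Ckd/n$. Rescaling $u\mapsto\sqrt{n/N}\,u$ and summing over sectors with these weights produces a measure $\mu_N^P$ on the ball $B(P\h)$. Its $k$-th moment differs from $P^{\otimes k}\gamma_{t,N}^{(k)}P^{\otimes k}$ by $o(1)$ as $N\to\infty$ for fixed $k,d$. Three facts control this error: $\binom nk\binom Nk^{-1}=(n/N)^k+O(k^2/N)$, small $n$ carries weight at most $(n/N)^k$, and large $n$ carries error $O(kd/n)$. Since $B(P\h)$ is compact, a subsequence gives $\mu_N^P\to\mu^P$. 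Weak-$*$ convergence of the marginals, tested against finite-rank operators, then yields
\begin{equation}
    P^{\otimes k}\gamma^{(k)}P^{\otimes k}=\int_{B(P\h)}|u^{\otimes k}\rangle\langle u^{\otimes k}|\,d\mu^P(u)\qquad\text{for all }k\geq1.
\end{equation}

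Next I establish consistency: for $P\leq P'$, the $S^1$-averaged $\mu^P$ equals the pushforward of the $S^1$-averaged $\mu^{P'}$ under $u\mapsto Pu$. Both representing measures reproduce the same moments $P^{\otimes k}\gamma^{(k)}P^{\otimes k}$, so the uniqueness argument above (applied in $P\h$) forces them to agree. The family $\{\mu^{P_M}\}_M$ is then a projective system of measures on compact balls. By Kolmogorov/Prokhorov on $B\h$ with the weak topology, which is compact and metrizable, it has a limit $\mu$ whose images under $P_M$ are the $\mu^{P_M}$. Letting $M\to\infty$ in the displayed identity and testing against compact $\OJ$, dominated convergence gives the representation of $\gamma^{(k)}$ on $B\h$. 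I expect the \textbf{main obstacle} to be the localization step, specifically controlling the de Finetti error uniformly across the sectors $n$. This requires the dimension dependence $d/n$ and handling of the sectors $n\ll N$ whose rescaled contribution is small but nonzero. If that estimate proves delicate, I would instead fall back on the Ammari--Nier Wigner-measure construction, using second-quantized Weyl test functions.
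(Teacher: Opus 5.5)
The paper gives no proof of its own and only cites this result. Your argument is essentially the Lewin--Nam--Rougerie proof from one of the cited sources, and it is correct: Fock-space localization, the Christandl--K\"onig--Mitchison--Renner bound applied sector by sector, consistency of the finite-dimensional measures via moment uniqueness, and a projective limit on the weakly compact ball. The step you flag as the main obstacle is harmless: for $k\geq1$ the de Finetti errors contribute at most $\sum_{n}\Tr(G^P_{N,n})\,(n/N)^k\,Ckd/n\leq Ckd/N$, so apart from the trivial $(k/N)^k$ bound for the sectors $n<k$, no separate treatment of small and large $n$ is needed.
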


\begin{corollary}[De Finetti representation and bounded energy support]
\label{cor:bounded-energy-support-de-finetti}
Assume that \hyporef{hyp:standing-hypotheses} holds, and for any $t\in[0,T]$, $  \Gamma_{\infty,t}
=\{\gamma_{\infty,t}^{(k)}\}_{k\geq1}$ be a limiting hierarchy obtained from
\thmref{thm:compactness-bbgky-sequence}. Recall the definitions
\begin{equation}
    \OT=(\dch_0+\kappa)^{1/2},
    \qquad
    \OT_k=\prod_{j=1}^k\OT_j,
    \qquad
    Y^1=\cD(\OT),
\end{equation}
and let $C_E>0$ be the constant in
\corref{cor:k-particle-energy-bound}.
In particular, one may take
\begin{equation}
    C_E=4\max\{1,\kappa\}C_0,
\end{equation}
where $C_0$ is the constant in
\hypref{hyp:bounded-energy-per-particle}.

For every $t\in[0,T]$, there exists a unique
$S^1$-invariant Borel probability measure $\mu_t$
on $\mathbb S\h$ such that
\begin{equation}
\label{eq:quantum-de-Finetti-representation-limit-point}
    \gamma_{\infty,t}^{(k)}
    =
    \int_{\mathbb S\h}
        |\varphi^{\otimes k}\rangle
        \langle\varphi^{\otimes k}|
        \,d\mu_t(\varphi),
    \qquad k\geq1.
\end{equation}
Moreover, for every integer $k\geq1$,
\begin{equation}
\label{eq:definetti-uniform-energy-moments}
    \Tr_{\Hkb}\!\left(
        \OT_k\gamma_{\infty,t}^{(k)}\OT_k
    \right)
    =
    \int_{\mathbb S\h}
        \|\OT\varphi\|^{2k}\,d\mu_t(\varphi)
    \leq C_E^k.
\end{equation}
The integrand is assigned the value $+\infty$
when $\varphi\notin\cD(\OT)$, and the weighted
density matrix is understood through its
positive trace-class realization.

Consequently, for any $t\in[0,T]$
\begin{equation}
\label{eq:definetti-bounded-energy-support}
    \mu_t\!\left(
        \left\{
            \varphi\in Y^1\mcap\mathbb S\h:
            \|\OT\varphi\|^2\leq C_E
        \right\}
    \right)=1.
\end{equation}
The set appearing in
\eqref{eq:definetti-bounded-energy-support}
is compact in $\h$.
In particular, the initial measure $\mu_0$
has the same bounded energy support.
\end{corollary}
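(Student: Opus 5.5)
The plan is to combine three ingredients established earlier: the compactness and trace-norm upgrade (\thmref{thm:compactness-bbgky-sequence}, \lemref{lem:weak-str-conv-and-traceclass-imply-trace-norm-conv}), the admissibility of the limit (\propref{prop:admissibility-limit}), and the uniform multiplicative energy bound (\corref{cor:k-particle-energy-bound}). Fix $t\in[0,T]$. By \propref{prop:admissibility-limit}, $\Gamma_{\infty,t}$ is a consistent family of nonnegative trace-one operators. Hence \thmref{thm:strong-quantum-de-Finetti} applies and produces a unique $S^1$-invariant probability measure $\mu_t$ on $\mathbb S\h$ representing all marginals as in \eqref{eq:quantum-de-Finetti-representation-limit-point}. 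This settles existence and uniqueness.

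For the moment identity \eqref{eq:definetti-uniform-energy-moments}, I will first prove the bound $\Tr(\OT_k\gamma_{\infty,t}^{(k)}\OT_k)\leq C_E^k$ by lower semicontinuity, and then identify the trace with the integral. For the bound, introduce the bounded spectral truncations $\OT_{k,L}:=\min\{\OT_k,L\}$ and the finite-rank projections $P_M^{\otimes k}$. The operator $P_M^{\otimes k}\OT_{k,L}^2P_M^{\otimes k}$ is finite rank, so it is compact. Weak-$*$ convergence along $N_\ell$ then gives
\[
\Tr\bigl(P_M^{\otimes k}\OT_{k,L}^2P_M^{\otimes k}\gamma_{\infty,t}^{(k)}\bigr)=\lim_{\ell}\Tr\bigl(P_M^{\otimes k}\OT_{k,L}^2P_M^{\otimes k}\gamma_{N_\ell,t}^{(k)}\bigr).
\]
Because $P_M$ commutes with $\OT$, the right-hand side is at most $\Tr(\OT_k\gamma_{N_\ell,t}^{(k)}\OT_k)\leq C_E^k$ by \corref{cor:k-particle-energy-bound}, which holds for each fixed $k$ once $N_\ell$ is large. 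Letting $M\to\infty$ and then $L\to\infty$ by monotone convergence yields the bound for the limit.

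To identify this trace with $\int\|\OT\varphi\|^{2k}\,d\mu_t$, I insert the de Finetti representation into the truncated trace:
\[
\Tr\bigl(P_M^{\otimes k}\OT_{k,L}^2P_M^{\otimes k}\gamma_{\infty,t}^{(k)}\bigr)=\int\bigl\langle\varphi^{\otimes k},P_M^{\otimes k}\OT_{k,L}^2P_M^{\otimes k}\varphi^{\otimes k}\bigr\rangle\,d\mu_t .
\]
The integrand increases monotonically to $\|\OT\varphi\|^{2k}$ as $M,L\to\infty$, with value $+\infty$ off $\cD(\OT)$. This works because on $P_M^{\otimes k}$ we may take $L\geq M^k$, where the truncation is inactive, and there the integrand is exactly $\|\OT P_M\varphi\|^{2k}$. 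Monotone convergence on both sides then gives equality, and in particular $\mu_t(Y^1)=1$. This is essentially the energy clause of \thmref{thm:strong-quantum-de-Finetti} applied to $A=\OT_k^2$ in the $k$-particle setting, but done by hand so that the multiplicative product structure is handled.

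The support statement \eqref{eq:definetti-bounded-energy-support} is the step I expect to need the most care, and I will derive it from the moments by a Chebyshev argument. Fix $\delta>0$. Then
\[
\mu_t\bigl(\|\OT\varphi\|^2>C_E(1+\delta)\bigr)\leq\frac{\int\|\OT\varphi\|^{2k}\,d\mu_t}{C_E^k(1+\delta)^k}\leq(1+\delta)^{-k}.
\]
The moment bound holds for every $k$ with the same $C_E$, so letting $k\to\infty$ sends the right-hand side to $0$. Taking the union over $\delta=1/n$ shows that $\mu_t(\|\OT\varphi\|^2\leq C_E)=1$. The uniformity of $C_E$ in $k$ is exactly what \corref{cor:k-particle-energy-bound} supplies, which is why the $k$-th root structure is essential here.

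Compactness of the set $\{\varphi\in\mathbb S\h:\|\OT\varphi\|^2\leq C_E\}$ in $\h$ follows because $\OT^{-1}$ is compact, since $\dch_0$ has compact resolvent. The set is the image under $\OT^{-1}$ of a bounded set in $\h$, so it is precompact. It is also closed in $\h$: the norm constraint $\|\varphi\|=1$ is preserved under strong limits, and the energy bound is preserved by weak lower semicontinuity of $\varphi\mapsto\|\OT\varphi\|$. The case $t=0$ is included in all of the above, which gives the claim about $\mu_0$.
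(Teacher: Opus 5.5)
Your proposal is correct and follows the paper's proof essentially step by step. The steps match: de Finetti from admissibility; the moment bound by testing weak-$*$ convergence against finite-rank spectral truncations dominated by $\OT_k^2$ and passing to $\int\|\OT\varphi\|^{2k}\,d\mu_t$ by monotone convergence; Markov's inequality with $k\to\infty$ for the support; and closedness plus precompactness for compactness. The differences are cosmetic: your extra truncation $\min\{\OT_k,L\}$ is inactive once $L\geq M^k$ and can be dropped, and you get precompactness from compactness of $\OT^{-1}$ where the paper uses the equivalent uniform tail bound $\|(I-P_d)\varphi\|^2\leq C_E/(E_d+\kappa)$.
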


\begin{proof}
By \propref{prop:admissibility-limit},
the limiting hierarchy is nonnegative,
normalized, and admissible.
The strong quantum de Finetti theorem therefore
gives the unique $S^1$-invariant probability
measure $\mu_t$ satisfying
\eqref{eq:quantum-de-Finetti-representation-limit-point}. 

We next prove the energy-moment bound. Fix $t\in[0,T]$ and an integer $k\geq1$. Let
\begin{equation}
    P_d
    =
    \sum_{j=0}^{d-1}
        |\chi_j\rangle\langle\chi_j|
\end{equation}
be the spectral projection onto the first $d$
trap eigenfunctions.

Because $P_d$ commutes with $\OT$, the operator
\begin{equation}
    (P_d\OT^2P_d)^{\otimes k}
\end{equation}
is positive and finite rank on $\Hkb$.
Furthermore, as nonnegative quadratic forms,
\begin{equation}
    (P_d\OT^2P_d)^{\otimes k}
    \leq \OT_k^2.
\end{equation}

Let $\{N_\ell\}_{\ell\geq1}$ be the subsequence
defining the limiting hierarchy.
Since $k$ is fixed and $K(N)\to\infty$, for all
sufficiently large $\ell$ we have
\begin{equation}
    k\leq K(N_\ell),
    \qquad
    N_\ell\geq N_0(k),
\end{equation}
and the initial moment hypothesis also applies.
Thus \corref{cor:k-particle-energy-bound} yields
\begin{equation}
\begin{aligned}
    \Tr_{\Hkb}\!\left(
        (P_d\OT^2P_d)^{\otimes k}
        \gamma_{N_\ell,t}^{(k)}
    \right)
    &\leq
    \Tr_{\Hkb}\!\left(
        \OT_k\gamma_{N_\ell,t}^{(k)}\OT_k
    \right)
    \leq C_E^k.
\end{aligned}
\end{equation}

We first let $\ell\to\infty$, keeping $k$ and
$d$ fixed.
Weak-$*$ convergence against the finite-rank
test operator gives
\begin{equation}
\begin{aligned}
    &\Tr_{\Hkb}\!\left(
        (P_d\OT^2P_d)^{\otimes k}
        \gamma_{\infty,t}^{(k)}
    \right)=
    \lim_{\ell\to\infty}
    \Tr_{\Hkb}\!\left(
        (P_d\OT^2P_d)^{\otimes k}
        \gamma_{N_\ell,t}^{(k)}
    \right)
    \leq C_E^k.
\end{aligned}
\end{equation}
On the other hand, testing the de Finetti
representation against this operator yields
\begin{equation}
\begin{aligned}
    \Tr_{\Hkb}\!\left(
        (P_d\OT^2P_d)^{\otimes k}
        \gamma_{\infty,t}^{(k)}
    \right)&=
    \int_{\mathbb S\h}
        \left(
            \langle\varphi,
                P_d\OT^2P_d\varphi\rangle
        \right)^k
        \,d\mu_t(\varphi)\\
    &=
    \int_{\mathbb S\h}
        \|\OT P_d\varphi\|^{2k}
        \,d\mu_t(\varphi).
\end{aligned}
\end{equation}
For every $\varphi\in\h$,
\begin{equation}
    \|\OT P_d\varphi\|^2
    =
    \sum_{j=0}^{d-1}
        (E_j+\kappa)
        |\langle\chi_j,\varphi\rangle|^2.
\end{equation}
These quantities increase to
$\|\OT\varphi\|^2$ as $d\to\infty$,
with value $+\infty$ outside $\cD(\OT)$.
Monotone convergence therefore gives
\begin{equation}
    \int_{\mathbb S\h}
        \|\OT\varphi\|^{2k}\,d\mu_t(\varphi)
    \leq C_E^k.
\end{equation}

The same spectral truncations increase, in
the sense of nonnegative quadratic forms,
to $\OT_k^2$.
Consequently, monotone convergence of the
scalar traces gives
\begin{equation}
\begin{aligned}
    \Tr_{\Hkb}\!\left(
        \OT_k\gamma_{\infty,t}^{(k)}\OT_k
    \right)
    &=
    \lim_{d\to\infty}
    \Tr_{\Hkb}\!\left(
        (P_d\OT^2P_d)^{\otimes k}
        \gamma_{\infty,t}^{(k)}
    \right)\\
    &=
    \int_{\mathbb S\h}
        \|\OT\varphi\|^{2k}\,d\mu_t(\varphi)
    \leq C_E^k.
\end{aligned}
\end{equation}
Finiteness of this extended trace implies that
$\OT_k(\gamma_{\infty,t}^{(k)})^{1/2}$
is Hilbert--Schmidt and hence supplies the
claimed positive trace-class realization.
This proves \eqref{eq:definetti-uniform-energy-moments}. In particular, taking $k=1$ shows that
\begin{equation}
    \mu_t(Y^1\mcap\mathbb S\h)=1.
\end{equation}

To obtain bounded energy support, fix $\varepsilon>0$. For every integer $k\geq1$, Markov's inequality gives
\begin{equation}
\begin{aligned}
    \mu_t\!\left(
        \left\{
            \varphi\in\mathbb S\h:
            \|\OT\varphi\|^2>C_E+\varepsilon
        \right\}
    \right)&\leq
    \frac{
        \displaystyle
        \int_{\mathbb S\h}
            \|\OT\varphi\|^{2k}\,d\mu_t(\varphi)
    }{(C_E+\varepsilon)^k}\\
    &\leq
    \left(
        \frac{C_E}{C_E+\varepsilon}
    \right)^k.
\end{aligned}
\end{equation}
Letting $k\to\infty$ shows that this probability
is zero.
Taking the countable union over
$\varepsilon=1/m$, $m\geq1$, proves
\eqref{eq:definetti-bounded-energy-support}.

Finally, we prove compactness in $\h$.
The extended quadratic form
\begin{equation}
    \varphi\longmapsto\|\OT\varphi\|^2
\end{equation}
is lower semicontinuous in $\h$.
Its sublevel set at $C_E$, intersected with
the unit sphere, is therefore closed in $\h$.

For every $\varphi$ in this set,
\begin{equation}
\begin{aligned}
    \|(I-P_d)\varphi\|^2
    &=
    \sum_{j=d}^{\infty}
        |\langle\chi_j,\varphi\rangle|^2\\
    &\leq
    \frac{1}{E_d+\kappa}
    \sum_{j=d}^{\infty}
        (E_j+\kappa)
        |\langle\chi_j,\varphi\rangle|^2\\
    &\leq
    \frac{C_E}{E_d+\kappa}.
\end{aligned}
\end{equation}
Since $\dch_0$ has compact resolvent,
$E_d\to\infty$.
Thus these spectral tails vanish uniformly
on the set.
Its finite-dimensional projections are bounded,
so the set is totally bounded in $\h$.
Being closed in the complete space $\h$,
it is compact.

The argument applies to every $t\in[0,T]$,
including $t=0$, with the same constant $C_E$.
\end{proof}

\subsection{Joint compactness and the GP radiation-source}
\label{sec:joint-radiation-source}
To pass the radiation source to the limit, we retain the correlations between the particle and field variables in a probability measure on the joint phase space, the phase-space viewpoint used for the Nelson model in
\cite{ammari2014wigner}. We identify the resulting measure dynamically in the subsequent sections. 

Define the function $ g_x(\xi)=e^{-i\xi\cdot x}\widehat w(\xi)$ and
\begin{equation}
\label{eq:joint-source-scalings}
 \phi_N(f)=\frac{a(f)+\ad(f)}{\sqrt N},\quad
 \wyl_N(f)=\exp(i\phi_N(f)),
\end{equation}
where $\wyl_N(f)$ denotes the displacement operator with parameter
$if/\sqrt N$.

For $\ux_k=(x_1,\ldots,x_k)$, define
\begin{equation}
\label{eq:def-Gxk-and-VNk}
    G_{\ux_k}:=\displaystyle\sum_{j=1}^k g_{x_j},
    \qquad
    \dcV_N^{(k)}:=\phi_N(G_{\ux_k}),
\end{equation}
and
\begin{equation}
\label{eq:def-Fu-and-Fuk}
    F_u(x):=2\re\langle g_x,u\rangle,
    \qquad
    F_u^{(k)}(\ux_k):=\displaystyle\sum_{j=1}^k F_u(x_j).
\end{equation}
On the field fiber over $\ux_k$, it acts as $\phi_N(G_{\ux_k})$.  
Since $w$ is a real and even function,
\begin{equation}
\label{eq:joint-source-classical-field}
\begin{aligned}
 F_u(x)&=2\re \int_{\R^3}
       e^{i\xi\cdot x}\widehat w(\xi)u(\xi)\,d\xi,\\
\|F_u\|_\infty &\leq 2\|w\|_{L^2} \|u\|_{L^2} .
 \end{aligned}
\end{equation}
Set $\Gamma_{N,t}=|\Psi_{N,t}\rangle\langle\Psi_{N,t}|$, and let
$\gamma_{N,t}^{(k)}$ denote its particle marginal of trace one.
Expectations of nonnegative unbounded operators are understood in
the quadratic-form sense, equivalently as the monotone limits of
expectations of bounded spectral truncations.
The source is normalized as in the BBGKY equation \eqref{eq:finite-N-BBGKY-operator-form-clean}. In particular, the parameter $\eta$ is not included in $\cS_{N,t}^{(k)}$.

For the following computations, regard $\Psi_{N,t}$ as a vector in the ambient vector space $\h^{\otimes N}\otimes\Fock$ and extend $\Gamma_{N,t}$ by zero to this ambient space. The operator $\dcV_N^{(k)}$ acts on the first $k$
particle variables and the field, with the identity
on the remaining particle variables. The photon-number bound \eqref{eq:joint-source-uniform-number} implies $\Psi_{N,t}\in\cD(\cN_f^{1/2})$. Given that $\|G_{\ux_k}\|_{L^2}\leq k\|w\|_{L^2}$, the creation and annihilation estimates give
\begin{equation}
    \|\dcV_N^{(k)}\Psi_{N,t}\|
        _{\h^{\otimes N}\otimes\Fock}
    \leq
    2k\|w\|_{L^2}
    \left\|
        \left(\frac{\cN_f+1}{N}\right)^{1/2}
        \Psi_{N,t}
    \right\|_{\HN}.
\end{equation}
Thus, the commutator has the trace-class realization
\begin{equation}
\begin{aligned}
    [\dcV_N^{(k)},\Gamma_{N,t}]
    &:=
    |\dcV_N^{(k)}\Psi_{N,t}\rangle
        \langle\Psi_{N,t}|
    \\
    &\qquad \qquad -
    |\Psi_{N,t}\rangle
        \langle\dcV_N^{(k)}\Psi_{N,t}|
    \in
    \mathfrak S_1(\h^{\otimes N}\otimes\Fock).
\end{aligned}
\end{equation}
We define
\begin{equation}
\label{eq:joint-source-finite-definition}
    \cS_{N,t}^{(k)}
    :=
    \Tr_{k+1,\ldots,N;\Fock}
        [\dcV_N^{(k)},\Gamma_{N,t}].
\end{equation}
Both $\Psi_{N,t}$ and
$\dcV_N^{(k)}\Psi_{N,t}$ are symmetric in the
first $k$ particle variables.
Consequently, the partial trace is supported on
$\Hkb$, and
\begin{equation}
    \cS_{N,t}^{(k)}\in\mathfrak S_1(\Hkb),
    \qquad
    \Tr_{\Hkb} \prc{\cS_{N,t}^{(k)}}=0.
\end{equation}
For a  $\OJ^{(k)}\in\mathcal K(\Hkb)$, extend
$\OJ^{(k)}$ by zero to $\h^{\otimes k}$ and then
tensor it with the identity on the remaining
particle variables and the field.
This bounded operator commutes with $\cN_f$,
and therefore preserves $\cD(\cN_f^{1/2})$,
which is contained in $\cD(\dcV_N^{(k)})$.
Using the same notation for this lift, we obtain
\begin{equation}
\begin{aligned}
    \Tr_{\Hkb} \bigl(\OJ^{(k)}\cS_{N,t}^{(k)}\bigr)
    &=\left\langle (\OJ^{(k)})^*\Psi_{N,t},
        \dcV_N^{(k)}\Psi_{N,t}
    \right\rangle\\
    &\qquad \qquad -
    \left\langle
        \dcV_N^{(k)}\Psi_{N,t},
        \OJ^{(k)}\Psi_{N,t}
    \right\rangle
    \\
    &=
    \left\langle
        \Psi_{N,t},
        [\OJ^{(k)},\dcV_N^{(k)}]\Psi_{N,t}
    \right\rangle.
\end{aligned}
\end{equation}
The inner products in this identity are taken
in $\h^{\otimes N}\otimes\Fock$. Finally, contractivity of the partial trace,
the rank-one trace-norm identity, and
$\|\Psi_{N,t}\|=1$ give
\begin{equation}
\begin{aligned}
    \|\cS_{N,t}^{(k)}\|_{\mathfrak S_1(\Hkb)}
    &\leq
    \|[\dcV_N^{(k)},\Gamma_{N,t}]\|
        _{\mathfrak S_1(\h^{\otimes N}\otimes\Fock)}
    \\
    &\leq
    2\|\dcV_N^{(k)}\Psi_{N,t}\|
        _{\h^{\otimes N}\otimes\Fock}
    \\
    &\leq
    4k\|w\|_{L^2}
    \left\|
        \left(\frac{\cN_f+1}{N}\right)^{1/2}
        \Psi_{N,t}
    \right\|_{\HN}.
\end{aligned}
\end{equation}

The following compactness lemma will be applied to the
finite-dimensional joint Husimi measures constructed in
\appref{sec:joint-radiation-source-limit-proof}.

\begin{lemma}[Finite-dimensional tightness and consistent limits]
\label{lem:joint-husimi-tightness}
Let $(E_d)_{d\geq1}$ and $(L_m)_{m\geq1}$ be increasing
families of finite-dimensional subspaces of $\h$ and
$\h_f$, respectively, with
\begin{equation}
    \dim E_d=d,
    \qquad
    \dim L_m=m.
\end{equation}
Denote the corresponding orthogonal projections by
$P_d$ and $P_m^{\mathrm f}$.

For every $N,d,m\geq1$, let $\mathfrak m_N^{d,m}$ be a
Borel probability measure on $E_d\times L_m$. Suppose
that these measures are consistent under coordinate
projections:
\begin{equation}
    (P_d\times P_m^{\mathrm f})_{\#}
        \mathfrak m_N^{d',m'}
    =
    \mathfrak m_N^{d,m},
    \qquad d'\geq d,\quad m'\geq m,
\end{equation}
where the projections are restricted to
$E_{d'}\times L_{m'}$.
Assume that there exists a constant $\cM_T<\infty$,
independent of $N,d,m$, such that
\begin{equation}
    \begin{aligned}
    \int_{E_d\times L_m}\|z\|^2\,
        \dcd\mathfrak m_N^{d,m}(z,u)
    &\leq 1+\frac dN,\\
    \qquad
    \int_{E_d\times L_m}\|u\|^2\,
        \dcd\mathfrak m_N^{d,m}(z,u)
    &\leq \cM_T+\frac mN.
    \end{aligned}
\end{equation}
Then, for every fixed $d,m$, the family
$\{\mathfrak m_N^{d,m}\}_{N\geq1}$ is tight on
$E_d\times L_m$.
Moreover, from every sequence of particle numbers
tending to infinity, one can extract a subsequence
$(N_\ell)_{\ell\geq1}$ and find Borel probability
measures $\nu^{d,m}$ on $E_d\times L_m$ such that
\begin{equation}
    \mathfrak m_{N_\ell}^{d,m}
    \rightharpoonup \nu^{d,m}
    \qquad\text{as }\ell\to\infty
\end{equation}
weakly, simultaneously for every $d,m\geq1$.
The limiting measures are consistent:
\begin{equation}
    (P_d\times P_m^{\mathrm f})_{\#}\nu^{d',m'}
    =
    \nu^{d,m},
    \qquad d'\geq d,\quad m'\geq m.
\end{equation}
\end{lemma}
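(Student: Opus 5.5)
The proof reduces to tightness on a fixed finite-dimensional space, followed by a diagonal extraction over the countable index set $(d,m)$ and a check that weak limits commute with pushforward by continuous projections.

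\textbf{Tightness.} Fix $d,m$. The space $E_d\times L_m$ is a finite-dimensional normed space, so closed balls
\begin{equation}
B_\rho:=\{(z,u):\|z\|^2+\|u\|^2\leq\rho^2\}
\end{equation}
are compact. Adding the two moment bounds gives
\begin{equation}
\int\bigl(\|z\|^2+\|u\|^2\bigr)\,\dcd\mathfrak m_N^{d,m}\leq 1+\cM_T+\frac{d+m}{N}\leq 1+\cM_T+d+m,
\end{equation}
uniformly in $N\geq1$. Markov's inequality then yields
\begin{equation}
\mathfrak m_N^{d,m}(B_\rho^c)\leq \frac{1+\cM_T+d+m}{\rho^2},
\end{equation}
which tends to zero uniformly in $N$ as $\rho\to\infty$. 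This is tightness. By Prokhorov's theorem on the Polish space $E_d\times L_m$, every sequence in $N$ has a weakly convergent subsequence, and the limit is again a Borel probability measure.

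\textbf{Diagonal extraction.} Enumerate the pairs $(d,m)\in\N^2$ as $(d_r,m_r)_{r\geq1}$. Start from the given sequence of particle numbers. Extract successive nested subsequences so that $\mathfrak m_N^{d_r,m_r}$ converges weakly along the $r$-th subsequence, to a limit $\nu^{d_r,m_r}$. The diagonal subsequence $(N_\ell)$ is eventually contained in each of these subsequences, so it gives weak convergence simultaneously for every $(d,m)$.

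\textbf{Consistency.} Fix $d'\geq d$ and $m'\geq m$. The restricted projection $\pi:=P_d\times P_m^{\mathrm f}:E_{d'}\times L_{m'}\to E_d\times L_m$ is continuous, since $E_d\subset E_{d'}$ and $L_m\subset L_{m'}$. Take any bounded continuous $f$ on $E_d\times L_m$; then $f\circ\pi$ is bounded and continuous on $E_{d'}\times L_{m'}$. Using the finite-$N$ consistency,
\begin{equation}
\int f\,\dcd\nu^{d,m}=\lim_{\ell}\int f\,\dcd\mathfrak m_{N_\ell}^{d,m}=\lim_\ell\int f\circ\pi\,\dcd\mathfrak m_{N_\ell}^{d',m'}=\int f\circ\pi\,\dcd\nu^{d',m'}.
\end{equation}
Bounded continuous functions determine a Borel probability measure, so $\pi_{\#}\nu^{d',m'}=\nu^{d,m}$.

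There is no real obstacle; the only point requiring care is keeping a single subsequence valid for all $(d,m)$, which the countable diagonal argument handles. The $d/N$ and $m/N$ corrections in the moment bounds are harmless, because they are bounded by $d$ and $m$ for each fixed pair. Passing the moment bounds to the limits $\nu^{d,m}$ (via lower semicontinuity, giving second moments at most $1$ and $\cM_T$) is not part of this statement.
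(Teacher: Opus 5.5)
Your proof is correct and follows the paper's argument essentially step for step. Both use Markov's inequality on compact balls with the uniform bound $(1+\cM_T+d+m)/\rho^2$, then Prokhorov with a diagonal extraction over an enumeration of $\mathbb N^2$, and finally consistency by testing the limits against $f\circ(P_d\times P_m^{\mathrm f})$ for bounded continuous $f$.
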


\begin{proof}
Fix $d,m\geq1$. For $R>0$, set
\begin{equation}
    K_R^{d,m}
    :=
    \left\{
        (z,u)\in E_d\times L_m:
        \|z\|^2+\|u\|^2\leq R^2
    \right\}.
\end{equation}
This set is compact because $E_d\times L_m$ is
finite-dimensional. By Markov's inequality and the
assumed moment bounds,
\begin{align*}
    \mathfrak m_N^{d,m}
        \bigl((E_d\times L_m)\setminus K_R^{d,m}\bigr)
    &\leq
    \frac1{R^2}
    \int_{E_d\times L_m}
        \bigl(\|z\|^2+\|u\|^2\bigr)\,
        \dcd\mathfrak m_N^{d,m}(z,u)
    \\
    &\leq
    \frac{1+\cM_T+(d+m)/N}{R^2}
    \\
    &\leq
    \frac{1+\cM_T+d+m}{R^2}.
\end{align*}
The last expression tends to zero as $R \to \infty$,
uniformly in $N$. This proves tightness.

Starting from any prescribed sequence of particle
numbers tending to infinity, Prokhorov's theorem
provides a weakly convergent subsequence for each
fixed pair $(d,m)$. Enumerate the countable set
$\mathbb N^2$, extract successively nested
subsequences, and take a diagonal subsequence.
The resulting sequence $(N_\ell)_{\ell\geq1}$ yields
weak convergence for every fixed pair $(d,m)$.

It remains to verify consistency of the limits.
Fix $d'\geq d$, $m'\geq m$, and
$F\in C_b(E_d\times L_m)$.
The function
\begin{equation}
    (z,u)\longmapsto F(P_dz,P_m^{\mathrm f}u)
\end{equation}
is bounded and continuous on $E_{d'}\times L_{m'}$.
Finite-$N$ consistency and weak convergence give
\begin{align*}
    \int_{E_d\times L_m}F\,\dcd\nu^{d,m}
    &=
    \lim_{\ell\to\infty}
    \int_{E_d\times L_m}
        F\,\dcd\mathfrak m_{N_\ell}^{d,m}
    \\
    &=
    \lim_{\ell\to\infty}
    \int_{E_{d'}\times L_{m'}}
        F(P_dz,P_m^{\mathrm f}u)\,
        \dcd\mathfrak m_{N_\ell}^{d',m'}(z,u)
    \\
    &=
    \int_{E_{d'}\times L_{m'}}
        F(P_dz,P_m^{\mathrm f}u)\,
        \dcd\nu^{d',m'}(z,u).
\end{align*}
Since bounded continuous functions determine Borel
probability measures on finite-dimensional spaces,
this proves the asserted consistency.
\end{proof}

\begin{proposition}[Joint source limit]
    \label{prop:joint-radiation-source-limit}
Assume that \hyporef{hyp:standing-hypotheses} holds.  Fix $T<\infty$, and start with any subsequence on which all
particle marginals converge to $\gamma_{\infty,t}^{(k)}$ as in the
compactness result of \thmref{thm:compactness-bbgky-sequence}.  There is a further subsequence, independent
of $t\in[0,T]$ and $k$, and a Borel probability kernel
$t\mapsto\nu_t$ on
\begin{equation}
 \mathbb S\h\times\h_f
 =\{\vfi:\|\vfi\|_{L^2} =1\}\times\h_f,
\end{equation}
invariant under $(\vfi,u)\mapsto(e^{i\theta}\vfi,u)$ for $\theta\in[0,2\pi)$, with the
following properties.  Its particle marginal is the de Finetti measure
$\mu_t$ representing $(\gamma_{\infty,t}^{(k)})_{k\geq1}$, and
\begin{equation}\label{eq:joint-source-initial-law}
 \nu_0=\mu_0\otimes\delta_{\widehat u_0}.
\end{equation}
For every $k\geq1$ and $\OJ^{(k)}\in\mathcal K(\Hkb)$,
\begin{equation}\label{eq:joint-source-uniform-limit}
 \lim_{N\to\infty}\sup_{0\leq t\leq T}
 \left|\Tr\bigl[\OJ^{(k)}(\cS_{N,t}^{(k)}
                    -\cS_{\infty,t}^{(k)})\bigr]\right|=0,
\end{equation}
where $N$ runs through this further subsequence and
\begin{equation}\label{eq:joint-source-limit-operator}
 \boxed{\displaystyle
 \cS_{\infty,t}^{(k)}
 =\int_{\mathbb S\h\times\h_f}
    [F_u^{(k)},|\vfi^{\otimes k}\rangle
                    \langle\vfi^{\otimes k}|]\,d\nu_t(\vfi,u).}
\end{equation}
The integral is a Bochner integral in the trace class.  Equivalently, its
kernel, understood as an $L^2$-kernel, is
\begin{equation}\label{eq:joint-source-limit-kernel}
 \cS_{\infty,t}^{(k)}(\ux_k;\ux_k')
 =\sum_{j=1}^k\int
  \bigl(F_u(x_j)-F_u(x'_j)\bigr)
  \prod_{\ell=1}^k\vfi(x_\ell)\overline{\vfi(x'_\ell)}
  \,d\nu_t(\vfi,u).
\end{equation}
In particular, the source converges weak-$*$ in the trace class at every
time, and also after integration against any $L^1$ function of time.
\end{proposition}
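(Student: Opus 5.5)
We construct $\nu_t$ from joint Husimi measures, pass to the limit via \lemref{lem:joint-husimi-tightness}, and then identify the limit of $\cS_{N,t}^{(k)}$.

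\textbf{Step 1: finite-$N$ Husimi measures.} Fix increasing bases: trap eigenfunctions $\chi_0,\dots,\chi_{d-1}$ spanning $E_d$, and an orthonormal basis of $\h_f$ giving $L_m$. For each $t$, define $\mathfrak m_{N,t}^{d,m}$ on $E_d\times L_m$ as the joint lower symbol of $\Gamma_{N,t}$. Its density is
\begin{equation}
 \frac{c_{d,m,N}}{\pi^{d+m}}\bigl\langle \cW_{\mathrm p}[\sqrt N z]\Omega\otimes\wyl[\sqrt N u]\vacf,\,\Gamma_{N,t}\,\cW_{\mathrm p}[\sqrt N z]\Omega\otimes\wyl[\sqrt N u]\vacf\bigr\rangle ,
\end{equation}
with the particle coherent states realized in the bosonic Fock space of $\h$ (the $N$-sector embedded) and the field coherent states at scale $\sqrt N$. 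Consistency under projections follows from the resolution of the identity for coherent states on orthogonal complements.

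The particle moment $1+d/N$ comes from $\langle \mathcal N_{E_d}\rangle/N\le 1$ plus the anti-Wick correction. The field moment comes from \eqref{eq:joint-source-uniform-number}, which gives $\langle\cN_f\rangle/N\le \cM_T$ uniformly on $[0,T]$. \lemref{lem:joint-husimi-tightness} then yields limits at each rational $t$.

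Projective limits of the consistent family give measures $\nu_t$ on $\h\times\h_f$, supported on $\mathbb S\h\times\h_f$. The unit-sphere support holds because the particle marginal reproduces the de Finetti measure $\mu_t$: the anti-Wick symbols of $|\chi^{\otimes k}\rangle\langle\chi^{\otimes k}|$ converge to the Wick ones, which together with admissibility and uniqueness in \thmref{thm:strong-quantum-de-Finetti} identifies the marginal. Phase invariance comes from particle-number conservation of $\Gamma_{N,t}$. At $t=0$, the factorized coherent field forces the field component to concentrate at $\widehat u_0$, since $\wyl[\sqrt N u_0]\vacf$ has Husimi measure $\delta$-concentrating as $N\to\infty$; this gives $\nu_0=\mu_0\otimes\delta_{\widehat u_0}$.

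\textbf{Step 2: all times and uniformity.} To extend from rational $t$ to every $t\in[0,T]$ with a single subsequence, we test against the dense countable family of bounded Lipschitz cylinder functions of the form $\Tr(\OJ\, \cdot)\times$ field Weyl symbols. Their finite-$N$ expectations $\langle\Psi_{N,t},\OJ^{(k)}\wyl_N(f)\Psi_{N,t}\rangle$ are uniformly Lipschitz in $t$: the commutator of $\wyl_N(f)$ with $\HMN$ is $O(1)$ after the $1/\sqrt N$ scaling, exactly as in \lemref{lem:equicontinuity-compact-observables}. Arzelà–Ascoli then gives uniform convergence in $t$.

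\textbf{Step 3: identifying the source.} By the trace identity above, $\Tr(\OJ^{(k)}\cS_{N,t}^{(k)})=\langle\Psi_{N,t},[\OJ^{(k)},\phi_N(G_{\ux_k})]\Psi_{N,t}\rangle$. This is a mixed observable that is linear in the scaled field and depends on the particles through the bounded multiplier $e^{-i\xi\cdot x_j}$. We truncate $G_{\ux_k}$ to $P_m^{\mathrm f}$ and $\OJ^{(k)}$ to finite rank inside $E_d$. The truncated observable is a polynomial of degree $k$ in the particle creation/annihilation operators and degree one in the field, so its normalized expectation converges to the integral of its classical symbol, namely $[F_{P_mu}^{(k)},|\vfi^{\otimes k}\rangle\langle\vfi^{\otimes k}|]$ tested against $\OJ$, up to $O(1/N)$ Wick–anti-Wick errors. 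Passing to the limit requires uniform integrability of the linear-in-$u$ symbol, which the second-moment bound supplies. The truncation errors are controlled uniformly in $N$ by $\|(1-P_m^{\mathrm f})\widehat w\|_{L^2}$ times the photon moment via Cauchy–Schwarz, and on the limit side by $\int\|u\|^2d\nu_t\le\cM_T$.

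The main obstacle is exactly this interchange of the truncation limits and $N\to\infty$ in Step 3, in particular handling the unbounded field symbol. My plan is to use the uniform moment bound together with the estimate $\|F_u\|_\infty\le 2\|w\|\|u\|$; the kernel formula \eqref{eq:joint-source-limit-kernel} and the Bochner integrability then follow from $\|[F_u^{(k)},P]\|_{\mathfrak S_1}\le 4k\|w\|\|u\|$.
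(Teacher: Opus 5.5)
\textbf{Main gap: removing the field truncation in Step 3.} The gap sits at the interchange of limits that you yourself flag as the main obstacle. You claim the field-truncation error is controlled ``uniformly in $N$ by $\|(1-P_m^{\mathrm f})\widehat w\|_{L^2}$ times the photon moment.'' This is false for a finite-rank $P_m^{\mathrm f}$.

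The quantum error is $\langle\Psi_{N,t},[\OJ^{(k)},\phi_N(R_m)]\Psi_{N,t}\rangle$ with $R_m(\ux_k)=(1-P_m^{\mathrm f})G_{\ux_k}$. The fiberwise quadrature bound controls this only through $\sup_{\ux_k}\|R_m(\ux_k)\|$. But $g_x(\xi)=e^{-i\xi\cdot x}\widehat w(\xi)$, and a finite-rank projection does not commute with these phases. By Riemann--Lebesgue, $\langle e_\ell,g_x\rangle\to0$ as $|x|\to\infty$, so $\|(1-P_m^{\mathrm f})g_x\|\to\|w\|_{L^2}$. Hence $\sup_x\|(1-P_m^{\mathrm f})g_x\|=\|w\|_{L^2}$ for every $m$, and $\|(1-P_m^{\mathrm f})\widehat w\|$ is only the value at $x=0$.

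On the classical side the problem disappears, because you can move the projection onto $u$: $|2\,\mathrm{Re}\langle G_{\ux_k},(1-P_m^{\mathrm f})u\rangle|\le 2k\|w\|\,\|(1-P_m^{\mathrm f})u\|$. There is no quantum analogue of this without a tightness statement for photons across modes.

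The missing idea is to remove the truncation \emph{after} inserting the compact particle test, so that the test localizes the particle positions.
\begin{itemize}
\item For $\OJ^{(k)}=|p\rangle\langle q|$, set $d_m(p)^2=\int|p(\ux_k)|^2\|R_m(\ux_k)\|^2\,d\ux_k$; this tends to zero by dominated convergence.
\item Write $\OJ^{(k)}\Psi=p\otimes Z_q$, with $Z_q$ the partial inner product. The fiberwise bound then gives $|\langle\Psi_{N,t},[\OJ^{(k)},\phi_N(R_m)]\Psi_{N,t}\rangle|\le 2\sqrt{\cM_T}\,(\|q\|d_m(p)+\|p\|d_m(q))$, uniformly in $N$ and $t$.
\item General compact tests follow from the uniform bounds $\|\cS_{N,t}^{(k)}\|_{\mathfrak S_1},\|\cS_{\infty,t}^{(k)}\|_{\mathfrak S_1}\le 4k\|w\|_{L^2}\sqrt{\cM_T}$.
\end{itemize}
Localizing instead through the trap energy bound, together with uniform convergence of $(1-P_m^{\mathrm f})g_x$ on compact $x$-sets, would also work.

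A related point: after you truncate $\OJ^{(k)}$ into $E_d^{\otimes k}$, the particle factors $[\OJ^{(k)},M_{c_\ell}]$ leave $E_d^{\otimes k}$. They are compact, but they are not polynomials in the $E_d$-modes. You need one further compression, whose error is controlled by compactness and the first field moment.

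\textbf{Secondary gap: uniformity in $t$ and a single subsequence.} Your Step 3 gives convergence only at each fixed $t$, whereas the statement requires $\sup_{0\le t\le T}$. The fixed-time mechanism is a legitimate alternative to the paper's: the anti-Wick quantization of the linear symbol $2\,\mathrm{Re}\langle h,u\rangle$ is exactly $\phi_N(h)$, and the second field moment makes this symbol uniformly integrable under the Husimi measures.

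Uniformity in time still needs an extra argument. The paper gets it from a Taylor-remainder bound for $L_{N,t}(K,f)=\langle\Psi_{N,t},K\phi_N(f)\Psi_{N,t}\rangle$, namely $|G_{N,t}^{(k)}(K,rf)-G_{N,t}^{(k)}(K,0)-irL_{N,t}(K,f)|\le 2r^2\|K\|\,\|f\|^2\cM_T$. This bound is uniform in $N$ and $t$, and it is combined with uniform-in-time convergence of the characteristic functions.

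Two further points are used implicitly but never stated.
\begin{itemize}
\item Producing $\nu_t$ at non-rational times along one subsequence, and proving measurability of $t\mapsto\nu_t$, rests on a uniqueness fact. A particle-phase-invariant measure on the unit ball is determined by its mixed moments $\int b_{\OJ}(\vfi)e^{2i\mathrm{Re}\langle f,u\rangle}\,d\nu$; the paper proves this via Stone--Weierstrass on the ball of $E_d$ and Fourier uniqueness.
\item The time-equicontinuity step requires $f\in\cD(\omg)$, so that $[\Hf,\wyl_N(f)]$ is controlled.
\end{itemize}
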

\begin{proof}
The proof is given in \appref{sec:joint-radiation-source-limit-proof}. 
\end{proof}

\subsection{Derivation of the joint Liouville equation}
\label{sec:identification-joint-dynamics}
The source representation in
\propref{prop:joint-radiation-source-limit} leaves two dynamical
identifications to establish: the joint measure must satisfy the
Liouville equation of the coupled system, and that equation must
transport the initial joint measure along its unique characteristics.
The precise remaining proof obligations at this stage are:
\begin{enumerate}
 \item Extend mixed characteristic convergence to bounded particle
 tests and to a field insertion in the presence of a field character
 (\lemref{lem:mixed-observable-extensions} and
 \lemref{lem:mixed-radiation-commutator});
 \item Derive the joint Liouville equation from the microscopic
 commutator identity, including its extension to every smooth
 cylindrical test
 (\propref{prop:joint-compactness-and-Liouville-equation});
 \item Construct the coupled characteristics and identify the joint
 measure as their pushforward
 (\lemref{lem:global-joint-Hilbert-flow} and
 \thmref{thm:joint-measure-transport});
 \item Substitute that transport law into the radiation source and
 remove the subsequence when the initial hierarchy converges
 (\lemref{lem:annihilation-as-transported-source-operator} and
 \thmref{thm:uniqueness-of-gp-hierarchy}).
\end{enumerate}
The first two steps are proved in this subsection. The construction
and transport of the coupled flow, persistence of the energy space,
and the consequences for the particle hierarchy are treated in the
subsequent results listed above. Keep the momentum-space convention for the field coordinate and set
\begin{equation}\label{eq:joint-phase-space-and-source}
 \mathscr Z:=\h\oplus\h_f,\qquad
 R(\vfi):=\int_{\R^3}g_x|\vfi(x)|^2\,dx
         =\widehat w\,\widehat{|\vfi|^2}.
\end{equation}
The integral defining $R$ is a Bochner integral in $\h_f$:
$x\mapsto g_x$ is continuous in $\h_f$, and
$\int\|g_x\||\vfi(x)|^2\,dx=\|w\|_{L^2}\|\vfi\|_{L^2}^2$.
The Fourier identity uses the transform and momentum measure
fixed in the microscopic formulation.
The complex space $\mathscr Z$ is regarded as a real Hilbert space
when derivatives of scalar functions are taken.  Define the propagator
\begin{equation}
    \cU_{fr}(t,s):=e^{-i(t-s)\dch_0}\oplus e^{-i(t-s)\omg}.
\end{equation}
Also define
\begin{equation}
\label{eq:joint-free-group-and-nonlinearity}
 \mathcal R(\vfi,u):=
 \begin{pmatrix}
  \bigl(\lam(v*|\vfi|^2)+\eta F_u\bigr)\vfi\vspace{1em}\\
  \eta R(\vfi)
 \end{pmatrix},\\
\end{equation}
For a joint phase-space point $z=(\vfi,u)\in\mathscr Z=\h\oplus\h_f$, define the interaction-picture vector field by
\begin{equation}
 \cVec(t,z):=-i\cU_{fr}(0,t)
              \mathcal R(\cU_{fr}(t,0)z).
\end{equation}
Here $e^{it\omg}$ acts by multiplication in momentum space.
Both free factors in the definition of $\cVec$ are needed:
$\cU_{fr}(t,0)$ returns the interaction-picture point to physical
coordinates, while $\cU_{fr}(0,t)$ maps the nonlinear velocity
back to interaction coordinates.  On the particle unit ball,
\begin{equation}\label{eq:joint-velocity-growth}
 \|\cVec(t,(\vfi,u))\|_{\mathscr Z}
 \leq \lam\|v\|_{L^\infty} +\eta\|w\|_{L^2}
                      +2\eta\|w\|_{L^2} \|u\|_{L^2} .
\end{equation}
Indeed,
\begin{equation}
    \begin{aligned}
        \|R(\vfi)\|_{L^2} &\leq\|w\|_{L^2} \|\vfi\|_{L^2}^2\\
        \|F_u\|_{L^\infty} &\leq2\|w\|_{L^2} \|u\|_{L^2}
    \end{aligned}
\end{equation}
and $\cU_{fr}(t,s)$ preserves both component norms.
The same estimates, together with
$\||\vfi|^2-|\psi|^2\|_{L^1}
\leq(\|\vfi\|+\|\psi\|)\|\vfi-\psi\|$,
show that $\mathcal R$ is continuous on $\mathscr Z$.
Thus, $\cVec$ is jointly continuous, hence Borel, and bounded on
bounded subsets of $[0,T]\times\mathscr Z$.

Recall the free particle evolution $\dch_k=\sum_{j=1}^k\dch_0^{(j)}$. For a bounded particle operator $A$ on $\h^{\otimes r}$, define the classical observable
\begin{equation}
 \begin{aligned}
 b_A(\vfi)&:=\langle\vfi^{\otimes r},A\vfi^{\otimes r}\rangle,
 \\
 G_{N,t}^{(r)}(A,f)&:=\langle\Psi_{N,t},A\wyl_N(f)\Psi_{N,t}\rangle,
 \\
 \rchi_f(u)&:=e^{2i\re \langle f,u\rangle}.
 \end{aligned}
\end{equation}
These expressions agree with the previous notation on $\Hkb$.  Let $S_r$ denote the group of permutations of $\{1,\ldots,r\}$. The orthogonal projection $P_{\mathrm{sym}}^{(r)}$ from $\h^{\otimes r}$ onto its symmetric subspace is defined by
\begin{equation}
 \bigl(P_{\mathrm{sym}}^{(r)}\psi\bigr)(x_1,\ldots,x_r)
 :=
 \frac{1}{r!}\sum_{\sigma\in S_r}
 \psi(x_{\sigma(1)},\ldots,x_{\sigma(r)}),
 \qquad \psi\in\h^{\otimes r}.
\end{equation}
If $A$ does not preserve this subspace, we use its compression $P_{\mathrm{sym}}^{(r)}AP_{\mathrm{sym}}^{(r)}$, restricted to that subspace. Indeed,
\begin{equation}
\begin{aligned}
 b_A(\vfi)
 &=b_{P_{\mathrm{sym}}^{(r)}AP_{\mathrm{sym}}^{(r)}}(\vfi),\\
 G_{N,t}^{(r)}(A,f)
 &=G_{N,t}^{(r)}
   \bigl(P_{\mathrm{sym}}^{(r)}AP_{\mathrm{sym}}^{(r)},f\bigr).
\end{aligned}
\end{equation}
The first equality follows from the symmetry of $\vfi^{\otimes r}$. The second follows from the symmetry of $\Psi_{N,t}$ in its first $r$ particle variables and the fact that $\wyl_N(f)$ acts only on the field. We use this convention in particular for operators such as $[v_{j,k+1},\OJ^{(k)}\otimes\1]$, which distinguish the $(k+1)$-st particle.
\begin{lemma}[Two extensions of mixed observable convergence]
\label{lem:mixed-observable-extensions}
Along the subsequence of
\propref{prop:joint-radiation-source-limit}, the following limits hold
uniformly for any $t\in[0,T]$
\begin{equation}\label{eq:bounded-particle-mixed-limit}
 G_{N,t}^{(r)}(A,f)
 \longrightarrow\int b_A(\vfi)\rchi_f(u)\,d\nu_t(\vfi,u),
\end{equation}
where $A\in\mathcal B(\h^{\otimes r})$ and $f\in\h_f$, and
\begin{equation}\label{eq:weyl-weighted-field-insertion}
\begin{aligned}
 &\langle\Psi_{N,t},K\wyl_N(f)\phi_N(h)\Psi_{N,t}\rangle\longrightarrow
 \int b_K(\vfi)\rchi_f(u)
       2\re\langle h,u\rangle\,d\nu_t(\vfi,u),
\end{aligned}
\end{equation}
where $K\in\mathcal K(\Hkb)$ and $f,h\in\h_f$. The second limit also holds for $k=0$, $K=\1$.
\end{lemma}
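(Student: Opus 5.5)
The plan is to bootstrap both limits from the convergence that \propref{prop:joint-radiation-source-limit} already supplies, by density and truncation arguments; no new dynamical input should be needed. I read that construction as giving, along the chosen subsequence, uniformly in $t\in[0,T]$, the convergence
\begin{equation}
  G_{N,t}^{(k)}(K,f)\longrightarrow\int b_K(\vfi)\rchi_f(u)\,d\nu_t(\vfi,u)
\end{equation}
for compact $K$ (equivalently finite-rank $K$, by norm density) and $f\in\h_f$. This should come from the joint Husimi measures of \lemref{lem:joint-husimi-tightness} together with the equicontinuity of \lemref{lem:equicontinuity-compact-observables}.

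\textbf{First limit (bounded particle tests).} I will reduce to compact tests using the cutoffs $P_M=\1_{\{\OT\le M\}}$, which have finite rank. Write
\begin{equation}
  A=P_M^{\otimes r}AP_M^{\otimes r}+\bigl(A-P_M^{\otimes r}AP_M^{\otimes r}\bigr).
\end{equation}
The first piece is finite rank, so its mixed expectation converges. For the remainder, use $\|\wyl_N(f)\|=1$ and Cauchy--Schwarz in the form $|\langle\Psi,XQ\Psi\rangle|\le\|X\|\,\|Q\Psi\|$, with $Q=\1-P_M^{\otimes r}$ acting on the first $r$ particles. As in the proof of \thmref{thm:compactness-bbgky-sequence}, this gives a bound $C\|A\|\bigl(r\,\Tr(Q_M\gamma_{N,t}^{(1)})\bigr)^{1/2}\le C\|A\|\sqrt{r}\,M^{-1}$, uniform in $N$ and $t$ by \corref{cor:one-particle-energy-bound}. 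On the classical side, \corref{cor:bounded-energy-support-de-finetti} (applied to the particle marginal $\mu_t$ of $\nu_t$) gives $|b_A(\vfi)-b_{P_M^{\otimes r}AP_M^{\otimes r}}(\vfi)|\le 2\|A\|\sqrt r\,\|Q_M\vfi\|\le 2\|A\|\sqrt{r C_E}\,M^{-1}$ $\nu_t$-a.s. Letting $N\to\infty$ and then $M\to\infty$ proves \eqref{eq:bounded-particle-mixed-limit}. The compressions to the symmetric subspace cause no trouble because $P_M^{\otimes r}$ commutes with permutations.

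\textbf{Second limit (field insertion).} This is the main obstacle, because $\phi_N(h)$ is unbounded. The plan is to write it as the derivative of the Weyl family. Since $\wyl_N(f)\wyl_N(sh)=e^{i\sigma s}\wyl_N(f+sh)$ with a phase $\sigma=O(1/N)$ from \lemref{lem:basic-properties-of-Weyl-operators}(i), we have
\begin{equation}
  i\,\wyl_N(f)\phi_N(h)=\partial_s\bigl[e^{i\sigma s}\wyl_N(f+sh)\bigr]_{s=0}.
\end{equation}
I then replace the derivative by a difference quotient at step $s$ and control the error, uniformly in $N$ and $t$, by
\begin{equation}
  C\,s\,\|h\|^2\,\bigl\|((\cN_f+1)/N)^{1/2}\Psi_{N,t}\bigr\|\,\bigl\|((\cN_f+1)/N)^{1/2}\Psi_{N,t}\bigr\|.
\end{equation}
This uses $|e^{ix}-1-ix|\le x^2/2$ applied to $\phi_N(h)$, expressed through $\|\phi_N(h)\Psi\|^2$. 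The right-hand side is bounded by the photon-number estimate \eqref{eq:joint-source-uniform-number}, which gives $\sup_{t\le T}\|\cN_f^{1/2}\Psi_{N,t}\|^2/N\le\cM_T$. Each difference quotient converges by the first limit (or the compact case) to
\begin{equation}
  \int b_K\,\rchi_f\,\frac{\rchi_{sh}-1}{is}\,d\nu_t.
\end{equation}
On the classical side, dominated convergence as $s\to0$ applies, because the second field moment $\int\|u\|^2\,d\nu_t\le\cM_T$ holds by lower semicontinuity of the moment bounds in \lemref{lem:joint-husimi-tightness}. The limit of the classical side is $\int b_K\rchi_f\,2\re\langle h,u\rangle\,d\nu_t$.

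Choosing $s$ small, then $N$ large, yields \eqref{eq:weyl-weighted-field-insertion} uniformly in $t$. For the case $k=0$, $K=\1$, run the same argument, taking the first limit with $A=\1$.
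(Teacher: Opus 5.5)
Your proposal is correct and follows essentially the paper's route: a finite-rank spectral compression plus the uniform energy bound handles bounded particle tests, and for the field insertion you differentiate the Weyl family via the Weyl relation, controlling the second-order Taylor remainders uniformly (by the first photon-number moment on the quantum side and the second field moment of $\nu_t$ on the classical side), with $N\to\infty$ taken before the step goes to zero. The one step to make explicit is that in the quantum remainder the left vector is $\wyl_N(f)^*K^*\Psi_{N,t}$, not $\Psi_{N,t}$, so you should write the remainder as $\langle Ba,q(B)Bb\rangle$ with $B=\phi_N(h)$ and bound $\|Ba\|$ by the displaced photon-number estimate, exactly as the paper does (equivalently: $K$ commutes with $\phi_N(h)$, and conjugation by $\wyl_N(f)$ shifts $\phi_N(h)$ by a scalar of order $N^{-1}$).
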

\begin{proof}
Let $P_d$ be the orthogonal projection onto the first $d$ spectral
modes of $\OT^2$, and let
$\Pi_d=P_d^{\otimes r}$.  If $\lambda_{d+1}$ is the next eigenvalue,
particle symmetry and \eqref{eq:joint-source-uniform-energy} give
\begin{equation}
 \langle\Psi_{N,t},(1-\Pi_d)\Psi_{N,t}\rangle
 \leq\frac{r \, C_E}{\lambda_{d+1}}.
\end{equation}
Decompose  $A-\Pi_dA\Pi_d=(1-\Pi_d)A+\Pi_dA(1-\Pi_d)$. Then, 
Cauchy--Schwarz, unitarity of $\wyl_N(f)$, and its commutation
with particle projections imply
\begin{equation}\label{eq:bounded-particle-cutoff-error}
 \left|\langle\Psi_{N,t},
       (A-\Pi_dA\Pi_d)\wyl_N(f)\Psi_{N,t}\rangle\right|
 \leq2\|A\|\sqrt{\frac{rC_E}{\lambda_{d+1}}}.
\end{equation}
The identical estimate for the classical integral follows from
\eqref{eq:joint-source-joint-moment-bounds}, using
$1-P_d^{\otimes r}\leq\sum_{j=1}^r(1-P_d)^{(j)}$ on tensor
powers. 

More explicitly, on $\|\vfi\|=1$, we have
\begin{equation}
 \|(1-\Pi_d)\vfi^{\otimes r}\|^2
 \leq r\|(1-P_d)\vfi\|^2
 \leq\frac{r}{\lambda_{d+1}}\|\OT\vfi\|^2.
\end{equation}
Similarly, decomposing the classical pairing and applying Cauchy--Schwarz in $\nu_t$ gives
\begin{equation}
 \left|\int
 (b_A-b_{\Pi_dA\Pi_d})(\vfi)\rchi_f(u)\,d\nu_t\right|
 \leq2\|A\|\sqrt{\frac{rC_E}{\lambda_{d+1}}}.
\end{equation}
The compression $\Pi_dA\Pi_d$ is finite rank. Apply
\eqref{eq:joint-source-uniform-characteristics} to this compression,
then let $d\to\infty$ proves \eqref{eq:bounded-particle-mixed-limit}.

For the second assertion put $\epsilon=N^{-1}$ and
$\alpha=\im \langle f,h\rangle$.  The Weyl relation becomes
\begin{equation}\label{eq:mixed-weyl-directional-identity}
 \wyl_N(f+rh)
 =e^{i\epsilon r\alpha}\wyl_N(f)e^{ir\phi_N(h)},
 \qquad r\in\R.
\end{equation}
In particular, the first derivative at $r=0$ is
\begin{equation}
 \left.\frac{d}{dr}G_{N,t}^{(k)}(K,f+rh)\right|_{r=0}
 =i\langle K\wyl_N(f)\phi_N(h)\rangle_{N,t}
          +i\epsilon\alpha G_{N,t}^{(k)}(K,f),
\end{equation}
where $\langle\cdot\rangle_{N,t}$ denotes expectation in $\Psi_{N,t}$.

We justify the convergence of this derivative.
Let $B$ be self-adjoint, $r\in\R$, and $a,b\in\cD(B)$. Taylor's formula gives
\begin{equation}
 e^{irx}-1-irx=x^2q_r(x),
 \qquad
 q_r(x):=-r^2\int_0^1(1-s)e^{irsx}\,ds,
 \qquad x\in\R.
\end{equation}
Since $|e^{irsx}|=1$, we have $\|q_r\|_{L^\infty}\leq r^2/2$. The spectral theorem therefore yields the quadratic-form identity
\begin{equation}
 \langle a,(e^{irB}-\1-irB)b\rangle
 =\langle Ba,q_r(B)Bb\rangle.
\end{equation}
Consequently, the functional calculus and Cauchy--Schwarz imply
\begin{equation}\label{eq:mixed-spectral-remainder}
\begin{aligned}
 \left|\langle a,(e^{irB}-\1-irB)b\rangle\right|
 &\leq \|q_r(B)\|\,\|Ba\|\,\|Bb\|\leq \frac{r^2}{2}\|Ba\|\,\|Bb\|.
\end{aligned}
\end{equation}
Only $a,b\in\cD(B)$ is required, since one factor of $B$ acts on each vector and $q_r(B)$ is bounded.
Apply this with
$B=\phi_N(h)$,
$a=\wyld_N(f) K^*\Psi_{N,t}$, and $b=\Psi_{N,t}$.
These vectors belong to $\cD(B)$: $K$ commutes with the photon
number operator, and Weyl displacement preserves
$\cD(\cN_f^{1/2})\subset\cD(\phi_N(h))$.
The number and displacement bounds in
\eqref{eq:joint-source-quadrature-bound} and
\eqref{eq:joint-source-displaced-number} imply
\begin{equation}
 \|Bb\|\leq2\|h\|\sqrt{\cM_T},\qquad
 \|Ba\|\leq2\|K\|\|h\|
                      \bigl(\sqrt{\cM_T}+\epsilon\|f\|\bigr).
\end{equation}
The scalar phase in \eqref{eq:mixed-weyl-directional-identity}
has a uniformly bounded second derivative for fixed $f,h$.
Consequently, for $|r|\leq1$, we have 
\begin{align}
 \big|G_{N,t}^{(k)}&(K,f+rh)-G_{N,t}^{(k)}(K,f)\\
 &-ir\langle K\wyl_N(f)\phi_N(h)\rangle_{N,t}-i\epsilon r\alpha G_{N,t}^{(k)}(K,f)\big|
 \leq C_{K,f,h,T}r^2,
 \label{eq:mixed-uniform-taylor-remainder}
\end{align}
uniformly in $N,t$.  

To estimate the classical remainder, define
\begin{equation}
 G_t^{(k)}(K,g)
 :=\int b_K(\vfi)\rchi_g(u)\,d\nu_t(\vfi,u),
 \qquad g\in\h_f.
\end{equation}
The scalar inequality $|e^{ix}-1-ix|\leq x^2/2$, together with $|b_K(\vfi)|\leq\|K\|$ and the field second-moment bound, gives
\begin{equation}
\begin{aligned}
 \Bigg|
 G_t^{(k)}(K,f+rh)-G_t^{(k)}(K,f)
 -&ir\int b_K(\vfi)\rchi_f(u)
              2\re\langle h,u\rangle\,d\nu_t
 \Bigg|\\
 &\qquad\leq
 2r^2\|K\|\int|\re\langle h,u\rangle|^2\,d\nu_t\\
 &\qquad\leq
 2r^2\|K\|\|h\|^2\cM_T.
\end{aligned}
\end{equation}
Thus, both the quantum and classical Taylor remainders are bounded by a constant times $r^2$, uniformly in $N$ and $t\in[0,T]$.

For the fixed test $K$, write
\begin{equation}
 \delta_N(g)
 :=\sup_{0\leq t\leq T}
 \left|G_{N,t}^{(k)}(K,g)-G_t^{(k)}(K,g)\right|.
\end{equation}
By \eqref{eq:joint-source-uniform-characteristics}, we have $\delta_N(g)\to0$ as $N\to\infty$ for every fixed $g\in\h_f$. Subtracting the two Taylor expansions, dividing by $|r|$, and using $|G_{N,t}^{(k)}(K,f)|\leq\|K\|$, we obtain, for $0<|r|\leq1$,
\begin{equation}
\begin{aligned}
 \sup_{0\leq t\leq T}
 \Bigg|
 \langle K\wyl_N(f)&\phi_N(h)\rangle_{N,t}
 -\int b_K(\vfi)\rchi_f(u)
              2\re\langle h,u\rangle\,d\nu_t
 \Bigg|\\
 &\qquad\leq
 \frac{\delta_N(f+rh)+\delta_N(f)}{|r|}
 +\frac{|\alpha|\|K\|}{N}
 +C_{K,f,h,T}|r|.
\end{aligned}
\end{equation}
Here $C_{K,f,h,T}$ includes both remainder bounds and is independent of $N$, $t$, and $r\in[-1,1]$. For each fixed nonzero $r$, the first two terms on the right tend to zero as $N\to\infty$. Consequently,
\begin{equation}
\begin{aligned}
 \limsup_{N\to\infty}\sup_{0\leq t\leq T}
 \Bigg| \langle K\, \wyl_N(f)\phi_N(h)&\rangle_{N,t}
 -\int b_K(\vfi)\rchi_f(u)
              2\re\langle h,u\rangle\,d\nu_t
 \Bigg|\\
 &\leq C_{K,f,h,T}|r|, \quad \text{as } N\to\infty.
\end{aligned}
\end{equation}
Letting $r\to0$ proves \eqref{eq:weyl-weighted-field-insertion}. The quantum remainder requires only the first photon-number moment, while the classical remainder uses $\int\|u\|^2\,d\nu_t\leq\cM_T$; no second moment of $\cN_f$ is needed.
\end{proof}

\begin{lemma}[The radiation commutator with a field character]
\label{lem:mixed-radiation-commutator}
Along the subsequence of \propref{prop:joint-radiation-source-limit}, for every $\OJ^{(k)}\in\mathcal K(\Hkb)$ and $f\in\h_f$, we have
\begin{equation}
\label{eq:mixed-radiation-commutator-limit}
\begin{aligned}
 &\langle\Psi_{N,t},
      [\dcV_N^{(k)},\OJ^{(k)}]\wyl_N(f)\Psi_{N,t}\rangle\\
 &\qquad \qquad \qquad \longrightarrow
 \int\rchi_f(u)
   \langle\vfi^{\otimes k},
      [F_u^{(k)},\OJ^{(k)}]\vfi^{\otimes k}\rangle
 \,d\nu_t(\vfi,u),
\end{aligned}
\end{equation}
uniformly for $t\in[0,T]$.
\end{lemma}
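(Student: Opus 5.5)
The plan is to reduce the commutator to a field insertion of the type handled in \lemref{lem:mixed-observable-extensions}, after separating the particle position dependence of the smearing function $G_{\ux_k}$. On the fiber over $\ux_k$, $\dcV_N^{(k)}=\phi_N(G_{\ux_k})=\sum_{j=1}^k\int\phi_N(\delta_\xi)\,e^{-i\xi\cdot x_j}\widehat w(\xi)\,d\xi$, with the Fourier convention of the paper. This is a multiplication operator in the particle variables that is linear in the field. Writing it through the Fourier representation in $\xi$ gives
\begin{equation}
  [\dcV_N^{(k)},\OJ^{(k)}]
  =\sum_{j=1}^k\int_{\R^3}\widehat w(\xi)\,
   \bigl[e^{-i\xi\cdot x_j},\OJ^{(k)}\bigr]\otimes \phi_N(\delta_\xi)\,d\xi,
\end{equation}
understood weakly.

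To avoid distributional field operators, I would first approximate. Choose an orthonormal basis $(e_n)$ of $\h_f$ and expand $g_{x}=\sum_n\langle e_n,g_x\rangle e_n$. The coefficient $c_n(x):=\langle e_n,g_x\rangle$ is a bounded continuous function of $x$, and $\sum_n|c_n(x)|^2=\|w\|_{L^2}^2$. Truncating to $n\leq M$ gives
\begin{equation}
  [\dcV_{N,M}^{(k)},\OJ^{(k)}]=\sum_{n\leq M}\Bigl(K_n\,\phi_N(e_n)_{\re}+\tilde K_n\,\phi_N(ie_n)\Bigr),
\end{equation}
after splitting $c_n$ into real and imaginary parts. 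Here $K_n$ and $\tilde K_n$ are commutators of $\OJ^{(k)}$ with bounded multiplication operators $\sum_j\re c_n(x_j)$ and $\sum_j \im c_n(x_j)$, so they are compact because $\OJ^{(k)}$ is compact. The field operators commute with particle operators and with $\wyl_N(f)$ up to the explicit phase in \eqref{eq:mixed-weyl-directional-identity}. That phase is $O(N^{-1})$ after reordering so that $\wyl_N(f)$ stands to the left of $\phi_N(h)$, since $[\phi_N(h),\wyl_N(f)]$ is a scalar times $N^{-1}\wyl_N(f)$. For each term I would then apply \eqref{eq:weyl-weighted-field-insertion} with $K=K_n$ and $h=e_n$ (respectively $ie_n$). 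Summing over $n$ yields exactly $\int\rchi_f(u)\langle\vfi^{\otimes k},[F^{(k)}_{u,M},\OJ^{(k)}]\vfi^{\otimes k}\rangle\,d\nu_t$, where $F_{u,M}$ is the corresponding truncation of $F_u=2\re\langle g_x,u\rangle$. The limit is uniform in $t\in[0,T]$.

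The main obstacle is removing the truncation uniformly in $N$ and $t$. On the quantum side, the error is governed by $\phi_N(G_{\ux_k}-G_{\ux_k,M})$. The creation/annihilation estimate bounds it by $2\sup_{x}\|g_x-P_Mg_x\|$ times $\|((\cN_f+1)/N)^{1/2}\Psi_{N,t}\|\leq\sqrt{\cM_T+1}$, using \eqref{eq:joint-source-quadrature-bound}. The difficulty is that $\sup_x\|(1-P_M)g_x\|$ need not tend to zero, because $x\mapsto g_x$ is not norm-compact over all of $\R^3$. I would handle this by exploiting compactness of $\OJ^{(k)}$. First replace $\OJ^{(k)}$ by a finite-rank $\Pi_d\OJ^{(k)}\Pi_d$ with $\Pi_d$ spectral projections of $\OT$, at cost $O(\|\OJ^{(k)}-\Pi_d\OJ^{(k)}\Pi_d\|)$ times the same uniform field bound. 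Then insert a spatial cutoff $\1_{|x_j|\leq L}$. The tail $\1_{|x_j|>L}\Pi_d$ is small in operator norm because the finitely many eigenfunctions decay, and on $\{|x|\leq L\}$ the set $\{g_x\}$ is norm-compact in $\h_f$, so $\sup_{|x|\leq L}\|(1-P_M)g_x\|\to0$. The classical side is handled identically, using the moment bound $\int\|u\|^2d\nu_t\leq\cM_T$ and $\|\vfi\|=1$. Taking successively $N\to\infty$, then $M\to\infty$, then $L,d\to\infty$ gives the claim.
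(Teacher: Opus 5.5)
Your proposal is correct. The first half coincides with the paper's proof: the real-mode expansion of $P_M G_{\ux_k}$, the compact commutators $[M_{\re c_n},\OJ^{(k)}]$ and $[M_{\operatorname{Im} c_n},\OJ^{(k)}]$, moving $\phi_N(h)$ past $\wyl_N(f)$ at cost $O(N^{-1})$, and then applying \eqref{eq:weyl-weighted-field-insertion}. You also correctly identify the real difficulty: $\sup_{x\in\R^3}\|(1-P_M)g_x\|$ need not vanish, so compactness of $\OJ^{(k)}$ must supply the localization.

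Where you differ is in how that localization is obtained.

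\textbf{The paper's route.} It works with rank-one tests $|p\rangle\langle q|$ and the weighted quantity $d_m(p)^2=\int|p(\ux_k)|^2\|(1-P_m^{\mathrm f})G_{\ux_k}\|^2\,d\ux_k$. The $L^2$ weight $|p|^2$ does the localizing, so it needs only pointwise convergence $(1-P_m^{\mathrm f})G_{\ux_k}\to0$ plus dominated convergence. The extension to general compact tests comes at the very end, by operator-norm approximation with uniform bounds.

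\textbf{Your route.} You reduce to $\Pi_d\OJ^{(k)}\Pi_d$ first, then insert a spatial cutoff. This rests on two facts.
\begin{itemize}
\item $\|\1_{|x_j|>L}\Pi_d\|\to0$, because $\Pi_d$ has finite rank.
\item $\sup_{|x|\leq L}\|(1-P_M)g_x\|\to0$. This needs norm continuity of $x\mapsto g_x$ (continuity of translations in $L^2$) and the fact that increasing projections converge uniformly on norm-compact sets.
\end{itemize}
The argument closes because the cutoff $\1_{|x_j|\leq L}$ commutes with the decomposable operator $\phi_N(R_M)$ and with $\cN_f$. On the complement of the cutoff region, the crude bound $\|R_M(\ux_k)\|\leq k\|w\|_{L^2}$ is enough. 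In both commutator terms the $\Pi_d$ sits next to $\phi_N(R_M)$ once you move the self-adjoint field operator onto $(\Pi_d\OJ^{(k)}\Pi_d)^*\Psi_{N,t}$. The classical side works the same way.

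\textbf{Comparison.} Your route costs an extra continuity input and a longer chain of limits: $N\to\infty$, then $M\to\infty$, then $L\to\infty$ at fixed $d$, then $d\to\infty$. The paper's $d_m(p)$ device gets the same localization in one step, with no spatial cutoff and no uniform convergence on compacta.
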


\begin{proof}
Bounded particle operators and $\wyl_N(f)$ preserve $\cD(\cN_f^{1/2})$. Since $\|G_{\ux_k}\|_{\h_f}\leq k\|w\|_{L^2}$, the fiberwise quadrature bound places this space inside $\cD(\dcV_N^{(k)})$. Thus, the commutator products in the statement are well-defined on $\Psi_{N,t}$. Fix $m$. With $c_\ell(\ux_k)=\langle e_\ell,G_{\ux_k}\rangle$, the finite-mode expansion \eqref{eq:joint-source-real-mode-expansion} reads
\begin{equation}
 \dcV_{N,m}^{(k)}
 =\sum_{\ell=1}^m
 \left(
   M_{\re c_\ell}\phi_N(e_\ell)
   +M_{\operatorname{Im}c_\ell}\phi_N(ie_\ell)
 \right).
\end{equation}
Each multiplier $M_{\re c_\ell}$ and $M_{\operatorname{Im}c_\ell}$ is bounded, so its commutator with $\OJ^{(k)}$ is compact. For a typical term $M_c\phi_N(h)$ in this expansion, the particle operators commute with the field operators, and the Weyl commutator identity gives
\begin{equation}
\begin{aligned}
 &\langle
   [M_c\phi_N(h),\OJ^{(k)}]\wyl_N(f)
  \rangle_{N,t}\\
 &\quad=
 \langle
   [M_c,\OJ^{(k)}]\wyl_N(f)\phi_N(h)
 \rangle_{N,t}-\frac{2}{N}\operatorname{Im}\langle h,f\rangle\,
 G_{N,t}^{(k)}([M_c,\OJ^{(k)}],f).
\end{aligned}
\end{equation}
Here $\langle\cdot\rangle_{N,t}$ denotes expectation in $\Psi_{N,t}$, and $h$ is either $e_\ell$ or $ie_\ell$. The second term is bounded in absolute value by
\begin{equation}
 \frac{2}{N}\|h\|\|f\|\,
 \|[M_c,\OJ^{(k)}]\|,
\end{equation}
uniformly in $t$. Applying \eqref{eq:weyl-weighted-field-insertion} to the first term and summing over the finitely many modes therefore yields
\begin{equation}
\begin{aligned}
 &\langle\Psi_{N,t},
   [\dcV_{N,m}^{(k)},\OJ^{(k)}]
   \wyl_N(f)\Psi_{N,t}\rangle\\
 &\qquad \qquad \qquad \longrightarrow
 \int\rchi_f(u)
   \langle\vfi^{\otimes k},
      [F_{P_m^{\mathrm f}u}^{(k)},\OJ^{(k)}]
      \vfi^{\otimes k}\rangle
 \,d\nu_t,
\end{aligned}
\end{equation}
uniformly for $t\in[0,T]$. The classical multiplier is identified by the corresponding real-mode expansion of $F_{P_m^{\mathrm f}u}^{(k)}$. We next remove the field projection. As in Step~6 of the source proof, set
\begin{equation}
 R_m(\ux_k)=(\1-P_m^{\mathrm f})G_{\ux_k},
 \qquad
 d_m(p)^2
 =\int |p(\ux_k)|^2
       \|R_m(\ux_k)\|_{\h_f}^2\,d\ux_k.
\end{equation}
Strong convergence of $P_m^{\mathrm f}$ gives $R_m(\ux_k)\to0$ for every $\ux_k$, while $\|R_m(\ux_k)\|_{\h_f}\leq k\|w\|_{L^2}$. Dominated convergence consequently gives $d_m(p)\to0$ for every $p\in\Hkb$. First take $\OJ^{(k)}=|p\rangle\langle q|$. The partial-inner-product estimate from Step~6 gives, for every $\zeta\in\cD(\cN_f^{1/2})$,
\begin{equation}
 \|\phi_N(R_m)\OJ^{(k)}\zeta\|
 \leq
 2\|q\|d_m(p)
 \left\|
   \left(\frac{\cN_f+1}{N}\right)^{1/2}\zeta
 \right\|.
\end{equation}
Apply this estimate to $\zeta=\wyl_N(f)\Psi_{N,t}$ and, after interchanging $p$ and $q$, to $\zeta=\Psi_{N,t}$. In the second commutator term, move the self-adjoint operator $\phi_N(R_m)$ onto $(\OJ^{(k)})^*\Psi_{N,t}$. Unitarity of $\wyl_N(f)$ and \eqref{eq:joint-source-displaced-number} then imply
\begin{equation}
\label{eq:mixed-radiation-tail}
\begin{aligned}
 \Bigg|\langle\Psi_{N,t},
   [\phi_N(R_m)&,\OJ^{(k)}]
   \wyl_N(f)\Psi_{N,t}\rangle\Bigg|\\
 &\quad\leq
 \|\phi_N(R_m)\OJ^{(k)}
          \wyl_N(f)\Psi_{N,t}\|
 +\|\phi_N(R_m)(\OJ^{(k)})^*\Psi_{N,t}\|\\
 &\quad\leq
 2\|q\|d_m(p)
       \left(\sqrt{\cM_T}+\frac{\|f\|}{N}\right)
 +2\|p\|d_m(q)\sqrt{\cM_T}.
\end{aligned}
\end{equation}
On the classical side,
\begin{equation}
 \|(F_u^{(k)}-F_{P_m^{\mathrm f}u}^{(k)})p\|
 \leq2\|u\|d_m(p).
\end{equation}
The rank-one commutator identity, $|\rchi_f(u)|=1$, and $\int\|u\|\,d\nu_t\leq\sqrt{\cM_T}$ therefore give
\begin{equation}
\begin{aligned}
 &\left|
 \int\rchi_f(u)
 \langle\vfi^{\otimes k},
   [F_u^{(k)}-F_{P_m^{\mathrm f}u}^{(k)},\OJ^{(k)}]
   \vfi^{\otimes k}\rangle\,d\nu_t
 \right|\\
 &\qquad \qquad \qquad  \leq
 2\sqrt{\cM_T}
 \bigl(\|q\|d_m(p)+\|p\|d_m(q)\bigr).
\end{aligned}
\end{equation}
Both tail estimates are uniform in $t$ and extend by the triangle inequality to finite sums of rank-one operators. Combining them with the finite-mode convergence, first letting $N\to\infty$ along the extracted subsequence with $m$ fixed and then letting $m\to\infty$, proves \eqref{eq:mixed-radiation-commutator-limit} for every finite-rank test, uniformly on $[0,T]$. Finally, for any bounded particle operator $\OJ^{(k)}$, self-adjointness of $\dcV_N^{(k)}$, unitarity of $\wyl_N(f)$, and the quadrature and displaced-number bounds yield
\begin{equation}
\begin{aligned}
 \Bigg|\langle\Psi_{N,t},&
   [\dcV_N^{(k)},\OJ^{(k)}]
   \wyl_N(f)\Psi_{N,t}\Bigg|\\
 &\quad\leq
 \|\OJ^{(k)}\|
 \left(
   \|\dcV_N^{(k)}\Psi_{N,t}\|
   +\|\dcV_N^{(k)}\wyl_N(f)\Psi_{N,t}\|
 \right)\\
 &\quad\leq
 2k\|w\|_{L^2}\|\OJ^{(k)}\|
 \left(2\sqrt{\cM_T}+\frac{\|f\|}{N}\right).
\end{aligned}
\end{equation}
Similarly, $\|F_u^{(k)}\|\leq2k\|w\|_{L^2}\|u\|$ gives
\begin{equation}
\begin{aligned}
 \Bigg|
 \int\rchi_f(u)
 \langle\vfi^{\otimes k},&
   [F_u^{(k)},\OJ^{(k)}]\vfi^{\otimes k}\rangle
 \,d\nu_t
 \Bigg|\\
 &\quad\leq
 4k\|w\|_{L^2}\|\OJ^{(k)}\|
 \int\|u\|\,d\nu_t\\
 &\quad\leq
 4k\|w\|_{L^2}\|\OJ^{(k)}\|\sqrt{\cM_T}.
\end{aligned}
\end{equation}
Given a compact $\OJ^{(k)}$, choose finite-rank operators $K_n$ with $\|K_n-\OJ^{(k)}\|\to0$. Apply these two bounds to $\OJ^{(k)}-K_n$, use the established convergence for each fixed $K_n$, and then let $n\to\infty$. This proves \eqref{eq:mixed-radiation-commutator-limit} for every compact test, uniformly on $[0,T]$.
\end{proof}

\begin{proposition}[The joint Liouville equation]
\label{prop:joint-compactness-and-Liouville-equation}
Let $\nu_t$ be the measure obtained in \propref{prop:joint-radiation-source-limit}, and define
\begin{equation}\label{eq:joint-interaction-picture-measure}
 \widetilde\nu_t
 :=\bigl(\cU_{fr}(0,t)\bigr)_{\#}\nu_t.
\end{equation}
The curve $t\mapsto\widetilde\nu_t$ is narrowly continuous for the weak topology of $\mathscr Z$. It satisfies
\begin{equation}\label{eq:joint-Liouville-integral}
 \int_{\mathscr Z}b(z)\,d\widetilde\nu_t(z)
 -\int_{\mathscr Z}b(z)\,d\widetilde\nu_s(z)
 =
 \int_s^t\int_{\mathscr Z}
       Db(z)[\cVec(r,z)]\,d\widetilde\nu_r(z)\,dr
\end{equation}
for $0\leq s,t\leq T$ and every smooth cylindrical test $b$ with a compactly supported finite-dimensional representative, where derivatives are understood in the real sense. Furthermore,
\begin{equation}\label{eq:joint-scalar-velocity-bound}
 \int_0^T\int_{\mathscr Z}
       \|\cVec(t,z)\|_{\mathscr Z}\,d\widetilde\nu_t(z)\,dt
 \leq CT(1+\sqrt{\cM_T})<\infty.
\end{equation}
\end{proposition}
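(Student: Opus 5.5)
The plan is to derive \eqref{eq:joint-Liouville-integral} from the microscopic Heisenberg identity for the mixed observables $G_{N,t}^{(r)}(A,f)$, written in the interaction picture, and to pass to the limit with \lemref{lem:mixed-observable-extensions} and \lemref{lem:mixed-radiation-commutator}. By the Stone--Weierstrass and Fourier-type density of trigonometric cylindrical functions, it suffices to treat the generating tests
\[
 b(\vfi,u)=b_{A}(\vfi)\,\rchi_f(u),
\]
with $A=\OJ^{(k)}$ drawn from the regular compact family of \lemref{lem:regular-compact-test-family}. The general smooth cylindrical test with compactly supported finite-dimensional representative is recovered at the end by approximation. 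That last step uses \eqref{eq:joint-scalar-velocity-bound} to dominate the right-hand side, together with the moment bounds $\int\|u\|^2\,d\nu_t\leq\cM_T$ and the bounded energy support of the particle marginal.

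\textbf{Step 1 (microscopic identity).} In the interaction picture, replace $\OJ^{(k)}$ by $e^{-it\dch_k}\OJ^{(k)}e^{it\dch_k}$ and $f$ by $e^{-it\omg}f$; this is exactly the correspondence with $\widetilde\nu_t=(\cU_{fr}(0,t))_{\#}\nu_t$. Differentiating $\langle\Psi_{N,t},\OJ_t\wyl_N(f_t)\Psi_{N,t}\rangle$ in time, the free parts cancel against the time dependence of the tests. The regularity in \lemref{lem:regular-compact-test-family} and the cutoff approximation from the proof of \lemref{lem:finite-N-BBGKY-hierarchy} justify this cancellation. Three contributions remain.
\begin{enumerate}
 \item The two-body term. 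Its $(k+1)$-particle piece converges to $\lam\int b_{B_J}\rchi_f\,d\nu_t$ by \eqref{eq:bounded-particle-mixed-limit} applied to the compressed operator of \corref{cor:bounded-BBGKY-interaction-limit}. Its internal part vanishes as $O(k^2/N)$.
 \item The commutator of $\eta N^{-1/2}W_N$ with the particle test $\OJ$, restricted to the first $k$ particles. This equals $\eta\langle[\dcV_N^{(k)},\OJ]\wyl_N(f)\rangle$ up to sign and converges by \lemref{lem:mixed-radiation-commutator}.
 \item The commutator of $\eta N^{-1/2}W_N$ with $\wyl_N(f)$. By the Weyl relations it equals $\eta\,\wyl_N(f)$ times the multiplication operator $\frac{1}{N}\sum_{j}2\im\langle f,g_{x_j}\rangle$, up to $O(1/N)$. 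This is a one-body particle operator, which I rewrite as $\OJ\otimes(\text{one-particle multiplier})$ using bosonic symmetry. Its limit is then again given by \eqref{eq:bounded-particle-mixed-limit} with $r=k+1$, and it produces $D_u\rchi_f[-i\eta R(\vfi)]$.
\end{enumerate}

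\textbf{Step 2 (identification and time integration).} I will check algebraically that the sum of the three limits equals $Db(z)[\cVec(t,z)]$ evaluated under $\nu_t$ and then pushed to interaction coordinates. The two-body and $F_u$ terms reproduce the particle component of $\mathcal R$, and item 3 reproduces the field component $\eta R(\vfi)$. Uniformity in $t$ of all three convergences, combined with the uniform bounds on $[0,T]$, lets me integrate in time via dominated convergence. The left-hand side converges pointwise in $t$ to $\int b\,d\widetilde\nu_t$ by \eqref{eq:bounded-particle-mixed-limit}. Narrow continuity of $t\mapsto\widetilde\nu_t$ for the weak topology then follows from the resulting integral identity: it makes $t\mapsto\int b\,d\widetilde\nu_t$ Lipschitz for each generating test, and this combines with tightness from the moment bounds and the compact support of the particle marginal from \corref{cor:bounded-energy-support-de-finetti}.

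\textbf{Step 3 (velocity bound).} Estimate \eqref{eq:joint-scalar-velocity-bound} follows immediately from \eqref{eq:joint-velocity-growth}, since the free propagator preserves $\|u\|$, together with Cauchy--Schwarz $\int\|u\|\,d\nu_t\leq\sqrt{\cM_T}$.

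I expect the main obstacle to be item 3 of Step 1 together with the final density argument. The field-side derivative couples the photon observable to the particle density $|\vfi|^2$ through a non-compact one-body multiplier. So I must use the bounded-operator extension \eqref{eq:bounded-particle-mixed-limit} rather than compact tests, and I must control the $(k+1)$-body symmetrization error uniformly in time. Extending from trigonometric generators to general compactly supported smooth cylindrical $b$ also needs care with derivatives in the real sense. For that I would first try Fourier representation of the finite-dimensional profile and dominated convergence, using \eqref{eq:joint-scalar-velocity-bound} for the domination.
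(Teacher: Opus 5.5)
Your architecture matches the paper's. You use the free-evolved tests $e^{-it\dch_k}\OJ^{(k)}e^{it\dch_k}$ and $e^{-it\omg}f$ to cancel the free terms, and the same three surviving contributions are handled by \eqref{eq:bounded-particle-mixed-limit} and \eqref{eq:mixed-radiation-commutator-limit}. The generator is identified as $Db(z)[\cVec(t,z)]$, and the velocity bound follows from \eqref{eq:joint-velocity-growth}.

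The gap is the reduction you assert at the outset and defer to the end: passing from the generating tests $b_{\OJ^{(k)}}(\vfi)\rchi_f(u)$ to an arbitrary smooth cylindrical test. The particle factors $b_{\OJ^{(k)}}$ are balanced, hence invariant under $\vfi\mapsto e^{i\theta}\vfi$. Their span therefore cannot separate $\vfi$ from $e^{i\theta}\vfi$, and is not dense in any sense among cylindrical functions of $\vfi$. A Fourier representation of the profile in the particle coordinates produces characters $e^{2i\re\langle g,\vfi\rangle}$, and none of your microscopic identities or lemmas covers these. In addition, Stone--Weierstrass gives only $C^0$ approximation. The right-hand side of \eqref{eq:joint-Liouville-integral} involves $Db$, so you need convergence of first derivatives as well, uniformly on the support (the particle unit ball and all field coordinates). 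Only then can \eqref{eq:joint-scalar-velocity-bound} serve as the dominating bound.

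The missing idea is phase averaging. Let $\mathcal G_\theta(\vfi,u)=(e^{i\theta}\vfi,u)$. The measure $\nu_t$ is $\mathcal G_\theta$-invariant and $\cU_{fr}$ commutes with $\mathcal G_\theta$, so $\widetilde\nu_t$ is invariant too. Also $\cVec(t,\mathcal G_\theta z)=\mathcal G_\theta\cVec(t,z)$, because $R$ and the multipliers depend only on $|\vfi|^2$. By the chain rule and Fubini, replacing $b$ by $\frac1{2\pi}\int_0^{2\pi}b\circ\mathcal G_\theta\,d\theta$ leaves both sides of \eqref{eq:joint-Liouville-integral} unchanged.

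The paper then proceeds in four steps:
\begin{enumerate}
\item It approximates the invariant profile in $C^1$ on a cube containing the particle unit ball, using Bernstein polynomials uniformly in the field coordinates.
\item It phase-averages the approximants. Only balanced monomials $\overline z^{\,\alpha}z^\beta$ with $|\alpha|=|\beta|$ survive, and these are exactly $b_{\OJ}$ for finite-rank $\OJ$ on symmetric tensor powers.
\item Only after that does it use Fourier inversion, and only in the field coordinates.
\item This requires the identity for all finite-rank $\OJ$ and all $f\in\h_f$, not only the regular family and $f\in\cD(\omg)$ used microscopically. So an intermediate extension is needed: spectral compressions $\Pi_d\OJ\Pi_d$ with $C^1$ control, and density of $\cD(\omg)$.
\end{enumerate}

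Your particle-Fourier route can alternatively be rescued: expand the character in its power series and observe that unbalanced monomials contribute zero to both sides by the same invariance. The same phase invariance is also what makes your narrow-continuity argument close. The generating tests determine a measure only within the class of phase-invariant measures, and every limit point of $\widetilde\nu_{t_n}$ lies in that class.
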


\begin{proof}
The free propagators preserve the norms in both factors of $\mathscr Z$. Thus the particle sphere support and field second-moment bound pass from $\nu_t$ to $\widetilde\nu_t$, and \eqref{eq:joint-velocity-growth} gives \eqref{eq:joint-scalar-velocity-bound}. We derive the Liouville equation from the microscopic dynamics.

\medskip\noindent\textbf{The finite-$N$ mixed identity.}
Take a regular finite-rank particle test $\OJ^{(k)}$, with range and adjoint range contained in a finite span of eigenvectors of $\dch_k$, and take $f\in\cD(\omg)$. Set
\begin{equation}
 m_f(x):=\im\langle g_x,f\rangle,
 \qquad
 \|m_f\|_{L^\infty}\leq\|w\|_{L^2}\|f\|.
\end{equation}
As in Step~3 of the source proof, take the permutation average of the labeled particle observable when acting on the bosonic $N$-particle space. Its expectation is unchanged by symmetry. The regularity of the particle test and the condition $f\in\cD(\omg)$ ensure that the observable and its adjoint preserve the energy form domain, while the photon-number bound makes the field insertions well defined. The integrated Heisenberg identity from that step therefore applies. In differential form, the following equality holds almost everywhere in time; all subsequent uses are in its integrated form:
\begin{align}
 \partial_tG_{N,t}^{(k)}(\OJ^{(k)},f)
 &=iG_{N,t}^{(k)}([\dch_k,\OJ^{(k)}],f)
 \nonumber\\
 &\quad+\frac{i\lam}{N}
 \sum_{1\leq i<j\leq k}
 G_{N,t}^{(k)}([v_{ij},\OJ^{(k)}],f)
 \nonumber\\
 &\quad+i\lam\frac{N-k}{N}
 \sum_{j=1}^k
 G_{N,t}^{(k+1)}
   ([v_{j,k+1},\OJ^{(k)}\otimes\1],f)
 \nonumber\\
 &\quad+i\eta
 \langle[\dcV_N^{(k)},\OJ^{(k)}]\wyl_N(f)\rangle_{N,t}
 \nonumber\\
 &\quad+i
 \langle\OJ^{(k)}\wyl_N(f)\phi_N(i\omg f)\rangle_{N,t}
 +\frac{i}{N}\langle f,\omg f\rangle
 G_{N,t}^{(k)}(\OJ^{(k)},f)
 \nonumber\\
 &\quad-\frac{2i\eta}{N}
 \sum_{j=1}^k
 G_{N,t}^{(k)}(\OJ^{(k)}m_f(x_j),f)
 \nonumber\\
 &\quad-2i\eta\frac{N-k}{N}
 G_{N,t}^{(k+1)}(\OJ^{(k)}\otimes m_f,f).
 \label{eq:finite-N-joint-characteristic-hierarchy}
\end{align}
It suffices that $N\geq k+1$, as holds eventually along the extracted subsequence. For $k=0$, take $\OJ^{(0)}=\1$; the particle terms vanish, and the last term is $-2i\eta G_{N,t}^{(1)}(m_f,f)$.

To identify the contributions in \eqref{eq:finite-N-joint-characteristic-hierarchy}, write
\begin{equation}
 \HMN
 =\sum_{j=1}^N\dch_0^{(j)}+\Hf
 +\frac{\lam}{N}\sum_{1\leq i<j\leq N}v_{ij}
 +\eta\sum_{j=1}^N\phi_N(g_{x_j}).
\end{equation}
Pairs whose two labels exceed $k$ commute with the observable. Each pair with one retained label gives the same $(k+1)$-particle expectation after permutation of the other labels, producing the factor $(N-k)/N$. For the radiation interaction, the product rule gives
\begin{equation}
\begin{aligned}
 &\left[
   \sum_{j=1}^N\phi_N(g_{x_j}),
   \OJ^{(k)}\wyl_N(f)
 \right]\\
 &\qquad=
 [\dcV_N^{(k)},\OJ^{(k)}]\wyl_N(f)
 +\OJ^{(k)}
   \sum_{j=1}^N[\phi_N(g_{x_j}),\wyl_N(f)].
\end{aligned}
\end{equation}
The two final lines of \eqref{eq:finite-N-joint-characteristic-hierarchy} follow from
\begin{equation}
 [\phi_N(g_x),\wyl_N(f)]
 =-\frac{2}{N}m_f(x)\wyl_N(f),
\end{equation}
together with symmetry among the labels exceeding $k$. The free-field contributions follow from \eqref{eq:joint-source-free-field-commutator}. Thus every term in \eqref{eq:finite-N-joint-characteristic-hierarchy} is a microscopic commutator contribution.

\medskip\noindent\textbf{Cancellation of the free evolution.}
For fixed $\OJ^{(k)}$ and $f$ as above, use the time-dependent tests
\begin{equation}\label{eq:joint-free-evolved-tests}
 \OJ_t^{(k)}
 =e^{-it\dch_k}\OJ^{(k)}e^{it\dch_k},
 \qquad
 f_t=e^{-it\omg}f.
\end{equation}
Then
\begin{equation}
 b_{\OJ_t^{(k)}}(\vfi)\rchi_{f_t}(u)
 =
 b_{\OJ^{(k)}}(e^{it\dch_0}\vfi)
 \rchi_f(e^{it\omg}u).
\end{equation}
Consequently, mixed characteristic convergence and the pushforward definition give, for each $t\in[0,T]$,
\begin{equation}
 G_{N,t}^{(k)}(\OJ_t^{(k)},f_t)
 \longrightarrow
 \int b_{\OJ^{(k)}}(\vfi)\rchi_f(u)
 \,d\widetilde\nu_t(\vfi,u).
\end{equation}
The additional test derivatives are
\begin{align*}
 \partial_t\OJ_t^{(k)}
 &=-i[\dch_k,\OJ_t^{(k)}],\\
 \partial_t\wyl_N(f_t)
 &=-i\wyl_N(f_t)\phi_N(i\omg f_t)
 -\frac{i}{N}\langle f_t,\omg f_t\rangle\wyl_N(f_t).
\end{align*}
The Weyl derivative is understood strongly on $\cD(\cN_f^{1/2})$. Indeed, \eqref{eq:mixed-weyl-directional-identity} gives
\begin{equation}
 \partial_t\wyl_N(f_t)
 =
 i\wyl_N(f_t)\phi_N(\partial_t f_t)
 +\frac{i}{N}
   \im\langle f_t,\partial_t f_t\rangle\wyl_N(f_t).
\end{equation}
Since $\partial_t f_t=-i\omg f_t$, real linearity of $\phi_N$ and the scalar-product convention imply
\begin{equation}
 \phi_N(\partial_t f_t)=-\phi_N(i\omg f_t),
 \qquad
 \im\langle f_t,\partial_t f_t\rangle
 =-\langle f_t,\omg f_t\rangle.
\end{equation}
This proves the displayed derivative, including its scalar term. In the integrated Heisenberg identity, the derivative is interpreted in the energy-form sense, consistently with the free-field covariance
\begin{equation}
 \wyl_N(f_t)
 =e^{-it\Hf}\wyl_N(f)e^{it\Hf}.
\end{equation}
More explicitly, the field-energy estimates give, for $\zeta\in\cD(\Hf^{1/2})$,
\begin{equation}
 \|\phi_N(i\omg f_t)\zeta\|
 \leq
 \frac{2}{\sqrt N}\|\sqrt{\omg}f\|
      \|\Hf^{1/2}\zeta\|
 +\frac{\|\omg f\|}{\sqrt N}\|\zeta\|.
\end{equation}
To justify this extension directly, let $\zeta_M=\mathbf1_{\{\cN_f\leq M\}}\zeta$ for $\zeta\in\cD(\Hf^{1/2})$. Since $\Hf$ and $\cN_f$ strongly commute, $\zeta_M\to\zeta$ in the field-energy form norm. Integrate the Weyl derivative identity for $\zeta_M$ and use the preceding bound to pass to the limit $M\to\infty$. The resulting derivative expression is strongly continuous in time, since $\omg f_t$ and $\sqrt{\omg}f_t$ are norm continuous. Hence it gives the strong derivative on $\cD(\Hf^{1/2})$. The expression for the Weyl derivative therefore extends to this domain and satisfies
\begin{equation}
\begin{aligned}
 \|(\partial_t\wyl_N(f_t))\zeta\|
 \leq
 \frac{2}{\sqrt N}\|\sqrt{\omg}f\|
      \|\Hf^{1/2}\zeta\|+
 \left(
   \frac{\|\omg f\|}{\sqrt N}
   +\frac{\|\sqrt{\omg}f\|^2}{N}
 \right)\|\zeta\|.
\end{aligned}
\end{equation}
All coefficients are finite because $f\in\cD(\omg)$, and the bounds are uniform in $t$. The same energy-form regularization used above therefore applies to these time-dependent tests. Their derivatives cancel the free particle and field terms in \eqref{eq:finite-N-joint-characteristic-hierarchy}, including the scalar term of order $N^{-1}$.

The remaining internal-particle and retained-label terms vanish as $N\to\infty$: their absolute values are bounded, respectively, by
\begin{equation}
 \frac{\lam k(k-1)}{N}
 \|v\|_{L^\infty}\|\OJ^{(k)}\|,
 \qquad
 \frac{2\eta k}{N}
 \|w\|_{L^2}\|f\|\|\OJ^{(k)}\|.
\end{equation}
The other particle operators are bounded and are therefore covered by \eqref{eq:bounded-particle-mixed-limit}. The radiation commutator is covered by \eqref{eq:mixed-radiation-commutator-limit}. These limits apply at each fixed time to the tests in \eqref{eq:joint-free-evolved-tests}. The preceding lemmas bound all the remaining integrands uniformly on $[0,T]$. Dominated convergence therefore permits passage to the limit in the integrated identity with these time-dependent tests.

We identify the resulting generator explicitly. At a physical phase point $(\vfi,u)$, put $V_\vfi=v*|\vfi|^2$. Partial contraction with the last particle in a tensor power yields
\begin{equation}\label{eq:joint-Hartree-symbol-contraction}
 \sum_{j=1}^k
 b_{[v_{j,k+1},\OJ^{(k)}\otimes\1]}(\vfi)
 =
 \left\langle\vfi^{\otimes k},
 \left[\sum_{j=1}^kV_\vfi(x_j),\OJ^{(k)}\right]
 \vfi^{\otimes k}\right\rangle.
\end{equation}
For any bounded real multiplier $V$,
\begin{equation}\label{eq:joint-particle-symbol-derivative}
 Db_{\OJ^{(k)}}(\vfi)[-iV\vfi]
 =
 i\left\langle\vfi^{\otimes k},
 \left[\sum_{j=1}^kV(x_j),\OJ^{(k)}\right]
 \vfi^{\otimes k}\right\rangle.
\end{equation}
The field contribution satisfies
\begin{equation}\label{eq:joint-field-symbol-derivative}
 -2i\eta\rchi_f(u)
 \int m_f(x)|\vfi(x)|^2\,dx
 =
 D\rchi_f(u)[-i\eta R(\vfi)].
\end{equation}
Indeed, both sides equal $2i\eta\rchi_f(u)\im\langle f,R(\vfi)\rangle$. For the generating test
\begin{equation}\label{eq:joint-generating-symbols}
 b(\vfi,u)=b_{\OJ^{(k)}}(\vfi)\rchi_f(u),
\end{equation}
identities \eqref{eq:joint-Hartree-symbol-contraction}--\eqref{eq:joint-field-symbol-derivative}, with the free factors in \eqref{eq:joint-free-evolved-tests}, identify the nonlinear generator in physical coordinates. The chain rule then gives
\begin{equation}
\begin{aligned}
 D\bigl(b\circ\cU_{fr}(0,t)\bigr)&
   \bigl(\cU_{fr}(t,0)z\bigr)
   \bigl[-i\mathcal R(\cU_{fr}(t,0)z)\bigr]\\
 &\quad=
 Db(z)\bigl[-i\cU_{fr}(0,t)
                \mathcal R(\cU_{fr}(t,0)z)\bigr]\\
 &\quad=Db(z)[\cVec(t,z)].
\end{aligned}
\end{equation}
Thus the limiting integrated identity is \eqref{eq:joint-Liouville-integral} for the generating tests \eqref{eq:joint-generating-symbols}. Although these generating symbols are not compactly supported, their derivatives are integrable on the support of the measures. In particular, on $\|\vfi\|\leq1$,
\begin{equation}
 |Db_{\OJ^{(k)}}(\vfi)[a]|
 \leq2k\|\OJ^{(k)}\|\|a\|.
\end{equation}
The identity extends to every finite-rank particle test and every $f\in\h_f$. To see this, approximate the particle test by its spectral compressions. If $\Pi_d=P_d^{\otimes k}$, then on the particle unit ball,
\begin{equation}
\begin{aligned}
 |b_{\OJ^{(k)}}-b_{\Pi_d\OJ^{(k)}\Pi_d}|
 &\leq
 \|\OJ^{(k)}-\Pi_d\OJ^{(k)}\Pi_d\|,\\
 \|Db_{\OJ^{(k)}}-Db_{\Pi_d\OJ^{(k)}\Pi_d}\|
 &\leq
 2k\|\OJ^{(k)}-\Pi_d\OJ^{(k)}\Pi_d\|.
\end{aligned}
\end{equation}
Both right-hand sides tend to zero by compactness. Next approximate $f$ in $\h_f$ by elements of $\cD(\omg)$. The corresponding field characters and their first derivatives converge pointwise, with uniformly bounded first derivatives. The velocity bound \eqref{eq:joint-scalar-velocity-bound} justifies passage to the limit in the derivative terms.

\medskip\noindent\textbf{Every smooth cylindrical test.}
First let $b(\vfi,u)=B(P_{\mathrm p}\vfi,P_{\mathrm f}u)$, where $P_{\mathrm p}:\h\to\h$ and $P_{\mathrm f}:\h_f\to\h_f$ are arbitrary complex-linear orthogonal projections of finite rank and
\begin{equation}
 B\in C_c^\infty
 \bigl(P_{\mathrm p}\h\times P_{\mathrm f}\h_f\bigr).
\end{equation}
All smoothness and derivatives are understood in real coordinates. The measures $\widetilde\nu_t$ are invariant under
\begin{equation}
 \mathcal G_\theta(\vfi,u)=(e^{i\theta}\vfi,u),
\end{equation}
and the vector field satisfies
\begin{equation}
 \cVec(t,\mathcal G_\theta z)
 =\mathcal G_\theta\cVec(t,z).
\end{equation}
Therefore, replacing $b$ by its particle-phase average
\begin{equation}
 \frac{1}{2\pi}\int_0^{2\pi}
 b(\mathcal G_\theta z)\,d\theta
\end{equation}
leaves both sides of \eqref{eq:joint-Liouville-integral} unchanged. For the derivative term, the chain rule and equivariance give
\begin{equation}
 D(b\circ\mathcal G_\theta)(z)[\cVec(t,z)]
 =
 Db(\mathcal G_\theta z)
   [\cVec(t,\mathcal G_\theta z)].
\end{equation}
Its integral equals that of $Db(z)[\cVec(t,z)]$ by invariance of $\widetilde\nu_t$. Fubini's theorem applies because $Db$ is bounded and \eqref{eq:joint-scalar-velocity-bound} holds. It therefore suffices to treat $B$ as invariant under the particle phase.

On a real cube in $P_{\mathrm p}\h$ containing its unit ball, approximate $B$ in its particle coordinates by multivariate Bernstein polynomials. Their coefficients are smooth functions of $P_{\mathrm f}u$ supported in the same fixed compact set in the field coordinates. The approximations and their first derivatives in both groups of coordinates converge uniformly on the cube, uniformly in $P_{\mathrm f}u$. Convergence of particle derivatives follows from the finite-difference formula for derivatives of Bernstein polynomials and uniform continuity of the first derivatives of $B$; field derivatives commute with the finite polynomial approximation.

Average each approximating polynomial over the particle phase. Since rotations preserve the particle unit ball, this averaging preserves the stated $C^1$ convergence on that ball. The averaged polynomial is a finite sum of balanced monomials in the particle coordinates and their conjugates. Accordingly, it has the form
\begin{equation}\label{eq:joint-cylinder-polynomial-approximation}
 b_n(\vfi,u)
 =
 \sum_{\ell=1}^{L_n}
 b_{\OJ_\ell^{(k_\ell)}}(\vfi)c_\ell(P_{\mathrm f}u),
 \qquad
 c_\ell\in C_c^\infty(P_{\mathrm f}\h_f),
\end{equation}
where the $\OJ_\ell^{(k_\ell)}$ have finite rank on suitable symmetric tensor powers of $P_{\mathrm p}\h$. The constant monomial corresponds to $k=0$. Finite-dimensional Fourier inversion expresses each $c_\ell(P_{\mathrm f}u)$ as an integral of $\rchi_f(u)$, with $f\in P_{\mathrm f}\h_f$, whose Fourier weight has integrable product with $1+\|f\|$. Apply the Liouville identity already established for the generating symbols \eqref{eq:joint-generating-symbols}, and use Fubini's theorem to obtain \eqref{eq:joint-Liouville-integral} for each $b_n$. Indeed, $\|D\rchi_f(u)\|=2\|f\|$, and on $\|\vfi\|\leq1$,
\begin{equation}
 \left|
 D\bigl(b_{\OJ^{(k)}}\rchi_f\bigr)(z)
       [\cVec(t,z)]
 \right|
 \leq
 2\|\OJ^{(k)}\|(k+\|f\|)
 \|\cVec(t,z)\|.
\end{equation}
The Fourier transform of each $c_\ell$ is rapidly decreasing on the real finite-dimensional space $P_{\mathrm f}\h_f$, so its absolute value times $1+\|f\|$ is integrable. Together with \eqref{eq:joint-scalar-velocity-bound}, this bounds the absolute integral over the Fourier parameter, time, and phase space. Uniform $C^1$ convergence on the particle unit ball and all field coordinates gives numbers $\delta_n\to0$ such that
\begin{equation}
 |b_n-b|\leq\delta_n,
 \qquad
 |D(b_n-b)(z)[\cVec(t,z)]|
 \leq\delta_n\|\cVec(t,z)\|
\end{equation}
on the support of every $\widetilde\nu_t$. The error in passing from the integrated identity for $b_n$ to that for $b$ is bounded by
\begin{equation*}
\begin{aligned}
 &\left|\int(b_n-b)\,d\widetilde\nu_t\right|
 +\left|\int(b_n-b)\,d\widetilde\nu_s\right|\\
 &\qquad \qquad +
 \int_{\min(s,t)}^{\max(s,t)}
 \int
 \left|D(b_n-b)(z)[\cVec(r,z)]\right|
 \,d\widetilde\nu_r(z)\,dr\\
 &\qquad\leq
 \delta_n
 \left(
  2+\int_0^T\int
  \|\cVec(r,z)\|_{\mathscr Z}
  \,d\widetilde\nu_r(z)\,dr
 \right)
 \longrightarrow0.
\end{aligned}
\end{equation*}
Thus, \eqref{eq:joint-Liouville-integral} holds for $b$.

To cover a cylinder defined by an arbitrary finite family of continuous real linear functionals, take the complex spans of their representing vectors in each factor and let $P_{\mathrm p}$ and $P_{\mathrm f}$ be the orthogonal projections onto these spans. The test then factors through $(P_{\mathrm p}\vfi,P_{\mathrm f}u)$, although its representative may not have compact support in the added coordinates. That representative and its first derivative remain bounded. Multiply it by smooth cutoffs equal to one on balls of radius $R$ and with gradient bounded by $C/R$. For $R\geq1$, the resulting functions and first derivatives are uniformly bounded and converge pointwise to those of the original test. Dominated convergence, using \eqref{eq:joint-scalar-velocity-bound} for the derivative term, permits $R\to\infty$ and proves the identity for every cylindrical test in the statement. Multiplication by smooth scalar time tests and integration by parts gives the equivalent distributional Liouville equation.

\smallskip

\medskip\noindent\textbf{Narrow continuity for the weak topology.}
The measure construction in the source proof gives continuity against bounded continuous finite-coordinate tests. For $R>0$, set
\begin{equation}
 K_R
 :=
 \{\vfi\in\h:\|\vfi\|\leq1\}
 \times
 \{u\in\h_f:\|u\|\leq R\}.
\end{equation}
This set is compact for the weak topology, and \eqref{eq:joint-source-joint-moment-bounds} gives
\begin{equation}
 \sup_{0\leq t\leq T}\nu_t(K_R^c)
 \leq\frac{\cM_T}{R^2}.
\end{equation}
On $K_R$, the weak and coordinate topologies agree. The restrictions of bounded continuous finite-coordinate functions form a point-separating algebra containing the constants. By the Stone--Weierstrass theorem, they approximate every continuous function on $K_R$ uniformly.

For a bounded weakly continuous real-valued test on $\mathscr Z$, these approximations can be truncated to its bounded range, retaining their finite-coordinate dependence and giving uniform global bounds. Continuity against the approximations, followed by removal of the approximation error and then $R\to\infty$, proves continuity against every bounded weakly continuous test. Thus $t\mapsto\nu_t$ is narrowly continuous for the weak topology of $\mathscr Z$. To obtain the same conclusion for $\widetilde\nu_t$, first observe that $(t,z)\mapsto\cU_{fr}(0,t)z$ is jointly continuous for the weak topology on bounded sets. Indeed, for $t_n\to t$, bounded $z_n\rightharpoonup z$, and $a\in\mathscr Z$,
\begin{equation}
\begin{aligned}
 &\langle a,\cU_{fr}(0,t_n)z_n\rangle
 -\langle a,\cU_{fr}(0,t)z\rangle\\
 &\quad=
 \langle\cU_{fr}(t,0)a,z_n-z\rangle
 +\langle
   (\cU_{fr}(t_n,0)-\cU_{fr}(t,0))a,z_n
  \rangle.
\end{aligned}
\end{equation}
The first term tends to zero by weak convergence, and the second by strong continuity of the free group and boundedness of $z_n$.

The sets $K_R$ are weakly compact and metrizable because $\mathscr Z$ is a separable Hilbert space. Hence, on the compact product $[0,T]\times K_R$, composition of the free propagator with a bounded weakly continuous test is uniformly continuous. Its variation with $t$ is therefore uniform on $K_R$. Combining this fact with narrow continuity of $\nu_t$ for the weak topology, and then using the uniform tail estimate to let $R\to\infty$, proves the asserted continuity of $\widetilde\nu_t$.
\end{proof}

\subsection{Coupled flow, transport, and uniqueness}
\label{sec:uniqueness-gp-hierarchy}

We first construct the coupled flow in the underlying Hilbert space.
This also supplies uniqueness in the class of trajectories produced by
the superposition principle, without an additional regularity
assumption on those trajectories.
Throughout this subsection, the field coordinate is in momentum space.
We use the free propagator and interaction-picture vector field defined
in \eqref{eq:joint-free-group-and-nonlinearity}; in particular,
\begin{equation}
 \cVec(t,z)=-i\cU_{fr}(0,t)
                   \mathcal R(\cU_{fr}(t,0)z).
\end{equation}
The existing shift $\kappa$ in the definition of $Y^1$ is chosen
sufficiently large that $V+\kappa\geq1$. Such a choice is possible
because $V$ is bounded below. Increasing this shift does not change
the form domain $Y^1$ and changes its norm only to an equivalent norm.

\begin{lemma}[Global coupled flow]
\label{lem:global-joint-Hilbert-flow}
Under the standing assumptions, the coupled system
\begin{equation}\label{eq:joint-classical-coupled-system}
 \begin{cases}
 i\partial_t\vfi_t
   =\bigl(\dch_0+\lam(v*|\vfi_t|^2)+\eta F_{u_t}\bigr)\vfi_t,\\
 i\partial_tu_t=\omg u_t+\eta R(\vfi_t)
 \end{cases}
\end{equation}
has a unique global mild solution in $C(\R;\mathscr Z)$ for each
initial datum in $\mathscr Z$.  Write its flow as
$\mathcal U_t(\vfi_0,u_0)=(\vfi_t,u_t)$, with $u_0$ in this lemma
denoting an arbitrary momentum-space datum. For the microscopic
initial state of the paper, this argument of the flow is
$\widehat u_0$. The flow is jointly continuous and satisfies
\begin{equation}\label{eq:joint-flow-mass-and-field-bound}
 \|\vfi_t\|_{L^2} =\|\vfi_0\|_{L^2} ,\qquad
 \|u_t\|_{L^2} \leq\|u_0\|_{L^2} +
                    \eta|t|\|w\|_{L^2} \|\vfi_0\|_{L^2} ^2.
\end{equation}
\end{lemma}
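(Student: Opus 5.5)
The plan is to treat the coupled system \eqref{eq:joint-classical-coupled-system} as a Lipschitz perturbation of the free unitary group $\cU_{fr}$ on $\mathscr Z=\h\oplus\h_f$, and to work in the interaction picture. There a mild solution is the same as a solution of the integral equation
\begin{equation}
 z_t=z_0+\int_0^t\cVec(s,z_s)\,ds .
\end{equation}
The proof then has three steps: local existence by Picard iteration, a priori bounds, and globalization with continuous dependence.

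\textbf{Step 1: local well-posedness.} I first show that $\mathcal R$ is locally Lipschitz on $\mathscr Z$, uniformly on bounded sets. For the Hartree term I use
\begin{equation}
 \|(v*|\vfi|^2)\vfi-(v*|\psi|^2)\psi\|
 \leq\|v\|_\infty\bigl(\|\vfi\|^2+\|\vfi\|\|\psi\|+\|\psi\|^2\bigr)\|\vfi-\psi\| ,
\end{equation}
which relies on $\||\vfi|^2-|\psi|^2\|_{L^1}\leq(\|\vfi\|+\|\psi\|)\|\vfi-\psi\|$. For the field coupling I use $\|F_u-F_{u'}\|_\infty\leq2\|w\|\|u-u'\|$ together with $\|F_u\|_\infty\leq2\|w\|\|u\|$. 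For the source I use $\|R(\vfi)-R(\psi)\|\leq\|w\|(\|\vfi\|+\|\psi\|)\|\vfi-\psi\|$, which follows from the Bochner representation $R(\vfi)=\int g_x|\vfi(x)|^2dx$. Since $\cU_{fr}$ is unitary and strongly continuous, $\cVec$ is continuous in $t$ and locally Lipschitz in $z$ with the same constants. Banach's fixed point theorem on $C([-\tau,\tau];B_\rho)$ then gives a unique local solution, with $\tau$ depending only on $\rho\geq\|z_0\|$. Uniqueness holds in all of $C(\R;\mathscr Z)$ by Gronwall, since any two continuous solutions stay in a common bounded set on compact time intervals.

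\textbf{Step 2: a priori bounds.} The main obstacle is mass conservation. The nonlinear potential $\lam(v*|\vfi|^2)+\eta F_u$ is real because $v$, $w$ and $F_u$ are real. Only mild solutions are available, so I cannot differentiate $\|\vfi_t\|^2$ directly. Instead I regularize. I take data in $\cD(\dch_0)\oplus\cD(\omg)$; the nonlinearity is Lipschitz on $\mathscr Z$ but not obviously on the graph space. To avoid this, I work in the interaction picture and apply the resolvent $R_\varepsilon=(1+i\varepsilon\dch_0)^{-1}$ to the particle equation. I then compute
\begin{equation}
 \frac{d}{dt}\|R_\varepsilon\tilde\vfi_t\|^2=2\re\langle R_\varepsilon\tilde\vfi_t,R_\varepsilon\tilde{\cVec}_{\mathrm p}\rangle ,
\end{equation}
which is legitimate because $t\mapsto R_\varepsilon\tilde\vfi_t$ is $C^1$. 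I let $\varepsilon\to0$ using that $R_\varepsilon$ commutes with $e^{-it\dch_0}$. In the limit the right-hand side becomes $2\re\langle\vfi_t,-iV_t\vfi_t\rangle=0$ for the real bounded multiplier $V_t$. Integrating in time gives $\|\vfi_t\|=\|\vfi_0\|$. For the field, unitarity of $e^{-it\omg}$ in the Duhamel formula gives
\begin{equation}
 \|u_t\|\leq\|u_0\|+\eta\int_0^{|t|}\|R(\vfi_s)\|\,ds\leq\|u_0\|+\eta|t|\|w\|\|\vfi_0\|^2 .
\end{equation}

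\textbf{Step 3: globalization and continuity.} The bounds of Step 2 show that $\|z_t\|$ stays bounded on every finite interval. The local existence time depends only on this bound, so the solution extends to all of $\R$ by iteration, for both signs of $t$. This yields \eqref{eq:joint-flow-mass-and-field-bound}. For joint continuity of $(t,z_0)\mapsto\mathcal U_t(z_0)$, take initial data in a bounded set and a compact time interval. All trajectories then lie in a common ball, where $\cVec$ has a uniform Lipschitz constant $L$. Gronwall gives $\|\mathcal U_t z_0-\mathcal U_t z_0'\|\leq e^{L|t|}\|z_0-z_0'\|$. Combined with continuity of each trajectory in $t$, this gives joint continuity.
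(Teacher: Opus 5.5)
Your proposal is correct and follows essentially the paper's route. Both arguments use local Lipschitz estimates for $\mathcal R$, Picard iteration for $\dot z=\cVec(t,z)$ in interaction coordinates, mass conservation together with the Duhamel bound for $u_t$, continuation, and Gronwall for joint continuity. The one inessential difference is your resolvent regularization in Step~2: because $\cVec$ is continuous, $p(t)=e^{it\dch_0}\vfi_t$ is already $C^1$ in $\h$, so the paper differentiates $\|p(t)\|^2$ directly to get $2\re\langle\vfi_t,-iV_t\vfi_t\rangle=0$, and $R_\varepsilon$ can be dropped (your $C^1$ justification for $R_\varepsilon\tilde\vfi_t$ relies on exactly this fact).
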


\begin{proof}
The estimates
\begin{align}
 \|R(\vfi)-R(\psi)\|_{L^2} 
 &\leq\|w\|_{L^2} (\|\vfi\|_{L^2} +\|\psi\|_{L^2} )\|\vfi-\psi\|_{L^2} ,
 \label{eq:joint-source-Lipschitz}\\
 \|F_u-F_{\widetilde u}\|_{L^\infty} 
 &\leq2\|w\|_{L^2} \|u-\widetilde u\|_{L^2} ,
 \label{eq:joint-field-Lipschitz}\\
 \|(v*|\vfi|^2)\vfi-(v*|\psi|^2)\psi\|_{L^2} 
 &\leq\|v\|_{L^\infty} 
       (\|\vfi\|_{L^2} ^2+\|\vfi\|_{L^2} \|\psi\|_{L^2} +\|\psi\|_{L^2} ^2)
                                    \|\vfi-\psi\|_{L^2} 
 \label{eq:joint-Hartree-Lipschitz}
\end{align}
follows from Cauchy--Schwarz and
$\||\vfi|^2-|\psi|^2\|_1
\leq(\|\vfi\|_{L^2} +\|\psi\|_{L^2} )\|\vfi-\psi\|_{L^2} $.
Also,
\begin{equation}
 \|F_u\vfi-F_{\widetilde u}\psi\|_{L^2} 
 \leq2\|w\|_{L^2} 
  \bigl(\|u\|_{L^2} \|\vfi-\psi\|_{L^2} 
                   +\|\psi\|_{L^2} \|u-\widetilde u\|_{L^2} \bigr).
\end{equation}
Hence $\mathcal R:\mathscr Z\to\mathscr Z$ is locally Lipschitz,
and $\cVec(t,z)$ is jointly continuous and locally Lipschitz in $z$,
with Lipschitz constants on each ball independent of $t$.
Picard iteration for
\begin{equation}\label{eq:joint-interaction-ODE}
 \dot z(t)=\cVec(t,z(t)),\qquad z(0)=z_0,
\end{equation}
gives a unique maximal $C^1$ solution in the real Hilbert space
underlying $\mathscr Z$. Setting
$(\vfi_t,u_t)=\cU_{fr}(t,0)z(t)$ in its integral equation gives the mild equations
\begin{align}
 \vfi_t&=e^{-it\dch_0}\vfi_0
    -i\int_0^t e^{-i(t-s)\dch_0}
       \bigl(\lam(v*|\vfi_s|^2)+\eta F_{u_s}\bigr)\vfi_s\,ds,
 \label{eq:duhamel-particle}\\
 u_t&=e^{-it\omg}u_0-i\eta\int_0^t
                        e^{-i(t-s)\omg}R(\vfi_s)\,ds.
 \label{eq:duhamel-field}
\end{align}
All integrals are Bochner integrals in the corresponding Hilbert
spaces, and for $t<0$ they have their usual oriented meaning.
Conversely, applying $\cU_{fr}(0,t)$ to a continuous mild solution
gives the integral equation for \eqref{eq:joint-interaction-ODE}.
Thus, uniqueness of that ordinary differential equation proves
uniqueness among all continuous mild solutions in $\mathscr Z$.

For the interaction-picture particle component $p(t)=e^{it\dch_0}\vfi_t$,
its strong derivative gives
\begin{equation}
 \frac{d}{dt}\|p(t)\|_{L^2} ^2
 =2\re 
   \langle\vfi_t,-i(\lam(v*|\vfi_t|^2)+\eta F_{u_t})\vfi_t\rangle
 =0.
\end{equation}
The multiplier is real.  This proves particle mass conservation
without differentiating the unbounded free Hamiltonian.
The second mild equation and $\|R(\vfi)\|_{L^2} \leq\|w\|_{L^2} \|\vfi\|_{L^2} ^2$
give, for either sign of $t$,
\begin{equation}
 \|u_t\|_{L^2}
 \leq\|u_0\|_{L^2}
   +\eta\int_{\min(0,t)}^{\max(0,t)}\|R(\vfi_s)\|_{L^2}\,ds
 \leq\|u_0\|_{L^2}
   +\eta|t|\|w\|_{L^2}\|\vfi_0\|_{L^2}^2.
\end{equation}
This proves \eqref{eq:joint-flow-mass-and-field-bound}.
Thus, the solution norm cannot blow up on a finite interval.
The continuation criterion for \eqref{eq:joint-interaction-ODE}
proves global existence in both time directions.
These bounds are uniform for bounded sets of initial data.
Gronwall's inequality on such bounded solution sets gives continuous
dependence, locally uniformly in time, and hence joint continuity.
Autonomy of \eqref{eq:joint-classical-coupled-system} and uniqueness
also give $\mathcal U_{t+s}=\mathcal U_t\circ\mathcal U_s$.
Finally, $R(e^{i\theta}\vfi)=R(\vfi)$ and the particle equation
is equivariant under multiplication by $e^{i\theta}$.
Uniqueness therefore gives
$\mathcal U_t\circ\mathcal G_\theta
 =\mathcal G_\theta\circ\mathcal U_t$.
\end{proof}

\begin{theorem}[Persistence of the energy space]
\label{thm:gwp-strong-final}
Under the standing assumptions, if
$(\vfi_0,u_0)\in Y^1\times\cD(\omg^{1/2})$, then the solution in
\lemref{lem:global-joint-Hilbert-flow} belongs to
\begin{equation}
 C(\R;Y^1\times\cD(\omg^{1/2})),
\end{equation}
where the field-form domain carries its graph norm.  Under the Fourier
identification, this is the space $Y^1\times H^{1/2}$ used for the
coupled effective system.
\end{theorem}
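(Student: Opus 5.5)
The plan is an energy estimate for the coupled system at the level of $Y^1\times\cD(\omg^{1/2})$, carried out on regularized solutions and then transferred to the mild solution of \lemref{lem:global-joint-Hilbert-flow} by uniqueness. First, regularize. Replace $\dch_0$ by a Yosida-type approximation $\dch_0^{(n)}=\dch_0(1+\dch_0/n)^{-1}$, or project the particle data onto $P_n\h$ and the field data onto $\{\omg\le n\}$. The regularized system is then a locally Lipschitz ODE in $Y^1\times\cD(\omg^{1/2})$. Its solutions $z^{(n)}$ exist globally in $\mathscr Z$ with the bounds \eqref{eq:joint-flow-mass-and-field-bound} uniformly in $n$, by the same argument as in \lemref{lem:global-joint-Hilbert-flow}. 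It therefore suffices to bound the higher norms uniformly in $n$ on compact time intervals, and then to show that $z^{(n)}\to z$ in $C([-\tau,\tau];\mathscr Z)$. The latter follows from the Lipschitz estimates \eqref{eq:joint-source-Lipschitz}--\eqref{eq:joint-Hartree-Lipschitz} together with Gronwall, since the free groups converge strongly.

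The core step is the conserved energy
\[
\mathcal E(\vfi,u)=\langle\vfi,\dch_0\vfi\rangle+\tfrac{\lam}{2}\!\int(v*|\vfi|^2)|\vfi|^2+\|\omg^{1/2}u\|^2+\eta\!\int F_u|\vfi|^2 ,
\]
which is the classical symbol of $\HMN/N$. The Hamiltonian structure (the field equation is $i\dot u=\partial_{\bar u}\mathcal E$, the particle equation is $i\dot\vfi=\partial_{\bar\vfi}\mathcal E$) gives $\frac{d}{dt}\mathcal E=0$ along regularized trajectories. Here one must check that the regularization preserves this structure; projecting the whole energy functional, rather than the equations, guarantees it. Coercivity follows from
\[
|\eta\textstyle\int F_u|\vfi|^2|\le 2\eta\|\widehat w/\sqrt\omg\|\,\|\omg^{1/2}u\|\,\|\vfi\|^2\le \tfrac12\|\omg^{1/2}u\|^2+2\eta^2\|\widehat w/\sqrt\omg\|^2\|\vfi\|^4,
\]
which is the classical analogue of the form bound in \corref{cor:one-particle-energy-bound}, together with $|\lam\int(v*|\vfi|^2)|\vfi|^2|\le\lam\|v\|_\infty\|\vfi\|^4$ and mass conservation. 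These yield
\[
\|\vfi_t\|_{Y^1}^2+\tfrac12\|\omg^{1/2}u_t\|^2\le C\bigl(\mathcal E(z_0)+\|\vfi_0\|^4+\kappa\|\vfi_0\|^2\bigr)
\]
uniformly in $t$ and $n$. The weak-coupling condition \eqref{eq:weak-coupling-condition} is used only to absorb constants.

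Next, pass to the limit and prove continuity. Uniform bounds plus $\mathscr Z$-convergence give, by weak lower semicontinuity, $z_t\in Y^1\times\cD(\omg^{1/2})$ for all $t$ with the same bound, and weak continuity of $t\mapsto z_t$ in that space. To upgrade to strong continuity, use exact energy conservation for the limit: $\mathcal E(z_t)=\mathcal E(z_0)$, obtained because the lower-order terms of $\mathcal E$ converge strongly under $\mathscr Z$-convergence with bounded energy (the $F_u$ term is continuous in $u\in L^2$ uniformly for $\vfi$ bounded in $L^2$, by $\|F_u\|_\infty\le2\|w\|\|u\|$). Hence the quadratic part $\|\vfi_t\|_{Y^1}^2+\|\omg^{1/2}u_t\|^2$ is continuous in $t$. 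Together with weak continuity, this gives norm continuity in the Hilbert space $Y^1\times\cD(\omg^{1/2})$. The identification $\cD(\omg^{1/2})\cong H^{1/2}$ under Fourier is immediate from $\omg=|\xi|$ and $u\in L^2$.

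I expect the main obstacle to be the rigorous justification of $\frac{d}{dt}\mathcal E=0$, or at least $\mathcal E(z_t)\le\mathcal E(z_0)$, without differentiating unbounded operators. In particular, the limit must attain the initial energy exactly; lower semicontinuity only gives $\le$, which yields boundedness but not continuity. The first approach to try is time reversibility: apply the inequality from time $t$ back to $0$, using the flow property $\mathcal U_{-t}\mathcal U_t=\mathrm{id}$ from \lemref{lem:global-joint-Hilbert-flow}, to obtain the reverse inequality and hence equality. This also avoids having to show that the regularization conserves $\mathcal E$ exactly, since approximate conservation with errors vanishing as $n\to\infty$ then suffices.
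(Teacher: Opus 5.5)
Your argument is correct, but it follows a different route from the paper's. The paper never uses the conserved energy. It shows directly that the nonlinearity is locally Lipschitz on $Y^1\times\cD(\omg^{1/2})$. The key ingredient is the multiplier bound $\|a\vfi\|_{Y^1}\le C\|a\|_{W^{1,\infty}}\|\vfi\|_{Y^1}$ for $a=\lam(v*|\vfi|^2)+\eta F_u$. This uses $v\in W^{1,\infty}$ and $\|\nabla F_u\|_{L^\infty}\le 2\|\omg^{1/2}\widehat w\|\,\|\omg^{1/2}u\|$, that is, $w\in H^{1/2}$. It is combined with $\|R(\vfi)\|_{\cD(\omg^{1/2})}\le\|(1+\omg)^{1/2}\widehat w\|\,\|\vfi\|^2$. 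Picard iteration in interaction coordinates then gives local solutions that are continuous in the energy norm by construction, and uniqueness identifies them with the $\mathscr Z$-solution of \lemref{lem:global-joint-Hilbert-flow}. Finally, the field mild equation gives linear growth of the graph norm of $u_t$, and a linear Gronwall inequality for $\|\vfi_t\|_{Y^1}$ excludes blow-up.

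You instead regularize and use exact conservation of the Hamiltonian $\mathcal E$. Your bookkeeping is right: $i\dot\vfi=\partial_{\bar\vfi}\mathcal E$, $i\dot u=\partial_{\bar u}\mathcal E$, and the coercivity bound all check out. You pass to the limit by weak lower semicontinuity and recover $\mathcal E(z_t)=\mathcal E(z_0)$ by applying $\mathcal E(\mathcal U_s\zeta)\le\mathcal E(\zeta)$ with $\zeta=\mathcal U_t z_0$ and $s=-t$. Weak continuity plus continuity of the norm then give strong continuity. This is sound, with two details:
\begin{itemize}
\item Use the variant that truncates the energy functional onto spectral subspaces of $\dch_0$ and $\omg$. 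Then $\mathcal E_n$ is exactly conserved, and the $\mathscr Z$-convergence follows from the Duhamel formulas, Gronwall, and compactness of $\{\mathcal R(z_s)\}_{|s|\le\tau}$.
\item The $L^2$ part of the field graph norm comes from \lemref{lem:global-joint-Hilbert-flow}, not from $\mathcal E$.
\end{itemize}

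Your route needs only $v\in L^\infty$, $w\in L^2$ and $\widehat w/\sqrt{\omg}\in L^2$. It bounds $\|\vfi_t\|_{Y^1}$ and $\|\omg^{1/2}u_t\|$ uniformly in $t$, rather than with Gronwall-type growth, and it yields energy conservation of the limiting flow as a byproduct. The paper's route is shorter under the standing hypotheses and avoids the approximation, compactness and reversibility machinery. It also gives locally Lipschitz dependence on the data in the energy norm, which your energy method does not provide directly.
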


\begin{proof}
Write $\mathscr E_f=\cD(\omg^{1/2})$, equipped with the graph norm
\begin{equation}
 \|u\|_{\mathscr E_f}^2
 =\|u\|_{L^2}^2+\|\omg^{1/2}u\|_{L^2}^2
 =\|(1+\omg)^{1/2}u\|_{L^2}^2.
\end{equation}
The standing condition $w\in H^{1/2}$ and
$\|\widehat{|\vfi|^2}\|_{L^\infty}\leq\|\vfi\|_{L^2}^2$ give
\begin{equation}
 \|R(\vfi)\|_{\mathscr E_f}
 \leq\|(1+\omg)^{1/2}\widehat w\|_{L^2}\|\vfi\|_{L^2}^2.
\end{equation}
The field free group is strongly continuous and isometric in this
graph norm. The source is continuous in this norm along every
continuous particle trajectory, by the difference estimate below.
Consequently, the field mild equation and conservation of particle
mass give, on any interval of existence,
\begin{equation}\label{eq:joint-field-energy-growth}
 \|u_t\|_{\mathscr E_f}
 \leq\|u_0\|_{\mathscr E_f}
   +\eta|t|\|(1+\omg)^{1/2}\widehat w\|_{L^2} \|\vfi_0\|_{L^2} ^2.
\end{equation}
Moreover, since $\omg(\xi)=|\xi|$,
\begin{equation}
\begin{aligned}
 \|\nabla F_u\|_{L^\infty}
 &\leq2\int_{\R^3}|\xi|\,|\widehat w(\xi)|\,|u(\xi)|\,d\xi\\
 &\leq2\|\omg^{1/2}\widehat w\|_{L^2}
                  \|\omg^{1/2}u\|_{L^2}.
\end{aligned}
\end{equation}
The integrable majorant in the first line justifies differentiating
the Fourier integral defining $F_u$.
Together with \eqref{eq:joint-source-classical-field}, this bounds
$\|F_u\|_{W^{1,\infty}}$ by $C_{w,1}\|u\|_{\mathscr E_f}$,
where $C_{w,1}$ depends only on $\|w\|_{H^{1/2}}$.
The same density estimate as in
\eqref{eq:joint-source-Lipschitz} gives
\begin{equation}
 \|R(\vfi)-R(\psi)\|_{\mathscr E_f}
 \leq\|(1+\omg)^{1/2}\widehat w\|_{L^2} 
       (\|\vfi\|_{L^2} +\|\psi\|_{L^2} )\|\vfi-\psi\|_{L^2} .
\end{equation}
For $a\in W^{1,\infty}$ the quadratic-form definition of $Y^1$
and the choice $V+\kappa\geq1$ give
\begin{equation}
 \|\vfi\|_{Y^1}^2
 =\|\nabla\vfi\|_{L^2}^2
    +\int_{\R^3}(V+\kappa)|\vfi|^2\,dx.
\end{equation}
The weak product rule therefore gives $a\vfi\in Y^1$ and
\begin{equation}
\begin{aligned}
 \|a\vfi\|_{Y^1}^2
 &\leq2\|a\|_{L^\infty}^2\|\nabla\vfi\|_{L^2}^2
       +2\|\nabla a\|_{L^\infty}^2\|\vfi\|_{L^2}^2\\
 &\quad+\|a\|_{L^\infty}^2
             \int_{\R^3}(V+\kappa)|\vfi|^2\,dx.
\end{aligned}
\end{equation}
In particular,
\begin{equation}
 \|a\vfi\|_{Y^1}\leq C\|a\|_{W^{1,\infty}}\|\vfi\|_{Y^1}.
\end{equation}
Here the assumptions include $v\in W^{1,\infty}$. Thus
\begin{equation}
\begin{aligned}
 \|v*|\vfi|^2\|_{W^{1,\infty}}
 &\leq\|v\|_{W^{1,\infty}}\|\vfi\|_{L^2}^2,\\
 \|v*(|\vfi|^2-|\psi|^2)\|_{W^{1,\infty}}
 &\leq\|v\|_{W^{1,\infty}}
  (\|\vfi\|_{L^2}+\|\psi\|_{L^2})\|\vfi-\psi\|_{L^2},\\
 \|F_u-F_{\widetilde u}\|_{W^{1,\infty}}
 &\leq C_{w,1}\|u-\widetilde u\|_{\mathscr E_f}.
\end{aligned}
\end{equation}
For example, split the Hartree difference as
\begin{equation}
 (v*|\vfi|^2)(\vfi-\psi)
       +\bigl(v*(|\vfi|^2-|\psi|^2)\bigr)\psi.
\end{equation}
The multiplier estimate applied to both terms proves local
Lipschitz continuity in $Y^1$; splitting
$F_u\vfi-F_{\widetilde u}\psi$
in the same way proves the corresponding mixed estimate.
Together with the weighted estimate for $R$, these inequalities
show that the full nonlinearity is locally Lipschitz on
$Y^1\times\mathscr E_f$. Both free groups preserve this space
and are strongly continuous isometries in its norm. Picard
iteration in interaction coordinates therefore gives local
solutions in that space, which coincide with the Hilbert-space
solution by uniqueness.

For every finite $T$, \eqref{eq:joint-field-energy-growth} and mass
conservation bound
$\|\lam(v*|\vfi_t|^2)+\eta F_{u_t}\|_{W^{1,\infty}}$ on $[-T,T]$.
Since $e^{-it\dch_0}$ preserves $Y^1$, the particle mild equation
and the multiplier estimate give, for $0\leq t\leq T$ in the
local interval of existence,
\begin{equation}
 \|\vfi_t\|_{Y^1}
 \leq\|\vfi_0\|_{Y^1}
   +C\int_0^t
      \|\lam(v*|\vfi_s|^2)+\eta F_{u_s}\|_{W^{1,\infty}}
      \|\vfi_s\|_{Y^1}\,ds.
\end{equation}
Gronwall gives a finite bound depending only on $T$ and the initial
norms. The same argument with reversed time applies on $[-T,0]$.
Together with \eqref{eq:joint-field-energy-growth}, this excludes
finite-time blowup in the energy norm. Continuation proves the
assertion.  This argument also accounts for the low-frequency
$L^2$ part of the massless field, which energy alone does not
control.
\end{proof}

\begin{theorem}[Transport and uniqueness of the joint measure]
\label{thm:joint-measure-transport}
Suppose that $t\mapsto\nu_t$ is a curve of probability measures
on $\mathscr Z$ for which
\begin{equation}
 \widetilde\nu_t=\bigl(\cU_{fr}(0,t)\bigr)_{\#}\nu_t
\end{equation}
is narrowly continuous for the weak topology of $\mathscr Z$ and satisfies
\eqref{eq:joint-Liouville-integral} for every smooth real cylindrical
test with compactly supported finite-dimensional representative.
Assume
\begin{equation}
 \nu_t(\mathbb S\h\times\h_f)=1,\qquad
 \sup_{0\leq t\leq T}\int\|u\|_{L^2}^2\,d\nu_t(\vfi,u)<\infty.
\end{equation}
Then
\begin{equation}\label{eq:joint-measure-transport}
 \boxed{\nu_t=(\mathcal U_t)_{\#}\nu_0,\qquad 0\leq t\leq T.}
\end{equation}
In particular, for the measures extracted from the microscopic states,
\begin{equation}\label{eq:joint-source-transport-law}
 \nu_t=(\mathcal U_t)_{\#}
                   (\mu_0\otimes\delta_{\widehat u_0}).
\end{equation}
\end{theorem}

\begin{proof}
The vector field $\cVec$ is Borel and bounded on bounded sets by
\lemref{lem:global-joint-Hilbert-flow}. Estimate
\eqref{eq:joint-velocity-growth} and the assumed field moment give
\begin{equation}
 \int_0^T\int\|\cVec(t,z)\|_{\mathscr Z}\,d\widetilde\nu_t(z)\,dt
 <\infty.
\end{equation}
We apply the infinite-dimensional superposition principle
\cite[Proposition~4.1]{ammari2016uniqueness}. In its Hilbert rigging,
take all three spaces equal to the complex Hilbert space
$\mathscr Z$, identifying its dual canonically, and use the real
differential structure of this space. We verify the topological
and endpoint details in this application.

The narrowly weak topology used in that reference is induced by
a weighted coordinate norm, which agrees with the weak Hilbert
topology on every norm-bounded ball and has the same Borel sets.
Our moment assumption gives the uniform tail estimate
\begin{equation}
 \sup_{0\leq t\leq T}
 \widetilde\nu_t\bigl(\{z:\|z\|_{\mathscr Z}>R\}\bigr)
 \leq\frac{1+\sup_{0\leq t\leq T}
                   \displaystyle\int\|u\|_{L^2}^2\,d\nu_t}{R^2}.
\end{equation}
On a bounded weakly compact ball, a bounded continuous function
for the weighted coordinate norm is uniformly approximated by
bounded continuous functions of finitely many coordinates.
The approximations can be truncated so that their global bounds
are uniform. Their integrals are continuous by the assumed weak
weak continuity. Removing the uniform approximation error and
then the tail proves continuity in the precise topology of the
cited proposition.

To apply its open-interval formulation with initial time zero,
extend $\widetilde\nu_t$ by $\nu_0$ on $(-1,0)$ and by
$\widetilde\nu_T$ on $(T,T+1)$, and set the vector field equal
to zero outside $[0,T]$. For every cylindrical test $b$ in the statement,
\eqref{eq:joint-Liouville-integral} says that
$t\mapsto\int b\,d\widetilde\nu_t$ is absolutely continuous on
$[0,T]$. Its constant extension is absolutely continuous on
$(-1,T+1)$, with derivative zero outside $[0,T]$.
Consequently, the distributional Liouville equation holds on
this open interval, with no endpoint contribution. The extended
vector field is Borel and bounded on bounded sets, and its scalar
velocity integral remains finite. Thus all hypotheses of the
superposition principle are satisfied.

Restrict the resulting paths to $[0,T]$ and project from the
initial-point/path pairs used in the cited proposition to the
path coordinate. This gives a probability measure $\mathbb P$
on continuous $\mathscr Z$-valued paths, concentrated on paths in
$W^{1,1}([0,T];\mathscr Z)$ satisfying
\begin{equation}\label{eq:joint-superposition-paths}
 \gamma(t)=\gamma(0)+\int_0^t\cVec(s,\gamma(s))\,ds,
 \qquad (e_t)_{\#}\mathbb P=\widetilde\nu_t.
\end{equation}
Here $e_t(\gamma)=\gamma(t)$. In the coincident Hilbert spaces used
above, the $W^{1,1}$ regularity means absolute continuity in the
norm of $\mathscr Z$, and the path identity holds for all
$t\in[0,T]$ on a set of full $\mathbb P$-measure. In particular,
$(e_0)_{\#}\mathbb P=\widetilde\nu_0=\nu_0$.

For $\mathbb P$-almost every path compare $\gamma(t)$ with
$\cU_{fr}(0,t)\mathcal U_t(\gamma(0))$.
By \lemref{lem:global-joint-Hilbert-flow}, the latter is a
$C^1$ solution of the same interaction-picture integral equation.
Both paths are norm-continuous on $[0,T]$, so their ranges lie
in a common ball of finite radius. On that ball $\cVec$ has a
Lipschitz constant independent of time. Subtracting their
integral equations gives
\begin{equation}
\begin{aligned}
 &\|\gamma(t)-\cU_{fr}(0,t)\mathcal U_t(\gamma(0))\|_{\mathscr Z}\\
 &\quad\leq C\int_0^t
    \|\gamma(s)-\cU_{fr}(0,s)\mathcal U_s(\gamma(0))\|_{\mathscr Z}\,ds.
\end{aligned}
\end{equation}
The constant may depend on these two bounded path ranges but
is independent of $t$. Gronwall therefore gives
\begin{equation}
 \gamma(t)=\cU_{fr}(0,t)\mathcal U_t(\gamma(0)),
 \qquad 0\leq t\leq T.
\end{equation}
Thus, for every bounded Borel function $B$,
\begin{equation}
\begin{aligned}
 \int B\,d\widetilde\nu_t
 &=\int B(\gamma(t))\,d\mathbb P(\gamma)\\
 &=\int B(\cU_{fr}(0,t)\mathcal U_t(z))\,d\nu_0(z).
\end{aligned}
\end{equation}
Applying $\bigl(\cU_{fr}(t,0)\bigr)_{\#}$ proves
\eqref{eq:joint-measure-transport}.
The initial law in \eqref{eq:joint-source-initial-law} gives
\eqref{eq:joint-source-transport-law}.

Conversely, start with a probability measure $\nu_0$ supported on
$\mathbb S\h\times\h_f$ and with finite second moment.
Particle mass conservation preserves this support, and
\eqref{eq:joint-flow-mass-and-field-bound} gives for its pushforward
\begin{equation}
 \sup_{0\leq t\leq T}\int\|u\|_{L^2}^2\,d\nu_t
 \leq2\int\|u\|_{L^2}^2\,d\nu_0
       +2\eta^2T^2\|w\|_{L^2}^2.
\end{equation}
For every fixed initial point, both the physical trajectory and
its interaction-picture trajectory are norm-continuous.
Integrating a bounded weakly continuous test along these
trajectories and applying dominated convergence proves narrow
weak continuity of both measure curves. This uses continuity
in time along each trajectory. The chain rule along the $C^1$
interaction-picture trajectories, followed by Fubini using the
scalar velocity bound and boundedness of the derivative of a
cylindrical test, proves \eqref{eq:joint-Liouville-integral}.
Hence, the pushforward is the unique solution in the stated class
for its prescribed initial joint measure.
\end{proof}

\begin{theorem}[Uniqueness of the joint hierarchy and convergence of the particle marginals]
\label{thm:uniqueness-of-gp-hierarchy}
Fix the initial joint measure
$\nu_0=\mu_0\otimes\delta_{\widehat u_0}$.
The joint characteristic hierarchy obtained by taking the limit in
\eqref{eq:finite-N-joint-characteristic-hierarchy} has at most one
particle-phase-invariant probability representation with this
initial law, particle norm one, a uniformly bounded field second
moment on $[0,T]$, and narrowly weak continuity of
$\bigl(\cU_{fr}(0,t)\bigr)_{\#}\nu_t$.
Here the limiting characteristic identities are required in
integrated form for the regular particle and field tests used in
\propref{prop:joint-compactness-and-Liouville-equation}, including
$k=0$. Their meaning is to drop the terms of order $N^{-1}$,
replace $(N-k)/N$ by one, and express the remaining terms by the
joint-measure integrals in
\eqref{eq:bounded-particle-mixed-limit},
\eqref{eq:weyl-weighted-field-insertion}, and
\eqref{eq:mixed-radiation-commutator-limit}.
For the microscopic limits constructed above, this representation
exists and equals $(\mathcal U_t)_{\#}\nu_0$.
Its particle marginals and radiation sources are consequently
determined uniquely by the initial joint measure.

If, for every fixed $k\geq1$, the initial particle marginals
converge along the full sequence weak-$*$ in $\fkS_1(\Hkb)$ to
the hierarchy represented by $\mu_0$, that is,
\begin{equation}
 \Tr\bigl(\OJ^{(k)}\gamma_{N,0}^{(k)}\bigr)
 \longrightarrow\int b_{\OJ^{(k)}}(\vfi)\,d\mu_0(\vfi)
 \quad\text{for every }\OJ^{(k)}\in\mathcal K(\Hkb),
\end{equation}
then all the
mixed characteristic functions and radiation-source pairings above
converge along the full sequence, uniformly on $[0,T]$ for each
fixed compact test and field parameter.  Moreover,
\begin{equation}\label{eq:joint-full-sequence-particle-convergence}
 \sup_{0\leq t\leq T}
 \|\gamma_{N,t}^{(k)}-\gamma_{\infty,t}^{(k)}\|_{\fkS_1}
 \longrightarrow0,\qquad k\geq1.
\end{equation}
\end{theorem}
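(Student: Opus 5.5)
The plan is to reduce uniqueness of the limiting characteristic hierarchy to uniqueness for the joint Liouville equation, and then to upgrade subsequential convergence to full-sequence trace-norm convergence.

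\emph{Uniqueness.} Let $\nu_t$ be any phase-invariant probability representation with the stated properties (initial law $\nu_0=\mu_0\otimes\delta_{\widehat u_0}$, particle norm one, bounded field second moment, and narrow weak continuity of $\widetilde\nu_t:=(\cU_{fr}(0,t))_{\#}\nu_t$).
\begin{enumerate}
\item I will rerun the identification step of \propref{prop:joint-compactness-and-Liouville-equation}, now starting from the limiting identities rather than the finite-$N$ ones. Use the time-dependent tests \eqref{eq:joint-free-evolved-tests} with the generating symbols \eqref{eq:joint-generating-symbols}. By \eqref{eq:joint-Hartree-symbol-contraction}--\eqref{eq:joint-field-symbol-derivative}, the integrated limiting identities become exactly \eqref{eq:joint-Liouville-integral} for these symbols.
\item The Bernstein approximation, phase averaging, Fourier inversion, and cutoff argument from that proof then extend \eqref{eq:joint-Liouville-integral} to every smooth cylindrical test. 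That argument uses only the velocity bound \eqref{eq:joint-scalar-velocity-bound}, phase invariance, and the support and moment properties, all of which are assumed here.
\item \thmref{thm:joint-measure-transport} then forces $\nu_t=(\mathcal U_t)_{\#}\nu_0$.
\item Existence for the microscopic limits is \propref{prop:joint-radiation-source-limit} together with \eqref{eq:joint-source-transport-law}. Consequently $\gamma_{\infty,t}^{(k)}=\int|\vfi^{\otimes k}\rangle\langle\vfi^{\otimes k}|\,d(\mathcal U_t)_{\#}\nu_0$, and the source \eqref{eq:joint-source-limit-operator} is determined by $\nu_0$ alone.
\end{enumerate}

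\emph{Full-sequence convergence.} I will argue by subsequences. Given any subsequence of $N$, \thmref{thm:compactness-bbgky-sequence} and \propref{prop:joint-radiation-source-limit} extract a further subsequence along which the particle marginals, the joint measures $\nu_t$, the mixed characteristic functions and the source pairings all converge uniformly on $[0,T]$.
\begin{itemize}
\item Full-sequence convergence of the initial marginals, together with uniqueness of the de Finetti measure (\thmref{thm:strong-quantum-de-Finetti}), identifies the particle marginal of the extracted $\nu_0$ as $\mu_0$.
\item The coherent field factor gives $\nu_0=\mu_0\otimes\delta_{\widehat u_0}$, as in \eqref{eq:joint-source-initial-law}.
\item By the uniqueness step, the limit along the further subsequence is always $(\mathcal U_t)_{\#}\nu_0$, independent of the subsequence chosen.
\end{itemize}
Since every subsequence has a further subsequence with the same limit, each fixed-test quantity converges along the whole sequence, uniformly on $[0,T]$.

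\emph{Upgrade to trace norm.} \lemref{lem:weak-str-conv-and-traceclass-imply-trace-norm-conv} gives trace-norm convergence at each fixed $t$. Uniformity in $t$ is the main obstacle, and I would handle it with a compactness argument.
\begin{itemize}
\item The limit $t\mapsto\gamma_{\infty,t}^{(k)}$ is trace-norm continuous. It is weak-$*$ continuous with constant trace one, so the argument of \lemref{lem:weak-str-conv-and-traceclass-imply-trace-norm-conv} applies to pairs of times.
\item By compactness of $[0,T]$, choose a single finite-rank projection $P$ with $\sup_t\Tr(Q\gamma_{\infty,t}^{(k)}Q)<\varepsilon$. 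One can take $P$ to be a spectral projection $P_M^{\otimes k}$, using the energy bound from \corref{cor:bounded-energy-support-de-finetti}.
\item The tail estimate in the proof of \thmref{thm:compactness-bbgky-sequence} bounds $\Tr(Q\gamma_{N,t}^{(k)})$ uniformly in $t$ and $N$ by $O(M^{-2})$.
\item The $PP$ block converges uniformly in $t$ by finitely many uniform weak-$*$ limits. The off-diagonal blocks are controlled by \eqref{eq:off-diagonal-positive-estimate}.
\end{itemize}
Together these give \eqref{eq:joint-full-sequence-particle-convergence}.
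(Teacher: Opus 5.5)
Your overall architecture is the paper's. You identify any admissible representation as a solution of the joint Liouville equation through the free-evolved tests \eqref{eq:joint-free-evolved-tests} and the symbol identities \eqref{eq:joint-Hartree-symbol-contraction}--\eqref{eq:joint-field-symbol-derivative}. You extend to all smooth cylindrical tests by the approximation argument of \propref{prop:joint-compactness-and-Liouville-equation}, conclude with \thmref{thm:joint-measure-transport}, and then use the subsequence principle and a finite-rank cutoff for uniform trace-norm convergence. The last part is fine. The paper uses $\|\gamma-\Pi_d\gamma\Pi_d\|_{\fkS_1}\leq2\sqrt{\Tr((1-\Pi_d)\gamma)}$ with the tail bound $kC_E/\lambda_{d+1}$ for both families, which is your block estimate in compressed form. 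Your compactness-in-time device would also suffice on its own, since uniform weak-$*$ convergence against the chosen $P$ transfers the tail bound from $\gamma^{(k)}_{\infty,t}$ to $\gamma^{(k)}_{N,t}$.

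The step that needs more than you give it is the first one. In \propref{prop:joint-compactness-and-Liouville-equation} the free-evolved tests were inserted at finite $N$. There the microscopic Heisenberg identity automatically produces the derivative of the observable, and only afterwards was $N\to\infty$ taken. A competing solution of the limiting hierarchy has no underlying dynamics: the integrated identities are assumed only for each fixed regular test $(\OJ^{(k)},f)$. You must therefore show that $t\mapsto\int b_{\OJ^{(k)}_t}\rchi_{f_t}\,d\nu_t$ satisfies the integrated identity augmented by $\int\partial_t(b_{\OJ^{(k)}_t}\rchi_{f_t})\,d\nu_t$, since that extra term is what cancels the free terms.

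This is a product-rule argument: freeze the test on the cells of a partition of the time interval, apply the fixed-test identity on each cell, and expand the test between nodes. It requires every limiting pairing to depend continuously on the test parameters, uniformly in $t$. Concretely, you need the following.
\begin{itemize}
\item Operator-norm $C^1$ dependence of $\OJ^{(k)}_t$, with $[\dch_k,\OJ^{(k)}_t]$ norm-continuous.
\item Continuity of $t\mapsto f_t$ in the graph norm of $\omg$, because the free-field term carries the insertion $2\re\langle i\omg f,u\rangle$.
\item The bounds $|\rchi_f(u)-\rchi_{f'}(u)|\leq2\|f-f'\|\,\|u\|$ and $|\rchi_f(u)\ell_h(u)-\rchi_{f'}(u)\ell_{h'}(u)|\leq2\|h-h'\|\,\|u\|+4\|h'\|\,\|f-f'\|\,\|u\|^2$, where $\ell_h(u)=2\re\langle h,u\rangle$, integrated against $\nu_t$.
\end{itemize}
These $\|u\|^2$ terms, together with those from the radiation term via $\|F_u\|_{L^\infty}\leq2\|w\|_{L^2}\|u\|$, are where the hypothesis of a uniformly bounded field second moment is actually used. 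The paper's proof supplies exactly these estimates before invoking the approximation argument and the transport theorem. Once you add this step, the rest of your argument goes through as written.
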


\begin{proof}
Apply the free-evolved tests from
\eqref{eq:joint-free-evolved-tests} to a competing solution of the
integrated characteristic identities. Their use is justified by
time approximation in the graph norm for the field parameters
and in the operator norm, together with their commutators with
$\dch_k$, for the regular particle tests. 

For the field characters, we have
\begin{equation*}
 |\rchi_f(u)-\rchi_{f'}(u)|
 \leq 2\|f-f'\|\,\|u\|.
\end{equation*}
To include one field insertion, set $\ell_h(u):=2\re\langle h,u\rangle$, then splitting the difference gives
\begin{equation*}
\begin{aligned}
 &|\rchi_f(u)\ell_h(u)-\rchi_{f'}(u)\ell_{h'}(u)|\\
 &\quad\leq
 |\ell_h(u)-\ell_{h'}(u)|
 +|\rchi_f(u)-\rchi_{f'}(u)|\,|\ell_{h'}(u)|\\
 &\quad\leq
 2\|h-h'\|\,\|u\|
 +4\|h'\|\,\|f-f'\|\,\|u\|^2.
\end{aligned}
\end{equation*}
Writing
\begin{equation*}
 M:=\sup_{0\leq t\leq T}\int\|u\|^2\,d\nu_t<\infty,
\end{equation*}
we therefore obtain
\begin{equation*}
\begin{aligned}
 &\sup_{0\leq t\leq T}
 \int|\rchi_f(u)\ell_h(u)-\rchi_{f'}(u)\ell_{h'}(u)|\,d\nu_t\\
 &\qquad\leq
 2\sqrt M\,\|h-h'\|
 +4M\|h'\|\,\|f-f'\|.
\end{aligned}
\end{equation*}
For $f_t=e^{-it\omg}f$ with $f\in\cD(\omg)$, the map $t\mapsto f_t$ is continuous in the graph norm of $\omg$ and continuously differentiable in $\h_f$, with derivative $-i\omg f_t$. The regular particle tests $\OJ_t^{(k)}$ are continuously differentiable in operator norm, and their commutators with $\dch_k$ are norm continuous. These estimates justify piecewise time approximation in the integrated characteristic identities and the cancellation of the free terms for every competing measure solution.

The polynomial and Fourier approximation argument in
\propref{prop:joint-compactness-and-Liouville-equation} then yields
the Liouville identity for every smooth cylindrical test.
The resulting measure is unique by
\thmref{thm:joint-measure-transport}.  Its mixed moments and source
are then unique by their integral representations.

Under full-sequence initial convergence, every subsequence of the
microscopic family admits a further subsequence with a joint
limit, by the compactness proof above.  The initial measure of
each such limit is $\mu_0\otimes\delta_{\widehat u_0}$, and the
transport theorem identifies every limit with the same curve.
A subsequence contradiction proves the claimed full-sequence
convergence of each characteristic and source pairing, including
the uniformity in time supplied by the compactness and insertion
estimates: failure of uniform convergence for one fixed test would
give a subsequence with an error bounded below by a positive
constant, whereas its further convergent subsequence has the
unique limit just identified. The fixed coherent initial field
in \hyporef{hyp:microscopic-Hamiltonian-and-initial-data} is used
here to identify the same joint initial law from convergence of
the initial particle hierarchy alone.

To prove \eqref{eq:joint-full-sequence-particle-convergence}, set
$\Pi_d=P_d^{\otimes k}$.  For a positive trace-one operator
$\gamma$, write
\begin{equation}
 \gamma-\Pi_d\gamma\Pi_d
   =(1-\Pi_d)\gamma+\Pi_d\gamma(1-\Pi_d).
\end{equation}
Factoring each term through $\gamma^{1/2}$ and using the
Schatten H\"older inequality gives
\begin{equation}
\begin{aligned}
 \|\gamma-\Pi_d\gamma\Pi_d\|_{\fkS_1}
 &\leq\|(1-\Pi_d)\gamma^{1/2}\|_{\fkS_2}
                 \|\gamma^{1/2}\|_{\fkS_2}\\
 &\quad+\|\Pi_d\gamma^{1/2}\|_{\fkS_2}
                 \|\gamma^{1/2}(1-\Pi_d)\|_{\fkS_2}.
\end{aligned}
\end{equation}
Since $\Tr\gamma=1$, this implies
\begin{equation}
 \|\gamma-\Pi_d\gamma\Pi_d\|_{\fkS_1}
 \leq2\sqrt{\Tr((1-\Pi_d)\gamma)}.
\end{equation}
For both the finite-$N$ and limiting marginals, the right-hand
side is at most $2\sqrt{kC_E/\lambda_{d+1}}$.
On the finite-dimensional symmetric range of $\Pi_d$, convergence
of all matrix entries uniformly in time follows by testing with
the finitely many corresponding rank-one particle operators.
Equivalence of norms in finite dimension therefore gives
\begin{equation}
 \sup_{0\leq t\leq T}
 \|\Pi_d(\gamma_{N,t}^{(k)}-\gamma_{\infty,t}^{(k)})\Pi_d\|_{\fkS_1}
 \longrightarrow0
 \quad\text{for each fixed }d.
\end{equation}
The triangle inequality and the two cutoff errors now give
\begin{equation}
 \limsup_{N\to\infty}\sup_{0\leq t\leq T}
 \|\gamma_{N,t}^{(k)}-\gamma_{\infty,t}^{(k)}\|_{\fkS_1}
 \leq4\sqrt{\frac{kC_E}{\lambda_{d+1}}}.
\end{equation}
Compact resolvent implies $\lambda_{d+1}\to\infty$. Letting
$d\to\infty$ proves \eqref{eq:joint-full-sequence-particle-convergence}.
\end{proof}

\subsection{The limiting Gross--Pitaevskii hierarchy and its radiation source}
\label{sec:limit-of-radiation-source}

For each initial particle state $\vfi$, write
\begin{equation}\label{eq:joint-trajectory-with-fixed-field}
 (\vfi_t,u_t):=\mathcal U_t(\vfi,\widehat u_0),
 \qquad S_t(\vfi):=\vfi_t.
\end{equation}
Here $S_t$ denotes the particle projection of the flow with
the field fixed only at the initial time.  Restarts at an
intermediate time are performed with the pair $(\vfi_s,u_s)$:
$\mathcal U_{t-s}(\vfi_s,u_s)=(\vfi_t,u_t)$.
No particle-only restart identity is used.

\begin{lemma}[The radiation source expressed through the initial measure]
\label{lem:annihilation-as-transported-source-operator}
Under the standing assumptions, along the subsequence under
consideration the weak-$*$ limit of the radiation source is
\begin{equation}\label{eq:def-of-source-term-Sk-infty}
 \cS_{\infty,t}^{(k)}
 =
 \int_{\mathbb S\h}
 \left[
    \sum_{j=1}^kF_{\vfi_t}(x_j),
    |\vfi_t^{\otimes k}\rangle\langle\vfi_t^{\otimes k}|
 \right]\,d\mu_0(\vfi),
\end{equation}
where $F_{\vfi_t}:=F_{u_t}$ is evaluated on the coupled trajectory
\eqref{eq:joint-trajectory-with-fixed-field}.  Explicitly,
\begin{align}
 F_{\vfi_t}(x)&=2\operatorname{Re}\left(
      \cA_t^a(x)-i\eta\int_0^t
                 \cB_{t-s}^a(|\vfi_s|^2)(x)\,ds\right),
 \label{eq:retarded-radiation-field-def}\\
 \cA_t^a(x)&=\int_{\R^3}e^{i\xi\cdot x}
       e^{-it\omg(\xi)}\widehat w(\xi)\widehat u_0(\xi)\,d\xi,
 \label{eq:joint-retarded-free-field}\\
 \cB_\tau^a(\rho)(x)&=\int_{\R^3}e^{i\xi\cdot x}
        e^{-i\tau\omg(\xi)}|\widehat w(\xi)|^2\widehat\rho(\xi)\,d\xi.
 \label{eq:joint-retarded-sourced-field}
\end{align}
The convergence is uniform on $[0,T]$ against each compact particle
test.  It holds along the full sequence when the initial particle
hierarchy has a full-sequence limit.
\end{lemma}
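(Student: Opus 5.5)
The plan is to combine the source representation of \propref{prop:joint-radiation-source-limit} with the transport law of \thmref{thm:joint-measure-transport}. By \eqref{eq:joint-source-limit-operator}, along the extracted subsequence
\begin{equation*}
 \cS_{\infty,t}^{(k)}=\int_{\mathbb S\h\times\h_f}
 [F_u^{(k)},|\vfi^{\otimes k}\rangle\langle\vfi^{\otimes k}|]\,d\nu_t(\vfi,u).
\end{equation*}
By \eqref{eq:joint-source-transport-law}, $\nu_t=(\mathcal U_t)_{\#}(\mu_0\otimes\delta_{\widehat u_0})$. The integrand $(\vfi,u)\mapsto[F_u^{(k)},|\vfi^{\otimes k}\rangle\langle\vfi^{\otimes k}|]$ is norm-continuous into $\fkS_1(\Hkb)$. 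This uses $\|F_u-F_{\widetilde u}\|_\infty\leq2\|w\|\|u-\widetilde u\|$, the continuity of $\vfi\mapsto\vfi^{\otimes k}$, and the rank-one trace-norm identity. Its trace norm is bounded by $4k\|w\|_{L^2}\|u\|$, which is integrable by the field moment bound \eqref{eq:joint-flow-mass-and-field-bound}. Hence the change-of-variables formula for pushforwards applies to the Bochner integral; one can check this by pairing with compact tests and using scalar change of variables. Integrating against $\delta_{\widehat u_0}$ then collapses the field variable and gives \eqref{eq:def-of-source-term-Sk-infty} with $(\vfi_t,u_t)=\mathcal U_t(\vfi,\widehat u_0)$ and $F_{\vfi_t}=F_{u_t}$.

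Next I derive the explicit formula for $F_{u_t}$. Insert the field mild equation \eqref{eq:duhamel-field}, $u_t=e^{-it\omg}\widehat u_0-i\eta\int_0^te^{-i(t-s)\omg}R(\vfi_s)\,ds$, into $F_u(x)=2\re\langle g_x,u\rangle$. Use $g_x(\xi)=e^{-i\xi\cdot x}\widehat w(\xi)$, the inner product antilinear in the first slot, and that $w$ is real and even, so $\overline{\widehat w}=\widehat w$ and $\overline{g_x}\,\widehat w=e^{i\xi\cdot x}|\widehat w|^2$. Also use $R(\vfi_s)=\widehat w\,\widehat{|\vfi_s|^2}$. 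The free part then becomes $\cA_t^a(x)$. The sourced part becomes $-i\eta\int_0^t\cB^a_{t-s}(|\vfi_s|^2)(x)\,ds$, after exchanging the $\xi$-integral with the Bochner time integral; this exchange is justified by continuity of the pairing $\langle g_x,\cdot\rangle$ on $\h_f$. The only bookkeeping point is checking the Fourier conventions against \eqref{eq:joint-source-classical-field}.

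Finally, uniformity on $[0,T]$ against compact tests is inherited directly from \eqref{eq:joint-source-uniform-limit}. The full-sequence statement follows from \thmref{thm:uniqueness-of-gp-hierarchy}. There, every subsequence has a further subsequence whose source limit is given by the same formula, which depends only on $\mu_0$ and $\widehat u_0$. A subsequence-contradiction argument then upgrades this to convergence of the whole sequence, uniformly in $t$. The main obstacle I expect is justifying the pushforward of the trace-class Bochner integral when the integrand is unbounded in $u$. I would handle it by testing against compact $\OJ^{(k)}$, applying scalar change of variables with the integrable majorant $4k\|w\|\|\OJ^{(k)}\|\|u\|$, and then using trace-class duality.
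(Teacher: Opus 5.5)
Your proposal is correct and follows the paper's proof essentially step for step. You substitute the transport law $\nu_t=(\mathcal U_t)_{\#}(\mu_0\otimes\delta_{\widehat u_0})$ into the trace-class Bochner representation of the source, using the majorant $4k\|w\|_{L^2}\|u_t\|_{L^2}$; you insert the field mild equation into $F_{u_t}$; and you read off the convergence statements from \eqref{eq:joint-source-uniform-limit} and \thmref{thm:uniqueness-of-gp-hierarchy}. Your extra justifications are valid and differ only cosmetically from the paper: you verify the pushforward by compact-operator duality, and you justify the Fourier/time interchange because the bounded functional $\langle g_x,\cdot\rangle$ commutes with the Bochner integral, where the paper instead uses $L^1$ majorants.
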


\begin{proof}
Insert \eqref{eq:joint-source-transport-law} into the source
representation \eqref{eq:joint-source-limit-operator}.
The resulting integral is a trace-class Bochner integral:
its integrand has trace norm at most
\begin{equation}
 4k\|w\|_{L^2} \|u_t\|_{L^2} 
 \leq4k\|w\|_{L^2} (\|u_0\|_{L^2} +\eta T\|w\|_{L^2} )
\end{equation}
for $\|\vfi\|_{L^2} =1$, and depends continuously on the initial
pair.  The field mild equation reads
\begin{equation}\label{eq:alpha-t-def}
 u_t(\xi)=e^{-it\omg(\xi)}\widehat u_0(\xi)
      -i\eta\int_0^t e^{-i(t-s)\omg(\xi)}
                  \widehat w(\xi)\widehat{|\vfi_s|^2}(\xi)\,ds.
\end{equation}
Substitution into $F_{u_t}$ gives
\eqref{eq:retarded-radiation-field-def}.
All Fourier/time interchanges are justified by
$\|\widehat{|\vfi_s|^2}\|_\infty\leq1$,
$\widehat w\,\widehat u_0\in L^1(d\xi)$, and $|\widehat w|^2\in L^1(d\xi)$. Indeed, conservation of particle mass gives
\begin{equation*}
\begin{aligned}
 \int_{\R^3}|\widehat w(\xi)\widehat u_0(\xi)|\,d\xi
 &\leq \|w\|_{L^2}\|\widehat u_0\|_{L^2},\\
 \int_0^t\int_{\R^3}
 |\widehat w(\xi)|^2
 |\widehat{|\vfi_s|^2}(\xi)|\,d\xi\,ds
 &\leq T\|w\|_{L^2}^2,
 \qquad 0\leq t\leq T.
\end{aligned}
\end{equation*}
Since $w$ is real and even, $\widehat w^2=|\widehat w|^2$. The convergence assertions follow from
\eqref{eq:joint-source-uniform-limit} and
\thmref{thm:uniqueness-of-gp-hierarchy}.
\end{proof}

\begin{corollary}[The limiting Gross--Pitaevskii hierarchy]
\label{cor:explicit-gp-hierarchy}
The particle marginals determined by the joint dynamics satisfy
\begin{equation}\label{eq:GP-hierarchy-operator-form}
 i\partial_t\gamma_{\infty,t}^{(k)}
 =
 [\dch_k,\gamma_{\infty,t}^{(k)}]
 +\lam\sum_{j=1}^k\Tr_{k+1}
          [v_{j,k+1},\gamma_{\infty,t}^{(k+1)}]
 +\eta\cS_{\infty,t}^{(k)},
\end{equation}
with the source in \eqref{eq:def-of-source-term-Sk-infty}.
The equation is understood after pairing with regular finite-rank
tests and integrating in time, or equivalently in its mild form.
In the following formula, the diagonal notation
$\gamma_{\infty,t}^{(k+1)}(\ux_k,y;\ux_k',y)$
is understood through the de Finetti expression in
\eqref{eq:de-Finetti-representation-gamma-k-plus-one}.

The de Finetti expression defines an element of $L^1(\R^3_y;L^2(d\ux_k\,d\ux_k'))$. Indeed, Minkowski's inequality and conservation of particle mass give
\begin{equation*}
\begin{aligned}
 \int_{\R^3}
 \left\|
 \gamma_{\infty,t}^{(k+1)}(\cdot,y;\cdot,y)
 \right\|_{L^2(d\ux_k\,d\ux_k')}\,dy\leq
 \int_{\mathbb S\h}\int_{\R^3}
 |\vfi_t(y)|^2
 \|\vfi_t^{\otimes k}\|_{L^2}^2
 \,dy\,d\mu_0(\vfi)
 =1.
\end{aligned}
\end{equation*}
Since $|v(x_j-y)-v(x_j'-y)|\leq2\|v\|_{L^\infty}$, each collision integral is consequently well defined in $L^2(d\ux_k\,d\ux_k')$. With this convention, the kernel form is interpreted
distributionally and reads
\begin{align}
 i\partial_t\gamma_{\infty,t}^{(k)}(\ux_k;\ux_k')
 &=\sum_{j=1}^k(\dch_0(x_j)-\dch_0(x'_j))
                         \gamma_{\infty,t}^{(k)}(\ux_k;\ux_k')
 \nonumber\\
 &\quad+\lam\sum_{j=1}^k\int_{\R^3}
      (v(x_j-y)-v(x'_j-y))
      \gamma_{\infty,t}^{(k+1)}(\ux_k,y;\ux_k',y)\,dy
 \nonumber\\
 &\quad+\eta\sum_{j=1}^k\int_{\mathbb S\h}
     (F_{\vfi_t}(x_j)-F_{\vfi_t}(x'_j))
     \prod_{\ell=1}^k\vfi_t(x_\ell)\overline{\vfi_t(x'_\ell)}
                                               \,d\mu_0(\vfi).
 \label{eq:gp-hierarchy-kernel-form}
\end{align}
\end{corollary}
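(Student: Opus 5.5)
The plan is to pass to the limit $N_\ell\to\infty$ in the time-integrated, tested BBGKY hierarchy of \lemref{lem:finite-N-BBGKY-hierarchy}, term by term, using the regular finite-rank tests $\OJ_m^{(k)}$ of \lemref{lem:regular-compact-test-family}. Then I will substitute the explicit source of \lemref{lem:annihilation-as-transported-source-operator}.

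Fix $k$, a regular finite-rank test $\OJ^{(k)}$, and $0\le s\le t\le T$. I start from the integrated identity used in the proof of \lemref{lem:equicontinuity-compact-observables} and treat its terms as follows.
\begin{itemize}
\item The left side and the kinetic term involve $\Tr(\OJ^{(k)}\gamma_{N_\ell,\tau}^{(k)})$ and $\Tr([\OJ^{(k)},\dch_k]\gamma_{N_\ell,\tau}^{(k)})$. The commutator is bounded and finite rank, hence compact. By \thmref{thm:compactness-bbgky-sequence} these pairings converge uniformly in $\tau$, so the time integrals converge.
\item The internal pair interaction vanishes by \eqref{eq:vanishing-internal-finite-k-interaction}.
\item The collision term converges to $\lambda\int_s^t\Tr(B_J^{(k+1)}\gamma_{\infty,\tau}^{(k+1)})\,d\tau$ by \eqref{eq:time-integrated-direct-limit-surviving-BBGKY-interaction}. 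By \eqref{eq:interaction-test-trace-identity} this equals the tested form of $\lambda\sum_j\Tr_{k+1}[v_{j,k+1},\gamma_{\infty,\tau}^{(k+1)}]$.
\item For the source term, \propref{prop:joint-radiation-source-limit} gives uniform-in-time weak-$*$ convergence $\cS_{N,\tau}^{(k)}\to\cS_{\infty,\tau}^{(k)}$. \lemref{lem:uniform-field-source-bound} supplies a uniform bound, so dominated convergence passes the limit inside $\int_s^t$.
\end{itemize}
This yields \eqref{eq:GP-hierarchy-operator-form} in integrated form along the subsequence, and along the full sequence under full-sequence initial convergence via \thmref{thm:uniqueness-of-gp-hierarchy}. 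Inserting \eqref{eq:def-of-source-term-Sk-infty} identifies the source explicitly.

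For the mild form, the same tested identity holds with the free-evolved tests $e^{-i\tau\dch_k}\OJ^{(k)}e^{i\tau\dch_k}$, as in \eqref{eq:joint-free-evolved-tests}. The kinetic term then cancels, giving the Duhamel formula against a dense set of tests, hence as a trace-class identity.

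As a consistency check that also gives the de Finetti form of the collision term, I use $\gamma_{\infty,t}^{(k+1)}=\int|\vfi_t^{\otimes(k+1)}\rangle\langle\vfi_t^{\otimes(k+1)}|\,d\mu_0$. This follows from \eqref{eq:joint-source-transport-law}, since the particle marginal of $(\mathcal U_t)_\#(\mu_0\otimes\delta_{\widehat u_0})$ is $(S_t)_\#\mu_0$. With the contraction identity \eqref{eq:joint-Hartree-symbol-contraction}, the collision term becomes $\int[\sum_jV_{\vfi_t}(x_j),|\vfi_t^{\otimes k}\rangle\langle\vfi_t^{\otimes k}|]\,d\mu_0$. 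This matches the product-rule derivative of $\langle\vfi_t^{\otimes k},\OJ\vfi_t^{\otimes k}\rangle$ along each $Y^1$-trajectory given by \thmref{thm:gwp-strong-final}, integrated with Fubini.

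For the kernel form, the $L^1(\R^3_y;L^2)$ bound stated in the corollary makes each collision integral a well-defined $L^2$ kernel. Pairing against kernels of rank-one tests reproduces the operator identity, and the source kernel is \eqref{eq:joint-source-limit-kernel} with $\nu_t$ replaced by the transported law.

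The main obstacle I expect is justifying the interchange of limits with time integration and the diagonal restriction $\gamma^{(k+1)}(\ux_k,y;\ux_k',y)$. The trace-norm upgrade \lemref{lem:weak-str-conv-and-traceclass-imply-trace-norm-conv} handles the former. For the latter, I will define the diagonal via the de Finetti integral and check it against the partial-trace definition using finite-rank tests in the first $k$ variables.
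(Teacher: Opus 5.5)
Your proposal is correct and is essentially the paper's argument: the tested, time-integrated BBGKY identity you pass to the limit is the $f=0$ case of the mixed identity \eqref{eq:finite-N-joint-characteristic-hierarchy}. The paper sets $f=0$ in the limit of that identity, uses \eqref{eq:joint-source-uniform-limit} and the transport law to reach \eqref{eq:def-of-source-term-Sk-infty}, then conjugates by the free group for the mild form and writes out the kernels. Your term-by-term limits via \thmref{thm:compactness-bbgky-sequence}, \corref{cor:bounded-BBGKY-interaction-limit}, and \propref{prop:joint-radiation-source-limit}, taken along the further subsequence, are a more explicit version of that same specialization.
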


\begin{proof}
Set $f=0$ in the integrated limiting mixed identity.  Then
$\wyl_N(0)=1$, $m_0=0$, and the field-character derivative
terms vanish.  The remaining terms give
\eqref{eq:GP-hierarchy-operator-form}, with the source sign fixed
by $\Tr(\OJ^{(k)}[F,P])=\Tr([\OJ^{(k)},F]P)$.
Equivalently, conjugate by the free $k$-particle group and
integrate to obtain the mild identity.
The interaction terms are trace class: $v$ is bounded and the
source has the uniform trace bound already proved.
Writing the kernels of the commutators gives
\eqref{eq:gp-hierarchy-kernel-form}.
\end{proof}

\begin{corollary}[De Finetti representation along the coupled flow]
\label{cor:de-Finetti-form-of-limiting-GP-hierarchy}
The particle de Finetti measure and marginals satisfy
\begin{align}
 \mu_t&=(S_t)_{\#}\mu_0,
 \label{eq:transported-de-Finetti-measures-GP}\\
 \gamma_{\infty,t}^{(k)}
 &=\int_{\mathbb S\h}
       |\vfi_t^{\otimes k}\rangle\langle\vfi_t^{\otimes k}|
                                                \,d\mu_0(\vfi),
 \label{eq:de-Finetti-representation-gamma-k}\\
 \gamma_{\infty,t}^{(k+1)}(\ux_k,y;\ux_k',y)
 &=\int_{\mathbb S\h}|\vfi_t(y)|^2
       \prod_{\ell=1}^k\vfi_t(x_\ell)\overline{\vfi_t(x'_\ell)}
                                                \,d\mu_0(\vfi).
 \label{eq:de-Finetti-representation-gamma-k-plus-one}
\end{align}
In particular, the particle component on each trajectory satisfies
\begin{equation}\label{eq:nonlinear-retarded-Hartree-equation-x}
 i\partial_t\vfi_t
   =\bigl(\dch_0+\lam(v*|\vfi_t|^2)+\eta F_{\vfi_t}\bigr)\vfi_t,
\end{equation}
where $F_{\vfi_t}$ has the history-dependent meaning in \eqref{eq:retarded-radiation-field-def}. This equation is understood in the mild sense, through \eqref{eq:duhamel-particle} and the field representation \eqref{eq:alpha-t-def}.
\end{corollary}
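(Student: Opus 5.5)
The plan is to read off all three assertions of \corref{cor:de-Finetti-form-of-limiting-GP-hierarchy} from the transport law \eqref{eq:joint-source-transport-law} of \thmref{thm:joint-measure-transport}. The inputs are the statement that the particle marginal of $\nu_t$ is the de Finetti measure $\mu_t$ (\propref{prop:joint-radiation-source-limit}) and the uniqueness part of the strong de Finetti theorem, \thmref{thm:strong-quantum-de-Finetti}. No new dynamical estimate should be needed.

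\textbf{Step 1: the measure identity \eqref{eq:transported-de-Finetti-measures-GP}.} Let $\pi_{\mathrm p}(\vfi,u)=\vfi$ be the particle projection. Since $\nu_0=\mu_0\otimes\delta_{\widehat u_0}$, for every bounded Borel $B$ on $\mathbb S\h$ we get
\begin{equation*}
 \int B\,d\mu_t=\int B(\pi_{\mathrm p}\,\mathcal U_t(\vfi,u))\,d\nu_0(\vfi,u)=\int B(S_t\vfi)\,d\mu_0(\vfi).
\end{equation*}
Hence $(\pi_{\mathrm p})_\#\nu_t=(S_t)_\#\mu_0$. Mass conservation in \lemref{lem:global-joint-Hilbert-flow} shows that $S_t$ maps $\mathbb S\h$ into itself. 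Phase equivariance $\mathcal U_t\circ\mathcal G_\theta=\mathcal G_\theta\circ\mathcal U_t$ makes $(S_t)_\#\mu_0$ $S^1$-invariant. Combining these, uniqueness in \thmref{thm:strong-quantum-de-Finetti} identifies $(S_t)_\#\mu_0$ with $\mu_t$.

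\textbf{Step 2: the marginal formula \eqref{eq:de-Finetti-representation-gamma-k}.} Substitute the identity of Step 1 into \eqref{eq:quantum-de-Finetti-representation-limit-point}. The integrand $\vfi\mapsto|\vfi^{\otimes k}\rangle\langle\vfi^{\otimes k}|$ is trace-norm continuous and has trace norm one, and $S_t$ is continuous. The change of variables is therefore valid as a trace-class Bochner integral; it suffices to check it against compact tests and then use duality.

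\textbf{Step 3: the diagonal kernel \eqref{eq:de-Finetti-representation-gamma-k-plus-one}.} This is the step needing the most care, because the diagonal of an $L^2$ kernel is not defined a priori. I would take the right-hand side as the definition, as announced in \corref{cor:explicit-gp-hierarchy}. The Minkowski estimate quoted there shows the expression lies in $L^1(\R^3_y;L^2(d\ux_k\,d\ux_k'))$. Then I would check consistency with the operator partial trace. Pair the collision term $\Tr_{k+1}[v_{j,k+1},\gamma_{\infty,t}^{(k+1)}]$ with a finite-rank test, insert \eqref{eq:de-Finetti-representation-gamma-k} for $k+1$, and use Fubini, which is justified by boundedness of $v$ and $\|\vfi_t\|=1$. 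This produces exactly the integral of $(v(x_j-y)-v(x_j'-y))$ against the stated expression.

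\textbf{Step 4: the trajectory equation \eqref{eq:nonlinear-retarded-Hartree-equation-x}.} By construction, $\vfi_t$ is the particle component of the unique mild solution from \lemref{lem:global-joint-Hilbert-flow}. So the particle Duhamel formula \eqref{eq:duhamel-particle} holds with $F_{u_t}$. The field mild equation \eqref{eq:alpha-t-def} gives $F_{u_t}=F_{\vfi_t}$, with the retarded representation \eqref{eq:retarded-radiation-field-def}, exactly as in \lemref{lem:annihilation-as-transported-source-operator}. This gives the equation in the stated mild sense.
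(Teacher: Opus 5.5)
Your proposal is correct and follows the paper's own proof. Like the paper, you project the joint transport law $\nu_t=(\mathcal U_t)_{\#}(\mu_0\otimes\delta_{\widehat u_0})$ onto the particle coordinate and insert the result into the de Finetti representation, and you obtain the trajectory equation as the particle component of the coupled mild system, with $F_{u_t}$ rewritten through the field Duhamel formula. One small point: the appeal to de Finetti uniqueness in Step 1 is unnecessary. Your displayed identity already uses that the particle marginal of $\nu_t$ is $\mu_t$, and so it yields $\mu_t=(S_t)_{\#}\mu_0$ directly.
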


\begin{proof}
Project \eqref{eq:joint-source-transport-law} onto the particle
coordinate and use the particle de Finetti representation.
The trajectory equation is the particle component of
\eqref{eq:joint-classical-coupled-system}; its retarded form follows
from \eqref{eq:alpha-t-def}.  Thus
\eqref{eq:transported-de-Finetti-measures-GP} is a conclusion of
joint transport, rather than an assumption used to prove it.
\end{proof}

\begin{corollary}[The coupled mean-field system for pure condensates]
\label{cor:coupled-mean-field-system}
If
$\gamma_{\infty,0}^{(k)}
=|\vfi_0^{\otimes k}\rangle\langle\vfi_0^{\otimes k}|$ for all $k$,
then
\begin{equation}
 \gamma_{\infty,t}^{(k)}
 =|\vfi_t^{\otimes k}\rangle\langle\vfi_t^{\otimes k}|,
 \qquad
 (\vfi_t,u_t)=\mathcal U_t(\vfi_0,\widehat u_0).
\end{equation}
In particular, the effective equations are
\begin{equation}\label{eq:coupled-system}
 \begin{cases}
 i\partial_t\vfi_t
   =\bigl(\dch_0+\lam(v*|\vfi_t|^2)+\eta F_{u_t}\bigr)\vfi_t,\\
 i\partial_tu_t(\xi)
   =\omg(\xi)u_t(\xi)
          +\eta\widehat w(\xi)\widehat{|\vfi_t|^2}(\xi),
 \end{cases}
\end{equation}
with
\begin{equation}\label{eq:total-classical-radiation-field}
 F_{u_t}(x)=2\operatorname{Re}\int_{\R^3}
            e^{i\xi\cdot x}\widehat w(\xi)u_t(\xi)\,d\xi.
\end{equation}
The inverse Fourier transform of $u_t$ is the position-space
field used in the introductory coupled system. The system is understood in the mild sense established in \lemref{lem:global-joint-Hilbert-flow}.
\end{corollary}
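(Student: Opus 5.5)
The plan is to reduce the statement to the transport law already established. First I identify the initial de Finetti measure: if $\gamma_{\infty,0}^{(k)}=|\vfi_0^{\otimes k}\rangle\langle\vfi_0^{\otimes k}|$ for all $k$, then by the uniqueness part of \thmref{thm:strong-quantum-de-Finetti} the $S^1$-invariant representing measure must be
\begin{equation*}
 \mu_0=\frac1{2\pi}\int_0^{2\pi}\delta_{e^{i\theta}\vfi_0}\,d\theta .
\end{equation*}
Indeed, this measure represents the given hierarchy, since $|(e^{i\theta}\vfi_0)^{\otimes k}\rangle\langle(e^{i\theta}\vfi_0)^{\otimes k}|$ does not depend on $\theta$. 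It is also phase invariant, so it is the unique such measure. The initial joint law is then $\nu_0=\mu_0\otimes\delta_{\widehat u_0}$, as in \eqref{eq:joint-source-initial-law}.

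Next I apply \corref{cor:de-Finetti-form-of-limiting-GP-hierarchy}, namely $\mu_t=(S_t)_{\#}\mu_0$ together with \eqref{eq:de-Finetti-representation-gamma-k}. Integrating against $\mu_0$ gives
\begin{equation*}
 \gamma_{\infty,t}^{(k)}=\frac1{2\pi}\int_0^{2\pi}
 |S_t(e^{i\theta}\vfi_0)^{\otimes k}\rangle\langle S_t(e^{i\theta}\vfi_0)^{\otimes k}|\,d\theta .
\end{equation*}
The key input is the equivariance $\mathcal U_t\circ\mathcal G_\theta=\mathcal G_\theta\circ\mathcal U_t$ proved at the end of \lemref{lem:global-joint-Hilbert-flow}. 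It comes from $R(e^{i\theta}\vfi)=R(\vfi)$, the fact that the phase only rotates the particle coordinate, and uniqueness of mild solutions. Consequently $\mathcal U_t(e^{i\theta}\vfi_0,\widehat u_0)=(e^{i\theta}\vfi_t,u_t)$, with a field component independent of $\theta$. Hence $S_t(e^{i\theta}\vfi_0)=e^{i\theta}\vfi_t$. The phases cancel in the rank-one projector, and the average collapses to $|\vfi_t^{\otimes k}\rangle\langle\vfi_t^{\otimes k}|$, where $(\vfi_t,u_t)=\mathcal U_t(\vfi_0,\widehat u_0)$.

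Finally, the effective equations \eqref{eq:coupled-system} are simply \eqref{eq:joint-classical-coupled-system} written in momentum space. The field equation uses $R(\vfi)=\widehat w\,\widehat{|\vfi|^2}$ from \eqref{eq:joint-phase-space-and-source}, and $F_{u_t}$ in \eqref{eq:total-classical-radiation-field} is the definition \eqref{eq:def-Fu-and-Fuk} evaluated with $g_x(\xi)=e^{-i\xi\cdot x}\widehat w(\xi)$. The mild sense and global well-posedness are exactly those of \lemref{lem:global-joint-Hilbert-flow}. The inverse Fourier transform then converts the momentum-space field back to the position-space form of \eqref{eq:intro-effective-system}, using that $w$ is real and even.

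I do not expect a real obstacle here. The only point needing care is that the field component of the flow is genuinely independent of $\theta$, so that the same $u_t$ appears for the whole phase orbit. This is what makes the particle marginal pure rather than a nontrivial mixture, and it is handled by the equivariance and uniqueness statement of \lemref{lem:global-joint-Hilbert-flow}.
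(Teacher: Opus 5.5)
Your proposal is correct and follows essentially the same route as the paper. You identify $\mu_0$ as the uniform measure on the phase orbit of $\vfi_0$, use the equivariance $\mathcal U_t(e^{i\theta}\vfi_0,\widehat u_0)=(e^{i\theta}\vfi_t,u_t)$ coming from uniqueness of the coupled flow, and note that the projector and the field are constant on that orbit, so the de Finetti integrals collapse to the factorized state. Your extra details (uniqueness of the phase-invariant representing measure, and reading the effective equations off the momentum-space form of the flow) only make explicit what the paper's short proof leaves implicit.
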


\begin{proof}
In the phase-invariant de Finetti convention, the initial measure
is the uniform measure on
$\{e^{i\theta}\vfi_0:0\leq\theta<2\pi\}$.
The coupled flow satisfies
$\mathcal U_t(e^{i\theta}\vfi_0,\widehat u_0)
=(e^{i\theta}\vfi_t,u_t)$ by its equations and uniqueness.
The projector and the field are therefore constant on this
initial phase orbit.  The preceding integral representations
reduce to the displayed factorized solution.
A Dirac mass at $\vfi_0$ gives the same integrals for phase-invariant observables, but the phase-invariant representing measure on the sphere is the uniform phase-orbit measure above.
\end{proof}

\appendix
\section{Self-adjointness of the Hamiltonian}
\label{app:self-adjointness}

In this appendix, we rigorously establish that the Hamiltonian $\HMN$ is self-adjoint and generates a well-defined unitary propagator, ensuring that the quantum dynamics are mass-preserving and globally well-posed. 

\begin{proposition}[Self-adjointness and unitary dynamics]
\label{prop:appendix-self-adjointness}
Assume \hyporef{hyp:standing-hypotheses}.
For every fixed $N\geq1$, the Hamiltonian $\HMN$
is self-adjoint on
\begin{equation}
    \mathcal D_0
    :=
    \cD\left(\sum_{j=1}^N\dch_0^{(j)}\right)
    \mcap\cD(\Hf)
\end{equation}
in $\HN$, with the operators acting on their
corresponding tensor factors.
The same Hamiltonian expression defines a
self-adjoint operator on
$L^2(\R^{3N})\otimes\Fock$, whose restriction to
the reducing bosonic subspace is the operator above.

Consequently,
\begin{equation}
    \UN(t,s)=e^{-i(t-s)\HMN}
\end{equation}
is a strongly continuous unitary propagator.
For every $\Psi_{N,0}\in\HN$, the evolution
$\Psi_{N,t}=\UN(t,0)\Psi_{N,0}$ is norm preserving.
If $\Psi_{N,0}\in\mathcal D_0$, it satisfies
$i\partial_t\Psi_{N,t}=\HMN\Psi_{N,t}$
as a strong equation in $\HN$.
\end{proposition}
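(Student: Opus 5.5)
This is the standard Kato--Rellich argument, carried out first on the full labeled space $L^2(\R^{3N})\otimes\Fock$ and then restricted to the bosonic subspace.

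\textbf{Step 1: the free part.} Set
\begin{equation}
    H_0:=\sum_{j=1}^N\dch_0^{(j)}\otimes\1+\1\otimes\Hf .
\end{equation}
Each $\dch_0^{(j)}$ is the nonnegative self-adjoint operator given by its closed form. The $\dch_0^{(j)}$ and $\Hf$ act on different tensor factors and so strongly commute. Joint spectral calculus therefore makes $H_0$ nonnegative and self-adjoint on $\mathcal D_0$. Both $H_0$ and $\mathcal D_0$ commute with particle permutations, so the bosonic subspace reduces $H_0$.

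\textbf{Step 2: the perturbations.} The pair interaction $\frac{\lam}{N}\sum_{i<j}v_{ij}$ is bounded, since $v\in L^\infty$, with norm at most $\frac{\lam}{2}\|v\|_{L^\infty}N$. For the field term, apply \lemref{lem:field-energy-operator-bound} fiberwise in the particle configuration with $g=w(x_j-\cdot)$. Note that $\|\widehat g/\sqrt{\omg}\|_{L^2}=\|\widehat w/\sqrt{\omg}\|_{L^2}$ and $\|g\|_{L^2}=\|w\|_{L^2}$, independently of $x_j$. This gives
\begin{equation}
    \Big\|\frac{\eta}{\sqrt N}W_N\Psi\Big\|\leq 2\eta\sqrt N\,\|\widehat w/\sqrt{\omg}\|_{L^2}\,\|\Hf^{1/2}\Psi\|+\eta\sqrt N\|w\|_{L^2}\|\Psi\| .
\end{equation}
For every $\delta>0$ we have $\|\Hf^{1/2}\Psi\|\leq\delta\|\Hf\Psi\|+\delta^{-1}\|\Psi\|$ and $\|\Hf\Psi\|\leq\|H_0\Psi\|$, the latter because the summands of $H_0$ are nonnegative and commuting. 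Hence $W_N$ is $H_0$-bounded with relative bound $0$. The perturbation is symmetric on $\mathcal D_0$: $w$ is real, so $\phi(w(x_j-\cdot))$ is symmetric, and $v$ is real.

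\textbf{Step 3: self-adjointness, restriction, and dynamics.} Kato--Rellich now shows that $\HMN$ is self-adjoint on $\mathcal D_0$ in the full space. The perturbation also commutes with permutations, so $\HMN$ is reduced by the bosonic subspace. Its restriction there is self-adjoint on $\mathcal D_0\cap\HN$, which is the operator in the statement. Stone's theorem then gives the strongly continuous unitary group $\UN(t,s)=e^{-i(t-s)\HMN}$, and unitarity gives norm preservation. Because $\mathcal D(\HMN)=\mathcal D_0$ is invariant under the group, the strong Schr\"odinger equation holds for $\Psi_{N,0}\in\mathcal D_0$.

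\textbf{Main technical point.} The only delicate step is the fiberwise use of \lemref{lem:field-energy-operator-bound}. One must check that $x\mapsto\phi(w(x-\cdot))\Psi(x)$ is measurable and that the pointwise bound integrates over the particle configurations. I would verify this first on finite-photon, finite-particle-energy vectors, where the estimate is explicit. Closedness of the field operator then extends it, using the same approximation argument as in that lemma.
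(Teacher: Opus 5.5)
Your proposal is correct and follows the paper's proof essentially step for step: Kato--Rellich on the full labeled space $L^2(\R^{3N})\otimes\Fock$. There, the free part is self-adjoint by joint spectral calculus, the pair term is bounded, and the field term is infinitesimally $H_0$-bounded via the fiberwise estimate of \lemref{lem:field-energy-operator-bound}; this is followed by reduction to the bosonic subspace and Stone's theorem. The differences are cosmetic: the paper obtains the relative bound directly from $\|\Hf^{1/2}\Psi\|^2\leq\|\Psi\|\,\|\mathcal H_{\mathrm{free}}\Psi\|$, and it notes explicitly that permutation invariance of the pair term uses the evenness of $v$.
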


\begin{proof}
We first work on the full labeled space
$L^2(\R^{3N})\otimes\Fock$.
Suppressing tensor-product identities, write
\begin{equation}
    \HMN
    =
    \mathcal H_{\mathrm{free}}
    +\mathcal V_{\mathrm{int}}
    +\mathcal W_{\mathrm{field}},
\end{equation}
where
\begin{equation}
\begin{aligned}
    \mathcal H_{\mathrm{free}}
    &:=
    \sum_{j=1}^N\dch_0^{(j)}+\Hf,
    \\
    \mathcal V_{\mathrm{int}}
    &:=
    \frac{\lam}{N}
       \sum_{1\leq i<j\leq N}v(x_i-x_j),
    \\
    \mathcal W_{\mathrm{field}}
    &:=
    \frac{\eta}{\sqrt N}
       \sum_{j=1}^N\phi(w(x_j-\cdot)).
\end{aligned}
\end{equation}

The particle and field free Hamiltonians are
nonnegative, self-adjoint, and strongly commute.
Their joint spectral calculus therefore shows that
$\mathcal H_{\mathrm{free}}$ is self-adjoint on
\begin{equation}
    \cD(\mathcal H_{\mathrm{free}})
    =
    \cD\left(\sum_{j=1}^N\dch_0^{(j)}\right)
    \mcap\cD(\Hf).
\end{equation}
The particle interaction is bounded and symmetric,
with
\begin{equation}
    \|\mathcal V_{\mathrm{int}}\|_{\mathrm{op}}
    \leq
    \frac{\lam(N-1)}2\|v\|_{L^\infty}.
\end{equation}

For the field interaction, the creation and
annihilation estimates in
\lemref{lem:field-energy-operator-bound},
applied fiberwise in the particle coordinates,
give
\begin{equation}
\begin{aligned}
    \|\mathcal W_{\mathrm{field}}\Psi\|
    &\leq
    2\eta\sqrt N
       \left\|\frac{\widehat w}{\sqrt{\omg}}\right\|_{L^2}
       \|\Hf^{1/2}\Psi\|
    \\
    &\quad+
    \eta\sqrt N\,\|w\|_{L^2}\|\Psi\|,
    \qquad
    \Psi\in\cD(\Hf^{1/2}).
\end{aligned}
\end{equation}
The form-factor norms are uniform in the particle
positions, since translation changes the Fourier
transform only by a phase, using the evenness of $w$.

Now take $\Psi\in\cD(\mathcal H_{\mathrm{free}})$.
Nonnegativity gives
\begin{equation}
    \|\Hf^{1/2}\Psi\|^2
    \leq
    \langle\Psi,\mathcal H_{\mathrm{free}}\Psi\rangle
    \leq
    \|\Psi\|\,
    \|\mathcal H_{\mathrm{free}}\Psi\|.
\end{equation}
Young's inequality therefore yields, for every
$\varepsilon>0$,
\begin{equation}
\begin{aligned}
    \|(\mathcal V_{\mathrm{int}}
         +\mathcal W_{\mathrm{field}})\Psi\|
    &\leq
    \varepsilon\|\mathcal H_{\mathrm{free}}\Psi\|
    \\
    &\quad+
    \left(
       \frac{\eta^2N}{\varepsilon}
       \left\|\frac{\widehat w}{\sqrt{\omg}}\right\|_{L^2}^2
       +\eta\sqrt N\,\|w\|_{L^2}
       +\frac{\lam(N-1)}2\|v\|_{L^\infty}
    \right)\|\Psi\|.
\end{aligned}
\end{equation}
Thus, for each fixed $N$, the symmetric perturbation
is infinitesimally operator bounded relative to
$\mathcal H_{\mathrm{free}}$.

The Kato--Rellich theorem
\cite{kato1966perturbation}
implies that $\HMN$ is self-adjoint on
$\cD(\mathcal H_{\mathrm{free}})$.
The relative bound also gives equivalence of the
graph norms of $\HMN$ and
$\mathcal H_{\mathrm{free}}$.
Consequently, every operator core of the free
Hamiltonian remains an operator core of $\HMN$.

Particle permutations preserve this domain and
commute with the complete Hamiltonian.
Here evenness of $v$ ensures invariance of the
pair interaction, and the field interaction is
summed over all particle labels.
Hence the bosonic subspace reduces $\HMN$.
Its restriction is self-adjoint on the corresponding
bosonic domain $\mathcal D_0$.

Stone's theorem now gives the strongly continuous
unitary propagator and norm conservation.
For initial data in $\mathcal D_0$, invariance
of the operator domain under the unitary group
gives the asserted strong Schr\"odinger equation.
\end{proof}

\section{Proof of \texorpdfstring{\lemref{lem:coherent-oscillator-moments}}{Lemma}}
\label{sec:coherent-oscillator-moments}

\begin{proof}
Constants may depend on the matrix $A$, the parameters $E_*$ and $a_*$,
the fixed couplings, and the norms in the hypotheses, but not on $N$
or the moment order $k$. Identity operators on tensor factors will be
suppressed. We first establish the estimates with Fourier cutoffs; the
removal of the cutoffs and all domain statements are justified at the
end of the proof.

\smallskip
\noindent\emph{Positivity and the Weyl transform.}
Put $c_w=\norm{\widehat w/\sqrt\omega}_{L^2}$. The standard
annihilation-operator estimate, applied at fixed particle coordinates,
gives
\begin{equation}
\label{eq:moment-field-form-estimate}
\begin{aligned}
    \left|\left\langle\psi,\frac\eta{\sqrt N}W_N\psi
         \right\rangle\right|
    &\leq2\eta\sqrt N\,c_w\norm{\Hf^{1/2}\psi}\norm\psi\\
    &\leq\tfrac12\norm{\Hf^{1/2}\psi}^2
           +2\eta^2Nc_w^2\norm\psi^2.
\end{aligned}
\end{equation}
For completeness, the annihilation estimate follows on each photon
sector from Cauchy--Schwarz with weight $\omega(\xi)$; integration over
the remaining variables yields
$\norm{a(g)\psi}\leq\norm{g/\sqrt\omega}_{L^2}
\norm{\Hf^{1/2}\psi}$.
The same proof applies to the multiplication-valued kernel
$\sum_j e^{-i\xi\cdot x_j}\widehat w(\xi)$, whose absolute value is
at most $N|\widehat w(\xi)|$.
Since $V\geq0$ and
$\lam N^{-1}\sum_{i<j}v(x_i-x_j)\geq-\lam N\norm v_{L^\infty}/2$,
\begin{equation}
    A_N\geq\sum_{j=1}^N\dch_0^{(j)}+\tfrac12\Hf
       +N\left(1-\tfrac\lam2\norm v_{L^\infty}
                   -2\eta^2c_w^2\right)\geq0.
\end{equation}
In the notation of the standing weak-coupling condition, $c_w\leq C_w$;
thus $2\lam\norm v_{L^\infty}+4\eta C_w<1$ implies
\eqref{eq:moment-positivity-condition}. The Weyl identities give
\begin{equation}
\label{eq:moment-G-decomposition}
    G_N=D_N+R_N+\mathcal L_N,
    \qquad D_N:=\mathbb K_N+\Hf+Nb_0,
\end{equation}
where
\begin{equation}
\label{eq:moment-bounded-particle-part}
\begin{aligned}
    b_0&:=1+\int_{\R^3}\omega(\xi)|\widehat u_0(\xi)|^2\,d\xi,\\
    c_{u_0}(x)&:=2\eta\re (w*u_0)(x),
      \qquad B(x):=U(x)+c_{u_0}(x),\\
    R_N&:=\sum_{j=1}^N B(x_j)
            +\frac\lam N\sum_{1\leq i<j\leq N}v(x_i-x_j),
\end{aligned}
\end{equation}
and
\begin{equation}
\label{eq:moment-linear-field-part}
\begin{aligned}
    \mathcal L_N
      &=\sqrt N\int_{\R^3}
            \bigl(F_N(\xi)\ad_\xi+F_N(\xi)^*a_\xi\bigr)\,d\xi,\\
    F_N(\xi)
      &:=\omega(\xi)\widehat u_0(\xi)
            +\frac\eta N\widehat w(\xi)
                  \sum_{j=1}^N e^{-i\xi\cdot x_j}.
\end{aligned}
\end{equation}
Here $F_N(\xi)$ is a multiplication operator on the particle space.
In particular, the original interaction $W_N$ is unscaled in
\eqref{eq:moment-microscopic-Hamiltonian}; its factor $\eta/\sqrt N$
has been included in \eqref{eq:moment-linear-field-part}.

Define the nonnegative function and finite measures
\begin{equation}
\label{eq:moment-majorant-measures}
\begin{aligned}
    f(p)&:=\omega(p)|\widehat u_0(p)|+\eta|\widehat w(p)|,\\
    d\nu(p)&:=f(p)^2\,dp,\\
    d\mu(q)&:=\left(|\widehat B(q)|
                   +\tfrac\lam2|\widehat v(q)|\right)dq.
\end{aligned}
\end{equation}
The Fourier transform of the real part in $c_{u_0}$ involves a function
and its reflected complex conjugate. Cauchy--Schwarz therefore gives
\begin{equation}
\begin{aligned}
    \int e^{a_*|q|}|\widehat c_{u_0}(q)|\,dq
    \leq2\eta
       \norm{e^{a_*|\cdot|/2}\widehat w}_{L^2}
       \norm{e^{a_*|\cdot|/2}\widehat u_0}_{L^2}.
\end{aligned}
\end{equation}
Consequently,
\begin{equation}
\label{eq:moment-majorant-exponential}
    \int e^{a_*|q|}\,d\mu(q)
       +\int e^{a_*|p|}\,d\nu(p)<\infty.
\end{equation}
Also $f,f/\sqrt\omega\in L^2$. Indeed, the only infrared term
requiring a separate assumption is
$\widehat w/\sqrt\omega\in L^2$, supplied by the hypotheses on $w$.

\smallskip
\noindent\emph{Particle estimates for fixed momentum transfers.}
After an orthogonal change of coordinates, let the eigenvalues of $A$
be $\gamma_1^2,\gamma_2^2,\gamma_3^2$ with $\gamma_\ell>0$.
In the Hermite basis of the reference oscillator,
\begin{equation}
    \mathbb K_N=\sum_{j=1}^N\sum_{\ell=1}^3
             \gamma_\ell(2b_{j,\ell}^*b_{j,\ell}+1).
\end{equation}
Let $\mathcal E_M$ be the span of the $N$-particle Hermite functions
with total excitation number at most $M$. These spaces are considered
in the full labeled particle space, since individual Fourier factors
need not preserve symmetry. For $\chi\in\mathcal E_M$,
\begin{equation}
\label{eq:moment-Hermite-estimates}
    \norm{\mathbb K_N\chi}\leq C_A(N+M)\norm\chi,
    \qquad
    \norm{\nabla_X\chi}\leq C_A(N+M)^{1/2}\norm\chi,
\end{equation}
where $X=(x_1,\ldots,x_N)$.
The first estimate follows from the displayed diagonal form; the
second follows from $-\Delta_X\leq\mathbb K_N$ as quadratic forms.
Each derivative maps $\mathcal E_M$ into $\mathcal E_{M+1}$.
The cutoff \eqref{eq:moment-initial-reference-cutoff} implies
$\phi_{N,0}\in\mathcal E_{c_*N}$ for a constant $c_*$ depending only
on $A$ and $E_*$, with integer parts understood.

For $P\in\R^{3N}$, set $T_P\chi(X)=e^{-iP\cdot X}\chi(X)$.
Then
\begin{equation}
\label{eq:moment-translated-oscillator}
    T_P^*\mathbb K_NT_P
       =\mathbb K_N+2iP\cdot\nabla_X+|P|^2.
\end{equation}
If $M\leq c_*N+k$, $s\geq0$, and $\chi\in\mathcal E_M$, it follows
that
\begin{equation}
\label{eq:moment-fixed-transfer-bound}
\begin{aligned}
    \norm{(T_P^*\mathbb K_NT_P+Nb_0+s)\chi}
      &\leq C(N+k+|P|^2+s)\norm\chi,\\
    (T_P^*\mathbb K_NT_P+Nb_0+s)\chi
      &\in\mathcal E_{M+1}.
\end{aligned}
\end{equation}
This estimate will be applied separately to each fixed sequence of
Fourier momenta. No operator norm bound on a span of differently
translated Hermite spaces is used.

\smallskip
\noindent\emph{Vacuum contractions and field energies.}
Expand $G_N^k$ using \eqref{eq:moment-G-decomposition}, retaining the
order of every particle operator. A nonzero vacuum expectation contains
an even number $2r$ of factors $\mathcal L_N$, with
$0\leq r\leq\lfloor k/2\rfloor$. Choose their positions and contract
their field operators. There are at most
\begin{equation}
\label{eq:moment-pairing-count}
    \binom{k}{2r}(2r-1)!!
       =\frac{k!}{2^r r!(k-2r)!}
\end{equation}
choices, with the value $1$ when $r=0$.
For each pair, its left endpoint is an annihilation operator and its
right endpoint is a creation operator. These are the only orientations
that contribute to a vacuum contraction. One can obtain the expansion
by repeatedly using the canonical commutation relations, or directly
by applying the word to the vacuum from right to left and recording
which existing photon each annihilation removes.

Leave each factor $\Hf$ at its original position. In a fixed complete
contraction, its value is
\begin{equation}
\label{eq:moment-field-shifts}
    s=\sum_{i:\,\text{pair }i\text{ crosses that position}}\omega(p_i),
    \qquad 0\leq s\leq\sum_{i=1}^r|p_i|,
\end{equation}
where $p_i$ is the momentum of the $i$-th contracted pair.
To verify this, start with the vacuum state. A creation operator inserts a photon
with energy $\omega(p_i)$, the corresponding annihilation operator removes it,
and $\Hf$ multiplies by the sum of the energies of the photons currently
present. This also proves the pull-through rule in the presence of
several factors $\Hf$. Particle operators retain their original order
throughout this procedure.

Suppose that the remaining $k-2r$ positions contain $b$ factors $R_N$
and $d$ factors $D_N$, so that $2r+b+d=k$.
Expand each coefficient $F_N(p)$ as the sum of its scalar term and its
$N$ plane-wave terms. The sum of the absolute values of these
coefficients is at most $f(p)$. After pairing, the integration for each
pair is therefore bounded by $d\nu(p)$, and the $2r$ factors $\sqrt N$
give $N^r$.
For a factor $R_N$, the sum of the absolute Fourier coefficients over
all particle labels is bounded by
\begin{equation}
    N|\widehat B(q)|\,dq
      +\frac{\lam}{N}\frac{N(N-1)}2|\widehat v(q)|\,dq
    \leq N\,d\mu(q).
\end{equation}
Thus the $b$ particle-potential factors give $N^b$ and $b$ integrations
against $\mu$.

For fixed momenta $p_1,\ldots,p_r,q_1,\ldots,q_b$, each field vertex
transfers either zero momentum or momentum of norm $|p_i|$ to one
particle. Each pair uses $p_i$ at two vertices. A one-particle potential
transfers momentum of norm $|q_j|$, and a pair potential transfers
$(q_j,-q_j)$ to two particles, of norm $\sqrt2|q_j|$ in $\R^{3N}$.
Every cumulative particle momentum $P$ consequently satisfies
\begin{equation}
\label{eq:moment-total-transfer}
    |P|\leq S,
    \qquad
    S:=2\sum_{i=1}^r|p_i|+\sqrt2\sum_{j=1}^b|q_j|.
\end{equation}
The field energy in \eqref{eq:moment-field-shifts} is also at most $S$.
Apply the particle word from right to left. After each plane wave, the
vector has the form $T_P\chi$; after each free factor, use
\eqref{eq:moment-translated-oscillator}. At most $d\leq k$ free factors
occur, so the Hermite excitation number never exceeds $c_*N+k$.
Equations \eqref{eq:moment-fixed-transfer-bound} and
\eqref{eq:moment-total-transfer} bound the absolute value of the
resulting particle matrix element by
\begin{equation}
\label{eq:moment-word-integrand}
    C^d(N+k+S^2)^d.
\end{equation}
Here $S\leq1+S^2$ absorbs the field-energy term. The plane-wave factors
are unitary, and the initial and final particle vectors have norm one.

\smallskip
\noindent\emph{Momentum integrations and summation of the words.}
Choose $\varepsilon=a_*/2$. By
\eqref{eq:moment-majorant-exponential}, there is $M_*\geq1$ such that
\begin{equation}
\label{eq:moment-exponential-transfer}
    \int e^{\varepsilon S}
       \prod_{i=1}^r d\nu(p_i)\prod_{j=1}^b d\mu(q_j)
       \leq M_*^{r+b}.
\end{equation}
The integral is interpreted as evaluation at $S=0$ if $r=b=0$.
For $d\geq1$, use
\begin{equation}
    (N+k+S^2)^d\leq2^{d-1}\bigl((N+k)^d+S^{2d}\bigr),
    \qquad
    S^{2d}\leq (2d)!\,\varepsilon^{-2d}e^{\varepsilon S}.
\end{equation}
Since $d\leq k$, $(2d)!\leq(2k)^{2d}$ and $N+k\leq N+k^2$.
The case $d=0$ follows directly from
\eqref{eq:moment-exponential-transfer}. In both cases,
\begin{equation}
\label{eq:moment-integrated-word}
    \int (N+k+S^2)^d
       \prod_{i=1}^r d\nu(p_i)\prod_{j=1}^b d\mu(q_j)
       \leq C^k(N+k^2)^d.
\end{equation}
Combining the preceding estimates, each fixed placement and pairing
of the field vertices, with $b$ particle-potential positions, contributes
in absolute value at most
\begin{equation}
    C^kN^{r+b}(N+k^2)^{k-2r-b}.
\end{equation}
Write $L_k=N+k^2$. Summing over the potential positions gives
\begin{equation}
    \sum_{b=0}^{k-2r}\binom{k-2r}{b}N^bL_k^{k-2r-b}
       =(N+L_k)^{k-2r}\leq(2L_k)^{k-2r}.
\end{equation}
Using \eqref{eq:moment-pairing-count}, and increasing $C$ to absorb
the powers of $2$, we obtain
\begin{equation}
\label{eq:moment-final-pairing-sum}
\begin{aligned}
    |\langle\Phi_N,G_N^k\Phi_N\rangle|
    &\leq C^kL_k^k
       \sum_{r=0}^{\lfloor k/2\rfloor}
         \frac{k!}{2^r r!(k-2r)!}
               \left(\frac{N}{L_k^2}\right)^r\\
    &\leq C^kL_k^k
       \exp\left(\frac{k^2N}{2(N+k^2)^2}\right)
     \leq e^{1/8}C^kL_k^k.
\end{aligned}
\end{equation}
For the second line, use $k!/(k-2r)!\leq k^{2r}$ and the exponential
series. For the last inequality, use
$Nk^2/(N+k^2)^2\leq1/4$. The numerical factor $e^{1/8}$ can be
absorbed into $C^k$.

\smallskip
\noindent\emph{Removal of cutoffs and domains of powers.}
We give the details to justify the expansions without assuming the
desired energy moments. Let $\Lambda>0$. In
\eqref{eq:moment-G-decomposition} replace
$\widehat B(q),\widehat v(q)$ and $F_N(p)$ by their restrictions to
$|q|\leq\Lambda$ and $|p|\leq\Lambda$, respectively, leaving $D_N$
unchanged. Denote the resulting operator by $G_{N,\Lambda}$.
The cutoffs are symmetric, so the particle potentials remain real
and the field interaction remains symmetric.
All estimates above hold with constants independent of $\Lambda$,
because their nonnegative majorant measures are simply restricted.

For fixed $N$, both $G_N$ and $G_{N,\Lambda}$ are self-adjoint on
\begin{equation}
\label{eq:moment-common-domain}
    \cD(T_N),\qquad T_N:=\mathbb K_N+\Hf+1.
\end{equation}
Indeed, $R_N$ and its cutoffs are bounded. The field estimates give for either the original or a cutoff field interaction,
\begin{equation}
    \norm{\mathcal L_N\psi}
       \leq C\sqrt N\bigl(
            \norm{f/\sqrt\omega}_{L^2}\norm{\Hf^{1/2}\psi}
             +\norm f_{L^2}\norm\psi\bigr).
\end{equation}
The creation estimate used here follows from the annihilation estimate
and the canonical commutation relations, at each fixed $X$.
Since $\Hf^{1/2}$ is infinitesimally bounded relative to $T_N$,
self-adjointness on \eqref{eq:moment-common-domain} follows from the
Kato--Rellich theorem. The Weyl identities above identify this $G_N$
with $\mathcal W_N^*A_N\mathcal W_N$: the profile condition implies
$\omega\widehat u_0\in L^2$, and the induced particle multiplication
term is the bounded function $c_{u_0}$.

For each $\Lambda$, $\Phi_N$ lies in the domain of every power of
$G_{N,\Lambda}$. In fact, the reference cutoff makes $\phi_{N,0}$ a
finite linear combination of Hermite functions. Repeated applications
of $G_{N,\Lambda}$ produce only finitely many photon sectors; after
$m$ applications there are at most $m$ photons, all with momenta in
$\{|p|\leq\Lambda\}$. In the particle variables the resulting vectors
are Schwartz functions with values in these photon sectors. To check
this last assertion, differentiation of a Fourier coefficient costs
only powers of the bounded momenta, while multiplication by a compactly
supported Fourier integral has bounded derivatives of every order.
Polynomial weights in $X$ are preserved by these multipliers, by
derivatives, and by the quadratic reference potential. The same
argument in $L^2$ for the photon variables applies to the field
coefficients. These vectors lie in $\cD(T_N)$ at every stage.
Thus all the finite-word expansions and contractions used above are
legitimate for $G_{N,\Lambda}$.

Dominated convergence in the Fourier $L^1$ norms and in the field
norms $L^2\mcap L^2(\omega^{-1}dp)$ yields, for each fixed $N$,
\begin{equation}
    \norm{(G_{N,\Lambda}-G_N)T_N^{-1}}\longrightarrow0.
\end{equation}
The common-domain assertion implies that
$T_N(G_N-i)^{-1}$ is bounded, by the closed graph theorem.
The resolvent identity, together with the self-adjoint resolvent bound,
therefore shows norm-resolvent convergence to $G_N$.
Let $\sigma_{N,\Lambda}$ and $\sigma_N$ denote the scalar spectral
measures of $G_{N,\Lambda}$ and $G_N$ in $\Phi_N$. They have total
mass one. Resolvent convergence implies convergence of their
integrals against continuous functions of compact support.
For every integer $m\geq1$, the even-order estimate already proved
gives
\begin{equation}
\label{eq:moment-uniform-even-cutoff-bound}
    \int |t|^{2m}\,d\sigma_{N,\Lambda}(t)
       \leq C^{2m}\bigl(N+4m^2\bigr)^{2m}.
\end{equation}
Testing first with continuous compactly supported truncations of
$|t|^{2m}$ and then increasing the truncation shows the same bound
for $\sigma_N$. Hence $\Phi_N\in\cD(G_N^m)$ for every $m$.

For a fixed $k$, choose $2m>k$. The estimate
\begin{equation}
    \int_{|t|>R}|t|^k\,d\sigma_{N,\Lambda}(t)
       \leq R^{k-2m}\int |t|^{2m}\,d\sigma_{N,\Lambda}(t)
\end{equation}
and its counterpart for $\sigma_N$ prove uniform integrability.
It follows that
\begin{equation}
    \langle\Phi_N,G_{N,\Lambda}^k\Phi_N\rangle
       \longrightarrow\langle\Phi_N,G_N^k\Phi_N\rangle.
\end{equation}
Thus \eqref{eq:moment-final-pairing-sum} passes to $G_N$ for every
$k$, with the original constants and moment order. Positivity of the
cutoff operators was not needed. Positivity of $G_N$, proved above,
allows the absolute value to be removed. Unitary equivalence gives
the asserted domain statements for $\Psi_{N,0}=\mathcal W_N\Phi_N$
and the equality in \eqref{eq:coherent-oscillator-moment-bound}.

\smallskip
\noindent\emph{Factorials and the logarithmic range.}
Finally,
\begin{equation}
    (N+k^2)^k\leq2^{k-1}\bigl(N^k+k^{2k}\bigr),
    \qquad k^{2k}\leq e^{2k}(k!)^2.
\end{equation}
The second inequality follows from $k!\geq(k/e)^k$.
This proves the last inequality in
\eqref{eq:coherent-oscillator-moment-bound} after increasing $D$.
For $k\geq2$, $N^{k-2}\geq1$, which gives
\eqref{eq:coherent-oscillator-N-factorial-bound}.
Dividing the first bound by $N^k$ proves
\eqref{eq:coherent-oscillator-normalized-bound}.
\end{proof}

\section{Proof of \texorpdfstring{\propref{prop:uniform-prior-bounds-on-BBGKY-hierarchy}}{Proposal}}
\label{sec:prf-prop:uniform-prior-bounds-on-BBGKY-hierarchy}

\begin{proof}[Proof of  \propref{prop:uniform-prior-bounds-on-BBGKY-hierarchy}{:}]
We proceed by the two-step induction over $k \geq 0$. The base case $k=0$ holds trivially. For $k=1$, let $\Psi$ be a bosonic vector in the common quadratic-form domain of $\sum_{j=1}^N \OT_j^2$ and $\Hf$. All identities and inequalities below are understood in the quadratic-form sense. Recalling that $\OT_j^2 := 1-\Delta_{x_j}+V(x_j)$ for $j=1,\ldots,N$, and observing that
\begin{equation}
    \sum_{j=1}^N \bigl(-\Delta_{x_j}+V(x_j)\bigr)+N = \sum_{j=1}^N \OT_j^2,
\end{equation}
we have
\begin{equation}
\begin{aligned}
    \inp{\Psi,(\HMN+N)\Psi}
    &=
    \sum_{j=1}^N \norm{\OT_j\Psi}^2
    +
    \norm{\Hf^{1/2}\Psi}^2\\
    &\quad+
    \frac{\lam}{N}
    \sum_{1\leq i<j\leq N}
    \inp{\Psi,v(x_i-x_j)\Psi}
    \\
    &\quad
    +
    \frac{\eta}{\sqrt N}
    \sum_{j=1}^N
    \inp{\Psi,\phi\bigl(w(x_j-\cdot)\bigr)\Psi}.
\end{aligned}
\end{equation}
Since $v\in L^\infty(\R^3)$, we obtain
\begin{equation}
\begin{aligned}
    \frac{\lam}{N}
    \sum_{1\leq i<j\leq N}
    \inp{\Psi,v(x_i-x_j)\Psi}
    &\geq
    -
    \frac{ \lam }{N}
    \binom{N}{2}
    \norm{v}_{L^\infty}
    \norm{\Psi}^2
    \\
    &=
    -
    \frac{ \lam }{2}
    (N-1)
    \norm{v}_{L^\infty}
    \norm{\Psi}^2
    \\
    &\geq
    -
    \frac{ \lam }{2}
    \norm{v}_{L^\infty}
    N\norm{\Psi}^2.
\end{aligned}
\end{equation}
By \lemref{lem:field-energy-operator-bound}, for every $\delta\in(0,1)$, Young's inequality yields
\begin{equation}
\begin{aligned}
    \left|
        \frac{\eta}{\sqrt N}
        \sum_{j=1}^N
        \inp{\Psi,
        \phi\bigl(w(x_j-\cdot)\bigr)\Psi}
    \right|
    &\leq
    \delta\norm{\Hf^{1/2}\Psi}^2
    +
    \frac{ \eta ^2}{\delta}
    C_w^2 N\norm{\Psi}^2.
\end{aligned}
\end{equation}
Consequently,
\begin{equation}
\begin{aligned}
    \inp{\Psi,(\HMN+N)\Psi}
    &\geq
    \sum_{j=1}^N \norm{\OT_j\Psi}^2
    +
    (1-\delta)
    \norm{\Hf^{1/2}\Psi}^2
    -
    C_\delta(\lam,\eta)
    N\norm{\Psi}^2,
\end{aligned}
\end{equation}
where
\begin{equation}
    C_\delta(\lam,\eta)
    :=
    \frac{ \lam }{2}\norm{v}_{L^\infty}
    +
    \frac{ \eta ^2}{\delta}C_w^2.
\end{equation}
Since $V\geq0$, then $\OT_j^2\geq1$ in the quadratic-form sense. Therefore,
\begin{equation}
\sum_{j=1}^N \norm{\OT_j\Psi}^2 \geq N\norm{\Psi}^2.
\end{equation}
It follows that
\begin{equation}
\begin{aligned}
    \inp{\Psi,(\HMN+N)\Psi}
    &\geq
    \bigl(1-C_\delta(\lam,\eta)\bigr)
    \sum_{j=1}^N \norm{\OT_j\Psi}^2
    +
    (1-\delta)
    \norm{\Hf^{1/2}\Psi}^2.
\end{aligned}
\end{equation}
Take $\delta=1/2$. By \eqref{eq:energy-estimate-smallness},
\begin{equation}
    \lam\norm{v}_{L^\infty}<\frac12,
    \qquad
    \eta C_w<\frac14.
\end{equation}
Consequently,
\begin{equation}
\begin{aligned}
    C_{1/2}(\lam,\eta)
    &=
    \frac{\lam}{2}\norm{v}_{L^\infty}
    +2\eta^2C_w^2<
    \frac14+\frac18
    =\frac38.
\end{aligned}
\end{equation}
It follows that
\begin{equation}
\label{eq:energy-base-coercivity}
\begin{aligned}
    \norm{(\HMN+N)^{1/2}\Psi}^2
    &\geq
    \frac58\sum_{j=1}^N\norm{\OT_j\Psi}^2
    +\frac12\norm{\Hf^{1/2}\Psi}^2.
\end{aligned}
\end{equation}
In particular, $\HMN+N$ is strictly positive. By bosonic symmetry,
\begin{equation}
    \sum_{j=1}^N\norm{\OT_j\Psi}^2
    =N\norm{\OT_1\Psi}^2.
\end{equation}
We therefore fix
\begin{equation}
    C:=\frac14
\end{equation}
once and for all. Then
\begin{equation}
    \norm{(\HMN+N)^{1/2}\Psi}^2
    \geq C N\norm{\OT_1\Psi}^2.
\end{equation}
Thus, the assertion holds at orders $k=0$ and $k=1$, with
$N_0(0)=N_0(1)=1$. 

For the operator manipulations below, we use the natural extension
of $\HMN$, given by the same expression, to the full,
unsymmetrized particle--field tensor product.
The individual operators $\OT_j$ and their products act on this
full ambient space. The total Hamiltonian preserves the bosonic subspace,
and the induction estimates are asserted for bosonic vectors. We next justify the induction by spectral truncation.
For each integer $m\geq1$, let
\begin{equation}
    P_m:=
    \left(\prod_{j=1}^N
    \mathbf{1}_{[1,m]}(\OT_j^2)\right)
    \otimes
    \mathbf{1}_{[0,m]}(\Hf).
\end{equation}
These projections preserve bosonic symmetry, commute with all
$\OT_j$ and with $\Hf$, and converge strongly to the identity.
On $\operatorname{Ran}P_m$, define the compressed Hamiltonian by
\begin{equation}
    \mathcal H_{N,m}+N
    :=
    \left.P_m(\HMN+N)P_m\right|_{\operatorname{Ran}P_m},
\end{equation}
where the compression is understood through the quadratic form.
This is a bounded self-adjoint operator on
$\operatorname{Ran}P_m$. By \eqref{eq:energy-base-coercivity},
\begin{equation}
    \mathcal H_{N,m}+N\geq \frac58 N.
\end{equation}
The base cases therefore hold for these compressed Hamiltonians,
with the same constant $C=1/4$, uniformly in $m$.

We perform the following induction on $\operatorname{Ran}P_m$.
In the displayed computations, the cutoff notation is suppressed:
$\HMN+N$ denotes $\mathcal H_{N,m}+N$, and the interaction
operators in $\H_2$ are compressed as
\begin{equation}
    P_m v(x_i-x_j)P_m,
    \qquad
    P_m\phi(w_j)P_m.
\end{equation}
All algebraic expansions of the compressed Hamiltonian take
place on $\operatorname{Ran}P_m$.
In terms linear in an interaction, we first move the remaining
free factors to the first argument and then remove the
projections, since those factors commute with $P_m$.

In the resulting weak pairings and in the multiplier and field
estimates below, $v(x_i-x_j)$, $\phi(w_j)$, $\OT_j$, and $\Hf$
denote the original, uncompressed operators.
In particular, the weak product rule is applied to the original
field operator after the projections have been removed.
No second derivative of $v$ is used.

Fix $n\geq0$, and assume that the desired estimate holds at order
$k=n$, uniformly in $m$, for every $N\geq N_0(n)$.
We prove it at order $k=n+2$. Empty products are the identity, and terms with zero combinatorial
coefficient are omitted. In particular, the internal-pair term of
$B_2$ is absent when $n=0$, and that of $B_5$ is absent when
$n\leq1$. We express
\begin{equation}
    \HMN + N := \H_1 + \H_2,
\end{equation}
where 
\begin{equation}
    \begin{aligned}
        \H_1 &:= \sum_{j=n+1}^{N} \OT^2_j + \Hf, \qquad \text{where} \quad \OT^2_j:= 1-\Delta_j +V(x_j), \\
        \H_2 &:= \sum_{j=1}^{n} \OT^2_j + \frac{\lambda}{N} \sum_{1\leq i<j\leq N} v(x_i-x_j) + \frac{\eta}{\sqrt N} \sum_{j=1}^{N} \phi\bigl(w(x_j-\cdot)\bigr).
    \end{aligned}
\end{equation} 
By the inductive hypothesis for $k=n$, we obtain
\begin{equation}
    \begin{aligned}
    \inp{\Psi, (\HMN +N )^{n+2} \Psi }  &\geq C^{n} N^{n} \inp{\Psi, (\H_1 + \H_2)(\OT^2_1 \dots \OT^2_n) (\H_1 + \H_2) \Psi}.
    \end{aligned}
\end{equation}
Expanding this gives
\begin{equation}
    \label{eq:uniform-prior-bound-on-BBGKY-hierarchy-n+2}
       \inp{\Psi, (\HMN + N)^{n+2} \Psi }   \geq C^{n} N^{n} (A + B + \cc{B} + D),
\end{equation}
where
\begin{align}
    A &:= \inp{\Psi, \H_1 \OT^2_1 \dots \OT^2_n \H_1 \Psi},\\
    B &:= \inp{\Psi, \H_1 \OT^2_1 \dots \OT^2_n \H_2 \Psi},\\
    \cc{B} &:= \inp{\Psi, \H_2 \OT^2_1 \dots \OT^2_n \H_1 \Psi},\\
    D &:= \inp{\Psi, \H_2 \OT^2_1 \dots \OT^2_n \H_2 \Psi}.
\end{align}
Since the fourth term $D$ is positive, it can be dropped. The first inner product $A$ can be decomposed as
\begin{equation}
    A = A_1+ A_2 +  A_3 + A_4,
\end{equation}
where
\begin{align}
    A_1 &= \sum_{i=n+1}^{N} \sum_{j=n+1}^{N} \inp{\Psi, \OT^2_i \OT^2_1 \dots \OT^2_n \OT^2_j \Psi}, \\
    A_2 &= \sum_{i=n+1}^{N}  \inp{\Psi, \OT^2_i \OT^2_1 \dots \OT^2_n \Hf \Psi}, \\
    A_3 &=  \sum_{j=n+1}^{N} \inp{\Psi, \Hf \OT^2_1 \dots \OT^2_n \OT^2_j \Psi}, \\
    A_4 &=  \inp{\Psi, \Hf \OT^2_1 \dots \OT^2_n \Hf \Psi}. \label{eq:A4}
\end{align}
Exploiting bosonic symmetry, we simplify the positive terms as follows. Decomposing the sum $A_1$ into cases $i=j$ and $i\neq j$ gives
\begin{equation}
    A_1 = A_{11} + A_{12},
\end{equation}
where
\begin{align}
    A_{11} &= (N-n) \inp{ \Psi, \OT^4_1 \OT^2_2 \dots \OT^2_{n+1} \Psi}, \label{eq:A11}\\
    A_{12} &= (N-n)(N-n-1) \inp{ \Psi, \OT^2_1 \dots \OT^2_{n+2} \Psi}. \label{eq:A12}
\end{align}
The terms $A_2$ and $A_3$ can be combined into a single term:
\begin{equation}
    A_2 + A_3 = 2 (N-n) \inp{\Psi, \OT^2_1 \dots \OT^2_{n+1} \Hf \Psi}. \label{eq:A2+A3}
\end{equation}
Similarly, we decompose 
\begin{equation}
    B = B_1 + B_2 + B_3 + B_4 + B_5 + B_6,
\end{equation}
where
\begin{align}
    B_1 &= \sum_{\ell=n+1}^{N} \sum_{m=1}^{n} \inp{\Psi, \OT^2_\ell \OT^2_1 \dots \OT^2_n \OT^2_m \Psi}, \label{eq:I-21}\\
    B_2 &= \frac{\lam}{N} \sum_{\ell=n+1}^{N} \sum_{1 \leq i <j \leq N} \inp{\Psi, \OT^2_\ell \OT^2_1 \dots \OT^2_n v(x_i-x_j) \Psi}, \label{eq:I-22}\\ 
    B_3 &= \frac{\eta}{\sqrt N} \sum_{\ell=n+1}^{N} \sum_{j=1}^{N} \inp{\Psi, \OT^2_\ell \OT^2_1 \dots \OT^2_n \phi(w(x_j-\cdot)) \Psi}, \label{eq:I-23}\\
    B_4 &= \sum_{m=1}^{n} \inp{\Psi, \Hf \OT_1^2 \dots \OT^2_n \OT^2_m \Psi}, \label{eq:I-24}\\
    B_5 &= \frac{\lam}{N} \sum_{1 \leq i <j \leq N} \inp{\Psi, \Hf \OT^2_1 \dots \OT^2_n v(x_i-x_j) \Psi}, \label{eq:I-25}\\
    B_6 &= \frac{\eta}{\sqrt N} \sum_{j=1}^{N} \inp{\Psi, \Hf \OT^2_1 \dots \OT^2_n \phi(w(x_j-\cdot)) \Psi}. \label{eq:I-26}
\end{align}
Using bosonic symmetry, we simplify the positive terms $B_1$ and $B_4$ as follows:
\begin{align}
    B_1 &= n (N-n) \inp{\Psi, \OT^4_1 \OT^2_2 \dots \OT^2_{n+1} \Psi}, \label{eq:B1}\\
    B_4 &= n \inp{\Psi, \Hf \OT^4_1 \OT^2_2 \dots \OT^2_n \Psi}. \label{eq:B4}
\end{align}

Next, we compute lower bounds for $B_2$, $B_3$, $B_5$, and $B_6$. Consider the term $B_2$ in \eqref{eq:I-22}. The multiplier $v(x_i-x_j)$ commutes with $\OT_\ell$
for $\ell\notin\{i,j\}$, whereas commutation can fail for
$\ell=i,j$. We therefore first establish a bound for
multiplication by $v(x_i-x_j)$ in the one-particle form norm.
We first establish a useful operator bound. Fix a $j >i$. Denote by $v_j:= v(x_i-x_j)$, we have
\begin{equation}
    \begin{aligned}
    \norm{\OT_j v(x_i-x_j) \Psi}^2 &= \inp{\nabla_j v_j \Psi + v_j \nabla_j \Psi, \nabla_j v_j \Psi + v_j \nabla_j \Psi} \\
        &\quad + \inp{v_j \Psi, (1+V(x_j)) v_j \Psi} \\
        &= \inp{\nabla_j v_j \Psi, \nabla_j v_j \Psi} + 2 \re \inp{\nabla_j v_j \Psi, v_j \nabla_j \Psi}\\
         &\quad + \inp{v_j \nabla_j \Psi, v_j \nabla_j \Psi} + \inp{v_j \Psi, (1+V(x_j)) v_j \Psi} \\
         &\leq \norm{\nabla_j v_j}_{L^\infty}^2 \norm{\Psi}^2 +2 \norm{\nabla_j v_j}_{L^\infty} \norm{v_j}_{L^\infty} \norm{\nabla_j \Psi} \norm{\Psi}\\
         &\quad + \norm{v_j}_{L^\infty}^2 \norm{\nabla_j \Psi}^2 + \norm{v_j}_{L^\infty}^2 \norm{(1+V(x_j))^{1/2} \Psi}^2 \\
         &\leq \norm{\nabla_j v_j}_{L^\infty}^2 \norm{\Psi}^2 +2 \norm{\nabla_j v_j}_{L^\infty} \norm{v_j}_{L^\infty} \norm{\nabla_j \Psi} \norm{\Psi}\\
         &\quad + \norm{v_j}_{L^\infty}^2 \norm{\OT_j \Psi}^2.
    \end{aligned}
\end{equation}
Since $\OT_j \geq 1$, and $\norm{\nabla_j \Psi}^2 \leq \norm{\OT_j \Psi}^2$, we have
\begin{equation}
    \begin{aligned}
         \norm{\OT_j v_j \Psi}^2 &\leq \prc{\norm{\nabla v}_{L^\infty} + \norm{v}_{L^\infty}}^2 \norm{\OT_j \Psi}^2.
    \end{aligned}
\end{equation}
Taking the square root and using $\norm{v}_{W^{1,\infty}} := \norm{v}_{L^\infty} + \norm{\nabla v}_{L^\infty}$, we obtain
\begin{equation}
    \label{eq:T-v-bound}
    \norm{\OT_j v(x_i-x_j) \Psi} \leq \norm{v}_{W^{1,\infty}} \norm{\OT_j \Psi}.
\end{equation}
Decomposing the sum in $B_2$ into three cases, and using bosonic symmetry,
\begin{equation}
    \label{eq:B2}
    \begin{aligned}
        B_2 &= \frac{\lam (N-n)}{N} \Big( \sum_{n+1 < i < j \leq N} \inp{\Psi, \OT^2_1 \dots \OT^2_{n+1} v(x_i-x_j) \Psi}\\
         &\quad + \sum_{1 \leq i \leq n+1 < j \leq N} \inp{\Psi, \OT^2_1 \dots \OT^2_{n+1} v(x_i-x_j) \Psi}\\
         &\quad + \sum_{1 \leq i < j \leq n+1} \inp{\Psi, \OT^2_1 \dots \OT^2_{n+1} v(x_i-x_j) \Psi} \Big).
    \end{aligned}
\end{equation}
That is,
\begin{equation}
    \begin{aligned}
        B_2 &= \frac{\lam (N-n)(N-n-1)(N-n-2)}{2N} \inp{\Psi, v(x_{n+2}-x_{n+3}) \OT^2_1 \dots \OT^2_{n+1} \Psi} \\
        &+ \frac{\lam (N-n)(n+1)(N-n-1)}{N} \inp{\Psi,\OT_1^2 v(x_1-x_{n+2}) \OT_2^2 \dots \OT^2_{n+1} \Psi} \\
        &+ \frac{\lam (N-n)(n+1)n}{2N} \inp{\Psi, \OT_1^2 \OT_2^2 v(x_1-x_2) \OT_3^2 \dots \OT^2_{n+1} \Psi}.
    \end{aligned}
\end{equation}
Using the bound \eqref{eq:T-v-bound} and denoting $v_1 := v(x_1-x_{n+2})$, we have
\begin{equation}
    \begin{aligned}
        \Big| \inp{\Psi, \OT_1^2 v_1 \OT_2^2 \dots \OT^2_{n+1} \Psi} \Big| &= \Big| \inp{\OT_1 \dots \OT_{n+1} \Psi, \OT_1 v_1 \OT_2 \dots \OT_{n+1} \Psi} \Big|\\
        &\leq \norm{\OT_1 \dots \OT_{n+1} \Psi} \norm{\OT_1 v_1 \OT_2 \dots \OT_{n+1} \Psi} \\
        &\leq \norm{v}_{W^{1,\infty}} \norm{\OT_1 \dots \OT_{n+1} \Psi}^2.
    \end{aligned}
\end{equation} 
For the internal-pair term, assume $n\geq1$ and set
$v_{12}:=v(x_1-x_2)$. In the weak-pairing convention above,
\begin{equation}
\begin{aligned}
    &\left|
    \inp{\Psi,\OT_1^2\OT_2^2v_{12}
                   \OT_3^2\cdots\OT_{n+1}^2\Psi}
    \right|\\
    &\quad=
    \left|
    \inp{\OT_1^2\OT_2^2\OT_3\cdots\OT_{n+1}\Psi,
         v_{12}\OT_3\cdots\OT_{n+1}\Psi}
    \right|\\
    &\quad=
    \left|
    \inp{\OT_1^2\OT_2\cdots\OT_{n+1}\Psi,
         \OT_2v_{12}\OT_3\cdots\OT_{n+1}\Psi}
    \right|\\
    &\quad\leq
    \norm{v}_{W^{1,\infty}}
    \norm{\OT_1^2\OT_2\cdots\OT_{n+1}\Psi}
    \norm{\OT_2\cdots\OT_{n+1}\Psi}\\
    &\quad\leq
    \norm{v}_{W^{1,\infty}}
    \norm{\OT_1^2\OT_2\cdots\OT_{n+1}\Psi}^{2}.
\end{aligned}
\end{equation}
Therefore, we have
\begin{equation}
    \abs{B_2} \leq {B_{2}^{N^2} } + {B_{2}^{N}} + {B_{2}^{1}},
\end{equation}
where
\begin{align}
    B_{2}^{N^2} &= \frac{\lam (N-n)(N-n-1)(N-n-2)}{2N} \norm{v}_{L^\infty} \inp{\Psi, \OT^2_1 \dots \OT^2_{n+1} \Psi}, \label{eq:B2N2}\\*
    B_{2}^{N} &= \frac{\lam (N-n)(n+1)(N-n-1)}{N} \norm{v}_{W^{1,\infty}} \inp{\Psi, \OT^2_1 \dots \OT^2_{n+1} \Psi}, \label{eq:B2N}\\*
    B_{2}^{1} &= \frac{\lam (N-n)(n+1)n}{2N} \norm{v}_{W^{1,\infty}} \inp{\Psi, \OT^4_1 \dots \OT^2_{n+1} \Psi}.\label{eq:B21}
\end{align}

Next, we bound $B_5$. Since the interaction commutes with $\Hf$,
we split the field energy equally between the two arguments.
Using bosonic symmetry and the weak-pairing convention above,
\begin{equation}
\begin{aligned}
    B_5
    &=
    \frac{\lam(N-n)(N-n-1)}{2N}
    \inp{\OT_1\cdots\OT_n\Hf^{1/2}\Psi,
         v(x_{n+1}-x_{n+2})
         \OT_1\cdots\OT_n\Hf^{1/2}\Psi}\\
    &\quad+
    \frac{\lam n(N-n)}{N}
    \inp{\OT_1\cdots\OT_n\Hf^{1/2}\Psi,
         \OT_1v(x_1-x_{n+1})
         \OT_2\cdots\OT_n\Hf^{1/2}\Psi}\\
    &\quad+
    \frac{\lam n(n-1)}{2N}
    \inp{\OT_1^2\OT_2\cdots\OT_n\Hf^{1/2}\Psi,
         \OT_2v(x_1-x_2)
         \OT_3\cdots\OT_n\Hf^{1/2}\Psi}.
\end{aligned}
\end{equation}
Then, using the Cauchy-Schwarz inequality then \eqref{eq:T-v-bound}, gives
\begin{equation}
    \begin{aligned}
       | <\Psi, \OT_1^2& \dots \OT_n^2  v(x_1-x_{n+1}) \Hf\Psi>| \\
       &=\abs{\inp{\OT_1 \dots \OT_n \Hf^{1/2} \Psi,\OT_1 v(x_1-x_{n+1}) \OT_2 \dots \OT_n \Hf^{1/2} \Psi}} \\
        &\leq \norm{\OT_1 \dots \OT_n \Hf^{1/2} \Psi} \norm{\OT_1 v(x_1-x_{n+1}) \OT_2 \dots \OT_n \Hf^{1/2}\Psi} \\
        &\leq \norm{v}_{W^{1,\infty}} \norm{\OT_1 \dots \OT_n \Hf^{1/2} \Psi}^2.
    \end{aligned}
\end{equation}
Similarly, \eqref{eq:T-v-bound}, and $\OT_1 \geq 1$ gives
\begin{equation}
    \begin{aligned}
        \Big| < \OT_1^2 \OT_2 \dots &\OT_n \Hf^{1/2} \Psi, \OT_2 v(x_1-x_2)  \OT_3 \dots \OT_n \Hf^{1/2} \Psi> \Big|\\
        &\leq \norm{\OT_1^2 \dots \OT_n \Hf^{1/2} \Psi} \norm{\OT_2 v(x_1-x_2) \OT_3 \dots \OT_n \Hf^{1/2} \Psi}\\
        &\leq \norm{v}_{W^{1,\infty}} \norm{\OT_1^2 \OT_2 \dots \OT_n \Hf^{1/2} \Psi} \norm{\OT_2 \dots \OT_n \Hf^{1/2} \Psi}\\
        &\leq \norm{v}_{W^{1,\infty}} \norm{\OT_1^2 \OT_2 \dots \OT_n \Hf^{1/2} \Psi}^2.
    \end{aligned}
\end{equation}

\begin{equation}
    \abs{B_5} \leq {B_{5}^{2} } + {B_{5}^{1} } + {B_{5}^{0} },
\end{equation}
where 
\begin{align}
    B_{5}^{2} &= \frac{\lam (N-n)(N-n-1)}{2N} \norm{v}_{L^\infty} \inp{\Psi, \OT^2_1 \dots \OT^2_n \Hf \Psi}, \label{eq:B52}\\
    B_{5}^{1} &= \frac{\lam n (N-n)}{N} \norm{v}_{W^{1,\infty}} \inp{\Psi, \OT^2_1 \dots \OT^2_n \Hf \Psi},\label{eq:B51}\\
    B_{5}^{0} &= \frac{\lam n(n-1)}{2N} \norm{v}_{W^{1,\infty}} \inp{\Psi, \OT^4_1 \dots \OT^2_n \Hf \Psi}.\label{eq:B50}
\end{align}

Now, we turn to bounding the terms that involve the field operator $\phi(w(x_j))$, where $w_j(y) = w(x_j-y)$.
\begin{equation}
    \begin{aligned}
        \abs{B_3} =& \frac{\eta}{\sqrt N} \abs{\sum_{\ell=n+1}^{N} \sum_{j=1}^{N} \inp{\Psi, \OT^2_1 \dots \OT^2_n \OT^2_\ell \phi(w_j) \Psi}} \\
        &\leq \frac{\eta (N-n)}{\sqrt N} \sum_{j=1}^{N} \abs{\inp{\Psi, \OT^2_1 \dots \OT^2_n \OT^2_{n+1} \phi(w_j) \Psi}}.
    \end{aligned}
\end{equation}
For $j\geq n+2$, the field operator $\phi(w_j)$ commutes with
$\OT_1^2\cdots\OT_{n+1}^2$.
We therefore separate the sum into $j\geq n+2$ and
$j\leq n+1$, obtaining
\begin{equation}
    |B_3| \leq B_3^{non} + B_3^{com},
\end{equation}
where
\begin{align}
    B_3^{com} &:= \frac{\eta (N-n)}{\sqrt N} \sum_{j=n+2}^{N} \abs{\inp{\Psi, \OT^2_1 \dots \OT^2_n \OT^2_{n+1} \phi(w_j) \Psi}}\\
    B_3^{non} &:= \frac{\eta (N-n)}{\sqrt N} \sum_{j=1}^{n+1} \abs{\inp{\Psi, \OT^2_1 \dots \OT^2_n \OT^2_{n+1} \phi(w_j) \Psi}}.
\end{align}
Bosonic symmetry, Cauchy-Schwarz inequality, and \lemref{lem:field-energy-operator-bound} gives
\begin{equation}
    \begin{aligned}
        B_3^{com} &= \frac{\eta (N-n) (N-n-1)}{\sqrt N} \abs{ \inp{\OT_1 \dots \OT_{n+1} \Psi, \OT_1 \dots \OT_{n+1} \phi(w_N) \Psi}} \\
        &\leq \frac{\eta (N-n) (N-n-1)}{\sqrt N} \norm{\OT_1 \dots \OT_{n+1} \Psi} \norm{\phi(w_N) \OT_1 \dots \OT_{n+1}  \Psi} \\
        &\leq \frac{2\eta (N-n) (N-n-1)}{\sqrt N} C_w \norm{\OT_1 \dots \OT_{n+1} \Psi} \norm{ \OT_1 \dots \OT_{n+1} (1+\Hf)^{1/2}  \Psi},
    \end{aligned}
\end{equation}
where
\begin{equation}
    C_w := \norm{w}_{L^2} + \norm{\frac{\wh{w}}{\sqrt{\omg}}}_{L^2}.
\end{equation}
Young's inequality gives
\begin{equation}
    \begin{aligned}
        B_3^{com} &\leq {\eta (N-n) (N-n-1)} C_w  \norm{\OT_1 \dots \OT_{n+1} \Psi}^2\\
        &\quad  +\frac{\eta (N-n) (N-n-1)}{ N} C_w \norm{  \OT_1 \dots \OT_{n+1}  (1+\Hf)^{1/2} \Psi}^2.
    \end{aligned}
\end{equation}
Therefore, using the fact
\begin{equation}
    \label{eq:OT-Hf-bound}
    \begin{aligned}
        \|(1+\Hf)^{1/2} \OT_1 \dots \OT_{n+1}  \Psi\|^2 &= \norm{\OT_1 ...\, \OT_{n+1} \Psi}^2 + \norm{\OT_1 \dots \OT_{n+1} \Hf^{1/2} \Psi}^2,
    \end{aligned}
\end{equation}
we obtain
\begin{equation}
    \begin{aligned}
        B_3^{com} &\leq {\eta C_w (N-n) (N-n-1)} (1 + \frac{1}{N}) \norm{\OT_1 \dots \OT_{n+1} \Psi}^2\\
        &\quad + \frac{\eta C_w (N-n) (N-n-1)}{ N} \norm{\OT_1 \dots \OT_{n+1} \Hf^{1/2} \Psi}^2.
    \end{aligned}
\end{equation}

\begin{equation}
    B_3^{com} \leq B_{3}^{com,N^2} + B_{3}^{com,N},
\end{equation}
where
\begin{align}
    B_{3}^{com,N^2} &:= \eta C_w {(N-n)(N-n-1)} (1 + \frac{1}{N}) \inp{\Psi, \OT^2_1 \dots \OT^2_{n+1} \Psi},\label{eq:B3com1}\\
    B_{3}^{com,N} &:= \eta C_w \frac{(N-n)(N-n-1)}{  N} \inp{\Psi, \OT^2_1 \dots \OT^2_{n+1} \Hf \Psi}.\label{eq:B3com2}
\end{align}
Next, we analyze the term $B_3^{non}$. Using bosonic symmetry, the Cauchy-Schwarz inequality, and \lemref{lem:field-energy-operator-bound}, we obtain
\begin{equation}
    \begin{aligned}
        B_3^{non} &= \frac{\eta (N-n) (n+1)}{\sqrt N} \abs{ \inp{\OT^2_1 \OT_2 \dots \OT_{n+1} \Psi,\OT_2 \dots \OT_{n+1} \phi(w_1) \Psi}} \\
        &\leq \frac{\eta (N-n) (n+1)}{\sqrt N} \norm{\OT^2_1 \OT_2 \dots \OT_{n+1} \Psi} \norm{\phi(w_1) \OT_2 \dots \OT_{n+1}  \Psi} \\
        &\leq 2 \frac{\eta (N-n) (n+1)}{\sqrt N} C_w \norm{\OT^2_1 \OT_2 \dots \OT_{n+1} \Psi} \norm{(1+\Hf)^{1/2}  \OT_2 \dots \OT_{n+1}  \Psi}.
    \end{aligned}
\end{equation}
Young's inequality gives 
\begin{equation}
    \begin{aligned}
        B_3^{non} &\leq  \eta C_w \frac{(N-n)(n+1)}{\sqrt N}  \norm{\OT^2_1 \OT_2 \dots \OT_{n+1} \Psi}^2\\
        &\quad + \eta C_w \frac{(N-n)(n+1)}{\sqrt N} \norm{(1+\Hf)^{1/2}  \OT_2 \dots \OT_{n+1}  \Psi}^2.
    \end{aligned}
\end{equation}
By the fact $\OT_1 \geq 1$, and \eqref{eq:OT-Hf-bound}, we have
\begin{equation}
    B_3^{non} \leq B_{3}^{non,1} + B_{3}^{non,2} + B_{3}^{non,3}, 
\end{equation}
where
\begin{align}
    B_{3}^{non,1} &:= \eta C_w \frac{(N-n)(n+1)}{\sqrt N} \inp{\Psi, \OT^4_1 \OT^2_2 \dots \OT^2_{n+1} \Psi},\label{eq:B3non1}\\*
    B_{3}^{non,2} &:= \eta C_w \frac{(N-n)(n+1)}{ \sqrt N} \inp{\Psi, \OT^2_1 \dots \OT^2_{n+1} \Psi},\label{eq:B3non2}\\*
    B_{3}^{non,3} &:= \eta C_w \frac{(N-n)(n+1)}{ \sqrt N} \inp{\Psi, \OT^2_1 \dots \OT^2_{n+1} \Hf \Psi}.\label{eq:B3non3}
\end{align}

To bound $B_6$ in \eqref{eq:I-26}, we decompose the sum into $j \leq n$ and $j > n$:
\begin{equation}
    \begin{aligned}
        B_6 &= \frac{\eta}{\sqrt N} \sum_{j=1}^{N} \inp{\Psi, \Hf \OT^2_1 \dots \OT^2_n \phi(w_j) \Psi}\\
        &= \frac{\eta}{\sqrt N} \sum_{j=1}^{n} \inp{\Psi, \Hf \OT^2_1 \dots \OT^2_n \phi(w_j) \Psi}\\
        & + \frac{\eta}{\sqrt N} \sum_{j=n+1}^{N} \inp{\Psi, \Hf \OT^2_1 \dots \OT^2_n \phi(w_j) \Psi}.
    \end{aligned}
\end{equation}
Thus, we have 
\begin{equation}
    |B_6| \leq B_6^{com} + B_6^{non} ,
\end{equation}
where
\begin{align}
        B_6^{com} &:= \frac{\eta (N-n) }{\sqrt N} \abs{\inp{\Psi, \Hf \OT^2_1 \dots \OT^2_n \phi(w_N) \Psi}}\\
         B_6^{non} &:= \frac{\eta n }{\sqrt N} \abs{\inp{\Psi, \Hf \OT^2_1 \dots \OT^2_n \phi(w_1) \Psi}}.
\end{align}
Bosonic symmetry, Cauchy-Schwarz inequality, and \lemref{lem:field-energy-operator-bound}, we obtain
\begin{equation}
    \begin{aligned}
        B_6^{com} &\leq  \frac{2\eta (N-n)}{\sqrt N} C_w \norm{\OT_1 \dots \OT_n \Hf \Psi} \norm{(1+\Hf)^{1/2}   \OT_1 \dots \OT_n  \Psi}.
    \end{aligned}
\end{equation}
Young's inequality gives
\begin{equation}
    \begin{aligned}
        B_6^{com} &\leq \eta C_w \frac{(N-n)}{ N} \norm{\OT_1 \dots \OT_n \Hf \Psi}^2\\
        &\quad + \eta C_w {(N-n)} \norm{(1+\Hf)^{1/2}   \OT_1 \dots \OT_n  \Psi}^2.
    \end{aligned}
\end{equation}
Using \eqref{eq:OT-Hf-bound}, we obtain
\begin{equation}
    \begin{aligned}
        B_6^{com} &\leq \eta C_w \frac{(N-n)}{ N} \norm{\OT_1 \dots \OT_n \Hf \Psi}^2 + \eta C_w {(N-n)} \norm{\OT_1 \dots \OT_n  \Psi}^2\\
        &\qquad  + \eta C_w {(N-n)} \norm{\OT_1 \dots \OT_n  \Hf^{1/2} \Psi}^2.
    \end{aligned}
\end{equation}

\begin{equation}
    B_6^{com} \leq B_{6}^{com,1} + B_{6}^{com,2} + B_{6}^{com,3},
\end{equation}
where
\begin{align}
    B_{6}^{com,1} &:= \eta C_w {(N-n)} \inp{\Psi, \OT^2_1 \dots \OT^2_n \Psi},\label{eq:B6com1}\\*
    B_{6}^{com,2} &:= \eta C_w {(N-n)} \inp{\Psi, \OT^2_1 \dots \OT^2_n \Hf \Psi} \label{eq:B6com2}\\
    B_{6}^{com,3} &:= \eta C_w \frac{(N-n)}{ N} \inp{\Psi, \OT^2_1 \dots \OT^2_n \Hf^2 \Psi}.\label{eq:B6com3}
\end{align}
The term $B_6^{non}$ requires a one-particle quadratic-form
calculation. Using the form defining $\OT_1^2$ and the weak
product rule
\begin{equation}
    \nabla_{x_1}\bigl(\phi(w_1)\Phi\bigr)
    =
    \phi(w_1)\nabla_{x_1}\Phi
    +\phi(\nabla_{x_1}w_1)\Phi,
\end{equation}
we obtain
\begin{equation}
\begin{aligned}
    &\inp{\Psi,\Hf\OT_1^2\cdots\OT_n^2\phi(w_1)\Psi}\\
    &=
    \inp{(1+V(x_1))^{1/2}\Hf\OT_2\cdots\OT_n\Psi,
         \phi(w_1)(1+V(x_1))^{1/2}
         \OT_2\cdots\OT_n\Psi}\\
    &\quad+
    \inp{\nabla_{x_1}\Hf\OT_2\cdots\OT_n\Psi,
         \phi(w_1)\nabla_{x_1}\OT_2\cdots\OT_n\Psi}\\
    &\quad+
    \inp{\nabla_{x_1}\Hf\OT_2\cdots\OT_n\Psi,
         \phi(\nabla_{x_1}w_1)\OT_2\cdots\OT_n\Psi}.
\end{aligned}
\end{equation}
The inner products involving gradients are summed over their
three components. Applying the field estimate componentwise and then
Cauchy--Schwarz to the direct sum of the potential and gradient
components gives
\begin{equation}
    \begin{aligned}
        | \langle \Psi&, \Hf \OT^2_1 \dots \OT^2_n \phi(w_1) \Psi \rangle|\\
        & \leq 2 C_w \norm{\OT_1 \dots \OT_n \Hf \Psi} \norm{(1+\Hf)^{1/2} \OT_1 \dots \OT_n \Psi} \\
        &\quad + 2 C_w' \norm{\OT_1 \dots \OT_n \Hf \Psi} \norm{(1+\Hf)^{1/2} \OT_2 \dots \OT_n \Psi}\\
        &\leq 2(C_w + C_w') \norm{\OT_1 \dots \OT_n \Hf \Psi} \norm{(1+\Hf)^{1/2} \OT_1 \dots \OT_n \Psi}.
    \end{aligned}
\end{equation}
where
\begin{equation}
    C_w := \norm{w}_{L^2} + \norm{\frac{\wh w}{\sqrt{\omg}}}_{L^2}, \quad 
    \text{ and } \quad
    C_w' :=\norm{\nabla w}_{L^2} +  \norm{\frac{\widehat{\nabla w}}{\sqrt{\omg}}}_{L^2}
\end{equation}
Young's inequality, $\OT_1 \geq 1$, and \eqref{eq:OT-Hf-bound} gives
\begin{equation}
    \begin{aligned}
        | \langle \Psi, \Hf \OT^2_1 &\dots \OT^2_n \phi(w_1) \Psi \rangle| \leq (C_w + C_w') \Big[ \norm{\OT_1 \dots \OT_n \Psi}^2\\
        &\quad + \norm{\OT_1 \dots \OT_n \Hf^{1/2} \Psi}^2 + \norm{\OT_1 \dots \OT_n \Hf \Psi}^2\Big].
    \end{aligned}
\end{equation}
Rearranging the terms and translating the squared norms back into inner products, we obtain
\begin{equation}
    |B_6^{non}| \leq |B_6^{non,1}| + |B_6^{non,2}| + |B_6^{non,3}|,
\end{equation}
where
\begin{align}
    B_6^{non,1} &:= \frac{\eta n}{\sqrt N} (C_w + C_w') \inp{\Psi, \OT^2_1 \dots \OT^2_n \Psi}, \label{eq:B6non1}\\*
    B_6^{non,2} &:= \frac{\eta n}{ \sqrt N} (C_w + C'_w) \inp{\Psi, \OT^2_1 \dots \OT^2_n \Hf \Psi}, \label{eq:B6non2}\\*
    B_6^{non,3} &:=  \frac{\eta n}{\sqrt N} (C_w + C_w') \inp{\Psi, \OT^2_1 \dots \OT^2_n \Hf^2 \Psi}. \label{eq:B6non3}
\end{align}

\noindent
\medskip
\noindent
\textbf{Grouping the error terms:}
We collect the preceding estimates in
\eqref{eq:uniform-prior-bound-on-BBGKY-hierarchy-n+2}.
Fix $n\geq0$ and take $N\geq n+3$.
We continue on the common domain of the preceding computations.
Empty products are understood as the identity. When $n=0$, terms
carrying a factor $n$ are absent, and the fifth group below is omitted.
The constants $C_w$ and $C_w'$ have the definitions given above,
and we use the weak-coupling condition
\begin{equation}
\label{eq:grouping-weak-coupling}
    2\lam\norm{v}_{L^\infty}+4\eta C_w<1.
\end{equation}
In particular,
\begin{equation}
\label{eq:grouping-positive-margins}
\begin{aligned}
    1-\lam\norm{v}_{L^\infty}-2\eta C_w&>\frac12,\\
    2-\lam\norm{v}_{L^\infty}-4\eta C_w&>1,\\
    2\eta C_w&<\frac12.
\end{aligned}
\end{equation}

Since $D\geq0$ and $B_1,B_4$ are real and nonnegative,
\begin{equation}
\label{eq:grouping-cross-term-lower-bound}
\begin{aligned}
    A+B+\cc{B}+D
    &=A+2B_1+2B_4
      +2\re (B_2+B_3+B_5+B_6)+D\\
    &\geq A_{12}+A_{11}+2B_1+A_2+A_3+A_4+2B_4\\
    &\quad-2\abs{B_2}-2\abs{B_3}
          -2\abs{B_5}-2\abs{B_6}.
\end{aligned}
\end{equation}
Substituting the nonnegative bounds for the four error terms and
collecting each contribution exactly once gives
\begin{equation}
\label{eq:grouping-five-groups}
\begin{aligned}
    A+B+\cc{B}+D
    &\geq
    \Big[
        A_{12}-2B_2^{N^2}-2B_2^N-2B_3^{com,N^2}\\
    &\hspace{3em}
        -2B_3^{non,2}-2B_6^{com,1}-2B_6^{non,1}
    \Big]\\
    &\quad+
    \Big[
        A_{11}+2B_1-2B_2^1-2B_3^{non,1}
    \Big]\\
    &\quad+
    \Big[
        A_2+A_3-2B_5^2-2B_5^1-2B_3^{com,N}\\
    &\hspace{3em}
        -2B_3^{non,3}-2B_6^{com,2}-2B_6^{non,2}
    \Big]\\
    &\quad+
    \Big[
        A_4-2B_6^{com,3}-2B_6^{non,3}
    \Big]\\
    &\quad+
    \Big[2B_4-2B_5^0\Big].
\end{aligned}
\end{equation}
The factors $2$ in this expression come from
$B+\cc{B}=2\re B$; the individual error bounds
have not been rescaled.

The operators $\OT_j$ commute with one another and with $\Hf$,
and $\OT_j\geq1$. Hence,
\begin{equation}
\label{eq:grouping-particle-monotonicity}
\begin{aligned}
    0\leq\inp{\Psi,\OT_1^2\cdots\OT_n^2\Psi}
    &\leq\inp{\Psi,\OT_1^2\cdots\OT_{n+1}^2\Psi}\leq\inp{\Psi,\OT_1^2\cdots\OT_{n+2}^2\Psi},
\end{aligned}
\end{equation}
and
\begin{equation}
\label{eq:grouping-field-monotonicity}
    0\leq\inp{\Psi,\OT_1^2\cdots\OT_n^2\Hf\Psi}
    \leq\inp{\Psi,\OT_1^2\cdots\OT_{n+1}^2\Hf\Psi}.
\end{equation}

\medskip
\noindent
\textbf{The $n+2$-particle group.}
Using \eqref{eq:A12}, \eqref{eq:B2N2}, \eqref{eq:B2N},
\eqref{eq:B3com1}, \eqref{eq:B3non2}, \eqref{eq:B6com1},
and \eqref{eq:B6non1}, together with
\eqref{eq:grouping-particle-monotonicity}, we obtain
\begin{equation}
\label{eq:grouping-nplus2-expanded}
\begin{aligned}
    &A_{12}-2B_2^{N^2}-2B_2^N-2B_3^{com,N^2}\\
    &\qquad-2B_3^{non,2}-2B_6^{com,1}-2B_6^{non,1}\\
    &\geq
    \Bigg[
        (N-n)(N-n-1)\\
    &\qquad
        -\frac{\lam(N-n)(N-n-1)(N-n-2)}{N}
         \norm{v}_{L^\infty}\\
    &\qquad
        -\frac{2\lam(N-n)(n+1)(N-n-1)}{N}
         \norm{v}_{W^{1,\infty}}\\
    &\qquad
        -2\eta C_w(N-n)(N-n-1)\left(1+\frac1N\right)\\
    &\qquad
        -\frac{2\eta C_w(N-n)(n+1)}{\sqrt N}
        -2\eta C_w(N-n)
        -\frac{2\eta n(C_w+C_w')}{\sqrt N}
    \Bigg]\\
    &\qquad\times
    \inp{\Psi,\OT_1^2\cdots\OT_{n+2}^2\Psi}.
\end{aligned}
\end{equation}
The coefficient in brackets can be written as
\begin{equation}
\label{eq:grouping-nplus2-coefficient}
\begin{aligned}
    &(N-n)(N-n-1)
      \Bigg[
        1-\lam\frac{N-n-2}{N}\norm{v}_{L^\infty}\\
    &\hspace{8em}
        -\frac{2\lam(n+1)}{N}\norm{v}_{W^{1,\infty}}
        -2\eta C_w\left(1+\frac1N\right)
      \Bigg]\\
    &\quad
        -\frac{2\eta C_w(N-n)(n+1)}{\sqrt N}
        -2\eta C_w(N-n)
        -\frac{2\eta n(C_w+C_w')}{\sqrt N}\\
    &\geq
        \frac12(N-n)(N-n-1)\\
    &\quad
        -\frac{(N-n)(N-n-1)}{N}
         \Big[2\lam(n+1)\norm{v}_{W^{1,\infty}}+2\eta C_w\Big]\\
    &\quad
        -\frac{2\eta C_w(N-n)(n+1)}{\sqrt N}
        -2\eta C_w(N-n)
        -\frac{2\eta n(C_w+C_w')}{\sqrt N}.
\end{aligned}
\end{equation}
Here we used $(N-n-2)/N\leq1$ and the first inequality
in \eqref{eq:grouping-positive-margins}. Moreover,
\begin{equation}
\begin{aligned}
    (N-n)(N-n-1)
    &=N^2-(2n+1)N+n(n+1)\\
    &\geq N^2-(2n+1)N,\\
    \frac{(N-n)(N-n-1)}{N}&\leq N,\\
    \frac{N-n}{\sqrt N}&\leq\sqrt N\leq N,\\
    N-n&\leq N,
    \qquad\frac1{\sqrt N}\leq N.
\end{aligned}
\end{equation}
Consequently, the coefficient in
\eqref{eq:grouping-nplus2-expanded} is bounded below by
\begin{equation}
\label{eq:grouping-nplus2-coefficient-lower-bound}
\begin{aligned}
    &\frac{N^2}{2}
    -N\Bigg[
        \frac{2n+1}{2}
        +2\lam(n+1)\norm{v}_{W^{1,\infty}}\\
    &\hspace{7em}
        +2\eta C_w(n+3)
        +2\eta n(C_w+C_w')
    \Bigg].
\end{aligned}
\end{equation}
The term $2\eta C_w(n+3)$ results from
$2\eta C_w+2\eta C_w(n+1)+2\eta C_w$.
It follows that the sufficient condition
\begin{equation}
\label{eq:grouping-nplus2-threshold}
\begin{aligned}
    N\geq{}&2(2n+1)
       +8\lam(n+1)\norm{v}_{W^{1,\infty}}\\
       &+8\eta C_w(n+3)+8\eta n(C_w+C_w')
\end{aligned}
\end{equation}
implies
\begin{equation}
\label{eq:grouping-nplus2-positive}
\begin{aligned}
    &A_{12}-2B_2^{N^2}-2B_2^N-2B_3^{com,N^2}\\
    &\qquad-2B_3^{non,2}-2B_6^{com,1}-2B_6^{non,1}\\
    &\geq
      \frac{N^2}{4}
      \inp{\Psi,\OT_1^2\cdots\OT_{n+2}^2\Psi}.
\end{aligned}
\end{equation}

\medskip
\noindent
\textbf{The $\OT_1^4$ group.}
All terms in this group contain the same quadratic form.
By \eqref{eq:A11}, \eqref{eq:B1}, \eqref{eq:B21}, and
\eqref{eq:B3non1},
\begin{equation}
\label{eq:grouping-T4-expanded}
\begin{aligned}
    &A_{11}+2B_1-2B_2^1-2B_3^{non,1}\\
    &=\Bigg[
        (N-n)+2n(N-n)\\
    &\qquad
        -\frac{\lam(N-n)n(n+1)}{N}\norm{v}_{W^{1,\infty}}
        -\frac{2\eta C_w(N-n)(n+1)}{\sqrt N}
      \Bigg]\\
    &\qquad\times
      \inp{\Psi,\OT_1^4\OT_2^2\cdots\OT_{n+1}^2\Psi}\\
    &=(N-n)
      \Bigg[
        2n+1-\frac{\lam n(n+1)}{N}\norm{v}_{W^{1,\infty}}
        -\frac{2\eta C_w(n+1)}{\sqrt N}
      \Bigg]\\
    &\qquad\times
      \inp{\Psi,\OT_1^4\OT_2^2\cdots\OT_{n+1}^2\Psi}.
\end{aligned}
\end{equation}
Since $n+1\leq2n+1$, the coefficient in the last bracket satisfies
\begin{equation}
\begin{aligned}
    &2n+1-\frac{\lam n(n+1)}{N}\norm{v}_{W^{1,\infty}}
      -\frac{2\eta C_w(n+1)}{\sqrt N}\\
    &\geq
      (2n+1)\left[
        1-\frac{\lam n}{N}\norm{v}_{W^{1,\infty}}
        -\frac{2\eta C_w}{\sqrt N}
      \right].
\end{aligned}
\end{equation}
Thus, if
\begin{equation}
\label{eq:grouping-T4-threshold}
    N\geq2\lam n\norm{v}_{W^{1,\infty}},
\end{equation}
then $N\geq1$ and \eqref{eq:grouping-positive-margins} give
\begin{equation}
    1-\frac{\lam n}{N}\norm{v}_{W^{1,\infty}}
      -\frac{2\eta C_w}{\sqrt N}
    \geq\frac12-2\eta C_w>0.
\end{equation}
Therefore,
\begin{equation}
\label{eq:grouping-T4-positive}
    A_{11}+2B_1-2B_2^1-2B_3^{non,1}\geq0.
\end{equation}

\medskip
\noindent
\textbf{The single-$\Hf$ group.}
Using \eqref{eq:A2+A3}, \eqref{eq:B52}, \eqref{eq:B51},
\eqref{eq:B3com2}, \eqref{eq:B3non3}, \eqref{eq:B6com2},
and \eqref{eq:B6non2}, we first obtain
\begin{equation}
\label{eq:grouping-Hf-before-comparison}
\begin{aligned}
    &A_2+A_3-2B_5^2-2B_5^1-2B_3^{com,N}\\
    &\qquad-2B_3^{non,3}-2B_6^{com,2}-2B_6^{non,2}\\
    &=\Bigg[
        2(N-n)
        -\frac{2\eta C_w(N-n)(N-n-1)}{N}\\
    &\hspace{4em}
        -\frac{2\eta C_w(N-n)(n+1)}{\sqrt N}
      \Bigg]
      \inp{\Psi,\OT_1^2\cdots\OT_{n+1}^2\Hf\Psi}\\
    &\quad-
      \Bigg[
        \frac{\lam(N-n)(N-n-1)}{N}\norm{v}_{L^\infty}
        +\frac{2\lam n(N-n)}{N}\norm{v}_{W^{1,\infty}}\\
    &\hspace{4em}
        +2\eta C_w(N-n)
        +\frac{2\eta n(C_w+C_w')}{\sqrt N}
      \Bigg]
      \inp{\Psi,\OT_1^2\cdots\OT_n^2\Hf\Psi}.
\end{aligned}
\end{equation}
The second bracket is nonnegative. We may therefore use
\eqref{eq:grouping-field-monotonicity} in the subtracted term to obtain
\begin{equation}
\label{eq:grouping-Hf-expanded}
\begin{aligned}
    &A_2+A_3-2B_5^2-2B_5^1-2B_3^{com,N}\\
    &\qquad-2B_3^{non,3}-2B_6^{com,2}-2B_6^{non,2}\\
    &\geq
    \Bigg[
        2(N-n)
        -\frac{\lam(N-n)(N-n-1)}{N}\norm{v}_{L^\infty}\\
    &\qquad
        -\frac{2\lam n(N-n)}{N}\norm{v}_{W^{1,\infty}}
        -\frac{2\eta C_w(N-n)(N-n-1)}{N}\\
    &\qquad
        -\frac{2\eta C_w(N-n)(n+1)}{\sqrt N}
        -2\eta C_w(N-n)
        -\frac{2\eta n(C_w+C_w')}{\sqrt N}
    \Bigg]\\
    &\qquad\times
    \inp{\Psi,\OT_1^2\cdots\OT_{n+1}^2\Hf\Psi}.
\end{aligned}
\end{equation}
To simplify its coefficient, observe that
\begin{equation}
\begin{aligned}
    &-\frac{\lam(N-n)(N-n-1)}{N}\norm{v}_{L^\infty}
      -\frac{2\eta C_w(N-n)(N-n-1)}{N}
      -2\eta C_w(N-n)\\
    &=(N-n)\left[
        -\lam\norm{v}_{L^\infty}-4\eta C_w
        +\frac{n+1}{N}
         \big(\lam\norm{v}_{L^\infty}+2\eta C_w\big)
      \right].
\end{aligned}
\end{equation}
Consequently, the coefficient in
\eqref{eq:grouping-Hf-expanded} equals
\begin{equation}
\label{eq:grouping-Hf-coefficient}
\begin{aligned}
    &(N-n)\Bigg[
        2-\lam\norm{v}_{L^\infty}-4\eta C_w\\
    &\qquad
        +\frac{n+1}{N}
         \big(\lam\norm{v}_{L^\infty}+2\eta C_w\big)
        -\frac{2\lam n}{N}\norm{v}_{W^{1,\infty}}
        -\frac{2\eta C_w(n+1)}{\sqrt N}
      \Bigg]\\
    &\quad-\frac{2\eta n(C_w+C_w')}{\sqrt N}.
\end{aligned}
\end{equation}
Dropping the nonnegative term proportional to $(n+1)/N$, using
$2-\lam\norm{v}_{L^\infty}-4\eta C_w>1$, and using $N-n\geq1$
in the last negative term, we bound this coefficient from below by
\begin{equation}
\label{eq:grouping-Hf-coefficient-lower-bound}
\begin{aligned}
    (N-n)\Bigg[
        1-\frac{2\lam n}{N}\norm{v}_{W^{1,\infty}}
        -\frac{2\eta}{\sqrt N}
         \big(C_w(n+1)+n(C_w+C_w')\big)
    \Bigg].
\end{aligned}
\end{equation}
The two negative terms in this bracket are each at most $1/2$ if
\begin{equation}
\label{eq:grouping-Hf-threshold}
\begin{aligned}
    N&\geq4\lam n\norm{v}_{W^{1,\infty}},\\
    N&\geq16\eta^2
       \big(C_w(n+1)+n(C_w+C_w')\big)^2.
\end{aligned}
\end{equation}
Under these conditions,
\begin{equation}
\label{eq:grouping-Hf-positive}
\begin{aligned}
    &A_2+A_3-2B_5^2-2B_5^1-2B_3^{com,N}\\
    &\qquad-2B_3^{non,3}-2B_6^{com,2}-2B_6^{non,2}
    \geq0.
\end{aligned}
\end{equation}
\medskip
\noindent
\textbf{The $\Hf^2$ group.}
By \eqref{eq:A4}, \eqref{eq:B6com3}, and \eqref{eq:B6non3},
\begin{equation}
\label{eq:grouping-Hf2-expanded}
\begin{aligned}
    &A_4-2B_6^{com,3}-2B_6^{non,3}\\
    &=\left[
        1-2\eta C_w\frac{N-n}{N}
        -\frac{2\eta n(C_w+C_w')}{\sqrt N}
      \right]
      \inp{\Psi,\OT_1^2\cdots\OT_n^2\Hf^2\Psi}\\
    &\geq
      \left[
        \frac12-\frac{2\eta n(C_w+C_w')}{\sqrt N}
      \right]
      \inp{\Psi,\OT_1^2\cdots\OT_n^2\Hf^2\Psi}.
\end{aligned}
\end{equation}
Here we used $(N-n)/N\leq1$ and $2\eta C_w<1/2$.
Thus,
\begin{equation}
\label{eq:grouping-Hf2-threshold}
    N\geq16\eta^2n^2(C_w+C_w')^2
\end{equation}
implies
\begin{equation}
\label{eq:grouping-Hf2-positive}
    A_4-2B_6^{com,3}-2B_6^{non,3}\geq0.
\end{equation}
Condition \eqref{eq:grouping-Hf2-threshold} is already implied by the
second condition in \eqref{eq:grouping-Hf-threshold}.

\medskip
\noindent
\textbf{The $\OT_1^4\Hf$ group.}
For $n\geq1$, \eqref{eq:B4} and \eqref{eq:B50} give
\begin{equation}
\label{eq:grouping-T4Hf-expanded}
\begin{aligned}
    2B_4-2B_5^0
    &=\left[
        2n-\frac{\lam n(n-1)}{N}\norm{v}_{W^{1,\infty}}
      \right]
      \inp{\Psi,\OT_1^4\OT_2^2\cdots\OT_n^2\Hf\Psi}\\
    &=n\left[
        2-\frac{\lam(n-1)}{N}\norm{v}_{W^{1,\infty}}
      \right]
      \inp{\Psi,\OT_1^4\OT_2^2\cdots\OT_n^2\Hf\Psi}.
\end{aligned}
\end{equation}
This group is nonnegative whenever
\begin{equation}
\label{eq:grouping-T4Hf-threshold}
    N\geq\frac{\lam(n-1)}{2}\norm{v}_{W^{1,\infty}}.
\end{equation}
For $n\geq1$, this is implied by the first condition in
\eqref{eq:grouping-Hf-threshold}. When $n=0$, the group is absent.

\medskip
\noindent
\textbf{Closing the induction step.}
Set $N_0(0)=N_0(1)=1$, in accordance with the base cases.
For fixed $n\geq0$, choose the integer threshold
\begin{equation}
\label{eq:grouping-common-threshold}
    N_0(n+2):=
    \left\lceil
    \max\left\{
    \begin{aligned}
        &N_0(n),\quad n+3,\quad
          4\lam n\norm{v}_{W^{1,\infty}},\\
        &2(2n+1)+8\lam(n+1)\norm{v}_{W^{1,\infty}}\\
        &\qquad+8\eta C_w(n+3)+8\eta n(C_w+C_w'),\\
        &16\eta^2\big(C_w(n+1)+n(C_w+C_w')\big)^2
    \end{aligned}
    \right\}
    \right\rceil.
\end{equation}
This threshold is finite because $v\in W^{1,\infty}$ and
$C_w,C_w'<\infty$. It guarantees
\eqref{eq:grouping-nplus2-threshold},
\eqref{eq:grouping-T4-threshold},
\eqref{eq:grouping-Hf-threshold},
\eqref{eq:grouping-Hf2-threshold}, and, for $n\geq1$,
\eqref{eq:grouping-T4Hf-threshold}.
It also guarantees $N\geq N_0(n)$, as required to apply the induction
hypothesis at order $n$. Combining the five group estimates in
\eqref{eq:grouping-five-groups}, we conclude that, for
$N\geq N_0(n+2)$,
\begin{equation}
\label{eq:grouping-final-lower-bound}
    A+B+\cc{B}+D
    \geq\frac{N^2}{4}
    \inp{\Psi,\OT_1^2\cdots\OT_{n+2}^2\Psi}.
\end{equation}
We keep the fixed choice $C=1/4$, for which $C^2\leq1/4$. Substitution into
\eqref{eq:uniform-prior-bound-on-BBGKY-hierarchy-n+2} yields
\begin{equation}
\label{eq:grouping-induction-closure}
\begin{aligned}
    \inp{\Psi,(\HMN+N)^{n+2}\Psi}
    &\geq C^nN^n\big(A+B+\cc{B}+D\big)\\
    &\geq\frac{C^nN^{n+2}}{4}
       \inp{\Psi,\OT_1^2\cdots\OT_{n+2}^2\Psi}\\
    &\geq C^{n+2}N^{n+2}
       \inp{\Psi,\OT_1^2\cdots\OT_{n+2}^2\Psi},
\end{aligned}
\end{equation}
since $C^2\leq1/4$. Thus the induction closes for every compressed Hamiltonian,
uniformly in $m$. In particular, for $N\geq N_0(k)$ and every
bosonic $\Psi\in\operatorname{Ran}P_m$,
\begin{equation}
\label{eq:energy-compressed-estimate}
    \norm{(\mathcal H_{N,m}+N)^{k/2}\Psi}^{2}
    \geq
    C^kN^k\norm{\OT_1\cdots\OT_k\Psi}^{2}.
\end{equation}

We now remove the spectral truncation. From this point onward, $\HMN$ again denotes the original, uncompressed Hamiltonian. The estimates used in the
base case also give the upper bound
\begin{equation}
    \norm{(\HMN+N)^{1/2}\Phi}^{2}
    \leq
    \frac32\left(
        \sum_{j=1}^N\norm{\OT_j\Phi}^{2}
        +\norm{\Hf^{1/2}\Phi}^{2}
    \right).
\end{equation}
Together with \eqref{eq:energy-base-coercivity}, this shows that
the quadratic-form norm of $\HMN+N$ is equivalent to the free
quadratic-form norm. Consequently,
\begin{equation}
    \norm{(\HMN+N)^{1/2}(P_m-I)\Phi}\longrightarrow0
\end{equation}
for every $\Phi$ in their common form domain.

For every vector $\Phi$, the variational equations for
$(\HMN+N)^{-1}\Phi$ and
$(\mathcal H_{N,m}+N)^{-1}P_m\Phi$ give Galerkin orthogonality.
Since the latter vector lies in $\operatorname{Ran}P_m$,
\begin{equation}
\begin{aligned}
    &\left\|
    (\HMN+N)^{1/2}
    \left[
        (\mathcal H_{N,m}+N)^{-1}P_m\Phi
        -(\HMN+N)^{-1}\Phi
    \right]
    \right\|\\
    &\qquad\leq
    \left\|
    (\HMN+N)^{1/2}(P_m-I)(\HMN+N)^{-1}\Phi
    \right\|
    \longrightarrow0.
\end{aligned}
\end{equation}
In particular,
\begin{equation}
    (\mathcal H_{N,m}+N)^{-1}P_m
    \longrightarrow(\HMN+N)^{-1}
\end{equation}
strongly. Here the compressed inverse is extended by zero on
$(\operatorname{Ran}P_m)^\perp$.

These extended inverses are positive and uniformly bounded by
$8/(5N)$. Continuous functional calculus therefore implies,
for every integer $k\geq1$,
\begin{equation}
    (\mathcal H_{N,m}+N)^{-k/2}P_m
    \longrightarrow(\HMN+N)^{-k/2}
\end{equation}
strongly.

Fix a bosonic
$\Psi\in\cD((\HMN+N)^{k/2})$ and set
\begin{equation}
    \Psi_m:=
    (\mathcal H_{N,m}+N)^{-k/2}
    P_m(\HMN+N)^{k/2}\Psi.
\end{equation}
Then $\Psi_m$ is bosonic, belongs to $\operatorname{Ran}P_m$,
and converges strongly to $\Psi$. Moreover,
\begin{equation}
    (\mathcal H_{N,m}+N)^{k/2}\Psi_m
    =
    P_m(\HMN+N)^{k/2}\Psi.
\end{equation}
Applying \eqref{eq:energy-compressed-estimate} yields
\begin{equation}
    C^kN^k\norm{\OT_1\cdots\OT_k\Psi_m}^{2}
    \leq
    \norm{P_m(\HMN+N)^{k/2}\Psi}^{2}.
\end{equation}

The product $\OT_1\cdots\OT_k$, defined by the joint spectral
calculus of the strongly commuting positive operators $\OT_j$,
is closed. The preceding uniform bound, weak compactness, and
strong convergence $\Psi_m\to\Psi$ imply
\begin{equation}
    \Psi\in\cD(\OT_1\cdots\OT_k),
    \qquad
    \norm{\OT_1\cdots\OT_k\Psi}^{2}
    \leq
    \liminf_{m\to\infty}
    \norm{\OT_1\cdots\OT_k\Psi_m}^{2}.
\end{equation}
Consequently,
\begin{equation}
    C^kN^k\norm{\OT_1\cdots\OT_k\Psi}^{2}
    \leq
    \norm{(\HMN+N)^{k/2}\Psi}^{2}.
\end{equation}
This proves the asserted domain inclusion and quadratic-form
inequality, with $C=1/4$ and the threshold
\eqref{eq:grouping-common-threshold}.
\end{proof}

\section{Proof of \texorpdfstring{\propref{prop:joint-radiation-source-limit}}{Proposition}}
\label{sec:joint-radiation-source-limit-proof}

\begin{proof}
We first establish the joint compactness and the convergence of a single
field insertion, including the uniformity in time needed in
\eqref{eq:joint-source-uniform-limit}.

\medskip\noindent\textbf{Step 1: Uniform bounds.}
For every $f\in\h_f$ and $\zeta\in\cD(\cN_f^{1/2})$, the
creation and annihilation estimates give
\begin{align*}
 \|a(f)\zeta\|&\leq\|f\|\,\|\cN_f^{1/2}\zeta\|,\\
 \|\ad(f)\zeta\|&\leq\|f\|\,\|(\cN_f+1)^{1/2}\zeta\|.
\end{align*}
Consequently,
\begin{equation}\label{eq:joint-source-quadrature-bound}
\begin{aligned}
 \|\phi_N(f)\zeta\|\leq2\|f\|
       \left\|\left(\frac{\cN_f+1}{N}\right)^{1/2}\zeta\right\|.
\end{aligned}
\end{equation}
This estimate also holds after tensoring with the identity on the
particle space. It applies fiberwise when $f$ depends measurably on
the particle positions, with a uniform bound on its $\h_f$ norm. Recall that
$\OT=(\dch_0+\kappa)^{1/2}\geq\sqrt{\kappa}$,
since $\dch_0\geq0$ and $\kappa>0$, and set
\begin{equation}
    \dcQ_N
    :=
    \sum_{j=1}^N \OT_j^2+\Hf+N.
\end{equation}
Its quadratic-form domain is
\begin{equation}
    \cD(\dcQ_N^{1/2})
    =
    \mcap_{j=1}^N\cD (\OT_j) 
    \mcap
    \cD(\Hf^{1/2}).
\end{equation}
We establish the comparison
\begin{equation}
    c_-\dcQ_N
    \leq
    \HMN+c_*N
    \leq
    c_+\dcQ_N
\end{equation}
as quadratic forms, with constants independent of $N$. The field interaction is infinitesimally form bounded with respect to the free Hamiltonian,
and the pair interaction is bounded. Consequently, their form sum is bounded from below and closed on $\cD(\dcQ_N^{1/2})$, meaning that this domain is complete with respect to the associated shifted form norm.
Its associated operator agrees with the
self-adjoint Hamiltonian constructed in
\propref{prop:appendix-self-adjointness}.
Thus
\begin{equation}
    \cD\bigl((\HMN+c_*N)^{1/2}\bigr)
    =
    \cD(\dcQ_N^{1/2}).
\end{equation}
We will use this identification to justify the
evolution identities for mixed particle–field observables.

Let $C_E$ be the constant in
\corref{cor:k-particle-energy-bound}.
Applying that corollary with $k=1$ gives for every sufficiently large $N$,
\begin{equation}
\label{eq:joint-source-uniform-energy}
    \sup_{0\leq t\leq T}
    \Tr_{\h}\!\left(
        \OT\gamma_{N,t}^{(1)}\OT
    \right)
    \leq C_E,
\end{equation}
where $ C_E=4\max\{1,\kappa\}C_0$.

The weighted density matrix is understood through
its positive trace-class realization.
All subsequent compactness and limit arguments
are carried out on this tail of the sequence.
Discarding finitely many initial terms does not
affect any limit as $N\to\infty$. 

To establish finiteness without assuming it in the differentiation argument, let $L\geq1$ be an integer and set
\begin{equation}
     \dcq_{N,L}(t)=\frac1N
        \langle\Psi_{N,t},\chi_L(\cN_f)\Psi_{N,t}\rangle,
\end{equation}
where
\begin{equation}
     \chi_L(n)=\min(n,L).
\end{equation}
The free Hamiltonians and the two-body interaction commute with
$\chi_L(\cN_f)$. On the finite-photon core,
\begin{equation}
 \begin{aligned}
 [a(f),\chi_L(\cN_f)]
     &=a(f)\1_{\{1\leq\cN_f\leq L\}},\\
 [\ad(f),\chi_L(\cN_f)]
     &=-\bigl(a(f)\1_{\{1\leq\cN_f\leq L\}}\bigr)^\dagger.
 \end{aligned}
\end{equation}
These commutators are bounded, with norms at most $\|f\|_{L^2} \sqrt L$.
The same conclusion holds fiberwise for $f=g_{x_j}$, uniformly in
$x_j$. Thus the Hamiltonian commutator is bounded for fixed $N,L$.
Moreover, $\chi_L(\cN_f)$ preserves the free energy form domain.
The integrated quadratic-form Heisenberg identity therefore applies,
and Cauchy--Schwarz gives, for almost every $t$,
\begin{align*}
 |\pr_t\dcq_{N,L}(t)|
 &\leq\frac{2\eta}{N^{3/2}}\sum_{j=1}^N
   \left|\left\langle\Psi_{N,t},
     a(g_{x_j})\1_{\{1\leq\cN_f\leq L\}}
        \Psi_{N,t}\right\rangle\right|\\
 &\leq2\eta\|w\|_{L^2} 
   \left(\frac1N\left\langle\Psi_{N,t},
      \cN_f\1_{\{\cN_f\leq L\}}\Psi_{N,t}
      \right\rangle\right)^{1/2}\\
 &\leq2\eta\|w\|_{L^2} \sqrt{\dcq_{N,L}(t)},
\end{align*}
where in the last inequality follows from $ n \1_{\{1\leq n\leq L\}}\leq \min(n,L)$. Since $\dcq_{N,L}(t)+N^{-1}>0$, the chain rule yields
\begin{equation}
 \left|\pr_t\sqrt{\dcq_{N,L}(t)+N^{-1}}\right|
 \leq\eta\|w\|_{L^2} .
\end{equation}
Integration and then monotone convergence as $L\to\infty$ give
\begin{equation}
 \sqrt{\dcq_N(t)+N^{-1}}
 \leq\sqrt{\dcq_N(0)+N^{-1}}+\eta\|w\|_2t.
\end{equation}
For the prescribed coherent initial field, the annihilation
eigenvector identity gives
\begin{equation}
 \dcq_N(0)
 =\frac1N\int_{\R^3}\|a_\xi\Psi_{N,0}\|^2\,d\xi
 =\int_{\R^3}|\widehat u_0(\xi)|^2\,d\xi
 =\|u_0\|_{L^2} ^2,
\end{equation}
where the final equality is Plancherel's identity with the Fourier
transform. Consequently,
\begin{equation}\label{eq:joint-source-uniform-number}
\begin{aligned}
 \cM_T
 :=&\sup_{N,\,0\leq t\leq T}
     \left\langle\Psi_{N,t},\frac{\cN_f+1}{N}
                                  \Psi_{N,t}\right\rangle\\
 &\leq\bigl(\sqrt{\|u_0\|_{L^2} ^2+1}+\eta\|w\|_2T\bigr)^2<\infty.
\end{aligned}
\end{equation}
The cutoff commutator identities can now also be passed to the
limit in their time-integrated form: the discarded annihilation
terms tend to zero in norm at each time and are dominated by the
number bound just proved. Hence $\dcq_N$ is absolutely continuous
and satisfies
\begin{equation}
 |\pr_t\dcq_N(t)|
 \leq2\eta\|w\|_{L^2} \sqrt{\dcq_N(t)+N^{-1}},
\end{equation}
for almost every $t$. Only the first moments in
\eqref{eq:joint-source-uniform-energy} and
\eqref{eq:joint-source-uniform-number} will be used below.

\medskip\noindent\textbf{Step 2: Construction of a joint measure at a fixed time.}
For now, suppress $t$. Recall \eqref{eq:definition-of-bosonic-Fock-space}, the particle Fock space over the $\h$, and use $\cF(\h_f)$ for the radiation Fock space. Embed $\Gamma_N$ into
$\fkH(\h)\otimes\cF(\h_f)$, with its particle factor supported on
particle number  $N$, and set $\epsilon=N^{-1}$. Let
$E_d=\operatorname{span}\{e_1,\ldots,e_d\}$, where $(e_i)_{i\geq1}$
is an orthonormal eigenbasis of $\OT^2$, and let $L_m$ be the span
of the first $m$ vectors of a fixed orthonormal basis of $\h_f$.
Let $P_d$ denote the orthogonal projection of $\h$ onto $E_d$,
and let $P_m^{\mathrm f}$ denote the orthogonal projection
of $\h_f$ onto $L_m$.
The Fock factorizations over a subspace and its orthogonal complement
define the reduced state $\Gamma_N^{d,m}$ on
$\fkH(E_d)\otimes\cF(L_m)$, that is,
\begin{equation}
    \Gamma_N^{d,m}
    =\Tr_{\fkH(E_d^\perp)\otimes\cF(L_m^\perp)}(\Gamma_N).
\end{equation}
It is positive with trace one. On a finite-dimensional complex Hilbert space, define the unitary coherent vector
\begin{equation}
 \dcE_\epsilon(z)
 =e^{-\|z\|_{L^2}^2/(2\epsilon)}
   \bigoplus_{n=0}^\infty
   \frac{z^{\otimes n}}{\sqrt{\epsilon^n \, n!}},
\end{equation}
or explicitly,
\begin{equation}
    \dcE_\eps (z) = e^{- \sum_{j=1}^d |z_j|^2/(2 \eps)} \sum_{n_1,\dots, n_d \geq 0} \prc{ \prod_{j=1}^d \frac{z_j^{n_j}}{\sqrt{\eps^{n_j} n_j !}}} \ket{n_1, \dots, n_d},
\end{equation}
where $z=z_1e_1+\cdots+z_de_d\in E_d$, and $\ket{n_1,\dots, n_j}$ is the normalized occupation vector with $n_j$ particles in $e_j$. The factorials follow from being divided by $\sqrt{\eps^n n!}$
\begin{equation}
    z^{\otimes n} = \sum_{n_1 + \dots + n_d=n} \sqrt{\frac{n!}{n_1 ! \dots n_d !}} z_1^{n_1} \dots z_d^{n_d} \ket{n_1, \dots , n_d} 
\end{equation}
The joint Husimi probability measure is
\begin{equation}\label{eq:joint-source-husimi}
 \dcd\mathfrak m_N^{d,m}(z,u)
 =\frac{\langle\dcE_\epsilon(z)\otimes\dcE_\epsilon(u),
             \Gamma_N^{d,m}
             \dcE_\epsilon(z)\otimes\dcE_\epsilon(u)\rangle}
             {(\pi\epsilon)^{d+m}}\,dz\,du.
\end{equation}
Here $dz$ and $du$ are Lebesgue measures in orthonormal complex
coordinates. Positivity follows from that of $\Gamma_N^{d,m}$.
For nonnegative integers $p,q$, the one-coordinate identity
\begin{equation}
 \frac1{\pi\epsilon}\int_{\mathbb C}
       \overline z^{\,p}z^q e^{-|z|^2/\epsilon}\,dz
 =\begin{cases}
    \epsilon^p p!,&p=q,\\
    0,&p\ne q
   \end{cases}
\end{equation}
which, by expansion in particle-number sectors, gives the weak resolution
of the identity
\begin{equation}
 \int_{E_d}|\dcE_\epsilon(z)\rangle
                \langle\dcE_\epsilon(z)|
                \frac{dz}{(\pi\epsilon)^d}=\1.
\end{equation}
The same identity holds on $u\in\cF(L_m)$. Coherent vectors factor over
orthogonal subspaces; integrating out any additional coordinates
and using this resolution gives the reduced Husimi probability measure \eqref{eq:joint-source-husimi} which is consequently consistent under the coordinate projections for each fixed $N$.

Let $\mathcal N_{E_d}$ and $\mathcal N_p$ denote the unscaled
particle-number operators on $\fkH(E_d)$ and $\fkH(\h)$,
respectively: both act as multiplication by $n$ on their
$n$-particle sectors. To compute the radial moments, consider
the weak coherent-projector integral with weight $\|z\|^{2r}$,
where $r\geq1$ is an integer.
Integration over the phase of each coordinate makes this operator
diagonal in the normalized occupation basis. For an occupation
with $n_j\geq0$ particles in the mode $e_j$, where
$\sum_{j=1}^d n_j=n$, the diagonal entry is
\begin{equation}
\begin{aligned}
 &\frac1{(\pi\epsilon)^d\epsilon^n\prod_{j=1}^d n_j!}
 \int_{\mathbb C^d}\|z\|^{2r}\prod_{j=1}^d|z_j|^{2n_j}
                     e^{-\|z\|^2/\epsilon}\,dz\\
 &\qquad=\epsilon^r\frac{(n+d+r-1)!}{(n+d-1)!}
 =\epsilon^r\prod_{j=0}^{r-1}(n+d+j).
\end{aligned}
\end{equation}
Indeed, the angular integral cancels in the ratio to the case
$r=0$, and the radial integral is the elementary Gamma integral.
This depends only on $n$. Taking the trace against $\Gamma_N^{d,m}$
therefore gives
\begin{equation}
\begin{aligned}
 \int\|z\|^2\,\dcd\mathfrak m_N^{d,m}
 &=\epsilon\Tr\bigl[(\mathcal N_{E_d}+d)
                          \Gamma_N^{d,m}\bigr]
 \leq1+\epsilon d,\\
 \int\|u\|^2\,\dcd\mathfrak m_N^{d,m}
 &\leq\epsilon\Tr\bigl[\Gamma_N\cN_f\bigr]+\epsilon m
 \leq\cM_T+\epsilon m.
 \label{eq:joint-source-husimi-moments}
\end{aligned}
\end{equation}
For the second inequality \eqref{eq:joint-source-husimi-moments}, the same calculation gives $\epsilon$
times the expected field number in $L_m$, plus $\epsilon \, m$;
the field number in $L_m$ is bounded above by $\cN_f$.
In particular,
\begin{align}
 \int\|z\|^{2r}\,\dcd\mathfrak m_N^{d,m}
 &=\epsilon^r\Tr\left[
      \prod_{j=0}^{r-1}(\mathcal N_{E_d}+d+j)
                    \Gamma_N^{d,m}\right]\nonumber\\
 &\leq\prod_{j=0}^{r-1}(1+\epsilon(d+j)),\qquad r\geq1.
 \label{eq:joint-source-husimi-high-moments}
\end{align}
The last inequality uses $\mathcal N_{E_d}\leq N$ on the support of $\Gamma_N^{d,m}$.

At the fixed time under consideration, the consistency
of the Husimi measures and the bounds
\eqref{eq:joint-source-husimi-moments}, with
$\epsilon=N^{-1}$, verify the hypotheses of
\lemref{lem:joint-husimi-tightness}.
Consequently, after passing to a further subsequence
of the particle numbers already selected, there exist
Borel probability measures $\nu^{d,m}$ such that for every $d,m\geq1$,
\begin{equation}
    \mathfrak m_{N_\ell}^{d,m}
    \rightharpoonup\nu^{d,m}
    \qquad\text{as}\quad  N_\ell\to\infty.
\end{equation}
The subsequence is common to all finite-dimensional
pairs, and the limiting measures are consistent under
coordinate projections. Identify $E_d$ and $L_m$ with $\mathbb C^d$ and $\mathbb C^m$
through the chosen bases. For $d,m\geq1$, define the coordinate projection $ \pi_{d,m}:\mathbb C^{\mathbb N}\times\mathbb C^{\mathbb N} \longrightarrow \mathbb C^d\times\mathbb C^m$ as
\begin{equation}
    \pi_{d,m}\bigl((z_j)_{j\geq1},(u_j)_{j\geq1}\bigr)
    :=
    \bigl((z_1,\ldots,z_d),(u_1,\ldots,u_m)\bigr).
\end{equation}
By consistency and Kolmogorov's extension theorem, there exists
a unique Borel probability measure $\widetilde\nu$ on
$\mathbb C^{\mathbb N}\times\mathbb C^{\mathbb N}$, equipped
with its product topology, such that
\begin{equation}
    (\pi_{d,m})_{\#}\widetilde\nu=\nu^{d,m},
    \qquad d,m\geq1.
\end{equation}
Equivalently, for every Borel set
$B\subset\mathbb C^d\times\mathbb C^m$,
\begin{equation}
    \widetilde\nu\bigl(\pi_{d,m}^{-1}(B)\bigr)=\nu^{d,m}(B).
\end{equation}
Thus, $\widetilde\nu$ is the joint probability law of the two
infinite coordinate sequences, with finite-dimensional
marginals $\nu^{d,m}$. Consistency passes to the limits because coordinate projections are continuous, that is, for $d'\geq d$ and $m'\geq m$,
\begin{equation}
    \begin{aligned}
        \int_{E_d\times L_m}
            F(z,u)\,\dcd\nu^{d,m}(z,u) &= \lim_{\ell\to\infty} \int_{E_d\times L_m} F(z,u)\,\dcd\mathfrak m_{N_\ell}^{d,m}(z,u) \\
            &=\lim_{\ell\to\infty} \int_{E_{d'}\times L_{m'}} F(P_d z, P_m^{\mathrm f} u)\,\dcd\mathfrak m_{N_\ell}^{d',m'}(z,u) \\
            &=\int_{E_{d'}\times L_{m'}} F(P_d z, P_m^{\mathrm f} u)\,\dcd\nu^{d',m'}(z,u).
    \end{aligned}
\end{equation}
Lower semicontinuity and \eqref{eq:joint-source-husimi-moments}, with $d,m$ fixed before
$\ell\to\infty$, imply
\begin{equation}
 \begin{aligned}
  \int\sum_{j=1}^d|z_j|^2\,\dcd\widetilde\nu &\leq 1,
  \qquad
  \int\sum_{j=1}^m|u_j|^2\,\dcd\widetilde\nu &\leq \cM_T.
 \end{aligned}
\end{equation}
Monotone convergence as $d,m\to\infty$ shows that both coordinate
sequences are square summable almost surely. Their finite partial
sums in the chosen Hilbert bases are measurable; on the Borel
set of square-summable sequences they converge in Hilbert norm.
Define $\nu$ as the pushforward of $\widetilde\nu$ under this
reconstruction, assigning the value $(0,0)$ on its null complement.
It is a Borel probability measure on $\h\times\h_f$, with
\begin{equation}
 (P_d,P_m^{\mathrm f} )_{\#}\nu=\nu^{d,m}\qquad(d,m\geq1),
\end{equation}
where the projected spaces are identified with their coordinate
spaces. In particular, Parseval's identity gives
\begin{equation}
    \label{eq:joint-source-measure-number}
 \int_{\h\times\h_f}\|u\|^2\,\dcd\nu(\vfi,u)\leq\cM_T.
\end{equation}
Likewise, \eqref{eq:joint-source-husimi-high-moments} and lower
semicontinuity give for every $d,r\geq1$
\begin{equation}
    \int\|P_d\vfi\|^{2r}\,\dcd\nu\leq1.
\end{equation}
Hence, for every $\delta>0$,
\begin{equation}
 \nu\{(\vfi,u):\|P_d\vfi\|>1+\delta\}
       \leq(1+\delta)^{-2r}.
\end{equation}
Letting $r\to\infty$, then taking a countable sequence
$\delta\downarrow0$ and intersecting over $d$, proves
\begin{equation}
 \nu\{(\vfi,u):\|\vfi\|\leq1\}=1.
\end{equation}
Since $\Gamma_N$ is supported on total particle number $N$,
it commutes with $e^{i\theta\mathcal N_p}\otimes\1$.
Factoring this unitary over $E_d$ and $E_d^\perp$ and using
invariance of the partial trace under unitary conjugation on
the traced factors gives
\begin{equation}
 (e^{i\theta\mathcal N_{E_d}}\otimes\1)\Gamma_N^{d,m}
 (e^{-i\theta\mathcal N_{E_d}}\otimes\1)=\Gamma_N^{d,m}.
\end{equation}
The identity
$e^{i\theta\mathcal N_{E_d}}\dcE_\epsilon(z)
 =\dcE_\epsilon(e^{i\theta}z)$ then shows that each Husimi
measure is invariant under $(z,u)\mapsto(e^{i\theta}z,u)$.
This invariance passes to the finite-dimensional limits and
their reconstruction. Thus, $\nu$ is invariant under
$(\vfi,u)\mapsto(e^{i\theta}\vfi,u)$. Henceforth, we relabel the extracted subsequence as $N$.

Next, we identify the observables represented by $\nu$: For
$k\geq1$ and $\OJ^{(k)}\in\mathcal K(\Hkb)$, set
\begin{equation}
\begin{aligned}
 b_{\OJ^{(k)}}(\vfi)
     &:=\langle\vfi^{\otimes k},\OJ^{(k)}\vfi^{\otimes k}\rangle,\\
 G_N^{(k)}(\OJ^{(k)},f)
     &:=\Tr\bigl[\Gamma_N(\OJ^{(k)}\otimes\wyl_N(f))\bigr].
\end{aligned}
\end{equation}
The operator $\OJ^{(k)}$ acts on the first $k$ particles. Also allow $k=0$, $\OJ^{(0)}=\1$ and $b_{\OJ^{(0)}}=1$. Initially, suppose $\OJ^{(k)}=P_d^{\otimes k}\OJ^{(k)}P_d^{\otimes k}$ and $f\in L_m$, and let $a_p(e_i),a_p^\dagger(e_i)$ be the usual
particle annihilation and creation operators on $\fkH(E_d)$,
and define the scaled creation and annihilation operators:
\begin{equation}
 b_i:=\sqrt\epsilon\,a_p(e_i),\qquad
 b_i^\dagger:=\sqrt\epsilon\,a_p^\dagger(e_i).
\end{equation}
They are distinct from the radiation creation and annihilation operators $a(f),\ad(f)$.
On the algebraic sum of particle-number sectors,
$[b_i,b_j^\dagger]=\epsilon\delta_{ij}\1$.
For $z=\sum_{i=1}^d z_i e_i$, Wick quantization is defined by
linearity and
\begin{equation}\label{eq:joint-source-wick-definition}
 \bigl(\overline z_{i_1}\cdots\overline z_{i_p}
                    z_{j_1}\cdots z_{j_q}\bigr)^{\mathrm{Wick}}
 :=b_{i_1}^\dagger\cdots b_{i_p}^\dagger
                  b_{j_1}\cdots b_{j_q},\qquad p,q\geq0.
\end{equation}
An empty product is the identity. Thus
$(|z_i|^2)^{\mathrm{Wick}}=\epsilon a_p^\dagger(e_i)a_p(e_i)$.
The operator $b_{\OJ^{(k)}}^{\mathrm{Wick}}$ is obtained by
applying \eqref{eq:joint-source-wick-definition} to the coordinate
expansion of $b_{\OJ^{(k)}}$. This convention incorporates the
scaling $\epsilon$ in the operators; see also \cite{ammari2011mean}.

Anti-Wick quantization of this polynomial is the weak integral
\begin{equation}
 b_{\OJ^{(k)}}^{\mathrm{AW}}
 :=\int_{E_d}b_{\OJ^{(k)}}(z)
       |\dcE_\epsilon(z)\rangle\langle\dcE_\epsilon(z)|
       \frac{dz}{(\pi\epsilon)^d},
\end{equation}
initially as a form on the algebraic sum of number sectors.
The same definition applies to each monomial. The Gaussian
identity above, applied to its matrix elements, yields
\begin{equation}
 \bigl(\overline z_{i_1}\cdots\overline z_{i_k}
                   z_{j_1}\cdots z_{j_k}\bigr)^{\mathrm{AW}}
 =b_{j_1}\cdots b_{j_k}b_{i_1}^\dagger\cdots b_{i_k}^\dagger.
\end{equation}
Put $\epsilon=N^{-1}$ and
$P_{\leq N}:=\mathbf1_{\{\mathcal N_p\leq N\}}$.
For any product $M$ of $\ell\leq2k$ scaled creation and
annihilation operators, the intermediate particle numbers
are at most $N+2k$. Hence
\begin{equation}
 \|MP_{\leq N}\|
 \leq [\epsilon(N+2k+1)]^{\ell/2}
 \leq (2k+2)^k.
\end{equation}
Write the coordinate expansion as
\begin{equation}
 b_{\OJ^{(k)}}(z)
 =\sum_{\mathbf i,\mathbf j\in\{1,\ldots,d\}^k}
 c_{\mathbf i,\mathbf j}\,
 \overline z_{i_1}\cdots\overline z_{i_k}
 z_{j_1}\cdots z_{j_k},
 \qquad
 |c_{\mathbf i,\mathbf j}|\leq\|\OJ^{(k)}\|.
\end{equation}
Normal ordering with
$b_jb_i^\dagger=b_i^\dagger b_j+\epsilon\delta_{ij}\1$
gives the Wick monomial and, for each $1\leq r\leq k$,
at most $\binom{k}{r}^2r!$ terms with $r$ contractions.
Each such term contains $\epsilon^r$ and a monomial
of length $2k-2r$. Therefore,
\begin{align}
 \bigl\|
 (b_{\OJ^{(k)}}^{\mathrm{AW}}
 -b_{\OJ^{(k)}}^{\mathrm{Wick}})P_{\leq N}
 \bigr\|
 &\leq C_{d,k,\OJ^{(k)}}\epsilon,
 \label{eq:joint-source-aw-wick-error}
\end{align}
where one may take
\begin{equation}
 C_{d,k,\OJ^{(k)}}
 :=
 d^{2k}(2k+2)^k\|\OJ^{(k)}\|
 \sum_{r=1}^k\binom{k}{r}^2r!.
\end{equation}
Here the operators on $\fkH(E_d)$ have been lifted to
$\fkH(\h)$ by the identity on $\fkH(E_d^\perp)$.
This also explains the use of the total particle-number
cutoff $\mathcal N_p$ in this estimate.

We choose the exponential symbol
$\widetilde u\mapsto e^{2i\re\langle f,\widetilde u\rangle}$
because its integral against the field Husimi measure is
the characteristic function that we wish to compare with
the quantum Weyl expectation. To make this comparison,
we apply the coherent-projector definition of anti-Wick
quantization to this bounded symbol and compute the resulting
operator. We will show that
\begin{equation}
 \bigl(e^{2i\re\langle f,\cdot\rangle}\bigr)^{\mathrm{AW}}
 =
 e^{-\epsilon\|f\|^2/2}\wyl_N(f).
\end{equation}
For fixed $f$, the scalar factor tends to one as
$\epsilon=N^{-1}\to0$, thereby identifying the limiting
Husimi characteristic function with the limiting Weyl
expectation. On the field factor, anti-Wick quantization is defined by
\begin{equation}
 \bigl(e^{2i\re\langle f,\cdot\rangle}\bigr)^{\mathrm{AW}}
 =
 \frac1{(\pi\epsilon)^m}
 \int_{L_m}e^{2i\re\langle f,\widetilde u\rangle}
 |\dcE_\epsilon(\widetilde u)\rangle
 \langle\dcE_\epsilon(\widetilde u)|\,d\widetilde u.
\end{equation}
The resolution of the identity and Cauchy--Schwarz imply
that this operator has norm at most one. For $u,\widetilde u\in L_m$, the coherent overlap formula gives
\begin{align}
 \langle\dcE_\epsilon(u),\dcE_\epsilon(\widetilde u)\rangle
 &=
 \exp\left(
 -\frac{\|u\|^2+\|\widetilde u\|^2}{2\epsilon}
 +\frac{\langle u,\widetilde u\rangle}{\epsilon}
 \right),\\
 \bigl|\langle\dcE_\epsilon(u),
                 \dcE_\epsilon(\widetilde u)\rangle\bigr|^2
 &=e^{-\|\widetilde u-u\|^2/\epsilon}.
\end{align}
Consequently, the change of variables
$\zeta=\widetilde u-u$ yields
\begin{equation}
\begin{aligned}
 \left\langle\dcE_\epsilon(u),
 \bigl(e^{2i\re\langle f,\cdot\rangle}\bigr)^{\mathrm{AW}}
 \dcE_\epsilon(u)\right\rangle
 &=
 \frac1{(\pi\epsilon)^m}
 \int_{L_m}
 e^{2i\re\langle f,\widetilde u\rangle}
 e^{-\|\widetilde u-u\|^2/\epsilon}\,d\widetilde u\\
 &=
 \frac{e^{2i\re\langle f,u\rangle}}{(\pi\epsilon)^m}
 \int_{L_m}
 e^{2i\re\langle f,\zeta\rangle}
 e^{-\|\zeta\|^2/\epsilon}\,d\zeta.
\end{aligned}
\end{equation}
Choose orthonormal complex coordinates on $L_m$ and write
$f_\ell=\alpha_\ell+i\beta_\ell$ and
$\zeta_\ell=x_\ell+iy_\ell$. Then
\begin{equation}
 \re\langle f,\zeta\rangle
 =\sum_{\ell=1}^m(\alpha_\ell x_\ell+\beta_\ell y_\ell),
 \qquad
 \|\zeta\|^2=\sum_{\ell=1}^m(x_\ell^2+y_\ell^2).
\end{equation}
For $s\in\mathbb R$, using the Gaussian Fourier integral
\begin{equation}
 \frac1{\sqrt{\pi\epsilon}}
 \int_{\mathbb R}e^{-r^2/\epsilon+2isr}\,dr
 =e^{-\epsilon s^2},
\end{equation}
we obtain
\begin{equation}
\begin{aligned}
 \frac1{(\pi\epsilon)^m}&\int_{L_m} e^{2i\re\langle f,\zeta\rangle}
                   e^{-\|\zeta\|^2/\epsilon}\,d\zeta\\*
 &=
 \prod_{\ell=1}^m
 \left(
 \frac1{\sqrt{\pi\epsilon}}
 \int_{\mathbb R}
 e^{-x_\ell^2/\epsilon+2i\alpha_\ell x_\ell}\,dx_\ell
 \right)
 \left(
 \frac1{\sqrt{\pi\epsilon}}
 \int_{\mathbb R}
 e^{-y_\ell^2/\epsilon+2i\beta_\ell y_\ell}\,dy_\ell
 \right)\\*
 &=
 \prod_{\ell=1}^m
 e^{-\epsilon\alpha_\ell^2}
 e^{-\epsilon\beta_\ell^2}
 =e^{-\epsilon\|f\|^2}.
\end{aligned}
\end{equation}
Thus,
\begin{equation}
 \left\langle\dcE_\epsilon(u),
 \bigl(e^{2i\re\langle f,\cdot\rangle}\bigr)^{\mathrm{AW}}
 \dcE_\epsilon(u)\right\rangle
 =
 e^{-\epsilon\|f\|^2}
 e^{2i\re\langle f,u\rangle}.
\end{equation}
For the Weyl expectation, recall that
$\epsilon=N^{-1}$ and
$\wyl_N(f)=e^{i\sqrt\epsilon(a(f)+\ad(f))}$.
Since
\begin{equation}
 [i\sqrt\epsilon\,\ad(f),i\sqrt\epsilon\,a(f)]
 =\epsilon\|f\|^2\1,
\end{equation}
Since this commutator is a scalar multiple of the identity,
all higher nested commutators vanish. The
Baker--Campbell--Hausdorff formula, applied on coherent
vectors where the relevant exponential series converge,
therefore gives
\begin{equation}
 \wyl_N(f)
 =
 e^{-\epsilon\|f\|^2/2}
 e^{i\sqrt\epsilon\,\ad(f)}
 e^{i\sqrt\epsilon\,a(f)}.
\end{equation}
The coherent eigenvector identity
\begin{equation}
 a(f)\dcE_\epsilon(u)
 =\epsilon^{-1/2}\langle f,u\rangle\dcE_\epsilon(u)
\end{equation}
implies, for $p,q\geq0$,
\begin{equation}
 \left\langle\dcE_\epsilon(u),
 \ad(f)^p a(f)^q\dcE_\epsilon(u)\right\rangle
 =
 \epsilon^{-(p+q)/2}
 \langle u,f\rangle^p\langle f,u\rangle^q.
\end{equation}
The exponential series converge in norm on coherent vectors.
Expanding them therefore yields
\begin{equation}
\begin{aligned}
 \langle\dcE_\epsilon(u),
             \wyl_N(f)\dcE_\epsilon(u)\rangle
 &=
 e^{-\epsilon\|f\|^2/2}
 \sum_{p,q\geq0}
 \frac{(i\sqrt\epsilon)^{p+q}}{p!\,q!}
 \epsilon^{-(p+q)/2}
 \langle u,f\rangle^p\langle f,u\rangle^q\\
 &=
 e^{-\epsilon\|f\|^2/2}
 e^{i\langle u,f\rangle}
 e^{i\langle f,u\rangle}
 =
 e^{-\epsilon\|f\|^2/2}
 e^{2i\re\langle f,u\rangle}.
\end{aligned}
\end{equation}
The second identity follows by normal ordering the exponential
of $i\phi_N(f)$ and using the coherent eigenvector identity for
annihilation operators. Coherent diagonal matrix elements determine a bounded operator on $\fkH(L_m)$. To see this, identify $L_m$ with $\mathbb C^m$ through its chosen orthonormal basis, and let $\{|\alpha\rangle\}_{\alpha\in\mathbb N_0^m}$ be the normalized occupation-number basis. We use the notation $|\alpha|=\sum_{j=1}^m\alpha_j$, $\alpha!=\prod_{j=1}^m\alpha_j!$, and $u^\alpha=\prod_{j=1}^m u_j^{\alpha_j}$. For a bounded operator $A$, the coherent-vector expansion gives
\begin{equation}
 e^{\|u\|^2/\epsilon}
 \langle\dcE_\epsilon(u),A\dcE_\epsilon(u)\rangle
 =
 \sum_{\alpha,\beta\in\mathbb N_0^m}
 \frac{\langle\alpha|A|\beta\rangle\,
       \overline u^{\,\alpha}u^\beta}
      {\epsilon^{(|\alpha|+|\beta|)/2}
       \sqrt{\alpha!\beta!}}.
\end{equation}
This series converges absolutely and locally uniformly. Indeed, for $|u_j|\leq R$, its absolute sum is bounded by
\begin{equation}
 \sum_{\alpha,\beta\in\mathbb N_0^m}
 \frac{|\langle\alpha|A|\beta\rangle|\,
       |\overline u^{\,\alpha}u^\beta|}
      {\epsilon^{(|\alpha|+|\beta|)/2}
       \sqrt{\alpha!\beta!}}
 \leq
 \|A\|
 \left(
   \sum_{n=0}^{\infty}
   \frac{(R/\sqrt\epsilon)^n}{\sqrt{n!}}
 \right)^{2m}
 <\infty,
\end{equation}
where the scalar series converges by the ratio test. Uniqueness of this real-analytic expansion in $u$ and $\overline u$ shows that, if $\langle\dcE_\epsilon(u),A\dcE_\epsilon(u)\rangle=0$ for every $u\in L_m$, then every matrix element $\langle\alpha|A|\beta\rangle$ vanishes, and hence $A=0$. Applying this observation to the difference of the two bounded operators whose coherent expectations were computed above yields
\begin{equation}
 \bigl(e^{2i\re\langle f,\cdot\rangle}\bigr)^{\mathrm{AW}}
     =e^{-\epsilon\|f\|^2/2}\wyl_N(f),
 \qquad f\in L_m.
\end{equation}
It remains to relate particle Wick quantization to the trace-one marginal normalization. Recall that $b_j=N^{-1/2}a_p(e_j)$, with the same scaling parameter $N$ throughout the calculation. On the $n$-particle sector, $a_p(e_j)$ acts by contraction against $e_j$ with a factor $\sqrt n$. Thus, starting from the full $N$-particle sector, $k$ successive scaled annihilations contribute the factor
\begin{equation}
 \prod_{r=0}^{k-1}\sqrt{\frac{N-r}{N}}
 =
 \sqrt{\frac{N!}{N^k(N-k)!}}.
\end{equation}
Consequently, for $N\geq k$, bosonic symmetry gives
\begin{align*}
 &(b_{j_1}\cdots b_{j_k}\Psi)
                  (x_{k+1},\ldots,x_N)\\
 &\qquad=
 \sqrt{\frac{N!}{N^k(N-k)!}}
 \int_{\R^{3k}}
 \left(\prod_{\ell=1}^k
       \overline{e_{j_\ell}(x_\ell)}\right)
 \Psi(x_1,\ldots,x_N)\,d\ux_k.
\end{align*}
Here $\Psi$ is any vector in the $N$-particle sector with
values in the field Fock space; the integral is the partial
Hilbert-space contraction, defined first for finite sums of
tensor products and then by continuity. Expanding
$b_{\OJ^{(k)}}$ in the $e_i$ coordinates and applying this
formula to both sides of each matrix element gives
\begin{equation}
 \Tr\bigl[\Gamma_N
     (b_{\OJ^{(k)}}^{\mathrm{Wick}}\otimes\wyl_N(f))\bigr]
 =\frac{N!}{N^k(N-k)!}\,
       G_N^{(k)}(\OJ^{(k)},f).
\end{equation}
In this calculation, $\OJ^{(k)}$ is extended by zero to
$\h^{\otimes k}$ as specified above; bosonic symmetry
identifies the contraction with its action on the first $k$
labels. Combining this identity with the two anti-Wick
calculations and the partial-trace definition of
$\Gamma_N^{d,m}$ yields
\begin{equation}
\begin{aligned}
 \int b_{\OJ^{(k)}}(z)&e^{2i\re\langle f,u\rangle} \,\dcd\mathfrak m_N^{d,m}\\
 &=e^{-\epsilon\|f\|^2/2} \Tr\bigl[\Gamma_N (b_{\OJ^{(k)}}^{\mathrm{Wick}}\otimes\wyl_N(f))\bigr]
                   +O_{d,k,\OJ^{(k)}}(\epsilon)\\
 &=e^{-\epsilon\|f\|^2/2}\frac{N!}{N^k(N-k)!}\,G_N^{(k)}(\OJ^{(k)},f)
                   +O_{d,k,\OJ^{(k)}}(\epsilon).
 \label{eq:joint-source-wick-normalization}
\end{aligned}
\end{equation}
For $k=0$ the factorial and Wick factors are interpreted as
one, and the error is zero. The unbounded particle polynomial causes no loss in passage
to the finite-dimensional weak limit. Indeed,
$|b_{\OJ^{(k)}}(z)|\leq\|\OJ^{(k)}\|\|z\|^{2k}$ and, for $R>0$,
\begin{equation}
 \int_{\{\|z\|>R\}}|b_{\OJ^{(k)}}(z)|
                         \,\dcd\mathfrak m_N^{d,m}
 \leq\frac{\|\OJ^{(k)}\|}{R^2}
             \int\|z\|^{2k+2}\,\dcd\mathfrak m_N^{d,m}.
\end{equation}
For fixed $d,m,k$, the moments of order $2k+2$ are
bounded uniformly in $N$ by
\eqref{eq:joint-source-husimi-high-moments}.
Together with the preceding tail estimate, this allows us
to pass to the finite-dimensional weak limit by first
inserting a continuous cutoff in $z$ and then letting
$R\to\infty$. The field exponential has modulus one,
so this step requires no additional field moments.
Moreover, for fixed $k$ and $f$,
\begin{equation}
 \frac{N!}{N^k(N-k)!}\longrightarrow1,
 \qquad
 e^{-\epsilon\|f\|^2/2}\longrightarrow1,
 \qquad \epsilon=N^{-1}\longrightarrow0.
\end{equation}
Since the $O_{d,k,\OJ^{(k)}}(\epsilon)$ error also
vanishes, we obtain, along the extracted subsequence,
\begin{equation}
 \label{eq:joint-source-characteristic-limit}
 G_N^{(k)}(\OJ^{(k)},f)
 \longrightarrow
 \int b_{\OJ^{(k)}}(\vfi)
 e^{2i\re\langle f,u\rangle}\,\dcd\nu(\vfi,u) \quad \text{as}\quad N\to\infty.
\end{equation}
At this stage, the convergence holds for
$\OJ^{(k)}=P_d^{\otimes k}\OJ^{(k)}P_d^{\otimes k}$
and $f\in L_m$. 

We extend the convergence to every compact
$\OJ^{(k)}$. Since $P_d^{\otimes k}$ converges strongly
to the identity, compactness gives
\begin{equation}
 \bigl\|
 P_d^{\otimes k}\OJ^{(k)}P_d^{\otimes k}
 -\OJ^{(k)}
 \bigr\|\longrightarrow0 \quad \text{as}\quad d\to\infty.
\end{equation}
Writing $\OJ_d^{(k)}:=P_d^{\otimes k}\OJ^{(k)}P_d^{\otimes k}$, we estimate the changes in the quantum expectation and in the limiting integral. Since $\Gamma_N$ is positive with trace one and $\wyl_N(f)$ is unitary, trace duality gives
\begin{equation}
 \left|
 G_N^{(k)}(\OJ^{(k)},f)
 -G_N^{(k)}(\OJ_d^{(k)},f)
 \right|
 \leq
 \|\Gamma_N\|_1
 \|\OJ^{(k)}-\OJ_d^{(k)}\|
 \|\wyl_N(f)\|
 =
 \|\OJ^{(k)}-\OJ_d^{(k)}\|.
\end{equation}
For the limiting integral, the definition of $b_{\OJ^{(k)}}$ gives
\begin{equation}
 \left|
 b_{\OJ^{(k)}}(\vfi)-b_{\OJ_d^{(k)}}(\vfi)
 \right|
 \leq
 \|\OJ^{(k)}-\OJ_d^{(k)}\|\,\|\vfi\|^{2k}.
\end{equation}
Since the exponential has modulus one and $\nu$ is a probability measure supported on $\{\|\vfi\|\leq1\}\times\h_f$, it follows that
\begin{align*}
 &\left|
 \int
 \left(
 b_{\OJ^{(k)}}(\vfi)-b_{\OJ_d^{(k)}}(\vfi)
 \right)
 e^{2i\re\langle f,u\rangle}\,\dcd\nu(\vfi,u)
 \right|\\
 &\qquad\leq
 \|\OJ^{(k)}-\OJ_d^{(k)}\|
 \int\|\vfi\|^{2k}\,\dcd\nu
 \leq
 \|\OJ^{(k)}-\OJ_d^{(k)}\|.
\end{align*}
For fixed $d$, the convergence in \eqref{eq:joint-source-characteristic-limit} already holds with $\OJ^{(k)}$ replaced by $\OJ_d^{(k)}$. The triangle inequality therefore yields
\begin{equation}
 \limsup_{N\to\infty}
 \left|
 G_N^{(k)}(\OJ^{(k)},f)
 -
 \int b_{\OJ^{(k)}}(\vfi)
 e^{2i\re\langle f,u\rangle}\,\dcd\nu(\vfi,u)
 \right|\leq
 2\|\OJ^{(k)}-\OJ_d^{(k)}\|.
\end{equation}
Compactness of $\OJ^{(k)}$ implies that the right-hand side tends to zero as $d\to\infty$. Thus, \eqref{eq:joint-source-characteristic-limit} holds for every compact $\OJ^{(k)}$ and every $f\in L_m$.

For the extension in the field variable, let $f,f'\in\h_f$. The Weyl relation gives
\begin{equation}
 \wyl_N(f)
 =
 e^{i\operatorname{Im}\langle f',f-f'\rangle/N}
 \wyl_N(f')\wyl_N(f-f').
\end{equation}
Using unitarity, $|e^{ia}-1|\leq|a|$ for real $a$, and the spectral-calculus estimate
\begin{equation}
 \|(e^{i\phi_N(f-f')}-\1)\Psi\|
 \leq\|\phi_N(f-f')\Psi\|,
 \qquad
 \Psi\in\cD(\phi_N(f-f')),
\end{equation}
the quadrature bound \eqref{eq:joint-source-quadrature-bound} and the uniform photon-number estimate yield
\begin{equation}
 \label{eq:joint-source-field-equicontinuity}
 \begin{aligned}
 &|G_N^{(k)}(\OJ^{(k)},f)
       -G_N^{(k)}(\OJ^{(k)},f')|\\
 &\qquad\leq
 \|\OJ^{(k)}\|
 \left(
 2\sqrt{\cM_T}\|f-f'\|
 +\frac{\|f'\|\|f-f'\|}{N}
 \right).
 \end{aligned}
\end{equation}
The same estimate holds uniformly in $t\in[0,T]$ when the suppressed time variable is restored.

On the classical side, $|b_{\OJ^{(k)}}(\vfi)|\leq\|\OJ^{(k)}\|$
almost surely. Hence,
$|e^{ia}-e^{ib}|\leq|a-b|$, Cauchy--Schwarz,
and \eqref{eq:joint-source-measure-number} give
\begin{align*}
 \Bigg|
 \int b_{\OJ^{(k)}}(\vfi)
 &\left(
 e^{2i\re\langle f,u\rangle}
 -e^{2i\re\langle f',u\rangle}
 \right)\,\dcd\nu(\vfi,u)
 \Bigg|\\
 &\qquad\leq
 2\|\OJ^{(k)}\|\|f-f'\|
 \int\|u\|\,\dcd\nu\\
 &\qquad\leq
 2\|\OJ^{(k)}\|\|f-f'\|
 \left(\int\|u\|^2\,\dcd\nu\right)^{1/2}\\
 &\qquad\leq
 2\|\OJ^{(k)}\|\sqrt{\cM_T}\|f-f'\|.
\end{align*}
Taking $f'=P_m^{\mathrm f}f$ and using the convergence
already proved for $f'\in L_m$, the triangle inequality
therefore gives
\begin{align*}
 &\limsup_{N\to\infty}
 \left|
 G_N^{(k)}(\OJ^{(k)},f)
 -
 \int b_{\OJ^{(k)}}(\vfi)
 e^{2i\re\langle f,u\rangle}\,\dcd\nu
 \right|\leq
 4\|\OJ^{(k)}\|\sqrt{\cM_T}
 \|f-P_m^{\mathrm f}f\|.
\end{align*}
Since $P_m^{\mathrm f}f\to f$ in $\h_f$, letting
$m\to\infty$ proves
\eqref{eq:joint-source-characteristic-limit}
for every compact $\OJ^{(k)}$ and every $f\in\h_f$.

We next establish the energy bound and unit-sphere
concentration of the particle marginal of $\nu$.
Let $\OT^2e_j=\lam_je_j$, where
$0<\lam_1\leq\lam_2\leq\cdots$ and
$\lam_j\to\infty$, including multiplicities.
For each fixed $d$, the operators
\begin{equation}
 P_d=\sum_{j=1}^d|e_j\rangle\langle e_j|,
 \qquad
 \OT^2P_d=\sum_{j=1}^d
 \lam_j|e_j\rangle\langle e_j|
\end{equation}
are bounded and finite rank. Their quadratic forms satisfy
\begin{align*}
 \langle\vfi,\OT^2P_d\vfi\rangle
 &=\sum_{j=1}^d
 \lam_j|\langle e_j,\vfi\rangle|^2
 =\|\OT P_d\vfi\|^2,\\
 \langle\vfi,P_d\vfi\rangle
 &=\|P_d\vfi\|^2.
\end{align*}
Positivity of $\gamma_N^{(1)}$ and the uniform energy
bound imply
\begin{align*}
 \Tr(\OT^2P_d\gamma_N^{(1)})
 &=\sum_{j=1}^d
 \lam_j\langle e_j,\gamma_N^{(1)}e_j\rangle\leq\Tr(\OT^2\gamma_N^{(1)})
 \leq C_E,
\end{align*}
and
\begin{align*}
 \Tr((\1-P_d)\gamma_N^{(1)})
 &=\sum_{j>d}
 \langle e_j,\gamma_N^{(1)}e_j\rangle\\
 &\leq\frac1{\lam_{d+1}}
 \sum_{j>d}\lam_j
 \langle e_j,\gamma_N^{(1)}e_j\rangle\\
 &\leq\frac{C_E}{\lam_{d+1}}.
\end{align*}
Since $\Tr\gamma_N^{(1)}=1$, applying
\eqref{eq:joint-source-characteristic-limit}
with $k=1$, $f=0$, and $d$ fixed gives
\begin{align*}
 \int\|\OT P_d\vfi\|^2\,\dcd\nu
 &=\lim_{N\to\infty}
 \Tr(\OT^2P_d\gamma_N^{(1)})
 \leq C_E,\\
 \int\|P_d\vfi\|^2\,\dcd\nu
 &=\lim_{N\to\infty}
 \Tr(P_d\gamma_N^{(1)})
 \geq1-\frac{C_E}{\lam_{d+1}}.
\end{align*}
Now, $\|P_d\vfi\|^2\uparrow\|\vfi\|^2$ as
$d\to\infty$. By monotone convergence, the preceding lower
bound, and the already established unit-ball support yield
\begin{equation}
 1\geq\int\|\vfi\|^2\,\dcd\nu
 =\lim_{d\to\infty}\int\|P_d\vfi\|^2\,\dcd\nu
 \geq
 \lim_{d\to\infty}
 \left(1-\frac{C_E}{\lam_{d+1}}\right)
 =1.
\end{equation}
Thus,
\begin{equation}
    \int(1-\|\vfi\|^2)\,\dcd\nu=0.
\end{equation}
Since the integrand is nonnegative almost surely,
$\|\vfi\|=1$ almost surely.

Likewise, the spectral expansion gives
\begin{equation}
 \|\OT P_d\vfi\|^2
 =\sum_{j=1}^d\lam_j|\langle e_j,\vfi\rangle|^2
 \uparrow\|\OT\vfi\|^2,
\end{equation}
where the last expression is defined to be $+\infty$
outside $\cD(\OT)$. Applying monotone convergence once
more, we conclude that $\nu(\mathbb S\h\times\h_f)=1$ and
\begin{equation}
 \label{eq:joint-source-energy-and-sphere}
 \begin{aligned}
 &\int\|\OT\vfi\|^2\,\dcd\nu
 =\lim_{d\to\infty}
 \int\|\OT P_d\vfi\|^2\,\dcd\nu
 \leq C_E.
 \end{aligned}
\end{equation}
In particular, $\vfi\in\cD(\OT)$ almost surely.

Finally, the mixed moments in \eqref{eq:joint-source-characteristic-limit}, including the case $k=0$, determine $\nu$ uniquely among particle-phase-invariant probability measures supported on the particle unit ball. Let $\widetilde\nu$ be another such measure with the same mixed moments. For fixed $d,m$, denote their pushforwards under $(\vfi,u)\mapsto(P_d\vfi,P_m^{\mathrm f}u)$ by $\nu^{d,m}$ and $\widetilde\nu^{d,m}$. Both are supported on $B_d\times L_m$, where
\begin{equation}
 B_d:=\{z\in E_d:\|z\|\leq1\}.
\end{equation}
Fix $f\in L_m$, then rank-one choices of $\OJ^{(k)}$ supported on the symmetric tensor power of $E_d$ give, up to nonzero normalization constants, every monomial $\overline z^{\,\alpha}z^\beta$ with $|\alpha|=|\beta|=k$. Thus the assumed equality of mixed moments gives
\begin{equation}
 \int \overline z^{\,\alpha}z^\beta
 e^{2i\re\langle f,u\rangle}\,\dcd\nu^{d,m}
 =
 \int \overline z^{\,\alpha}z^\beta
 e^{2i\re\langle f,u\rangle}\,\dcd\widetilde\nu^{d,m}.
\end{equation}
The constant monomial is supplied by $k=0$. If $|\alpha|\neq|\beta|$, both integrals vanish: invariance under $(z,u)\mapsto(e^{i\theta}z,u)$ multiplies either integral by $e^{i(|\beta|-|\alpha|)\theta}$ for every $\theta\in\R$, which forces it to be zero. By linearity, the two measures therefore agree on every test of the form
\begin{equation}
 p(z,\overline z)e^{2i\re\langle f,u\rangle},
\end{equation}
where $p$ is a polynomial.

The polynomials in $z,\overline z$ form a conjugation-closed algebra containing the constants and separating points of the compact set $B_d$. By the Stone--Weierstrass theorem, they are uniformly dense in $C(B_d)$. Moreover, for either measure $\rho=\nu^{d,m}$ or $\rho=\widetilde\nu^{d,m}$,
\begin{equation}
 \left|
 \int (a(z)-p(z,\overline z))
 e^{2i\re\langle f,u\rangle}\,\dcd\rho
 \right|
 \leq \|a-p\|_{C(B_d)}.
\end{equation}
Hence the two measures agree on $a(z)e^{2i\re\langle f,u\rangle}$ for every $a\in C(B_d)$. In particular, taking $a(z)=e^{2i\re\langle g,z\rangle}$ gives
\begin{equation}
 \int e^{2i\re(\langle g,z\rangle+\langle f,u\rangle)}
 \,\dcd\nu^{d,m}
 =
 \int e^{2i\re(\langle g,z\rangle+\langle f,u\rangle)}
 \,\dcd\widetilde\nu^{d,m},
\end{equation}
for every $g\in E_d$ and $f\in L_m$. Thus, their joint Fourier transforms coincide on the underlying finite-dimensional real vector space $E_d\times L_m$. Uniqueness of the Fourier transform of finite measures yields $\nu^{d,m}=\widetilde\nu^{d,m}$. Since this holds for every $d,m$, all finite-dimensional coordinate distributions agree. The orthonormal-basis coordinate maps generate the Borel sigma-algebra of the separable Hilbert product $\h\times\h_f$, hence $\nu=\widetilde\nu$.

\medskip\noindent\textbf{Step 3: One subsequence for every time.}

We now restore the time variable in $G_{N,t}^{(k)}(\OJ^{(k)},f)$. First, take $\OJ^{(k)}$ of finite rank, with its range and adjoint range contained in a finite span of eigenvectors of $\dch_k=\sum_{j=1}^k\dch_0^{(j)}$, and take $f\in\cD(\omg)$. In particular, $[\dch_k,\OJ^{(k)}]$ extends to a bounded operator. For $k=0$, the same arguments apply with the scalar test $1$ and with all particle commutators absent.

We first explain the domains in the expectation identity. For fixed $N$, the form comparison in Step~1 identifies the quadratic-form domain of $\HMN+c_*N$ with $\cD(\dcQ_N^{1/2})$. The lift of $\OJ^{(k)}$, and its adjoint, preserve this domain: on the first $k$ particle variables this follows from their finite spectral ranges; on the other variables they commute with the free operators. If the lift does not preserve full particle symmetry, replace it temporarily with its average over all particle permutations. This average has norm at most $\|\OJ^{(k)}\|$, preserves the form domain and the symmetric subspace, and has the same expectation in $\Psi_{N,t}$. The same statements hold for its adjoint.

The Weyl operators $\wyl_N(f)$ and its adjoint preserve the field-energy form domain because $f\in\cD(\sqrt\omg)$. The displacement identity gives the estimate below, first on the linear span of coherent vectors with profiles in $\cD(\omg)$, which is a core for the field-energy form and is preserved by $\wyl_N(f)$ when $f\in\cD(\omg)$. Closedness extends the estimate to $\cD(\Hf^{1/2})$:
\begin{equation}
 \|\Hf^{1/2}\wyl_N(f)\zeta\|
 \leq \|\Hf^{1/2}\zeta\|
       +N^{-1/2}\|\sqrt\omg f\|\,\|\zeta\|.
\end{equation}
Consequently, the quadratic form identity applies to the permutation-averaged observable. Since $\HMN$ and $\wyl_N(f)$ commute with particle permutations. In particular,
\begin{equation}
\begin{aligned}
 &G_{N,t}^{(k)}(\OJ^{(k)},f)
       -G_{N,s}^{(k)}(\OJ^{(k)},f)=i\int_s^t
   \langle\Psi_{N,r},
      [\HMN,\OJ^{(k)}\wyl_N(f)]\Psi_{N,r}\rangle\,dr,
\end{aligned}
\end{equation}
where the commutator is initially understood as a quadratic form. For completeness, we apply the usual Schr\"odinger identity to the spectral truncations $\mathbf1_{[0,L]}(\HMN+c_*N)\Psi_{N,r}$. They converge uniformly on bounded time intervals in the energy form norm. Indeed, for fixed $N$, spectral calculus and unitarity of the evolution give
\begin{align*}
 &\left\|
 (\HMN+c_*N)^{1/2}
 \bigl(\1-\mathbf1_{[0,L]}(\HMN+c_*N)\bigr)
 \Psi_{N,r}
 \right\|\\
 &\qquad=
 \left\|
 (\HMN+c_*N)^{1/2}
 \bigl(\1-\mathbf1_{[0,L]}(\HMN+c_*N)\bigr)
 \Psi_{N,0}
 \right\|
 \longrightarrow0.
\end{align*}
The right-hand side is independent of $r$, which proves the asserted uniform convergence. Since the observable and its adjoint are bounded in that norm, the resulting form-commutator integrals converge as $L\to\infty$. This argument uses the energy form domain to justify differentiation; the separate photon-number estimate from Step~1 will give the bounds uniform in $N$ below. The commutator decomposes as follows:
\begin{equation}
\begin{aligned}
 [\HMN,\OJ^{(k)}\wyl_N(f)]
  ={}&[\dch_k,\OJ^{(k)}]\wyl_N(f)\\
 &+\frac{\lam}{N}\sum_{1\leq i<j\leq N}
    [v(x_i-x_j),\OJ^{(k)}]\wyl_N(f)\\
 &+\eta\sum_{j=1}^k
    [\phi_N(g_{x_j}),\OJ^{(k)}]\wyl_N(f)\\
 &+\OJ^{(k)}[\Hf,\wyl_N(f)]
   +\eta\OJ^{(k)}\sum_{j=1}^N
       [\phi_N(g_{x_j}),\wyl_N(f)].
\end{aligned}
\end{equation}
Only pairs meeting the first $k$ labels contribute to the direct interaction, and there are at most $kN$ such pairs. The first two terms therefore contribute at most
\begin{equation}
 \|[\dch_k,\OJ^{(k)}]\|
  +2\lam k\|v\|_{L^\infty}\|\OJ^{(k)}\|
\end{equation}
to the absolute value of the expectation. Displacement by $if/\sqrt N$ gives
\begin{equation}
 \wyld_N(f)a(g)\wyl_N(f)
       =a(g)+\frac{i}{\sqrt N}\langle g,f\rangle\1.
\end{equation}
Taking the adjoint of this identity and adding yields
\begin{equation}
 \wyld_N(f)\phi_N(g)\wyl_N(f)
       =\phi_N(g)-\frac{2}{N}
                     \im \langle g,f\rangle\1.
\end{equation}
Applying the same displacement identity to $\Hf$ gives
\begin{equation}
 \wyld_N(f)\Hf\wyl_N(f)
   =\Hf+\phi_N(i\omg f)
                  +N^{-1}\langle f,\omg f\rangle\1.
\end{equation}
Thus
\begin{align}
 [\Hf,\wyl_N(f)]
 &=\wyl_N(f)\left(\phi_N(i\omg f)
              +N^{-1}\langle f,\omg f\rangle\1\right),
       \label{eq:joint-source-free-field-commutator}\\
 [\phi_N(g),\wyl_N(f)]
 &=-2N^{-1}\im \langle g,f\rangle
                      \wyl_N(f).
       \label{eq:joint-source-interaction-commutator}
\end{align}
In the free-field term this uses $\omg f\in\h_f$ and $(\omg f)/\sqrt\omg=\sqrt\omg f\in\h_f$; in the interaction terms it uses $\|g_x/\sqrt\omg\|=\|\widehat w/\sqrt\omg\|$, uniformly in $x$. The field-energy bounds therefore permit passage to the energy form domain. The photon-number bounds additionally make every field-vector expression below well defined on the actual states. By \eqref{eq:joint-source-quadrature-bound}, the free-field contribution is bounded by
\begin{equation}
 \|\OJ^{(k)}\|\left(
       2\|\omg f\|\sqrt{\cM_T}
          +N^{-1}\|\sqrt\omg f\|^2\right).
\end{equation}
We also need the displaced photon-number estimate
\begin{equation}\label{eq:joint-source-displaced-number}
 \left\|\left(\frac{\cN_f+1}{N}\right)^{1/2}
                   \wyl_N(f)\Psi_{N,t}\right\|
       \leq\sqrt{\cM_T}+\frac{\|f\|}{N}.
\end{equation}
To verify its scaling, we write the number form as the squared norm of the vector of scaled annihilators, together with the additional component $N^{-1/2}\Psi_{N,t}$. Under displacement, the scaled annihilator components change by $if/N$ times the unit vector $\Psi_{N,t}$. The triangle inequality in this direct-sum Hilbert space gives
\begin{equation}
 \left\|\left(\frac{\cN_f+1}{N}\right)^{1/2}
                   \wyl_N(f)\Psi_{N,t}\right\|
 \leq
 \left\|\left(\frac{\cN_f+1}{N}\right)^{1/2}
                   \Psi_{N,t}\right\|
       +\frac{\|f\|}{N}.
\end{equation}
For each $j\leq k$, move the field operator to the left in the first commutator pairing and apply Cauchy--Schwarz in the second:
\begin{align*}
 \Big|\langle\Psi_{N,t}&,
       [\phi_N(g_{x_j}),\OJ^{(k)}]
                         \wyl_N(f)\Psi_{N,t}\rangle\Big|\\
 &\quad\leq
   \|\phi_N(g_{x_j})\Psi_{N,t}\|\,
   \|\OJ^{(k)}\wyl_N(f)\Psi_{N,t}\|\\
 &\qquad\quad+
   \|(\OJ^{(k)})^*\Psi_{N,t}\|\,
   \|\phi_N(g_{x_j})\wyl_N(f)\Psi_{N,t}\|\\
 &\quad\leq
   4\|w\|_{L^2}\|\OJ^{(k)}\|
          \left(\sqrt{\cM_T}+\frac{\|f\|}{N}\right).
\end{align*}
Here, the field estimates are applied fiberwise to the particle positions, and $\OJ^{(k)}$ commutes with the photon number. Summing over $j\leq k$ and multiplying by $\eta$ gives the required bound for the third term. Finally, \eqref{eq:joint-source-interaction-commutator} gives
\begin{equation}
 \left\|\eta\,\OJ^{(k)}\sum_{j=1}^N
           [\phi_N(g_{x_j}),\wyl_N(f)]\right\|
 \leq 2\eta\|w\|_{L^2}\|\OJ^{(k)}\|\,\|f\|.
\end{equation}
The factor $N^{-1}$ in the single-label commutator cancels the $N$ labels in this last sum. Inserting these estimates into the integrated expectation identity proves
\begin{equation}\label{eq:joint-source-time-equicontinuity}
 |G_{N,t}^{(k)}(\OJ^{(k)},f)
       -G_{N,s}^{(k)}(\OJ^{(k)},f)|
     \leq C_{k,\OJ^{(k)},f,T}|t-s|,
\end{equation}
uniformly in $N\geq k$ along the retained tail of the sequence. Choose, for every $k\geq1$, a countable operator-norm dense family in $\mathcal K(\Hkb)$ consisting of the regular finite-rank tests just used, and choose a countable norm-dense subset of $\h_f$ contained in $\cD(\omg)$, including the scalar test for $k=0$. Uniform boundedness and \eqref{eq:joint-source-time-equicontinuity}, followed by Arzel\`a--Ascoli and a diagonal extraction over these countable families and over $k$, give uniform convergence on $[0,T]$ along one subsequence. For arbitrary compact tests, use
\begin{equation}
 \sup_{N,t}
 |G_{N,t}^{(k)}(\OJ^{(k)},f)
       -G_{N,t}^{(k)}(K,f)|
 \leq\|\OJ^{(k)}-K\|.
\end{equation}
Together with \eqref{eq:joint-source-field-equicontinuity}, this extends uniform convergence to every $\OJ^{(k)}\in\mathcal K(\Hkb)$ and $f\in\h_f$. Indeed, the approximation errors are uniform in $N,t$, so the approximating uniformly convergent sequences make each original sequence uniformly Cauchy. Their limits are continuous in time.

At each fixed $t$, Step~2 applied to a further subsequence represents these already prescribed limits by a measure $\nu_t$. Its uniqueness among particle-phase-invariant measures on the particle unit ball makes the representation independent of that further extraction. Along the single subsequence just obtained, we therefore have
\begin{equation}\label{eq:joint-source-uniform-characteristics}
 \sup_{0\leq t\leq T}\left|
 G_{N,t}^{(k)}(\OJ^{(k)},f)
 -\int b_{\OJ^{(k)}}(\vfi)
         e^{2i\re\langle f,u\rangle}\,d\nu_t(\vfi,u)
                   \right|\longrightarrow0,
\end{equation}
as $N\to\infty$, for every $t$, Step~2 also gives
\begin{equation}\label{eq:joint-source-joint-moment-bounds}
 \int\|\OT\vfi\|^2\,d\nu_t\leq C_E,
 \qquad
 \int\|u\|^2\,d\nu_t\leq\cM_T.
\end{equation}

We next verify measurability of this family. Temporarily regard $\nu_t$ as a probability measure on the two countable coordinate sequences, equipped with the product topology. The particle unit ball and every field ball of radius $R$ are compact in this topology: they are closed subsets of products of compact coordinate disks. The second bound in \eqref{eq:joint-source-joint-moment-bounds} gives at most $\cM_T/R^2$ mass outside the latter ball. Thus the entire family is tight on the coordinate space.

Suppose $t_n\to t$. The countable coordinate space is Polish, so the preceding uniform tightness and Prokhorov's theorem imply that every subsequence of $(\nu_{t_n})$ has a further weakly convergent subsequence. Each limit is particle-phase-invariant, supported on the particle unit ball, and has a field second moment at most $\cM_T$. The last assertion follows by applying lower semicontinuity to finite coordinate sums and then monotone convergence; in particular, the field coordinates remain square summable almost surely. For finite-coordinate particle tests and finite-coordinate field Fourier variables, \eqref{eq:joint-source-uniform-characteristics} and the continuity of its limiting functions identify the mixed moments of this measure with those of $\nu_t$. Passing these tests through weak convergence is legitimate: their particle polynomial factor is bounded on the particle unit ball, and a continuous cutoff outside that ball in the finitely many tested particle coordinates gives a bounded continuous extension to the whole coordinate space. The uniqueness argument of Step~2 therefore identifies every such limit with $\nu_t$. Tightness now implies that the entire sequence $\nu_{t_n}$ converges weakly to $\nu_t$ on the coordinate space.

This proves weak continuity in the coordinate topology and hence the Borel probability-kernel assertion. Indeed, evaluation on every coordinate-space Borel set is a measurable function of $t$; the square-summable subspace is Borel, and reconstruction in the chosen bases is Borel measurable with values in the usual Hilbert-space Borel structure. No continuity for the norm topology of the field space is required here. Taking $f=0$ in \eqref{eq:joint-source-uniform-characteristics} and using the prescribed particle marginal limits gives
\begin{equation}
 \gamma_{\infty,t}^{(k)}
  =\int|\vfi^{\otimes k}\rangle
        \langle\vfi^{\otimes k}|\,d\nu_t(\vfi,u).
\end{equation}
The uniqueness of the phase-invariant strong de Finetti measure identifies the particle marginal with $\mu_t$. At time zero, the coherent product assumption and the Weyl relations give the characteristic identity
\begin{equation}
 G_{N,0}^{(k)}(\OJ^{(k)},f)
 =\Tr(\OJ^{(k)}\gamma_{N,0}^{(k)})
       \exp\left(-\frac{\|f\|^2}{2N}
                 +2i\re\langle f,\widehat u_0\rangle\right).
\end{equation}
As $N\to\infty$, $G_{N,0}^{(k)}(\OJ^{(k)},f)$ converges to the mixed characteristic moment of $\mu_0\otimes\delta_{\widehat u_0}$. The already-proved measure uniqueness yields \eqref{eq:joint-source-initial-law}. This identifies the factorized initial law. The construction at positive times retains particle--field correlations.

\medskip\noindent\textbf{Step 4: Passing one unbounded field insertion to the limit.}
For $K\in\mathcal K(\Hkb)$ and $f\in\h_f$, put
\begin{equation}
 \begin{aligned}
 L_{N,t}(K,f)&=\langle\Psi_{N,t},K\phi_N(f)\Psi_{N,t}\rangle,\\
 L_t(K,f)&=\int b_K(\vfi)\,2\re\langle f,u\rangle\,d\nu_t.
 \end{aligned}
\end{equation}
The first expression is well-defined because
\eqref{eq:joint-source-uniform-number} places $\Psi_{N,t}$ in
$\cD(\cN_f^{1/2})$, hence in $\cD(\phi_N(f))$.
The second is absolutely integrable because $|b_K(\vfi)|\leq\|K\|$
on $\mathbb S\h$ and the field has a finite first moment.

The quantities $L_{N,t}(K,f)$ and $L_t(K,f)$ are obtained by differentiating the corresponding mixed characteristic functions with respect to a real parameter $r$ at $r=0$, while keeping $t$ fixed. Indeed, since $\wyl_N(rf)=e^{ir\phi_N(f)}$, we have
\begin{equation}
 \left.\frac{d}{dr}G_{N,t}^{(k)}(K,rf)\right|_{r=0}
 =
 i\langle\Psi_{N,t},K\phi_N(f)\Psi_{N,t}\rangle
 =
 iL_{N,t}(K,f).
\end{equation}
Similarly,
\begin{equation}
 \left.\frac{d}{dr}
 \int b_K(\vfi)e^{2ir\re\langle f,u\rangle}\,d\nu_t
 \right|_{r=0}
 =
 iL_t(K,f).
\end{equation}
The first differentiation is justified by $\Psi_{N,t}\in\cD(\phi_N(f))$, and the second by dominated convergence and the finite first field moment. To pass $N\to\infty$ in these derivatives, we establish a Taylor-remainder estimate uniform in $N$ and $t$, using only the photon-number bound already proved. For fixed $N$ and $f$, let $E$ denote the projection-valued spectral measure of the self-adjoint operator $\phi_N(f)$ on the full particle--field space; explicitly,
\begin{equation}
 E(B)=\mathbf1_B(\phi_N(f)),
 \qquad B\subset\R\ \text{Borel}.
\end{equation}
The spectral theorem gives
\begin{equation}
 \phi_N(f)=\int_{\R}x\,E(dx),
 \qquad
 \wyl_N(rf)=\int_{\R}e^{irx}\,E(dx),
 \qquad r\in\R,
\end{equation}
where the first operator integral is understood on $\cD(\phi_N(f))$. Since $K$ acts only on the particles, it commutes with every spectral projection $E(B)$. Consequently, for every Borel set $B\subset\R$,
\begin{equation}
 \bigl|\langle\Psi_{N,t},K E(B)\Psi_{N,t}\rangle\bigr|
 =\bigl|\langle E(B)\Psi_{N,t},K E(B)\Psi_{N,t}\rangle\bigr|
 \leq\|K\|\,\|E(B)\Psi_{N,t}\|^2.
\end{equation}
Applying this estimate to all finite Borel partitions of $B$ shows that
the total variation of the complex measure
$\langle\Psi_{N,t},K E(\cdot)\Psi_{N,t}\rangle$ is bounded by the positive
measure $\|K\|\langle\Psi_{N,t},E(\cdot)\Psi_{N,t}\rangle$.
The scalar inequality $|e^{irx}-1-irx|\leq r^2x^2/2$ therefore gives,
for every real $r$,
\begin{equation}\label{eq:joint-source-taylor-quantum}
\begin{aligned}
 &\bigl|G_{N,t}^{(k)}(K,rf)-G_{N,t}^{(k)}(K,0)
                -irL_{N,t}(K,f)\bigr|\\
 &\quad\leq\frac{r^2}{2}\|K\|
       \int_{\R}x^2\,\langle\Psi_{N,t},E(dx)\Psi_{N,t}\rangle\\
 &\quad=\frac{r^2}{2}\|K\|\,\|\phi_N(f)\Psi_{N,t}\|^2
 \leq2r^2\|K\|\|f\|^2\cM_T.
\end{aligned}
\end{equation}
Here the spectral second moment requires only
$\Psi_{N,t}\in\cD(\phi_N(f))$; it does not require
$\Psi_{N,t}\in\cD(\phi_N(f)^2)$. The last inequality follows from
\eqref{eq:joint-source-quadrature-bound}.
The same scalar estimate on the classical side yields
\begin{equation}
\begin{aligned}
 &\left|\int b_K(\vfi)
       \bigl(e^{2ir\re\langle f,u\rangle}-1\bigr)\,d\nu_t
                   -irL_t(K,f)\right|\\
 &\qquad \qquad \leq2r^2\|K\|\int|\langle f,u\rangle|^2\,d\nu_t
 \leq2r^2\|K\|\|f\|^2\cM_T.
\end{aligned}
\end{equation}
Subtracting the two expansions, dividing by $|r|$, and applying
\eqref{eq:joint-source-uniform-characteristics} at $rf$ and $0$, with
$r\ne0$ fixed, gives
\begin{equation}
 \limsup_{N\to\infty}\sup_{0\leq t\leq T}
 |L_{N,t}(K,f)-L_t(K,f)|
 \leq4|r|\|K\|\|f\|^2\cM_T.
\end{equation}
Letting $r\to0$ proves
\begin{equation}\label{eq:joint-source-linear-field-limit}
 \sup_{0\leq t\leq T}|L_{N,t}(K,f)-L_t(K,f)|\longrightarrow0.
\end{equation}
The uniform quadratic remainder, rather than characteristic convergence
alone, is what permits passage to the limit in this unbounded insertion.

\medskip\noindent\textbf{Step 5: Finite field modes in the source.}
Let $(e_\ell)$ be the field basis used above and

\begin{equation}
    P_m^{\mathrm f} =\sum_{\ell=1}^m|e_\ell\rangle\langle e_\ell|.
\end{equation}
Set
\begin{equation}
 c_\ell(\ux_k)=\langle e_\ell,G_{\ux_k}\rangle,
 \qquad \dcV_{N,m}^{(k)}=\phi_N(P_m^{\mathrm f}  \,  G_{\ux_k}).
\end{equation}
Each $c_\ell$ is a bounded symmetric function of $\ux_k$.
Annihilation is antilinear in its argument, while creation is linear.
In particular,
\begin{equation}
 \phi_N(ie_\ell)
 =\frac{-i a(e_\ell)+i\ad(e_\ell)}{\sqrt N}.
\end{equation}
Writing $P_m^{\mathrm f}  \, G_{\ux_k}=\sum_{\ell=1}^m c_\ell(\ux_k)e_\ell$ and separating
real and imaginary parts therefore gives
\begin{equation}\label{eq:joint-source-real-mode-expansion}
 \dcV_{N,m}^{(k)}
 =\sum_{\ell=1}^m\left(
       M_{\re (c_\ell)}\phi_N(e_\ell)
       +M_{\im (c_\ell)}\phi_N(ie_\ell)\right).
\end{equation}
Here $M_c$ denotes multiplication by $c$ on $\Hkb$.
This identity holds on $\cD(\cN_f^{1/2})$ by the field bound.
For a compact $\OJ^{(k)}$, multiplication by these bounded coefficients
shows that $[\OJ^{(k)},M_{\re (c_\ell)}]$ and
$[\OJ^{(k)},M_{\im (c_\ell)}]$ are compact. Thus
\eqref{eq:joint-source-linear-field-limit} applies to every term of
the commutator in \eqref{eq:joint-source-real-mode-expansion}.
To identify the resulting classical multiplier, note that
\begin{equation}
\begin{aligned}
 F_{P_m^{\mathrm f}  \, u}^{(k)}(\ux_k)
 &=2\re\langle G_{\ux_k},P_m^{\mathrm f}  \, u\rangle\\
   &=2\re\langle P_m^{\mathrm f}  \, G_{\ux_k},u\rangle\\
 &=\sum_{\ell=1}^m\left(
    (\re c_\ell(\ux_k))\,2\re\langle e_\ell,u\rangle
   +(\im c_\ell(\ux_k))\,2\re\langle ie_\ell,u\rangle
                         \right).
\end{aligned}
\end{equation}
We obtain, uniformly for $t\in[0,T]$,
\begin{equation}\label{eq:joint-source-finite-mode-limit}
 \langle\Psi_{N,t},[\OJ^{(k)},\dcV_{N,m}^{(k)}]\Psi_{N,t}\rangle
 \longrightarrow
 \int\langle\vfi^{\otimes k},
        [\OJ^{(k)},F_{P_m^{\mathrm f}  \, u}^{(k)}]\vfi^{\otimes k}\rangle\,d\nu_t.
\end{equation}

\medskip\noindent\textbf{Step 6: Removing the field projection after compact testing.}
Strong convergence of $P_m^{\mathrm f} $ gives $(1-P_m^{\mathrm f} )g_x\to0$ for each $x$;
it does not give convergence uniformly over $x\in\R^3$.
The following estimate removes the projection after a finite-rank
particle test has been inserted. Put
$R_m(\ux_k)=(1-P_m^{\mathrm f} )G_{\ux_k}$ and, for $p\in\Hkb$, define
\begin{equation}
 d_m(p)^2=\int|p(\ux_k)|^2\|R_m(\ux_k)\|_{\h_f}^2\,d\ux_k.
\end{equation}
Since $R_m(\ux_k)\to0$ for every $\ux_k$ and
$\|R_m(\ux_k)\|\leq k\|w\|_{L^2} $, dominated convergence proves
\begin{equation}\label{eq:joint-source-tail-domination}
 d_m(p)\longrightarrow0\qquad(\forall p\in\Hkb).
\end{equation}
First take $\OJ^{(k)}=|p\rangle\langle q|$, with $p,q\in\Hkb$.
For $\Psi\in\cD(\cN_f^{1/2})$, its partial inner product
\begin{equation}
 Z_q=\int\overline{q(\ux_k')}\Psi(\ux_k',\cdot)\,d\ux_k'
\end{equation}
is a vector in the remaining particle variables and the field, and
$\OJ^{(k)}\Psi=p\otimes Z_q$.
The bounded partial-inner-product map commutes with the bounded spectral
truncations of $(\cN_f+1)^{1/2}$. Passing to their limit shows that
$Z_q\in\cD(\cN_f^{1/2})$ and
\begin{equation}
 \left\|\left(\frac{\cN_f+1}{N}\right)^{1/2}Z_q\right\|
 \leq\|q\|
 \left\|\left(\frac{\cN_f+1}{N}\right)^{1/2}\Psi\right\|.
\end{equation}
Consequently, the fiberwise field estimate yields
\begin{equation}
\begin{aligned}
 \|\phi_N(R_m)\OJ^{(k)}\Psi\|^2
 =&\int |p(\ux_k)|^2
           \|\phi_N(R_m(\ux_k))Z_q\|^2\,d\ux_k\\
 &\leq4d_m(p)^2
       \left\|\left(\frac{\cN_f+1}{N}\right)^{1/2}Z_q\right\|^2\\
 &\leq4d_m(p)^2\|q\|^2
       \left\|\left(\frac{\cN_f+1}{N}\right)^{1/2}\Psi\right\|^2.
\end{aligned}
\end{equation}
Interchanging $p$ and $q$ gives the corresponding estimate for
$(\OJ^{(k)})^*\Psi$. In particular, both vectors belong to the domain
of the decomposable selfadjoint field operator $\phi_N(R_m)$.
The uniform bound on $R_m$ also gives
\begin{equation}
 \|\phi_N(R_m)\Psi\|
 \leq
 2k\|w\|_{L^2}
 \left\|
 \left(\frac{\cN_f+1}{N}\right)^{1/2}\Psi
 \right\|.
\end{equation}
Thus $\Psi$, $\OJ^{(k)}\Psi$, and $(\OJ^{(k)})^*\Psi$ all belong to $\cD(\phi_N(R_m))$, which justifies the following commutator pairings. For a normalized $\Psi$, its two commutator pairings satisfy
\begin{equation}
\begin{aligned}
 \langle\Psi,[\OJ^{(k)},\phi_N(R_m)]\Psi\rangle
 &=\langle\phi_N(R_m)(\OJ^{(k)})^*\Psi,\Psi\rangle
   -\langle\Psi,\phi_N(R_m)\OJ^{(k)}\Psi\rangle,\\
 \bigl|\langle\Psi,[\OJ^{(k)},\phi_N(R_m)]\Psi\rangle\bigr|
 &\leq\|\phi_N(R_m)(\OJ^{(k)})^*\Psi\|
       +\|\phi_N(R_m)\OJ^{(k)}\Psi\|.
\end{aligned}
\end{equation}
With $\Psi=\Psi_{N,t}$, this proves
\begin{equation}\label{eq:joint-source-quantum-tail}
 \bigl|\langle\Psi_{N,t},[\OJ^{(k)},\phi_N(R_m)]\Psi_{N,t}\rangle\bigr|
 \leq2\sqrt{\cM_T}
          \bigl(\|q\|d_m(p)+\|p\|d_m(q)\bigr).
\end{equation}
The right side tends to zero independently of $N$ and $t$.
Only the separate photon-number bound has been used. 

For the classical tail, the real multiplier
$D_{m,u}(\ux_k)=2\re\langle R_m(\ux_k),u\rangle$ satisfies
\begin{equation}
 \|D_{m,u}p\|^2
 \leq4\|u\|^2\int |p(\ux_k)|^2\|R_m(\ux_k)\|^2\,d\ux_k
 =4\|u\|^2d_m(p)^2.
\end{equation}
The same rank-one calculation, now with the unit vector
$\vfi^{\otimes k}$, gives
\begin{equation}
 \bigl|\langle\vfi^{\otimes k},
       [\OJ^{(k)},D_{m,u}]\vfi^{\otimes k}\rangle\bigr|
 \leq2\|u\|\bigl(\|q\|d_m(p)+\|p\|d_m(q)\bigr).
\end{equation}
After integration and Cauchy--Schwarz in $\nu_t$, we obtain
\begin{equation}\label{eq:joint-source-classical-tail}
 \left|\int\langle\vfi^{\otimes k},[\OJ^{(k)},D_{m,u}]
                            \vfi^{\otimes k}\rangle\,d\nu_t\right|
 \leq2\sqrt{\cM_T}
          \bigl(\|q\|d_m(p)+\|p\|d_m(q)\bigr).
\end{equation}
Both tail estimates extend to finite sums of rank-one operators.
Since
\begin{equation}
    \dcV_N^{(k)}-\dcV_{N,m}^{(k)}=\phi_N(R_m) \quad \text{and} \quad F_u^{(k)}-F_{P_m^{\mathrm f}  \, u}^{(k)}=D_{m,u},
\end{equation}
\eqref{eq:joint-source-finite-mode-limit} and the two tail estimates
show that, for every finite-rank $\OJ^{(k)}$,
\begin{equation}
\begin{split}
 \sup_{0\leq t\leq T}\left|
 \langle\Psi_{N,t},[\OJ^{(k)},\dcV_N^{(k)}]\Psi_{N,t}\rangle
 -\int\langle\vfi^{\otimes k},
             [\OJ^{(k)},F_u^{(k)}]\vfi^{\otimes k}\rangle\,d\nu_t
 \right|\longrightarrow0.
\end{split}
\end{equation}
Here the order of limits is first $N\to\infty$ with $m$ fixed, and
then $m\to\infty$.

\medskip\noindent\textbf{Step 7: Trace-class bounds and completion.}
The uniform bound $\|G_{\ux_k}\|\leq k\|w\|_{L^2} $ and
\eqref{eq:joint-source-quadrature-bound} give
\begin{equation}
 \|\dcV_N^{(k)}\Psi_{N,t}\|
 \leq2k\|w\|_{L^2} \sqrt{\cM_T}.
\end{equation}
In particular, the commutator defining the finite-$N$ source is the
trace-class operator
\begin{equation}
 [\dcV_N^{(k)},\Gamma_{N,t}]
 =|\dcV_N^{(k)}\Psi_{N,t}\rangle\langle\Psi_{N,t}|
  -|\Psi_{N,t}\rangle\langle\dcV_N^{(k)}\Psi_{N,t}|.
\end{equation}
The trace norm of $|p\rangle\langle q|$ equals $\|p\|\|q\|$.
Contractivity of partial trace in trace norm therefore gives
\begin{equation}\label{eq:joint-source-uniform-trace-bound}
 \|\cS_{N,t}^{(k)}\|_{\fkS_1}
 \leq2\|\dcV_N^{(k)}\Psi_{N,t}\|
 \leq4k\|w\|_{L^2} \sqrt{\cM_T}.
\end{equation}
On the classical side, the same rank-one identity, for $\|\vfi\|=1$, gives
\begin{equation}\label{eq:joint-source-bochner-bound}
 \|[F_u^{(k)},|\vfi^{\otimes k}\rangle
                         \langle\vfi^{\otimes k}|]\|_{\fkS_1}
 \leq2\|F_u^{(k)}\vfi^{\otimes k}\|\,\|\vfi^{\otimes k}\|
 \leq4k\|w\|_{L^2} \|u\|.
\end{equation}
The integrand in \eqref{eq:joint-source-limit-operator} is continuous
with values in the trace class for the strong Hilbert topologies.
Indeed, $\vfi\mapsto\vfi^{\otimes k}$ and the associated rank-one
projector are norm continuous, while \eqref{eq:joint-source-classical-field}, applied to differences of
field vectors, gives operator-norm continuity of $u\mapsto F_u^{(k)}$.
More explicitly, for $\|\vfi\|=\|\psi\|=1$,
\begin{equation}
\begin{aligned}
 &\bigl\|
 |\vfi^{\otimes k}\rangle\langle\vfi^{\otimes k}|
 -|\psi^{\otimes k}\rangle\langle\psi^{\otimes k}|
 \bigr\|_{\fkS_1}\\
 &\qquad\leq
 2\|\vfi^{\otimes k}-\psi^{\otimes k}\|
 \leq 2k\|\vfi-\psi\|.
\end{aligned}
\end{equation}
Together with $\|[B,A]\|_{\fkS_1}\leq2\|B\|\|A\|_{\fkS_1}$ for bounded $B$ and trace-class $A$, this proves the asserted joint continuity of the commutator integrand.

Since the trace class is separable, this continuous map is strongly
measurable. Its norm is integrable by
\eqref{eq:joint-source-bochner-bound} and the field moment bound.
Thus, \eqref{eq:joint-source-limit-operator} defines a Bochner integral,
and
\begin{equation}
 \|\cS_{\infty,t}^{(k)}\|_{\fkS_1}
 \leq4k\|w\|_{L^2} \int\|u\|\,d\nu_t
 \leq4k\|w\|_{L^2} \sqrt{\cM_T}.
\end{equation}
For bounded particle tests, cyclicity of the trace gives
\begin{equation}
\begin{aligned}
 &\Tr\!\left(\OJ^{(k)}
       [F_u^{(k)},|\vfi^{\otimes k}\rangle\langle\vfi^{\otimes k}|]
       \right)=\langle\vfi^{\otimes k},
            [\OJ^{(k)},F_u^{(k)}]\vfi^{\otimes k}\rangle.
\end{aligned}
\end{equation}
The finite-rank convergence proved in Step~6 consequently has the
orientation and sign asserted in \eqref{eq:joint-source-uniform-limit}. For a general compact $\OJ^{(k)}$ and any finite-rank $K$, the two
trace bounds imply
\begin{equation}
 \sup_{0\leq t\leq T}
 \left|\Tr\!\left[(\OJ^{(k)}-K)
          (\cS_{N,t}^{(k)}-\cS_{\infty,t}^{(k)})\right]\right|
 \leq8k\|w\|_{L^2} \sqrt{\cM_T}\,\|\OJ^{(k)}-K\|.
\end{equation}
Taking $N\to\infty$ for fixed $K$ and then approximating
$\OJ^{(k)}$ in operator norm by finite-rank operators proves
\eqref{eq:joint-source-uniform-limit} for every compact test.

Finally, the continuous inclusion of the trace class into the
Hilbert--Schmidt class carries the Bochner integral into the
Hilbert--Schmidt integral. Identifying Hilbert--Schmidt operators
with their $L^2(d\ux_k\,d\ux_k')$ kernels therefore gives
\eqref{eq:joint-source-limit-kernel} as an $L^2$ identity.
No pointwise convergence of the finite-$N$ kernels is needed.

For a fixed compact $\OJ^{(k)}$, the quantum and classical pairings
are measurable in time. For the classical pairing this follows from
the Borel-kernel property of $\nu_t$ and the integrable bound above.
For the quantum pairing it follows, for example, by truncating
$\dcV_N^{(k)}$ spectrally, using strong continuity of $\Psi_{N,t}$
for each bounded truncation, and then taking the pointwise limit.
Consequently, for every $h\in L^1(0,T)$,
\begin{equation}
\begin{aligned}
 &\left|\int_0^T h(t)\Tr\!\left[
       \OJ^{(k)}(\cS_{N,t}^{(k)}-\cS_{\infty,t}^{(k)})\right]\,dt\right|\\
 &\quad\leq\|h\|_{L^1(0,T)}
       \sup_{0\leq t\leq T}\left|\Tr\!\left[
       \OJ^{(k)}(\cS_{N,t}^{(k)}-\cS_{\infty,t}^{(k)})\right]\right|
 \longrightarrow0.
\end{aligned}
\end{equation}
More generally, let $\OJ_t^{(k)}$ be strongly measurable with values
in the Banach space $\mathcal K(\Hkb)$ with its operator norm,
and satisfy
$\int_0^T\|\OJ_t^{(k)}\|\,dt<\infty$.
Its approximation in this integral norm by compact-operator-valued
simple functions, followed by the uniform trace bounds, proves
\begin{equation}
 \int_0^T\Tr\!\left[
       \OJ_t^{(k)}(\cS_{N,t}^{(k)}-\cS_{\infty,t}^{(k)})\right]\,dt
 \longrightarrow0.
\end{equation}

This applies in particular to the compact tests obtained by conjugation with the free particle evolution. Indeed, let $U(t)$ denote the strongly continuous free unitary group on $\Hkb$. For a compact $\OJ^{(k)}$ and a finite-rank operator $K$,
\begin{equation}
\begin{aligned}
 &\|U(t)^*\OJ^{(k)}U(t)-U(s)^*\OJ^{(k)}U(s)\|\\
 &\qquad\leq
 2\|\OJ^{(k)}-K\|
 +\|U(t)^*KU(t)-U(s)^*KU(s)\|.
\end{aligned}
\end{equation}
For fixed finite-rank $K$, the last term tends to zero as $t\to s$ by strong continuity of $U(t)$. Approximating $\OJ^{(k)}$ in operator norm by such $K$ therefore proves operator-norm continuity of the conjugated test. Moreover,
\begin{equation}
 \|U(t)^*\OJ^{(k)}U(t)\|=\|\OJ^{(k)}\|,
\end{equation}
so these tests satisfy the required measurability and integrability conditions on $[0,T]$.
\end{proof}

\noindent\textbf{Acknowledgments:}
T.C. thanks Gian Michele Graf, Israel Michael Sigal, and Eitan Tadmor for inspiring discussions. T.C. gratefully acknowledges support from the NSF through the grant DMS-2009800 and the RTG Grant DMS-1840314 - {\em Analysis of PDE}. 
\\
\\
\textbf{Data Statement:}
Data sharing is not applicable to this article, as no datasets were generated or analyzed during the current study
\\
\\
\textbf{Conflict of Interest Statement:}
The authors have no competing interests to declare that are relevant to the content of this article.

\bibliographystyle{abbrvnat}
\bibliography{references}

@inproceedings{dietzelee2023,
  author    = {Dietze, Charlotte
               and Lee, Jinyeop},
  editor    = {Correggi, Michele
               and Falconi, Marco},
  title     = {Uniform in Time Convergence to {B}ose--{E}instein Condensation for a Weakly Interacting {B}ose Gas with an External Potential},
  booktitle = {Quantum Mathematics II},
  year      = {2023},
  publisher = {Springer Nature Singapore},
  address   = {Singapore},
  pages     = {267--311},
  isbn      = {978-981-99-5884-9}
}

@article{esy4,
  author  = {L. Erd{\"o}s and B. Schlein and H.-T. Yau},
  journal = {Ann. Math.},
  number  = {1},
  pages   = {291--370},
  title   = {Derivation of the {G}ross-{P}itaevskii equation for the dynamics of {B}ose-{E}instein condensates},
  volume  = {172},
  year    = {2010}
}

@article{esve2015,
  author  = {M. Escobedo and J. J. L. Vel{\'a}zquez},
  journal = {Invent. Math.},
  number  = {3},
  pages   = {761--847},
  title   = {Finite time blow-up and condensation for the bosonic {N}ordheim equation},
  volume  = {200},
  year    = {2015}
}

@article{frgrsc,
  author  = {J. Fr{\"o}hlich and S. Graffi and S. Schwarz},
  journal = {Comm. Math. Phys.},
  number  = {3},
  pages   = {681--697},
  title   = {Mean-field- and classical limit of many-body {S}chr{\"o}dinger dynamics for bosons},
  volume  = {271},
  year    = {2007}
}

@article{kiscst,
  author  = {K. Kirkpatrick and B. Schlein and G. Staffilani},
  journal = {Amer. J. Math.},
  number  = {1},
  pages   = {91--130},
  title   = {Derivation of the two-dimensional nonlinear {S}chr{\"o}dinger equation from many body quantum dynamics},
  volume  = {133},
  year    = {2011}
}

@book{lisesoyn,
  address   = {Basel},
  author    = {E. H. Lieb and R. Seiringer and J. P. Solovej and J. Yngvason},
  publisher = {Birkh\"auser},
  series    = {Oberwolfach seminars},
  title     = {The mathematics of the {Bose} gas and its condensation},
  volume    = {34},
  year      = {2005},
  isbn      = {9783764373368}
}

@article{liseyn,
  author  = {E. H. Lieb and R. Seiringer and J. Yngvason},
  journal = {Phys. Rev. A - Atomic, Molecular, and Optical Physics},
  number  = {4},
  pages   = {436021--4360213},
  title   = {Bosons in a trap: A rigorous derivation of the {G}ross-{P}itaevskii energy functional},
  volume  = {61},
  year    = {2000}
}

@article{liseyn2,
  author  = {E. H. Lieb and R. Seiringer and J. Yngvason},
  journal = {Comm. Math. Phys.},
  number  = {1},
  pages   = {17--31},
  title   = {A rigorous derivation of the {G}ross-{P}itaevskii energy functional for a two-dimensional {Bose} gas},
  volume  = {224},
  year    = {2001}
}

@article{narose,
  author  = {P. T. Nam and N. Rougerie and R. Seiringer},
  journal = {Analysis $\&$ PDE},
  number  = {2},
  pages   = {459--485},
  title   = {Ground states of large bosonic systems: the {G}ross-{P}itaevskii limit revisited},
  volume  = {9},
  year    = {2016}
}

@article{N,
  author        = {M. Napi{\'o}rkowski},
  journal       = {arXiv preprint},
  title         = {Dynamics of interacting bosons: a compact review},
  year          = {2021},
  archiveprefix = {arXiv},
  eprint        = {2101.04594}
}

@article{lieb2002proof,
  title     = {Proof of {{Bose}-{E}instein} condensation for dilute trapped gases},
  author    = {Lieb, Elliott H and Seiringer, Robert},
  journal   = {Physical review letters},
  volume    = {88},
  number    = {17},
  pages     = {170409},
  year      = {2002},
  publisher = {APS}
}

@article{soffer1999resonances,
  title     = {Resonances, radiation damping and instability in {H}amiltonian nonlinear wave equations},
  author    = {Soffer, Avy and Weinstein, Michael I},
  journal   = {Inventiones mathematicae},
  volume    = {136},
  number    = {1},
  pages     = {9--74},
  year      = {1999},
  publisher = {Springer}
}

@article {tadmor,
    AUTHOR = {Tadmor, Eitan},
     TITLE = {Decrease of entropy and emergence of order in collective
              dynamics},
   JOURNAL = {Commun. Contemp. Math.},
  FJOURNAL = {Communications in Contemporary Mathematics},
    VOLUME = {28},
      YEAR = {2026},
    NUMBER = {5},
     PAGES = {Paper No. 2540006, 23} 
}

@article{lewin2014derivation,
  title={Derivation of {Hartree's} theory for generic mean-field {Bose} systems},
  author={Lewin, Mathieu and Nam, Phan Th{\`a}nh and Rougerie, Nicolas},
  journal={Advances in Mathematics},
  volume={254},
  pages={570--621},
  year={2014},
  publisher={Elsevier}
}

@article{chen2019mean,
  title={Mean field dynamics of a quantum tracer particle interacting with a boson gas},
  author={Chen, Thomas and Soffer, Avy},
  journal={Journal of Functional Analysis},
  volume={276},
  number={3},
  pages={971--1006},
  year={2019},
  publisher={Elsevier}
}

@article{pickl2011simple,
  title={A simple derivation of mean field limits for quantum systems},
  author={Pickl, Peter},
  journal={Letters in Mathematical Physics},
  volume={97},
  number={2},
  pages={151--164},
  year={2011},
  publisher={Springer}
}

@article{rodnianski2009quantum,
  title={Quantum fluctuations and rate of convergence towards mean field dynamics},
  author={Rodnianski, Igor and Schlein, Benjamin},
  journal={Communications in Mathematical Physics},
  volume={291},
  number={1},
  pages={31--61},
  year={2009},
  publisher={Springer}
}

@article{knowles2010mean,
  title={Mean-field dynamics: singular potentials and rate of convergence},
  author={Knowles, Antti and Pickl, Peter},
  journal={Communications in Mathematical Physics},
  volume={298},
  pages={101--138},
  year={2010},
  publisher={Springer}
}

@article{chen2026trapped,
  title={Trapped bosons in mean field {{QED}}, nonlinear resonance cascades and dynamical {{BEC}} formation},
  author={Chen, Thomas and Mezher, Ali},
  journal={arXiv preprint arXiv:2604.11756},
  year={2026}
}

@article{ammari2014wigner,
  title={Wigner measures approach to the classical limit of the {{Nelson}} model: Convergence of dynamics and ground state energy},
  author={Ammari, Zied and Falconi, Marco},
  journal={Journal of Statistical Physics},
  volume={157},
  number={2},
  pages={330--362},
  year={2014},
  publisher={Springer}
}

@article{ammari2011mean,
  title={Mean field propagation of {Wigner} measures and {BBGKY} hierarchies for general bosonic states},
  author={Ammari, Zied and Nier, Francis},
  journal={Journal de math{\'e}matiques pures et appliqu{\'e}es},
  volume={95},
  number={6},
  pages={585--626},
  year={2011},
  publisher={Elsevier}
}

@article{chen2015unconditional,
  title={Unconditional uniqueness for the cubic {Gross-Pitaevskii} hierarchy via {Quantum} de {Finetti}},
  author={Chen, Thomas and Hainzl, Christian and Pavlovi{\'c}, Nata{\v{s}}a and Seiringer, Robert},
  journal={Communications on Pure and Applied Mathematics},
  volume={68},
  number={10},
  pages={1845--1884},
  year={2015},
  publisher={Wiley Online Library}
}

@article{hudson1976locally,
  title={Locally normal symmetric states and an analogue of de {F}inetti's theorem},
  author={Hudson, Robin L and Moody, Graham R},
  journal={Zeitschrift f{\"u}r Wahrscheinlichkeitstheorie und verwandte Gebiete},
  volume={33},
  number={4},
  pages={343--351},
  year={1976},
  publisher={Springer}
}

@article{stormer1969symmetric,
  title={Symmetric states of infinite tensor products of {C}*-algebras},
  author={St{\o}rmer, Erling},
  journal={Journal of Functional Analysis},
  volume={3},
  number={1},
  pages={48--68},
  year={1969},
  publisher={Elsevier}
}

@inproceedings{ammari2008mean,
  title={Mean field limit for bosons and infinite dimensional phase-space analysis},
  author={Ammari, Zied and Nier, Francis},
  booktitle={Annales Henri Poincar{\'e}},
  volume={9},
  pages={1503--1574},
  year={2008},
  organization={Springer}
}

@article{ammari2020well,
  title={On well-posedness for general hierarchy equations of {{G}ross--{P}itaevskii} and {H}artree type},
  author={Ammari, Zied and Liard, Quentin and Rouffort, Cl{\'e}ment},
  journal={Archive for Rational Mechanics and Analysis},
  volume={238},
  number={2},
  pages={845--900},
  year={2020},
  publisher={Springer}
}

@article{ammari2016uniqueness,
  author        = {Ammari, Zied and Liard, Quentin},
  title         = {On uniqueness of measure-valued solutions
                   to {Liouville}'s equation of {Hamiltonian} {PDEs}},
  journal       = {Discrete and Continuous Dynamical Systems},
  volume        = {38},
  pages         = {723--748},
  year          = {2018},
  doi           = {10.3934/dcds.2018032}
}

@article{yau1991relative,
  title={Relative entropy and hydrodynamics of {{G}inzburg--{L}andau} models},
  author={Yau, Horng-Tzer},
  journal={Letters in Mathematical Physics},
  volume={22},
  number={1},
  pages={63--80},
  year={1991},
  publisher={Springer}
}

@book{nielsen2001quantum,
  title={Quantum computation and quantum information},
  author={Nielsen, Michael A and Chuang, Isaac L},
  volume={2},
  year={2001},
  publisher={Cambridge university press Cambridge}
}

@article{wehrl1978general,
  title={General properties of entropy},
  author={Wehrl, Alfred},
  journal={Reviews of Modern Physics},
  volume={50},
  number={2},
  pages={221},
  year={1978},
  publisher={APS}
}

@article{carlen2010trace,
  title={Trace inequalities and quantum entropy: an introductory course},
  author={Carlen, Eric},
  journal={Entropy and the quantum},
  volume={529},
  pages={73--140},
  year={2010}
}

@article{zhang2016quantifying,
  title={Quantifying coherence in infinite-dimensional systems},
  author={Zhang, Yu-Ran and Shao, Lian-He and Li, Yongming and Fan, Heng},
  journal={Physical Review A},
  volume={93},
  number={1},
  pages={012334},
  year={2016},
  publisher={APS}
}

@article{frohlich2019microscopic,
  title={A microscopic derivation of time-dependent correlation functions of the 1{D} cubic nonlinear {S}chr{\"o}dinger  equation},
  author={Fr{\"o}hlich, J{\"u}rg and Knowles, Antti and Schlein, Benjamin and Sohinger, Vedran},
  journal={Advances in Mathematics},
  volume={353},
  pages={67--115},
  year={2019},
  publisher={Elsevier}
}

@article{penrose1956bose,
  title={{{Bose}-{E}instein} condensation and liquid helium},
  author={Penrose, Oliver and Onsager, Lars},
  journal={Physical Review},
  volume={104},
  number={3},
  pages={576},
  year={1956},
  publisher={APS}
}

@article{deuchert2019bose,
  title={{Bose}--{E}instein condensation in a dilute, trapped gas at positive temperature},
  author={Deuchert, Andreas and Seiringer, Robert and Yngvason, Jakob},
  journal={Communications in Mathematical Physics},
  volume={368},
  number={2},
  pages={723--776},
  year={2019},
  publisher={Springer}
}

@article{bach2000return,
  title={Return to equilibrium},
  author={Bach, Volker and Fr{\"o}hlich, J{\"u}rg and Sigal, Israel Michael},
  journal={Journal of Mathematical Physics},
  volume={41},
  number={6},
  pages={3985--4060},
  year={2000},
  publisher={AIP Publishing}
}

@article{diehl2008quantum,
  title={Quantum states and phases in driven open quantum systems with cold atoms},
  author={Diehl, Sebastian and Micheli, Andrea and Kantian, Adrian and Kraus, B and B{\"u}chler, Hans Peter and Zoller, Peter},
  journal={Nature Physics},
  volume={4},
  number={11},
  pages={878--883},
  year={2008},
  publisher={Nature Publishing Group UK London}
}

@article{kraus2008preparation,
  title={Preparation of entangled states by quantum {M}arkov processes},
  author={Kraus, Barbara and B{\"u}chler, Hans P and Diehl, Sebastian and Kantian, Adrian and Micheli, Andrea and Zoller, Peter},
  journal={Physical Review A—Atomic, Molecular, and Optical Physics},
  volume={78},
  number={4},
  pages={042307},
  year={2008},
  publisher={APS}
}

@article{araki1970entropy,
  title={Entropy inequalities},
  author={Araki, Huzihiro and Lieb, Elliott H},
  journal={Communications in Mathematical Physics},
  volume={18},
  number={2},
  pages={160--170},
  year={1970},
  publisher={Springer}
}

@article{lindblad1975completely,
  title={Completely positive maps and entropy inequalities},
  author={Lindblad, G{\"o}ran},
  journal={Communications in Mathematical Physics},
  volume={40},
  number={2},
  pages={147--151},
  year={1975},
  publisher={Springer}
}

@article{winter2016tight,
  title={Tight uniform continuity bounds for quantum entropies: conditional entropy, relative entropy distance and energy constraints},
  author={Winter, Andreas},
  journal={Communications in Mathematical Physics},
  volume={347},
  number={1},
  pages={291--313},
  year={2016},
  publisher={Springer}
}

@article{shirokov2010continuity,
  title={Continuity of the von {N}eumann entropy},
  author={Shirokov, ME},
  journal={Communications in Mathematical Physics},
  volume={296},
  number={3},
  pages={625--654},
  year={2010},
  publisher={Springer}
}

@article{shirokov2018adaptation,
  title={Adaptation of the {Alicki—Fannes—Winter} method for the set of states with bounded energy and its use},
  author={Shirokov, Maksim E},
  journal={Reports on Mathematical Physics},
  volume={81},
  number={1},
  pages={81--104},
  year={2018},
  publisher={Elsevier}
}

@article{tsai2002relaxation,
  title={Relaxation of excited states in nonlinear {S}chr{\"o}dinger equations},
  author={Tsai, Tai-Peng and Yau, Horng-Tzer},
  journal={International Mathematics Research Notices},
  volume={2002},
  number={31},
  pages={1629--1673},
  year={2002},
  publisher={Hindawi Publishing Corporation}
}

@article{soffer2004selection,
  title={Selection of the ground state for nonlinear {S}chr{\"o}dinger equations},
  author={Soffer, Avraham and Weinstein, Michael I},
  journal={Reviews in Mathematical Physics},
  volume={16},
  number={08},
  pages={977--1071},
  year={2004},
  publisher={World Scientific}
}

@article{gang2007relaxation,
  title={Relaxation of solitons in nonlinear {S}chr{\"o}dinger equations with potential},
  author={Gang, Zhou and Sigal, IM},
  journal={Advances in Mathematics},
  volume={216},
  number={2},
  pages={443--490},
  year={2007},
  publisher={Elsevier}
}

@inproceedings{de2013approach,
  title={Approach to ground state and time-independent photon bound for massless spin-boson models},
  author={De Roeck, Wojciech and Kupiainen, Antti},
  booktitle={Annales Henri Poincar{\'e}},
  volume={14},
  pages={253--311},
  year={2013},
  organization={Springer}
}

@inproceedings{leopold2017mean,
  title={Mean-field limits of particles in interaction with quantized radiation fields},
  author={Leopold, Nikolai and Pickl, Peter},
  booktitle={Workshop on Macroscopic Limits of Quantum Systems},
  pages={185--214},
  year={2017},
  organization={Springer}
}

@article{AIF_2026__76_2_701_0,
     author = {Ammari, Zied and Falconi, Marco and Hiroshima, Fumio},
     title = {Towards a derivation of Classical ElectroDynamics of charges and fields from {QED}},
     journal = {Annales de l'Institut Fourier},
     pages = {701--787},
     year = {2026},
     publisher = {Association des Annales de l{\textquoteright}institut Fourier},
     volume = {76},
     doi = {10.5802/aif.3720},
}

@article{cardenas2023effective,
  title={On the effective dynamics of {{Bose}-Fermi} mixtures},
  author={C{\'a}rdenas, Esteban and Miller, Joseph K and Pavlovi{\'c}, Nata{\v{s}}a},
  journal={arXiv preprint arXiv:2309.04638},
  year={2023}
}

@article{cardenas2025emergence,
  title={Emergence of fermion-mediated interactions in {{Bose}-Fermi} mixtures},
  author={C{\'a}rdenas, Esteban and Miller, Joseph K and Mitrouskas, David and Pavlovi{\'c}, Nata{\v{s}}a},
  journal={arXiv preprint arXiv:2502.18678},
  year={2025}
}

@InProceedings{kirkpatrick2021large,
  author       = {Kirkpatrick, Kay and Rademacher, Simone and Schlein, Benjamin},
  booktitle    = {Annales Henri Poincare},
  title        = {A large deviation principle in many-body quantum dynamics},
  organization = {Springer},
  pages        = {2595--2618},
  volume       = {22},
  year         = {2021},
}

@Article{petrat2016new,
  author    = {Petrat, S{\"o}ren and Pickl, Peter},
  title     = {A new method and a new scaling for deriving fermionic mean-field dynamics},
  number    = {1},
  pages     = {3},
  volume    = {19},
  journal   = {Mathematical Physics, Analysis and Geometry},
  publisher = {Springer},
  year      = {2016},
}

@Article{fantechi2025quantum,
  author    = {Fantechi, Michele and Merkli, Marco},
  title     = {Quantum systems coupled to environments via mean field interactions},
  pages     = {169981},
  volume    = {476},
  journal   = {Annals of Physics},
  publisher = {Elsevier},
  year      = {2025},
}

@Article{boers2016mean,
  author    = {Boers, Niklas and Pickl, Peter},
  title     = {On mean field limits for dynamical systems},
  number    = {1},
  pages     = {1--16},
  volume    = {164},
  journal   = {Journal of Statistical Physics},
  publisher = {Springer},
  year      = {2016},
}

@article{chen2011quintic,
  title={The quintic {NLS} as the mean field limit of a {B}oson gas with three-body interactions},
  author={Chen, Thomas and Pavlovi{\'c}, Nata{\v{s}}a},
  journal={Journal of Functional Analysis},
  volume={260},
  number={4},
  pages={959--997},
  year={2011},
  publisher={Elsevier}
}

@inproceedings{chen2014derivation,
  title={Derivation of the cubic {NLS} and {G}ross--{P}itaevskii hierarchy from manybody dynamics in d= 3 based on spacetime norms},
  author={Chen, Thomas and Pavlovi{\'c}, Nata{\v{s}}a},
  booktitle={Annales Henri Poincar{\'e}},
  volume={15},
  pages={543--588},
  year={2014},
  organization={Springer}
}

@article{chen2025derivation,
  title={Derivation of renormalized {H}artree-{F}ock-{B}ogoliubov and quantum {B}oltzmann equations in an interacting {B}ose gas},
  author={Chen, Thomas and Hott, Michael},
  journal={Journal of Functional Analysis},
  volume={289},
  number={10},
  pages={111096},
  year={2025},
  publisher={Elsevier}
}

@article{golse2017schrodinger,
  title={The {S}chr{\"o}dinger equation in the mean-field and semiclassical regime},
  author={Golse, Fran{\c{c}}ois and Paul, Thierry},
  journal={Archive for Rational Mechanics and Analysis},
  volume={223},
  pages={57--94},
  year={2017},
  publisher={Springer}
}

@article{rougerie2021scaling,
  title={Scaling limits of bosonic ground states, from many-body to non-linear {S}chr{\"o}dinger},
  author={Rougerie, Nicolas},
  journal={EMS Surveys in Mathematical Sciences},
  volume={7},
  pages={253--408},
  year={2021}
}

@article{merkli2008resonance,
  title={Resonance theory of decoherence and thermalization},
  author={Merkli, M and Sigal, IM and Berman, GP},
  journal={Annals of Physics},
  volume={323},
  number={2},
  pages={373--412},
  year={2008},
  publisher={Elsevier}
}

@article{lions1993mesures,
  title={Sur les mesures de {W}igner},
  author={Lions, Pierre-Louis and Paul, Thierry},
  journal={Revista matem{\'a}tica iberoamericana},
  volume={9},
  pages={553--618},
  year={1993}
}

@inproceedings{hasler2011ground,
  title={Ground states in the spin boson model},
  author={Hasler, David and Herbst, Ira},
  booktitle={Annales Henri Poincar{\'e}},
  volume={12},
  pages={621--677},
  year={2011},
  organization={Springer}
}

@article{shirokov2019uniform,
  title={Uniform continuity bounds for information characteristics of quantum channels depending on input dimension and on input energy},
  author={Shirokov, Maxim E},
  journal={Journal of Physics A: Mathematical and Theoretical},
  volume={52},
  number={1},
  pages={014001},
  year={2019},
  publisher={IOP Publishing}
}

@inproceedings{de2021quantum,
  title={Quantum optimal transport with quantum channels: G. {D}e {P}alma and {D}. {T}revisan},
  author={De Palma, Giacomo and Trevisan, Dario},
  booktitle={Annales Henri Poincar{\'e}},
  volume={22},
  pages={3199--3234},
  year={2021},
  organization={Springer}
}

@article{jakvsic2001entropy,
  title={On entropy production in quantum statistical mechanics},
  author={Jak{\v{s}}i{\'c}, V and Pillet, C-A},
  journal={Communications in Mathematical Physics},
  volume={217},
  number={2},
  pages={285--293},
  year={2001},
  publisher={Springer}
}

@article{jakvsic2002non,
  title={Non-Equilibrium Steady States of Finite Quantum Systems Coupled to Thermal Reservoirs},
  author={Jak{\v{s}}i{\'c}, Vojkan and Pillet, C-A},
  journal={Communications in mathematical physics},
  volume={226},
  number={1},
  pages={131--162},
  year={2002},
  publisher={Springer}
}

@article{carlen2012bounds,
  title={Bounds for entanglement via an extension of strong subadditivity of entropy},
  author={Carlen, Eric A and Lieb, Elliott H},
  journal={Letters in mathematical physics},
  volume={101},
  number={1},
  pages={1--11},
  year={2012},
  publisher={Springer}
}

@article{frohlich2004statistical,
  title={Statistical mechanics of thermodynamic processes},
  author={Fr{\"o}hlich, J{\"u}rg and Merkli, Marco and Schwarz, Simon and Ueltschi, Daniel},
  journal={arXiv preprint math-ph/0410013},
  year={2004}
}

@book{kato1966perturbation,
  title={Perturbation theory for linear operators},
  author={Kato, Tosio and Kat{\aa}o, Tosio},
  volume={132},
  year={1966},
  publisher={Springer}
}

@article{chen2008cauchy,
  title={On the {C}auchy problem for focusing and defocusing {G}ross-{P}itaevskii hierarchies},
  author={Chen, Thomas and Pavlovi{\'c}, Nata{\v{s}}a},
  journal={arXiv preprint arXiv:0812.3463},
  year={2008}
}

@inproceedings{chen2010energy,
  title={Energy conservation and blowup of solutions for focusing {G}ross--{P}itaevskii hierarchies},
  author={Chen, Thomas and Pavlovi{\'c}, Nata{\v{s}}a and Tzirakis, Nikolaos},
  booktitle={Annales de l'IHP Analyse non lin{\'e}aire},
  volume={27},
  pages={1271--1290},
  year={2010}
}
\end{document}